\newcommand{\dbtilde}[1]{\accentset{\approx}{#1}}
\newtheorem{theo}{Theorem}
\newtheorem{prop}{Proposition}[section]
\newtheorem{lem}[prop]{Lemma}
\newtheorem{coro}[prop]{Corollary}
\newtheorem{remark}[prop]{Remark}
\theoremstyle{plain}
\theoremstyle{definition}
\newtheorem{defi}[prop]{Definition}
\numberwithin{equation}{section}
\newcommand{\R}{\mathbb{R}}
\renewcommand{\P}{\mathbb{P}}
\newcommand{\ep}{\varepsilon}
\newcommand{\g}{\mathbf{g}}
\newcommand{\f}{\mathbf{f}}
\def \be{\begin{equation}}
\def \ee{\end{equation}}
\def \t0{\rightarrow 0} 
\def \hal{\frac{1}{2}}
\def \supp{\mathrm{supp }} 
\def \div{\mathrm{div} \,} 
\def \1{\mathbf{1}} 
\def \p{\partial}
\def \ep{\varepsilon}
\def \dist{\mathrm{dist}}
\def\cd{\mathsf{c_{\d}}}
\def\({\left(}
\def\){\right)}
\def\nab{\nabla}
\def \PNbeta{\mathbb{P}_{N, \beta}} 
\renewcommand{\subset}{\subseteq}
\renewcommand{\subset}{\subseteq}
\renewcommand{\bar}{\overline}
\renewcommand{\tilde}{\widetilde}
\renewcommand{\div}{\divg}
\newcommand{\indic}{\mathds{1}}
\renewcommand{\hat}{\widehat}
\newcommand{\ed}[1]{{\color{magenta}{#1}}}
\def\Xint#1{\mathchoice
   {\XXint\displaystyle\textstyle{#1}}%
   {\XXint\textstyle\scriptstyle{#1}}%
   {\XXint\scriptstyle\scriptscriptstyle{#1}}%
   {\XXint\scriptscriptstyle\scriptscriptstyle{#1}}%
   \!\int}
\def\XXint#1#2#3{{\setbox0=\hbox{$#1{#2#3}{\int}$}
     \vcenter{\hbox{$#2#3$}}\kern-.5\wd0}}
\def\dashint{\Xint-}
\def \XN{{X}_N}
\def\N{{n_\mathcal{O}}}
\def\n{n}
\def\m{\mu}
\def\l{\ell}
\def\L{\mathsf{L}}
\def\K{\mathsf{K}}
\def\F{\mathsf{G}}
\def\G{\mathsf{F}}
\def\H{\mathsf{H}}
\def\P{\mathsf{P}}
\def\Q{\mathsf{Q}}
\def\car{\Box}
\def\C{\mathcal{C}}
\def\HN{\mathcal{H}_N}
\def\Esp{\mathbb{E}} 
\def \ZNbeta{Z_{N,\beta}}
\def\pa{\partial}
\def \mut{\overline{\mu}_{t}}
\def\g{\mathsf{g}}
\def\d{\mathsf{d}}
\def\mn{\mathrm{n}}
\def\muv{\mu_V}
\def\indic{\mathbf{1}}
\def\rr{\mathsf{r}}
\def\rrc{\tilde{\mathsf{r}}}
\def\rrh{\hat{\mathsf{r}}}
\def\div{\mathrm{div}\,}
\def \Escr{E^0}
\def \C{X}
\def\Old{\mathcal{O}}
\def\New{\mathcal{N}}
\def\mub{\mu_{\theta}}
\def\mut{\mu_{\theta}}
\def\zb{\zeta_\theta}
\def\fae{\f_{\alpha_i,\eta_i}}
\def\faej{\f_{\alpha_j, \eta_j}}
\def\rb{\rho_\beta}
\def\mf{f_\d}
\def\dm{1}
\def\GG{\mathsf{h}}
\def\xib{\min(\beta^{\frac{1}{\d-2}},1)}
\def\omc{\overset{\circ}{\Omega}}
\def\bN{{\bar N}}
\def\rrtt{\dbtilde{\mathsf{r}}}
\begin{document}

\title[Local Laws for Coulomb Gases]
{Local laws and rigidity for Coulomb gases at any temperature}

\begin{abstract} 
We study Coulomb gases in any dimension $\d \geq 2$ and in a broad temperature regime. We prove local laws on the energy, separation and  number of points down to the microscopic scale. These yield the existence of limiting point processes after extraction generalizing the Ginibre point process for arbitrary temperature and dimension. The local laws come together with a quantitative expansion of the free energy with a new explicit error rate in the case of a uniform background density. These estimates have explicit temperature dependence, allowing to treat regimes of very large or very small temperature, and exhibit a new minimal lengthscale for rigidity and screening depending on the temperature. They apply as well to energy minimizers (formally zero temperature). The method is based on a bootstrap on scales and reveals the additivity of the energy modulo surface terms, via the introduction of subadditive and superadditive approximate energies.


\end{abstract}

\author[S. Armstrong]{Scott Armstrong}
\address[S. Armstrong]{Courant Institute of Mathematical Sciences, New York University, 251 Mercer St., New York, NY 10012}
\email{scotta@cims.nyu.edu}

\author[S. Serfaty]{Sylvia Serfaty}
\address[S. Serfaty]{Courant Institute of Mathematical Sciences, New York University, 251 Mercer St., New York, NY 10012}
\email{serfaty@cims.nyu.edu}

\keywords{Coulomb gas, local laws, large deviations principle, point process, Gibbs measure, rigidity}
\subjclass[2010]{82B05, 60G55, 60F10, 49S05}
\date{\today}

\maketitle
\setcounter{tocdepth}{1}
\tableofcontents

\section{Introduction} 
\subsection{Statement of the problems}
 We are interested in the $N$-point canonical Gibbs measure
for a $\d$-dimensional Coulomb gas ($\d \ge 2$) at inverse temperature $\beta$, in a confining potential~$V$, defined by
\begin{equation}\label{def:PNbeta}
d\PNbeta(\XN) = \frac{1}{\ZNbeta} \exp \left( - \beta N^{\frac{2}{\d}-1} \HN(\XN)\right) d\XN,
\end{equation}
where $\XN = (x_1, \dots, x_N)\in (\R^{d})^N$ and the Hamiltonian~$\HN(\XN)$, which represents the energy of the system in state~$\XN$, is defined by
\begin{equation} \label{def:HN}
\HN(\XN) := \hal \sum_{1 \leq i \neq  j \leq N} \g(x_i-x_j) +N \sum_{i=1}^N  V(x_i),
\end{equation}
where
\begin{align} 
\label{wlog2d} 
\g(x) : = 
\left\{ 
\begin{aligned}
& - \log |x| & \text{if} & \ \d=2, \\
& |x|^{2-\d} & \text{if} & \ \d \geq 3.
\end{aligned} 
\right.
\end{align}
Thus~$\HN(\XN)$ is the sum of the pairwise repulsive Coulomb interaction between the particles and the total potential energy of the particles due to the confining potential~$V$, the intensity of which is of order~$N$. The normalizing constant $\ZNbeta$ in the definition~\eqref{def:PNbeta}, called the \textit{partition function}, is given by 
\begin{equation}\label{defZ}
\ZNbeta := \int_{(\R^\d)^N} \exp \left( - \beta N^{\frac{2}{\d}-1}  \HN(\XN)\right) d\XN.
\end{equation}
We have chosen particular units of measuring the inverse temperature by writing~$\beta N^{\frac{2}{\d}-1}$ instead of~$\beta$. As seen in~\cite{lebles}, this turns out to be a natural choice due to scaling considerations as our~$\beta$ corresponds to the effective inverse temperature governing the microscopic scale behavior. This choice does not reduce the generality of our results since, as we will see, our estimates are explicit in their dependence on~$\beta$ and~$N$, which allows to let $\beta $  depend on $N$. 

\smallskip

This Coulomb gas model, also called a ``one-component plasma,'' is a standard ensemble of statistical mechanics which has attracted much attention in the physics literature: see for instance  \cite{martinreview,aj,CDR,sm,kiessling,messerspohn} and references therein. Its study in the two-dimensional case is more developed, thanks in particular to its connection with Random Matrix Theory (see~\cite{dyson,mehta,forrester}): when
$\beta=2$ and $V(x)=|x|^2$, the~$\PNbeta$ in~\eqref{def:PNbeta} is the law of the (complex)  eigenvalues of the Ginibre ensemble of~$N\times N$ matrices with normal Gaussian i.i.d.~entries~\cite{ginibre}. Several additional motivations come from quantum mechanics, in particular via the plasma analogy for the fractional quantum Hall effect~\cite{Gir,stormer,laughlin2}. For all of these aspects one may consult to~\cite{forrester}. The Coulomb case with~$\d = 3$, which can be seen as a toy model for matter, has been studied in~\cite{jlm,LiLe1,LN}. The theory of higher-dimensional Coulomb systems is much less well-developed.   
 
\smallskip

In such Coulomb systems, if $\beta $ is not too small  and if $V$ grows fast enough at infinity, then the empirical measure
$$\hat\mu_N := \frac{1}{N} \sum_{i=1}^N \delta_{x_i}$$ converges, as $N \to \infty$, to a deterministic equilibrium measure $\mu_V$ with compact support, which can be identified as the unique minimizer among probability measures of the quantity
 \begin{equation}
 \label{defE}
 \mathcal E(\mu)= \hal \int_{\R^\d\times \R^\d} \g(x-y) d\mu(x) d\mu(y)+ \int_{\R^\d} V(x) d\mu(x).\end{equation}
See for instance~\cite[Chap. 2]{ln} for the statement of such a result. As the length scale of $\supp\,\mu_V$ is of order~$1$ (it is independent of $N$), we will call this the {\it macroscopic scale}, while the typical interparticle distance is of order $N^{-\frac1\d}$, we will call this the {\it microscopic scale}, or \emph{microscale}. Intermediate length scales will be called \emph{mesoscales}.

\smallskip

In this paper, we work with a deterministic correction to the equilibrium measure which we call the  {\it thermal equilibrium measure}, which is appropriate for all temperatures and defined as the probability density $\mub$ minimizing 
\be \label{1.9}
\mathcal{E}_{\theta} (\mu):= \mathcal{E} (\mu) + \frac{1}{\theta} \int_{\R^\d} \mu\log \mu\ee
where we set 
\be\label{relbt}
\theta:= \beta N^{\frac{2}{\d}}.\ee
Let us point out that here and in all the paper we use alternatively the notation $\mu$  both for the measure  and for its density. 
   By contrast with $\muv$, $\mub$ is positive and regular in the whole of $\R^\d$ with exponentially decaying tails.
 In fact the quantity $\theta= \beta N^{\frac{2}{\d}}$ corresponds to the inverse temperature that governs the {\it macroscopic distribution of the particles}. The precise dependence of $\mub$ on $\theta$ is studied in \cite{ascomp} where
it is shown that when $\theta \to \infty$, then $\mub$ converges to $\muv$, with quantitative estimates (see below). 
 
\smallskip

The measure $\mub$ is well-known to be the limiting density of the point distribution in the regime in which $\theta$ is fixed independently of $N$ and we send $N\to \infty$, that is, for $\beta \simeq N^{-\frac 2d}$; see for instance \cite{kiessling,messerspohn,CLMP,bodgui}. In this paper we show that~$\mub$ is also a more precise description of the distribution of points, compared to the standard equilibrium measure, even in the case~$\theta \gg1$. This allows us to obtain more precise quantitative results valid for the full range of~$\beta$ and~$N$ and, in particular, in the regime of very small~$\beta$.

\smallskip
 
One of the important goals in the study of Coulomb systems is to show  concentration around the (thermal) equilibrium measure   and  
estimates on the so-called linear statistics
\begin{equation}
\label{pll}
\int_{\R^\d} \varphi( \sum_{i=1}^N \delta_{x_i} - N \mub)\end{equation}
for (not necessarily smooth) test functions $\varphi$ which may be supported in microscopic sized balls. The study of random variables such as~\eqref{pll} allows us to quantify the weak convergence of the empirical measure~$\hat \mu_N$ to the deterministic thermal equilibrium measure~$\mub$. In particular, we can obtain estimates on the number of points in microscopic balls ({\it local laws}). If the fluctuations of~\eqref{pll} are much smaller than for a Poisson point cloud, one speaks of {\it rigidity} or {\it hyperuniformity} (see~\cite{To}).

\smallskip

In this paper, we prove explicit controls  on these quantities, which then yield the existence of limiting point processes along subsequences of properly rescaled configurations. While we cannot rule out the possibility of several point processes arising as limits of different subsequences, we are able for the first time to show their existence by controlling the number of points in microscopic boxes. This also provides solutions  to a number of widely used hierarchies and sum rules  on correlation functions, in this important case of Coulomb interactions (see discussion below the statement of Corollary~\ref{coro1}). 

\smallskip

A second goal of this paper is 
 to give an expansion in $N$ for $N\gg 1$ of 
the \emph{free energy}~$-\frac{1}{\beta} \log \ZNbeta,$ which we will complete in the Neumann jellium case here (note that the mere existence of an order $N$ term, in other words, a thermodynamic limit, has been known since~\cite{LN}). This  opens the way to obtaining in the companion paper  \cite{s2}  an explicit  error rate for the  free energy expansion in the general case (in which $\muv$  or $\mut$ are not necessarily constant). This result is crucial to obtain, for the first time in~\cite{s2}, a Central Limit Theorem for the fluctuations of the type~\eqref{pll} in dimensions $\d\ge 3$ (such a result was obtained in dimension 2 in \cite{ls2,bbny2}, but the method  
requires a more precise rate to be applicable in higher dimension). 
The third motivation is to formulate a local Large Deviations Principle (LDP) with microscopic averages for the limiting point processes, analogous to results of~\cite{lebles,loiloc}.

\smallskip

Such questions have recently attracted attention in two dimensions \cite{bbny1,bbny2,loiloc,ridervirag,ahm,ls2,chm}, and to a much lesser extent  in higher dimension: concentration bounds were  given in \cite{rs,chm,gz1}, free energy expansions in \cite{lebles}, and rigidity was described in~\cite{chatterjee} (in dimension $2$) and \cite{ganguly} (in general dimension) for a ``hierarchical'' Coulomb gas model (that is, a version of the model with a simplified interaction which essentially makes renormalization arguments  easier), with estimates for the number variance in a set and for smooth linear statistics.
Of course, much more is known in the well studied related problem of the one-dimensional log gas or $\beta$-ensemble, see \cite{joha,bey1,bey2,shch,BorGui1,BorGui2,ss2,bls,bl,llw}. However, as far as we know none of  these works consider the regime of large temperature.

\smallskip

The program we carry out in this paper was already partly accomplished in dimension 2 in \cite{loiloc,bbny1}, with local free energy expansions and local laws valid down to {\it mesoscales} $\l \ge N^{-\alpha}$ with $\alpha<\hal$, via a bootsrap on the scales.
The high-level approach of the proof is the same in particular as the one of~\cite{loiloc}, however by revisiting it thoroughly we bring in the following novelties:

\begin{itemize}
\item We treat arbitrary dimension $\d\geq 2$. 
\item We unveil the importance of the  thermal equilibrium measure, even for large $\theta$ and notice the existence of two effective temperatures, one  that governs the macroscopic distribution of the points ($\theta$) and one that governs their microscopic behavior ($\beta$).  
\item The local laws are for the first time valid  {\it down to the microscale}, giving for the first time access to  the proof of existence of limiting point processes. 
\item The local laws are obtained with quantitative bounds in probability (exponential moments), and not just  with high probability as in previous works.
\item We obtain estimates with an explicit dependence in~$\beta$, as well as~$N$, allowing to consider very small or very large temperature regimes. These estimates reveal a new $\beta$-dependent minimal length scale~$\rb $ 
down to which the local laws hold. We prove that for $\d=2,3,4$ this lengthscale is $\sim N^{-\frac1\d} \max (1, \beta^{-\frac12})$ which we believe to be optimal. 
\item We give an explicit rate of convergence for the free energy expansion in the constant background case.
\item We introduce new sub- and superadditive energy quantities. It is by using estimates on their additivity defect, which are obtained by a bootstrap or renormalization-type argument, that we are able to quantify the convergence rate of the free energy and prove our main results.
\item We revisit the ``screening procedure" used in previous papers, turning it into a truly probabilistic procedure and tuning it in order to get explicit and optimal quantitative estimates. We optimize the screening lengthscale during the bootstrap procedure, showing it can be made as small as the minimal lengthscale $\rb$.
\end{itemize}

\subsection{Statement of main results}In all the paper we assume that 
\begin{equation} 
\label{integrabilite}
\int_{\R^\d} \exp\left( -\min(1,\theta) V \right) <\infty ,
\end{equation} and that 
\be \label{Vpg} V+\g \to + \infty \ \quad \text{as} \ |x|\to \infty,\ee
which ensures the existence of $\mu_V$ and $\mub$ (see \cite{ascomp}).

The local laws are more easily stated at the level of the ``blown-up configurations": for any $(x_1, \dots, x_N)$ we let $x_i'= N^{\frac1\d} x_i$ and 
 we also let $\mu=\mub'$ be the push-forward of $\mub$ under this rescaling, i.e. the measure with density $\mub'(x)= \mub(N^{-\frac1\d}x)$.   
The local laws are proven in the ``bulk" of $\mub$. 
After a suitable ``splitting" that removes the constant leading order term (see Section \ref{secsplit}) we are led to computing local laws with respect to a generic background $\mu$, hence our choice of notation here.

In dimension $\d \ge 3$, we will not use {\it any property of $\mu$ besides the fact that it is bounded above and below in a set $\Sigma$}. 
In dimension $\d=2$, we will use the same fact and only  three additional ones:
\begin{itemize}
\item $\mu$ has sufficiently small tails, in the form of the assumption 
\be \label{asstail}
\mu(\Sigma^c) \le \frac{C N}{\log N} \ \text{for some constant $C>0$.} 
\ee
We comment after Theorem \ref{th1} on what is known in that respect, in particular the assumption is true if $\beta$ is not too small; 

\item $\mu$ satisfies 
\be\label{assgmm} \iint_{\R^2\times \R^2} \g(x-y) d\mu(x) d\mu(y) \ge - C N^2 \log N,\ee
which holds with $C=\hal$ as an immediate consequence of the fact that $\mathcal E(\mut)$ is finite and the rescaling;

\item
$\mu$ satisfies
\be \label{asslog}
\int_U \log z \, d\mu(z)<\infty,\ee  
which is also true here since  since $\mathcal{E}(\mut) <\infty$ implies  $\int_{\R^\d} V d\mut<\infty$ which  in view of ~\eqref{Vpg} implies it.

\end{itemize}


Throughout the paper,~$C$ denote a constant which only depend on $\d$, upper and lower bounds on $\mu$ and the constants in~\eqref{asstail}--\eqref{assgmm} and may vary in each occurrence. 

\smallskip

As we will see, the dependence of our estimates in $\beta$ for $\beta$ small is a bit different in dimension $2$ than in higher dimensions. This is a manifestation of the fact that the Poisson point process has (or at least is expected to have) infinite Coulomb energy in dimension 2 (see \cite{leble} for a discussion). 
Reflecting this, throughout the paper, we will use the notation  
  \be\label{defchib}
\chi(\beta)= \begin{cases}   1  & \text{if} \ \d\ge 3\\
  1 +\max (-\log \beta, 0)& \text{if} \ \d=2,\end{cases}  \end{equation}
  and emphasize that $\chi(\beta)=1$ unless $\d=2$ and $\beta $ is small.
  
In all our formulas, we will have terms which appear only in dimension $\d$, we denote them with a $\indic_{\d}$.
The precise meaning of the next-order energy $\G^{\car_R(x)}$ localized in a cube $\car_R(x)$ of center $x$ and radius $R$  is alluded to below and defined precisely in Section \ref{sec2}.

\begin{theo}
[Local laws]
\label{th3} 
Assume  $\mu$ defined above satisfies $0<m \le \mu\le \Lambda $ in a set $\Sigma$, and in dimension $\d=2$ assume also~\eqref{asstail}, \eqref{assgmm}, and \eqref{asslog}. There exists a constant $C>0$ depending only on $\d, m$, $\Lambda $  and in dimension $2$ the constants of~\eqref{asstail} and~\eqref{assgmm},  such that the following holds.
 There exists $\rb$ of the form
  \be \label{minoR0}   \rb  = C \max\(1, \beta^{-\hal} \chi(\beta)^{\frac12} ,\beta^{\frac{1}{\d-2}-1}\indic_{\d\ge 5}\) 
\ee
such that 
if   $\car_R(x)$ is a cube of size $R\ge \rb$ centered at $x$,  with 
\be
\label{conddist0}
\dist(\car_R(x) , \p \Sigma) \ge \ed{
C \max\( \(\frac{N^{\frac1\d} }{ \max(1, \beta^{-\hal} \chi(\beta)^{\frac12})}\)^{-\frac23}   N^{\frac1\d} ,   N^{\frac{1}{ \d+2}}\) }, \ee 
 we have
\begin{enumerate}
\item (Control of energy) 
  \begin{equation}\label{locallawint0}
  \left|\log \Esp_{\PNbeta} \( \exp\( \hal \beta \G^{\car_{R}(x)}\)\)
  \right| \le 
C   \beta\chi(\beta ) R^\d 
  \end{equation}
  
  \item (Control of fluctuations) Letting $D$ denote $\int_{\car_R(x)} \( \sum_{i=1}^N \delta_{x_i'} - d\mu\)$ we have 
  \begin{equation}\label{loclawpoints0}
  \left|\log \Esp_{\PNbeta}\( \exp\( \frac{ \beta}{C }   R^{2(1-\d)} \rb^{\d-1} D^2   \) \)\right| \le
  C  \beta\chi(\beta)\rb^\d \end{equation}
  and
  \begin{equation}\label{loclawpoints00}
   \left|\log \Esp_{\PNbeta}\( \exp\( \frac{\beta}{C}    \frac{D^2}{R^{\d-2} } \min (1, \frac{ |D|}{R^\d})  \) \)\right|\le
  C  \beta\chi(\beta) R^{\d}.\end{equation}
\item (Control of linear statistics) If $\varphi$ is a  $1$-Lipschitz function  supported in $\car_R(x)$, then 
\begin{equation}\label{loclawphi}
\left|\log \Esp_{\PNbeta}\( \exp \frac{\beta }{C R^\d}\(  \int_{\R^\d} \varphi( \sum_{i=1}^N \delta_{x_i'}-\mu) \)^2      \)\right|\le
  C \beta \chi(\beta) R^{\d}\|\nab \varphi\|_{L^\infty}^2.
\ee  \item (Minimal distance control)
For any point $x_i'$ of the blown-up configuration satisfying the relation~\eqref{conddist0}, denoting
$$\rr_i=\min\( \min_{j\neq i} |x_i'-x_j'|, \frac14\) $$ we have 
\be
\label{loclawdistmin0}
\left| \log \Esp_{\PNbeta} \( \exp \left( \frac{\beta }{2} \g(\rr_i  )  \right) \) \right|
\leq   C \beta\chi(\beta)\rb^\d.
\ee
\end{enumerate}\end{theo}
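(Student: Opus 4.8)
The plan is to run a bootstrap over dyadic scales, starting from the macroscopic scale where the statements follow from the global free energy expansion (or from \cite{lebles}-type bounds), and descending to the minimal scale $\rb$. The central object is the next-order (renormalized) energy $\G^{\car_R(x)}$ localized in a cube; the key mechanism is that this energy is \emph{almost additive} over a partition of a big cube into subcubes of half the size, up to surface terms that one controls via the introduction of a subadditive approximate energy $\underline{\mathsf{G}}$ and a superadditive approximate energy $\overline{\mathsf{G}}$ (as announced in the abstract and the introduction). Concretely, I would first establish a deterministic ``subadditivity up to surface error'' inequality of the form $\G^{\car_{2R}} \le \sum_k \G^{\car_R^{(k)}} + (\text{surface term scaling like } R^{\d-1} + \text{screening cost})$, and a matching superadditive lower bound obtained by the probabilistic screening construction: given a good configuration in each subcube one glues them into a configuration on the big cube at controlled energy and entropy cost, which yields a lower bound on the partition function of the big cube in terms of the product of the subcube partition functions. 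Taking $\log \Esp_{\PNbeta}\exp(\hal\beta\,\cdot)$ of these two inequalities converts them into an iterative relation for the quantity $A_R(x) := \log \Esp_{\PNbeta}\exp(\hal\beta \G^{\car_R(x)})$, of the schematic form $A_{2R} \lesssim \sum_k A_R^{(k)} + \beta\chi(\beta) R^{\d-1}\cdot(\#\text{subcubes})$, and similarly from below. Summing the geometric series over all dyadic scales from the macroscale down to $R$ gives $|A_R(x)| \le C\beta\chi(\beta) R^\d$, which is exactly \eqref{locallawint0}; the exponent $\d$ (volume) rather than $\d-1$ (surface) appears because the geometric sum of surface terms $\sum_{j} 2^{j(\d-1)} \cdot 2^{(n-j)(\text{something})}$ at the finest scale is dominated by the volume of the finest cube, which is the mechanism of the bootstrap.

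Once the energy control \eqref{locallawint0} is in hand, items (2)–(5) follow by now-standard comparisons between the renormalized energy in a cube and the discrepancy $D$ and linear statistics, combined with the electric/stress-energy representation. The point is that $\G^{\car_R(x)}$ controls from below a quantity like $\frac{1}{C}R^{2(1-\d)}\rb^{\d-1} D^2$ (this is where the minimal scale $\rb$ enters: below $\rb$ the energy is no longer coercive enough because the entropy/temperature term $\frac1\beta$ dominates the energetic cost of a discrepancy, which is why $\rb \sim \beta^{-1/2}$ in the relevant dimensions), as well as $\frac{1}{C}\frac{D^2}{R^{\d-2}}\min(1,|D|/R^\d)$ — the $\min$ reflecting that a large excess of points concentrated in a small region costs at least its Coulomb self-energy $|D|^2/R^{\d-2}$ but for $|D|\gg R^\d$ the relevant scale shrinks. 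Plugging these deterministic lower bounds for $\G$ into \eqref{locallawint0} and using Jensen/monotonicity of exponential moments gives \eqref{loclawpoints0}, \eqref{loclawpoints00}; for the linear-statistic bound \eqref{loclawphi} one writes $\int\varphi\,d(\sum\delta_{x_i'}-\mu)$ via a layer-cake/summation-by-parts over concentric cubes and bounds each contribution by the discrepancy estimates, paying $\|\nabla\varphi\|_{L^\infty}^2$; and for the minimal-distance control \eqref{loclawdistmin0} one isolates a single pair $x_i',x_j'$ and notes that $\G^{\car_R(x)} \ge \hal \g(\rr_i) - C R^\d$ (the self-interaction of a close pair is cheap to read off from the truncated energy), again combined with \eqref{locallawint0}.

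The main obstacle, and where essentially all the work lies, is the superadditive lower bound, i.e. the \emph{probabilistic screening step}: one must show that a typical configuration in the big cube $\car_{2R}$ can, with not-too-small probability and at controlled energy cost, be decoupled into near-independent configurations in the subcubes, equivalently that one can build an admissible global configuration by gluing given subconfigurations after inserting a thin ``screening'' layer near the boundaries that neutralizes the net charge and kills the far-field interaction. This requires (i) a good choice of the screening lengthscale — which must be optimized during the bootstrap and driven down to $\rb$, as emphasized in the introduction — (ii) control of the entropy cost of the screening layer (how many points, in how large a region, with how much freedom), which is where the distance-to-boundary condition \eqref{conddist0} and all its competing terms come from: each term there is the threshold below which one of the screening/entropy estimates fails, (iii) the verification that the discrepancy in the big cube is not too large (bootstrapped from the previous, coarser scale) so that the screening charge is affordable, and (iv) handling the dimension-$2$ subtleties — the $\chi(\beta)$ factor, the finiteness assumptions \eqref{asstail}, \eqref{assgmm}, \eqref{asslog}, and the logarithmic (rather than power-law) kernel — which force separate treatment of the long-range part of the interaction. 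I would expect to set up the screening as a genuine change of variables on configuration space with an explicit Jacobian, so that the comparison of partition functions is quantitative and the $\beta$-dependence is tracked throughout; getting the error in \eqref{locallawint0} to be exactly $C\beta\chi(\beta)R^\d$ (volume order, no worse) rather than something larger is the delicate bookkeeping that makes the whole scheme close.
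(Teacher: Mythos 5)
Your overall philosophy is aligned with the paper — a bootstrap on scales, sub- and super-additive approximate energies, and a probabilistic screening step — and your account of how items (2)–(5) are deduced from (1) via discrepancy/linear-statistics lemmas is essentially right. But the central mechanism you propose for proving (1) does not close, and it differs in a way that matters from what the paper actually does.

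You postulate an iterative relation of the schematic form $A_{2R}\lesssim \sum_k A_R^{(k)}+\beta\chi(\beta)R^{\d-1}\cdot(\#\text{subcubes})$ for $A_R:=\log\Esp_{\PNbeta}\exp(\hal\beta\G^{\car_R})$, and then claim that summing a geometric series over dyadic scales from the macroscale down yields $|A_R|\le C\beta\chi(\beta)R^\d$. This cannot work as stated. The inequality you write bounds $A_{2R}$ by the subcube $A_R$'s, i.e.\ it goes from small scales \emph{up} to large; starting from the known macroscopic bound $A_{N^{1/\d}}\lesssim\beta\chi(\beta)N$ and such a relation you cannot extract a bound on any individual $A_R$ at a small scale (the sum over subcubes could be dominated by one large term). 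There is no geometric series to sum here that produces $R^\d$. What the paper actually does is different in kind: for a single hyperrectangle $Q_R$ it shows (using \eqref{restri1}, \eqref{add3}, \eqref{superad2}, and the screening estimates of Section~\ref{sec4}) that $\Esp_{\PNbeta}[\exp(\hal\beta\G^{Q_R})]$ is controlled, up to surface/screening error, by the ratio of partition functions $\K^{\beta/2}(Q_R)/\K^{\beta}(Q_R)$, and then the \emph{a priori bounds} of Lemma~\ref{minoG} and Proposition~\ref{minoK} applied directly at scale $R$ give $\log(\K^{\beta/2}(Q_R)/\K^{\beta}(Q_R))\le C\beta\chi(\beta)R^\d$. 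The role of the hypothesis at the larger scale $2R$ is not to be ``summed'' into the conclusion; it is to ensure that the screening can be performed with surface-sized error (this is what drives the bound on $\tilde\ell$ in \eqref{deftl} and produces the constraint $R\ge\rb$), and to control the bad event of a large discrepancy $|n-\mn|$ via an energy-excess estimate (\eqref{claim80}), which you do not mention but is a necessary technical step to make the sum over $n$ converge. Without the a priori Jensen lower bound on $\log\K(Q_R)$ at every scale, and without the splitting over the number of points in $Q_R$, there is no way to obtain the volume-order bound $\beta\chi(\beta)R^\d$ from ``surface-term bookkeeping'' alone, which is where your argument has a genuine gap.
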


\subsubsection*{Comments on the assumptions}
The equilibrium measure $\mu_V$ is caracterized by the fact that there exists a constant $c$ such that $\g* \mu_V + V-c$ is zero in the support of $\mu_V$ and  nonnegative outside. 
 In \cite{ascomp} it is proven that if \eqref{integrabilite} and \eqref{Vpg} hold, and if in addition 
 \be \label{assLapV} \Delta V \ge \alpha>0 \quad \text{in a neighborhood of } \ \supp \, \muv\ee
 and the potential $\g* \mu_V+ V-c $ is bounded below by a positive constant 
uniformly away from the support of $\mu_V$,   then 
 for $x\in \supp \, \muv$ we have   $\muv(x) \ge m>0$. 
 In particular, we can take $\Sigma$ to be the blown-up of $\supp\, \mu_V$ and  the assumption $\mub' \ge m >0$  holds in $\Sigma$.
 We note that  if $V$ is more regular \cite{ascomp} also provides an explicit expansion  of $\mub - \mu_V$ of  the form
\be \label{125}\mub\simeq \mu_V+ \frac{1}{\cd \theta} \Delta \log \Delta V + \frac{1}{\cd \theta^2}\Delta \( \frac{\Delta \log \Delta V}{\Delta V}\)+... \quad \text{in } \   \supp\, \muv,\ee
see \cite{ascomp} for precise results. 
It is also proven in \cite{ascomp} that  under the previous stated assumptions, we will have 
\be \label{muscom}\mub'(\Sigma^c) \le  \frac{C N }{\sqrt{\theta}}\ee
hence in dimension 2 the extra assumption~\eqref{asstail} is verified as soon as 
$$\beta \ge \frac{\log^2 N}{N} .$$
In view of~\eqref{125} one may also substitute $\mu$ by $\muv'= \muv(N^{-\frac1\d} x)$ in the local laws above while making only a small error.

\smallskip

 If $\theta $ is fixed then the lower bound $\mub'\ge m>0$ 
  is true on any compact subset of $\R^\d$. If $\theta \ll 1$ then $\mub\to 0$ pointwise as the measure $\mub$ spreads to infinity,
and one needs to give a stronger weight to the confining potential to confine the system, effectively making the interaction weaker and the points more independent, thus  this is a situation that needs to be studied separately (see for instance \cite{RSY2} for a discussion of such a ``thermal regime" in a radial situation).

\subsubsection*{Comments on the results}
We note that we can always reduce to $m=1$ by scaling in space and then obtain the explicit dependence on $m$ of all the constants by a rescaling of the quantities.

 An application of Markov's inequality  easily allows to estimate the probability of deviations from these laws. For instance,  the probability that the number of points in a cube deviates by more than $o(R^\d)$ from $N\int_{\car_R} \mub$ is very small, and 
 ~\eqref{loclawpoints0} provides a bound on  the variance of the number of points in $\car_R$  by 
 $C \rb R^{2(\d-1)}$.
We note that~\eqref{loclawpoints0} is stronger than even  the results of \cite{bbny1,loiloc} in dimension $2$. The relation~\eqref{loclawphi} can  be improved using more involved techniques if $\varphi$ is assumed to be more regular, this was shown in dimension 2 in \cite{bbny1,ls2,bbny2} and this is the object of \cite{s2} in higher dimension.

\smallskip

\smallskip

A closely related setup to our Coulomb gas is that of the {\it jellium} model (see for instance   \cite{lls,lewinlieb} and references therein) which is defined as follows. We are given $N=R^\d$ points constrained to be in a cube of size $R$ denoted by $\car_R:= \left( -\frac12 R,\frac12R \right)^\d$, neutralized by a uniform background of unit density, which has a law given by the Gibbs measure
\begin{equation}
\label{jell}
d\mathbb{Q}_{N, \beta}(\XN)=\frac{1}{Z_{R,\beta}^{\mathrm{jel}}} \exp\left( -\beta H_N^{\mathrm{jel}}(\XN) \right) d\XN,
\end{equation}
where
\begin{equation*}
H^{\mathrm{jel}}(\XN)
=
 \iint_{\R^\d\times\R^\d \setminus \triangle} \g(x-y) \,
 d\!\( \sum_{i=1}^N  \delta_{x_i}- \indic_{\car_R}\) \!(x)\,
 d\!\( \sum_{i=1}^N \delta_{x_i}- \indic_{\car_R}\)\! (y),
\end{equation*}
the set $\triangle :=\{ (x,x)\,:\,x\in\R^\d\}$ denotes the diagonal in $\R^\d\times \R^\d$ and $\indic_S$ the indicator of a set $S$. This perspective is related to the analysis in the present paper: we consider a variant of~\eqref{jell} with $\g$ replaced by the Neumann Green function of the cube $\car_R$, the partition function of which we denote by~$\K(\car_R)$ (see Theorem~\ref{th1}, below). As a byproduct of our analysis (we just apply the arguments verbatim with~$\mu=\indic_{\car_R}$ and replacing~$\PNbeta$ by~$\mathbb{Q}_{N, \beta}$), we thereby obtain analogous quantitative local laws and free energy expansions for~$\mathbb{Q}_{N, \beta}$ as we do for~$\PNbeta$.

\subsubsection*{The minimal lengthscale and the temperature dependence}
One of the main difficulties in handling the possibly large temperature regime is to obtain the factor $\beta\chi(\beta)$ instead of $1$ in the right-hand side of these estimates when $\beta$ is small. This is made possible by the use of the thermal equilibrium measure instead of the usual equilibrium measure.

The other main difficulty is to get the local laws  down to the minimal scale  $\rb$ of ~\eqref{minoR0}.  We believe that  the lengthscale 
$\max(1, \beta^{-\frac12} \chi(\beta)^{\frac12})$ is optimal in all dimension (or optimal up to the logarithmic correction in dimension $\d=2$). The conjectured scenario is that the Coulomb gas ressembles a Poisson process for lengthscales smaller than $\beta^{-\frac12} N^{-\frac1\d}$ and becomes rigid (in the sense that the number of points in cubes become constrained by the size of the cube) only at lengthscales larger than $\beta^{-\frac12} N^{-\frac1\d}$ as evidenced by Theorem \ref{th3}. 
If $\d \ge 5$ the additional condition in~\eqref{minoR0} makes the result most likely suboptimal, and is a limitation of the method due to boundary effects.

\smallskip

We are able to see the minimal lengthscale  $\beta^{-\frac12}$  (viewed at the blown-up level) arise in our proof because when implementing the bootstrap procedure, we control the (free) energy errors by $\beta \tilde \ell R^{\d-1} $ while controlling at the same time the volume errors by $R^{\d-1}/\tilde \ell$ (we believe these errors to be optimal), where $\tilde \ell$ is the lengthscale that we need to screen the configurations. 
Optimizing the total error 
\be\label{toter}
\beta \tilde \ell R^{\d-1}+ \frac{R^{\d-1}}{\tilde \ell}\ee
leads to $\tilde \ell =\beta^{-\hal}$, and since we always need  to keep $\tilde \ell<R$, the bootstrap terminates exactly for $R$ and $\tilde \ell$ of order $\beta^{-\hal}$. This way we can say that the configurations can effectively be screened with screening lengthscale $\beta^{-\hal}$ and down to that scale. 

\smallskip

Note that  the maximal size of a set $\Sigma$ in which $\mu=\mub'$ is bounded below by a positive constant independent of $N$ is (of order) $N^{\frac1\d}$, hence the results of the theorem are nonempty if and only if $ \rb \ll N^{\frac1\d}$ which is equivalent in dimension  $3\le \d \le 5$  to  $\theta \gg 1$ (we expect the same to be true if $\d \ge 5$). In the case $\d=2$ the results are nonempty if and only if  $\beta\gg
 \frac{\log N}{N}$.  Note that as soon as $\theta \ge \theta_0>0$, the third item in~\eqref{conddist0} can be absorbed into the first one, up to a  constant depending on $\theta_0$.
 
As mentioned above, the effective temperature at the macroscale is $\theta$ which gives rise to  a natural lengthscale  for variations of the macroscopic density  $\mu_\theta$ of  $\theta^{-\frac12}=\beta^{-\frac12} N^{-\frac1\d}$. This  strikingly coincides  with the 
 minimal lengthscale for microscopic ridigity  $\rb$.

It remains to understand more precisely what happens when~$\theta$ is fixed or~$\theta \to 0$. In the latter regime it would be more appropriate to strengthen the confinement, thus weakening the interaction.

\smallskip

The fact that  ~\eqref{loclawpoints0} gives a bound on all the moments of the number of points in a compact set centered at $x$ satisfying~\eqref{conddist0}   immediately yields the following statement. 

\begin{coro}[Limiting point processes]
\label{coro1}  
Under the same assumptions as in Theorem~\ref{th3}, for every $\beta>0$ fixed independently of $N$ and every point $x\in \Sigma$ with 
\begin{equation*}
\dist(x,\partial \Sigma) \geq 
\ed{C   N^{\frac{1}{ \d+2}} },
\end{equation*}
the law of the point configuration $\{x_1'-x, \dots, x_N'-x\}$ converges as $N \to \infty$, up to  extraction of a subsequence, to a limiting point process with simple points and finite correlation functions of all order.
\end{coro}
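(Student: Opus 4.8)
The plan is to deduce the existence of subsequential limits from the tightness of the sequence of point process laws, combined with the uniform moment bounds on the number of points provided by Theorem~\ref{th3}. Fix $\beta > 0$ independent of $N$, so that $\theta = \beta N^{2/\d} \to \infty$ and in particular $\rb$ is bounded by a constant; then for $N$ large the distance condition \eqref{conddist0} is implied by the stated hypothesis on $\dist(x, \p\Sigma)$, so that every cube $\car_R(x')$ with $\rb \le R \le R_0$ (any fixed $R_0$) and $x'$ near the blown-up center $N^{1/\d}x$ satisfies the assumptions of the theorem once $N$ is large enough. Denote by $\mathsf{P}_N$ the law on the space of locally finite point configurations of $\{x_1' - x, \dots, x_N' - x\}$ (with $x_i' = N^{1/\d}x_i$), equipped with the topology of vague convergence of counting measures, which is Polish.

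The first step is tightness. By Prokhorov's theorem it suffices to show that for each fixed $R>0$ the family of laws of $D_R := \#\{i : x_i' - x \in \car_R\}$ is tight in $\R$, equivalently that $\sup_N \E_{\mathsf{P}_N}[D_R]< \infty$. Choosing a fixed scale $R$ (at least $\rb$, which is $O(1)$), we apply \eqref{loclawpoints00} or \eqref{loclawpoints0}: since $\int_{\car_R(x')} d\mu = N\int_{\car_R(x)} \mub$ is bounded by $\Lambda R^\d$ uniformly in $N$, and $\bigl|\log \E_{\PNbeta}\exp(c\beta D^2/R^{\d-2}\min(1,|D|/R^\d))\bigr| \le C\beta\chi(\beta)R^\d$, Markov's inequality gives a uniform (in $N$) exponential-type tail bound on $|D_R - N\int_{\car_R(x)}\mub|$, hence on $D_R$ itself. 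This yields tightness of each one-dimensional marginal, and since the vague topology on configurations is generated by the counting functions of a countable family of boxes, tightness of $\mathsf{P}_N$ on the configuration space follows.

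The second step is to upgrade a vague subsequential limit $\mathsf{P}_\infty$ to one with the desired regularity. Extract a subsequence along which $\mathsf{P}_N \to \mathsf{P}_\infty$. That $\mathsf{P}_\infty$ is a probability measure on \emph{simple} configurations follows from the minimal-distance control \eqref{loclawdistmin0}: for each $i$ with the required distance property, $\E_{\PNbeta}\exp(\tfrac\beta2 \g(\rr_i)) \le \exp(C\beta\chi(\beta)\rb^\d)$, which in $\d = 2$ (where $\g(r) = -\log r$) bounds $\E[\rr_i^{-\beta/2}]$ uniformly, and in $\d \ge 3$ bounds $\E[\rr_i^{-\beta(\d-2)/2}]$ — in either case forcing $\P(\rr_i < \epsilon) \to 0$ as $\epsilon \to 0$ uniformly in $N$, so with probability one under $\mathsf{P}_\infty$ no two points coincide. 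Finiteness of all correlation functions: for any box $\car_R$ and any $k \ge 1$, the uniform bound on $\E_{\mathsf{P}_N}[D_R^k]$ coming from \eqref{loclawpoints00} (all moments of $D_R$ are bounded since $\log\E\exp(cD_R^2\wedge \dots)$ is finite) passes to the limit by Fatou, giving $\E_{\mathsf{P}_\infty}[D_R^k] < \infty$; this bound on factorial moments of box counts is exactly the statement that the $k$-point correlation function exists as a locally finite measure (e.g.\ by the standard moment criterion for existence of correlation measures), and is in fact locally bounded.

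The main obstacle — and it is a mild one given Theorem~\ref{th3} — is bookkeeping the $N$-dependence of the distance condition: one must check that the constant $C$ and the scale $\rb$ in \eqref{conddist0} are genuinely $N$-independent once $\beta$ is fixed (which they are, since for fixed $\beta$ we have $\chi(\beta) = O(1)$ and $\rb = O(1)$, while the terms $\chi(\beta)N^{1/(\d+2)}$, $N^{1/(3\d)}\beta^{-1/3}$, $\chi(\beta)\beta^{-1-1/\d}\rb^{-\d}$, $\beta^{-1/2}\indic_{\d=2}$ are all $o(N^{1/\d})$ and absorbed into the hypothesis on $\dist(x,\p\Sigma)$ for large $N$), so that the configuration near $x$ lives in an arbitrarily large but fixed box to which the theorem applies for all large $N$. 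Everything else is a routine application of Prokhorov's theorem, Markov's inequality, Fatou's lemma, and the classical moment criterion for the existence of correlation functions.
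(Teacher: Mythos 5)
Your proof follows essentially the same route as the paper's: tightness from the local-law bound on the number of points in boxes (\eqref{loclawpoints00}), Prokhorov to extract a subsequential limit, simplicity from the minimal-distance bound \eqref{loclawdistmin0}, and finiteness of correlation functions from the moment bounds. The only small divergence is that where you invoke the standard criterion that tightness of the laws of the box-counting functionals implies tightness of the point-process laws (a Kallenberg/Daley--Vere-Jones-type fact, and your ``topology is generated by these counting maps'' phrasing alone doesn't quite justify it for a general Polish space), the paper makes this step explicit by constructing the compact sets $K_M$ and verifying their compactness by a diagonal extraction; this is a cosmetic rather than substantive difference.
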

This is the first time that the existence of a limit point process is shown besides the particular determinantal case of $\beta =2$ in $\d=2$, for which the limit process is known to be the Ginibre point process, with an explicit correlation kernel. These processes can thus be thought of as $\beta$-Ginibre processes, at least in dimension $\d=2$.  We expect that they should satisfy a variational characterization as in Corollary \ref{coro2}.

\smallskip

In the 70's there was a large statistical mechanics literature (see  \cite{glm1,glm2,martinreview} and references therein) on sum rules and various relations for correlation functions of interacting particle systems, in particular Kirkwood-Salzbourg, BBGKY, KMS,  DLR equations. These can be shown to be equivalent relations in the case of regular interaction kernels but in the case of singular interactions like the Coulomb one, the existence of solutions to these hierarchies was not known. Corollary \ref{coro1} takes a small step toward putting these ideas on firmer ground by showing, up to a subsequence, the existence of limiting point processes.

\smallskip

Our next main result gives a quantitative estimate  of~$\log \K(\car_R)$ in the particular variant of the  Neumann jellium mentioned after~\eqref{jell}. Observe that the error term in~\eqref{1.26}, below, is negligible as soon as $ R \gg \rb$.
Extending this to varying background measures is one of the main objects of \cite{s2}.
  
 \begin{theo}[Free energy expansion, Neumann jellium case]\label{th1}
 There exists a function $\mf: (0, \infty) \to \R$  and a constant $C>0$ depending only on $\d$ such that 
  \be\label{bornesurf}
 -C\le \mf(\beta)\le C \chi(\beta)\ee 
 \be\label{bornesurfp} 
 \text{$\mf$ is locally Lipschitz in $(0,\infty)$ with} \  |\mf'(\beta)|\le \frac{C\chi(\beta)}{\beta}
 ,\ee 
 and such that 
 if $R^\d$ is an integer we have
  \be \label{1.26} \frac{\log \K(\car_R)}{\beta R^\d}=-   \mf(\beta) + 
 O\( \chi(\beta)\frac{\rb }{R}   + \frac{\beta^{-\frac1\d} \chi(\beta)^{1-\frac1\d} }{R}\log^{\frac1\d} \frac{R}{\rb}  \)
 \ee
 where $\rb$ is as in Theorem \ref{th3} and 
 the $O$ depends only on $\d$.

\end{theo}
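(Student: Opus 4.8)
The plan is to show that $\beta^{-1}R^{-\d}\log\K(\car_R)$ forms a Cauchy family as $R\to\infty$ (along $R$ with $R^\d$ a positive integer), with the explicit rate of \eqref{1.26}, and to identify the limit as $-\mf(\beta)$; since $\K(\car_R)$ depends only on $R$, $\beta$ and $\d$, convergence automatically delivers $\mf$ as a function of $\beta$ alone. The mechanism is a two-sided \emph{near-additivity} of $\log\K(\car_R)$ under partitioning $\car_R$ into a grid of congruent subcubes of integer volume --- inequalities of the form $\bigl|\log\K(\car_R)-\sum_k\log\K(\car_{R_k})\bigr|\le C\beta\chi(\beta)\rb R^{\d-1}$ (plus a dyadically-summed correction) --- followed by a quantitative subadditivity (Fekete-type) argument that converts such near-additivity into convergence with a rate governed by the surface defect divided by the volume. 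To obtain these inequalities I would, as announced in the introduction, use the screening construction to manufacture a subadditive upper approximation $\overline{\G}^{\car_R}$ and a superadditive lower approximation $\underline{\G}^{\car_R}$ of the (Neumann) next-order energy $\G^{\car_R}$ attached to $\mu=\indic_{\car_R}$, with $\underline{\G}^{\car_R}\le\G^{\car_R}\le\overline{\G}^{\car_R}$ and with the additivity defect $\overline{\G}^{\car_R}-\underline{\G}^{\car_R}$ controlled by surface terms of size $\lesssim\beta\chi(\beta)\tilde\ell R^{\d-1}$, where $\tilde\ell$ is the screening lengthscale; these defects are themselves estimated by a bootstrap over scales that feeds on the local laws of Theorem~\ref{th3}.

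The upper (subadditive) inequality rests on the variational characterization of the Neumann--Green energy: writing the interaction energy of a charge-neutral configuration as $\int_{\car_R}|\nabla h|^2$ with $h$ its Neumann potential, re-imposing homogeneous Neumann conditions across the internal faces of a subcube partition can only decrease $\int|\nabla h|^2$. Hence, on the event that each subcube carries exactly its neutralizing number $R_k^\d$ of points, the Neumann jellium energy on $\car_R$ dominates the sum of the subcube energies up to controlled self-interaction/diagonal corrections and surface terms localized near the internal faces; restricting the integral defining $\K(\car_R)$ to that event, bounding the resulting multinomial factor, and absorbing the deviations from exact sub-neutrality by means of the fluctuation estimates \eqref{loclawpoints0}--\eqref{loclawpoints00} of Theorem~\ref{th3} (valid down to $\rb$, with $\mu=\indic_{\car_R}$) gives $\log\K(\car_R)\le\sum_k\log\K(\car_{R_k})+C\beta\chi(\beta)R^{\d-1}$, up to the surface error carried by $\overline{\G}$.

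The lower (superadditive) inequality is the technical heart of the proof, and is where the revisited screening procedure enters. Given configurations in the subcubes sampled under their own Gibbs measures, I would screen each of them inside a boundary collar of width $\tilde\ell$ --- modifying it so that its electric field becomes tangent to the subcube boundary, hence compatible with the homogeneous Neumann condition --- so that the screened configurations paste together into an admissible configuration of $\car_R$. The energy so created is a surface term of order $\beta\tilde\ell R^{\d-1}$, while the volume of phase space frozen by prescribing the collar is of order $R^{\d-1}/\tilde\ell$; integrating over the degrees of freedom that remain free then gives $\log\K(\car_R)\ge\sum_k\log\K(\car_{R_k})-C\beta\chi(\beta)(\tilde\ell+\tilde\ell^{-1})R^{\d-1}$. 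Optimizing the total surface error as in \eqref{toter} shows $\tilde\ell$ can be pushed down to $\rb$, which yields the $\chi(\beta)\rb/R$ term in \eqref{1.26}; performing the gluing along a dyadic cascade of scales between $\rb$ and $R$ and summing the (geometrically controlled, but logarithmically many) defects produces the term $\beta^{-1/\d}\chi(\beta)^{1-1/\d}R^{-1}\log^{1/\d}(R/\rb)$. I expect this step --- carrying out the probabilistic screening at the \emph{optimized} lengthscale while keeping the joint energy/entropy budget under control all the way down to the scale $\rb$ --- to be the main obstacle.

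Finally, for the qualitative bounds \eqref{bornesurf}--\eqref{bornesurfp}: the lower bound $\mf(\beta)\ge-C$ follows from the universal lower bound $H^{\mathrm{jel}}\ge-CR^\d$ on the jellium energy, which gives $\K(\car_R)\le e^{C\beta R^\d}$; the upper bound $\mf(\beta)\le C\chi(\beta)$ follows from a trial-state lower bound on $\K(\car_R)$, obtained by restricting the integral --- in dimension $\d=2$ to configurations with a minimal interparticle separation, so as to avoid the logarithmic divergence of the Coulomb energy of Poisson-like configurations, which is precisely what produces the factor $\chi(\beta)$ --- to a set on which the energy is at most $C\chi(\beta)R^\d$ and whose entropy cost relative to the uniform measure is at most $C\beta\chi(\beta)R^\d$, so that $\log\K(\car_R)\ge-C\beta\chi(\beta)R^\d$. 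For the Lipschitz estimate, write $g_R(\beta):=\beta^{-1}R^{-\d}\log\K(\car_R)$; then $\partial_\beta\log\K(\car_R)=-\Esp[H^{\mathrm{jel}}]$ under the corresponding Neumann jellium Gibbs measure, and $|\Esp[H^{\mathrm{jel}}]|\le C\chi(\beta)R^\d$ --- the lower bound from $H^{\mathrm{jel}}\ge-CR^\d$, the upper bound from \eqref{locallawint0}, Jensen's inequality, and the splitting formula identifying $H^{\mathrm{jel}}$ with $\G^{\car_R}$ up to lower-order terms. Combining with $|\log\K(\car_R)|\le C\beta\chi(\beta)R^\d$ gives $|g_R'(\beta)|\le C\chi(\beta)/\beta$ uniformly in $R$; since $g_R\to-\mf$ pointwise and $\chi(\beta)/\beta$ is locally bounded on $(0,\infty)$, the limit $\mf$ is locally Lipschitz with $|\mf'(\beta)|\le C\chi(\beta)/\beta$.
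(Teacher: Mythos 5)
Your overall blueprint---showing $\beta^{-1}R^{-\d}\log\K(\car_R)$ is Cauchy via two-sided near-additivity and a dyadic Fekete summation, the a priori bounds for \eqref{bornesurf}, and the thermodynamic-derivative/H\"older argument for \eqref{bornesurfp}---is the paper's strategy, and the parts on \eqref{bornesurf}--\eqref{bornesurfp} are essentially correct. But the near-additivity step, which is the core of the argument, contains a genuine directional error that breaks the subadditive half.

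The Neumann projection argument goes the opposite way from what you claim. The Neumann solution on $\car_R$ is the $L^2$ projection of compatible fields onto gradients; gluing the subcube Neumann fields gives an admissible field on $\car_R$ (their normal components already vanish, hence match, on internal faces), so the $\car_R$-Neumann energy is bounded \emph{above} by the sum of subcube Neumann energies---partitioning can only \emph{increase} total energy, not decrease it. Equivalently $\G$ is subadditive and $\K$ superadditive, which gives the \emph{lower} bound $\log\K(\car_R)\ge\sum_k\log\K(\car_{R_k})-O(\log)$ essentially for free, and with nothing to screen, since the subcube Neumann fields are already tangent at the internal faces. So your ``lower bound via screening'' is heavy machinery for an easy inequality; meanwhile the argument you give for the upper bound fails on two counts: the claimed energy domination is false, and restricting the integral defining $\K(\car_R)$ to a sub-event can only lower-bound $\K(\car_R)$, never upper-bound it.

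The hard direction, $\log\K(\car_R)\le\sum_k\log\K(\car_{R_k})+\mathrm{err}$, is where the screening actually belongs. The spatially restricted Neumann energy $\G^\Omega$ dominates, not the subcube Neumann energy, but the subcube \emph{Dirichlet}-type energy $\F(\XN|_\Omega,\Omega)$, by the minimality of Dirichlet among all compatible fields. Screening is then used, configuration by configuration on the high-probability good events supplied by the local laws, to convert the $\F$-based integrals back into $\K(\car_{R_k})$ at a surface cost, with the collar width $\tilde\ell$ optimized down to $\rb$. Your proposal moves this mechanism to the easy direction and supports the hard direction with a false comparison; without a correct subadditive half for $\log\K$, the Fekete argument has no teeth and \eqref{1.26} does not follow.
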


 The function $\mf$ which depends only on $\beta$ (and $\d$) already implicitly appears in \cite{lebles} (combine relations (1.16) and (1.18) in \cite{lebles}) where it is given a variational interpretation :
 \be \label{fb}
\mf(\beta)= \min_{P}\(  \hal \widetilde{\mathbb{W}}(P)+ \frac{1}{\beta} \mathsf{ent}[P|\mathbf{\Pi}^1]\)\ee
where the minimum is taken over stationary point processes $P$ of intensity $1$,  $\widetilde{\mathbb{W}}(P)$ is the average with respect to $P$ of the ``Coulomb renormalized energy" (per unit volume) for an infinite point configuration with uniform background $1$ (see for instance \cite{rs,ln}, it is the $\widetilde {\mathbb{W}} (\cdot, 1)$ of \cite{lebles}), and $\mathsf{ent}[P|\mathbf{\Pi}^1]$ is the specific relative entropy (see \cite{fv}) of the point process $P$ with respect to the Poisson point process of intensity $1$. 
  Dimension $\d=2$ is particular since it is the only one where $\mf$ is not  expected to be bounded as $\beta \to 0$, in fact we expect the bound we have in $|\log \beta|$ to be optimal and to reflect the fact that the Poisson point process  should have infinite Coulomb energy $\widetilde{\mathbb{W}}$ in dimension $2$, in contrast with dimension $\d\ge 3$ where its energy is always finite as shown in \cite{leble}.  
  Note that the formula~\eqref{fb} implies that $\mf$ is a convex function of  $\frac{1}{\beta}$.

The error term in $1/R$ in~\eqref{1.26} corresponds exactly to a surface term. Such an error agrees with   the predictions on the next order term that are found in the physics literature in dimension $\d=2$ \cite{sha,cftw}, which are  made for a gas with quadratic confinement hence constant equilibrium measure, and which find a next order term in $\sqrt{N}$ for $N$ points ($\sqrt N$ corresponds to $R$ in dimension 2).

\smallskip
 
Once these results are proven, we briefly explain  how one can deduce a ``local'' large deviations principle,  generalizing the macroscopic scale LDP of~\cite{lebles} and the two-dimension mesoscale LDP of~\cite{loiloc} to arbitrary dimension and down to the smallest (microscopic) scale. More precisely, given $x_0$ in $\supp \, \muv$,  for a configuration 
$X_N$, defining  its blown up version to  be $X_N'= N^{\frac1\d} X_N$  we define the ``local empirical field" averaged in a cube of microscopic scale size $R$ around $x_0 \in  \supp \, \muv$ by 
\be \label{defii}
i_N^{x_0,R} (X_N):= \dashint_{\car_R(N^{1/\d} x_0)} \delta_{T_{ x}  X_N'|_{\car_R(N^{1/\d}x_0)} } dx\ee
where $T_x$ is the translation by $x$ and $|_{\car_R(N^{1/\d} x_0)}$ denotes the restriction of the configuration to $\car_R(N^{1/\d} x_0)$. 
In other words we look at a spatial average of the (deterministic) point process formed by the configuration.
We denote by $ \mathfrak{P}_{N,\beta}^{x_0,R}$ the push-forward of  $\PNbeta$ by $i_N^{x_0,R}$.
Finally we introduce the rate function of \cite{lebles} which is defined over the set of stationary point processes of intensity $m$
(equipped with the topology of weak convergence) by
\be \label{defFbeta}
\mathcal F_\beta^m (P):=\frac{ \beta}{2} \widetilde{\mathbb{W} }^m (P) + \mathsf{ent}[P|\Pi^m]\ee
where $\widetilde{\mathbb{W}}^m$ is the renormalized energy, precisely  defined in this context in \cite{lebles} (and originating in \cite{ssgl,ss1,rs})\footnote{it corresponds to the notation $\widetilde{\mathbb{W}}(\cdot, m)$ in \cite{lebles}.}, $\Pi^m$ is the (law of the) Poisson process of intensity $m$ over $\R^\d$, and $\mathsf{ent}$ is the specific relative entropy.
 We also have
 \be \min \mathcal F_\beta^m=\beta m^{2-\frac2\d} \mf(\beta m^{1-\frac2\d})-\(  \frac{\beta}{4} m\log m\)\indic_{\d=2} + m\log m\ee where $\mf$ is as in the previous theorem, this is the scaled version of~\eqref{fb}, and as  already seen in \cite{lebles}, if $\d\ge 3$ an effective temperature $\beta m^{1-\frac{2}{\d}}$ depending on the density of points appears here (as well as every time the density dependence is kept explicit).
 
 \smallskip
 
 We recall that in minimizing~\eqref{defFbeta} there is a competition (depending on $\beta$)
 between the energy term $\widetilde{\mathbb{W}}^m$ which prefers ordered configurations (energy-minimizing configurations are expected to be crystalline in low enough dimensions) and  the relative entropy term which favors disorder and configurations that are more Poissonnian. The choice of temperature scaling that we made in~\eqref{def:PNbeta} is  precisely the one for which these two competing effects are of comparable strength for fixed $\beta$.

\begin{theo}[Local large deviations principle]
\label{theoldp} 
Assume that, on its support, the equilibrium measure~$\muv$ is bounded below and belongs to~$C^{0, \kappa}$ for some $\kappa>0$. Assume that~$N^{\frac1\d}\gg R \gg \rb$ as $N \to \infty$ and $x_0 \in \supp\, \muv$ satisfies, for some $C>0$ depending only on $\d$ and $\muv$,
\begin{align*}
& 
\dist (x_0, \partial\, \supp\, \muv)
\\ & \quad
\geq C \,
\ed{ \max\( \(\frac{N^{\frac1\d} }{ \max(1, \beta^{-\hal} \chi(\beta)^{\frac12})}\)^{-\frac23}    ,   N^{\frac{1}{ \d+2}-\frac1\d}\) }+\frac{C}{\sqrt\theta}.
\end{align*}
Then we have the following:
\begin{itemize}
\item
If $\beta $ is independent of $N$,  the sequence $\{ \mathfrak{P}_{N,\beta}^{x_0,R}\}_N$ satisfies a LDP at speed $R^\d$ with rate function 
$\mathcal F_{\beta}^{\mu_V(x_0)}-\min \mathcal F_{\beta}^{\mu_V(x_0)}  .$
\item
If $\beta \to 0$ as $N\to \infty$, then  $\{ \mathfrak{P}_{N,\beta}^{x_0,R}\}_N$ satisfies a LDP at speed $R^\d$ with rate function $\mathsf{ent}[ P|\Pi^m]$. 
\item
If $\beta \to \infty$ as $N\to \infty$, then 
$\{ \mathfrak{P}_{N,\beta}^{x_0, R}\}_N$ satisfies a LDP  at speed $\beta R^\d$ with rate function
 $\hal(\widetilde{\mathbb{W}}^{\mu_V(x_0)}-\min \widetilde{\mathbb{W}}^{\mu_V(x_0)} )$.
 \end{itemize}
  \end{theo}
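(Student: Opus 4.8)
The plan is to obtain this as a consequence of Theorems~\ref{th3} and~\ref{th1}, adapting the empirical-field large deviations arguments of~\cite{lebles,loiloc} --- all the nontrivial inputs (a priori bounds, splitting, screening, free-energy asymptotics) being now at our disposal. First I would apply the splitting formula of Section~\ref{secsplit}: for a configuration whose points inside $\car_R(N^{1/\d}x_0)$ are prescribed, $-\beta N^{\frac2\d-1}\HN(\XN)$ decomposes as $-\hal\beta\,\G^{\car_R(N^{1/\d}x_0)}$, plus terms that do not see the interior of that cube, plus a surface error $O(R^{\d-1})$ --- this last point being the additivity-modulo-surface-terms structure on which the paper's analysis rests. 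Since $\muv$ is bounded below and belongs to $C^{0,\kappa}$ near $x_0$, and since the distance hypothesis on $x_0$ includes the term $C/\sqrt\theta$ (the natural lengthscale of variation of $\mub$, by~\eqref{125}--\eqref{muscom}), the blown-up background $\mu=\mub'$ differs by $o(1)$ from the constant $m:=\muv(x_0)$ throughout $\car_R(N^{1/\d}x_0)$, so that the relevant local object is the Neumann jellium with background $m$. Applied in this cube, Theorem~\ref{th3} controls the number of points, the energy $\G^{\car_R}$ and the minimal interparticle distances, which both localizes the problem (restricting to a single cube of size $R\ll N^{1/\d}$) and provides exponential tightness of $\{\mathfrak{P}_{N,\beta}^{x_0,R}\}_N$ in the space of point processes.

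Next I would invoke the classical process-level LDP: under the reference measure (Lebesgue on configurations with the local number of points fixed, equivalently the Poisson process of intensity $m$) the field $i_N^{x_0,R}$ of~\eqref{defii} satisfies an LDP at speed $R^\d$ with rate function the specific relative entropy $\mathsf{ent}[\,\cdot\,|\Pi^m]$ (see~\cite{fv,lebles}). Incorporating the energy tilt $\exp(-\hal\beta\,\G^{\car_R})$ is then a Laplace/Varadhan argument: along the empirical field, $R^{-\d}\G^{\car_R}(\XN)$ is compared with $\widetilde{\mathbb{W}}^m(i_N^{x_0,R}(\XN))$, using lower semicontinuity of $\widetilde{\mathbb{W}}^m$ (from~\cite{rs,lebles}) for the large-deviations upper bound and, for the lower bound, recovery sequences: given a stationary $P$ of intensity $m$, configurations inside $\car_R$ whose field approaches $P$ and whose energy density $R^{-\d}\G^{\car_R}$ approaches $\widetilde{\mathbb{W}}^m(P)$. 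These are produced by the probabilistic screening of the paper, inserting a screened, regularized copy of $P$ at screening length $\tilde\ell$; the insertion costs an energy error $\lesssim\beta\tilde\ell R^{\d-1}$ and a volume/entropy error $\lesssim R^{\d-1}/\tilde\ell$, and optimizing in $\tilde\ell$ as in~\eqref{toter} keeps the total error $o(R^\d)$ exactly because $R\gg\rb\sim\beta^{-\hal}\chi(\beta)^{\hal}$. One thus gets, for $\{\mathfrak{P}_{N,\beta}^{x_0,R}\}_N$, an LDP at speed $R^\d$ with rate function $\hal\beta\,\widetilde{\mathbb{W}}^m(P)+\mathsf{ent}[P|\Pi^m]+\lim_{R\to\infty}R^{-\d}\log\K(\car_R)$.

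By Theorem~\ref{th1} --- the partition function of precisely this Neumann jellium --- together with the space-rescaling relating the background-$m$ and background-$1$ jelliums (recorded, at the level of the minimizers, after~\eqref{defFbeta}; see also~\eqref{fb}), one has $\lim_{R\to\infty}R^{-\d}\log\K(\car_R)=-\min\mathcal F_\beta^{m}$, so the rate function is the good rate function $\mathcal F_\beta^{\mu_V(x_0)}-\min\mathcal F_\beta^{\mu_V(x_0)}$, as asserted. For the limiting regimes one uses that every estimate above is explicit in $\beta$. If $\beta\to0$, both the energy term $\hal\beta\,\widetilde{\mathbb{W}}^m$ and all error terms --- of size $O(\beta\chi(\beta)R^\d)=o(R^\d)$, the factor $\chi(\beta)$ being exactly what controls the $\d=2$ divergence of $\widetilde{\mathbb{W}}$, cf.~\cite{leble} --- are negligible at speed $R^\d$, leaving the rate function $\mathsf{ent}[P|\Pi^m]$. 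If $\beta\to\infty$, one rescales the speed to $\beta R^\d$; the entropy term and all error terms are then $o(\beta R^\d)$, exponential tightness now coming from the energy bound itself (the local laws being valid down to the formal zero-temperature limit of energy minimizers), and the surviving rate function is $\hal(\widetilde{\mathbb{W}}^{\mu_V(x_0)}-\min\widetilde{\mathbb{W}}^{\mu_V(x_0)})$.

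The delicate step is the large-deviations lower bound of the second paragraph: building, inside the single microscopic cube $\car_R$, configurations whose empirical field approximates a prescribed stationary process and whose renormalized energy approximates $\widetilde{\mathbb{W}}^m(P)$, with every error genuinely $o(R^\d)$ (respectively $o(\beta R^\d)$). This is where the screening must be carried out at the optimized scale $\tilde\ell\sim\beta^{-\hal}$ and where it is essential that Theorem~\ref{th3} already allows $R$ to be taken all the way down to $\rb$; coupled to this is the bookkeeping turning the global $\ZNbeta$ into the local $\K(\car_R)$ with the sharp constant $\min\mathcal F_\beta^m$ via Theorem~\ref{th1}, and the verification that the whole argument remains uniform in $\beta$ in the two limiting regimes.
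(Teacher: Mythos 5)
Your proposal follows essentially the same path as the paper: reduce to a weak LDP by exponential tightness (via Theorem~\ref{th3}), replace $\mub'$ by the constant $\mu_V(x_0)$ using the $C^{0,\kappa}$ regularity and the $\theta^{-1/2}$ distance margin, obtain the entropy LDP for the empirical field from the Georgii--Zessin theorem, handle the energy tilt by lower semicontinuity of $\widetilde{\mathbb W}^m$ (upper bound) and recovery sequences built by screening (lower bound), and normalize by Theorem~\ref{th1}/Proposition~\ref{pro65} to identify $-\min\mathcal F_\beta^m$ before passing to the limiting $\beta$ regimes through the explicit constants. The only inaccuracy is cosmetic: the splitting formula contributes $-\beta\,\G^{\car_R}$ to the Gibbs exponent, not $-\tfrac12\beta\,\G^{\car_R}$; the $\tfrac12$ enters only later, through the normalization relating the scaled energy $R^{-\d}\F(\cdot,Q_R)$ (or $\G$) to $\tfrac12\widetilde{\mathbb W}^m$, so your final rate function is nevertheless the correct one.
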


By Theorem~\ref{theoldp}, we recover for microscopic averages what was proven in~\cite{lebles} for limits of macroscopic averages and in~\cite{loiloc} for mesoscopic averages in dimension 2, and extend it to general~$\beta$. We note that the regime with~$R \simeq N^{\frac1\d}$ was treated in~\cite{lebles} for fixed~$\beta$ and can be extended without difficulty to the present setting of general $\beta$. It is for simplicity that we present results only for mesoscopic and microscopic averages (i.e., by taking that assumption that $N^{\frac1\d}\gg R \gg \rb$).

\begin{coro}\label{coro2}
Under the assumptions of Theorem~\ref{theoldp}, we have the following:
\begin{itemize}
\item If $\beta $ is independent of $N$, the point processes defined as subsequential limits of the push forward of $\PNbeta $ by the map  $i_N^{x_0,R}$ of~\eqref{defii}  must, after rescaling by $\mu_V(x_0)^{\frac1\d}$ and  for almost every $x_0$,  achieve the minimum in~\eqref{fb} among stationary point processes of intensity $1$. 
\item 
If $\beta \to 0$, they must be equal to the Poisson point process of intensity $1$.
\item If $\beta \to \infty$, they must minimize $\widetilde{ \mathbb{W}}^{1}$ among stationary point processes of intensity $1$.  \end{itemize}
\end{coro}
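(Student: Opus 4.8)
The plan is to obtain Corollary~\ref{coro2} as a soft consequence of the large deviations principle of Theorem~\ref{theoldp}, using the general fact that a sequence of probability measures satisfying an LDP with a good rate function $I$ at a speed tending to $\infty$ must concentrate, in the limit, on the zero set $\{I=0\}$ of $I$. First I would record that in each of the three regimes the rate function $I$ furnished by Theorem~\ref{theoldp} is a good rate function whose set of zeros $\mathcal M$ is nonempty and compact: this uses the known existence of minimizers of $\mathcal F_\beta^m$ and of $\widetilde{\mathbb W}^m$ among stationary point processes of intensity $m$ (see \cite{lebles} and the renormalized energy literature) together with the compactness of the sublevel sets of $I$. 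Then, for every $\epsilon>0$, the closed set $F_\epsilon=\{P:\mathrm{dist}(P,\mathcal M)\ge\epsilon\}$ satisfies $\inf_{F_\epsilon}I>0$ — otherwise a sequence in $F_\epsilon$ along which $I\to 0$ would, by compactness of $\{I\le1\}$ and lower semicontinuity of $I$, have a cluster point in $\mathcal M\cap F_\epsilon=\emptyset$ — so the LDP upper bound gives $\mathfrak P_{N,\beta}^{x_0,R}(F_\epsilon)\to 0$. This says that, with $\PNbeta$-probability tending to $1$, the random stationary process $i_N^{x_0,R}$ lies in any prescribed neighborhood of $\mathcal M$; equivalently (extracting weak limits by the exponential tightness that comes with the LDP), every subsequential weak limit of $\{\mathfrak P_{N,\beta}^{x_0,R}\}_N$ is supported on $\mathcal M$, which is the content of the corollary, and when $\mathcal M$ is a singleton the limit is the Dirac mass there.

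It then remains to identify $\mathcal M$ in each regime and to transport it through the dilation of ratio $\muv(x_0)^{1/\d}$ that renormalizes the intensity from $m=\muv(x_0)$ to $1$. For $\beta$ independent of $N$, $\mathcal M$ is the set of minimizers of $\mathcal F_\beta^{\muv(x_0)}$; by the scaling relations for $\widetilde{\mathbb W}^m$, for $\mathsf{ent}[\,\cdot\,|\Pi^m]$ and for $\min\mathcal F_\beta^m$ recalled just before Theorem~\ref{theoldp}, a process minimizes $\mathcal F_\beta^{\muv(x_0)}$ if and only if its dilation minimizes $\tfrac12\widetilde{\mathbb W}^{1}+\tfrac1{\beta_{\mathrm{eff}}}\mathsf{ent}[\,\cdot\,|\Pi^1]$ with $\beta_{\mathrm{eff}}=\beta\,\muv(x_0)^{1-2/\d}$ (so $\beta_{\mathrm{eff}}=\beta$ when $\d=2$), that is, realizes the minimum in~\eqref{fb} at that inverse temperature. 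For $\beta\to\infty$ (speed $\beta R^\d$), $\mathcal M$ is the set of minimizers of $\widetilde{\mathbb W}^{\muv(x_0)}$, whose dilations are exactly the minimizers of $\widetilde{\mathbb W}^{1}$ among stationary point processes of intensity $1$. For $\beta\to 0$, $\mathcal M=\{P:\mathsf{ent}[P|\Pi^{\muv(x_0)}]=0\}$; since the specific relative entropy of a stationary process relative to a Poisson reference vanishes only at that Poisson process — a standard consequence of the superadditivity of relative entropy under partitioning a large box into congruent sub-boxes, together with $\mathrm{Ent}(\cdot\,|\,\cdot)\ge0$ with equality iff the laws coincide — $\mathcal M$ reduces to the singleton $\{\Pi^{\muv(x_0)}\}$, whose dilation is $\Pi^1$; in this case the push-forwards therefore converge to the Dirac mass at $\Pi^1$, i.e.\ the limiting point process is exactly the Poisson process of intensity~$1$.

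Finally, the qualifier ``for almost every $x_0$'' merely reflects the hypotheses of Theorem~\ref{theoldp}: a quantitative distance condition on $x_0$ and a mild local regularity and non-degeneracy requirement on $\muv$, both satisfied throughout the bulk of $\supp\,\muv$, which is of full Lebesgue measure under the running assumptions. I do not expect a genuine obstacle in this corollary: all the substance lies in Theorem~\ref{theoldp} (and, upstream, in the local laws of Theorem~\ref{th3} feeding it), while what remains is purely a matter of reading off the zero set of the rate function. The two points that warrant a line of care are the identification of the zero set of the specific relative entropy with the Poisson process, and the bookkeeping of the intensity dilation together with the appearance of the effective inverse temperature $\beta\,\muv(x_0)^{1-2/\d}$ in dimension $\d\ge3$ — both directly available from the relations recalled in the paper.
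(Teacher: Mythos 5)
Your proof is correct and follows exactly the route the paper intends (the paper states the corollary without a written proof, leaving it as a direct consequence of the LDP of Theorem~\ref{theoldp}): the LDP upper bound with a good rate function at a diverging speed forces subsequential limits of $\mathfrak P_{N,\beta}^{x_0,R}$ to be supported on the zero set of the rate, and it then only remains to read off that zero set in each regime and push it through the dilation by $\mu_V(x_0)^{1/\d}$. Your identification of the effective inverse temperature $\beta\,\mu_V(x_0)^{1-2/\d}$ for $\d\ge3$ (coming from the scaling relation for $\min\mathcal F_\beta^m$ recorded just before Theorem~\ref{theoldp}) is the one nontrivial bookkeeping point, and you handle it correctly — indeed this is what the paper means by ``the minimum in~\eqref{fb}'' after rescaling, consistent with the remark following \eqref{fb} that a density-dependent effective temperature appears whenever $\d\ge 3$. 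Your justification that the specific relative entropy vanishes only at the reference Poisson process (via superadditivity and the characterization of equality in relative entropy) is also a standard and sound way to conclude the $\beta\to0$ case. No gap.
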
 
Note that the point processes considered here are not exactly the same as those of Corollary~\ref{coro1} since they are obtained by first averaging over cubes.
Their stationarity  is a simple consequence of that averaging (see \cite{lebles} for a proof).
  Unfortunately we do not know whether a minimizer for~\eqref{fb} is unique (uniqueness has however been very recently proven for the $1$-dimensional log gas analogue in \cite{lebleandco}), it may very well not be, in particular if a phase transition happens at inverse temperature $\beta$. If it were, then this would provide the existence of a unique possible limit point process
along the whole sequence $N \to \infty$. 

\smallskip

Our results apply as well to minimizers of~$\HN$ (formally the case~$\beta=\infty$),  they then improve on the results obtained in two dimensions in~\cite{aoc} and~\cite{rns},   and their generalization to higher dimension in~\cite{PRN}.  It shows (as for the related problem in~\cite{aco}) that  the rate of convergence of the next-order energy is in~$\frac{1}{R}$, and gives equidistribution of points and energy down to the microscales, see Theorem~\ref{th4} in Section~\ref{secmini} for a precise statement.

\subsection{Method of proof}

As in \cite{bbny1,loiloc} and as first introduced in this context in \cite{rns}, the method relies on a \emph{renormalization procedure}, namely a  bootstrap on the length scales which couples the free energy expansion and the local law information: local laws at large (macroscopic scales) are used as a first input, and allow to get a first expansion of the free energy, which in turn yields local laws at a smaller scale, and then a better rate in the free energy expansion, etc, until one reaches the minimal scale $\rb$.

\smallskip


\smallskip

The starting point of our approach is, as in the previous papers \cite{ss1,rs,ps, lebles}, the ``electric" reformulation of the energy $\HN$, i.e. its rewriting in terms of the (suitably renormalized) Dirichlet energy of the {\it Coulomb (or electric) potential} generated by the points and the background $\mub$, which really leverages on the Coulomb nature of the interaction and the fact that the Coulomb kernel is, up to a multiplicative factor, the fundamental solution to a local differential operator,  the Laplacian. More precisely,
 we will see that after removing some fixed leading order term from $\HN$, we reduce to
 \be\label{nop}d\PNbeta(X_N)= \frac{1}{N^N\K(\R^\d)} \exp(-\beta \G(X_N)) d\mu^{\otimes N}(X_N)\ee
 where $\K(\R^\d)$ is the normalization constant and $\G$
is  a ``next-order energy" of the form 
\be\label{int1}\G(X_N)= \frac{1}{2\cd}\int_{\R^\d} |\nab u|^2\ee
where $$u=\g* \( \sum_{i=1}^N \delta_{x_i'}- \mub'\)$$
is the solution of
\be\label{equau}
-\Delta u= \cd\(  \sum_{i=1}^N \delta_{x_i'}- \mub'\),\ee 
where $\cd$ is such that $-\Delta \g=\cd \delta_0$. Here
 $x_i'=N^{\frac1\d} x_i$ and $\mu=\mub'(\cdot) = \mub(N^{\frac1\d}\cdot)$ represent the blown-up system, and in~\eqref{int1}  the integral needs to be understood in a ``renormalized" sense, 
see Section~\ref{sec2} for more precise definitions. The quantity 
$\G^{\car_R}$ encountered in Theorem~\ref{th1}  is then the analogue of~$\int_{\car_R} |\nab u|^2$ here.

\smallskip

 Our improvement of the scaling of the error in the free energy expansion is based on the idea of {\it quantifying the additivity of the energy} over subregions of the main domain.
  In the Coulomb gas setting, the additivity of the energy---once expressed in terms of the  Coulomb potential---was already observed and used crucially in \cite{ssgl,ss1,lebles}. It was proven via a  screening procedure inspired by the work of  \cite{aco} on a related problem and  introduced in the Coulomb context in  \cite{ssgl}, then improved in \cite{rs,ps}, which yielded non explicit error terms. In fact, this is the reason why the results of~\cite{loiloc} were limited to two dimensions. 
 
\smallskip

In this paper, we combine the screening procedure with the  idea of using two different convergent quantities to quantify the additivity error in the free energy: 
 the first quantity 
denoted $\G(X_N, \car_R)$ is the equivalent of~\eqref{int1} with~\eqref{equau} solved over the cube with zero Neumann boundary condition, while the second one denoted~$\F(X_N, \car_R)$, which is smaller, is the equivalent of~\eqref{int1} where~\eqref{equau} is solved over the  cube~$\car_R$  with  zero Dirichlet boundary condition. The true energy is naturally bounded below by~$\F$ and above by~$\G$, and we will obtain quantitative bounds on it indirectly, by estimating the difference between~$\F$ and~$\G$. These quantities are the analogues of those used in~\cite{aco} for the study of energy minimizers of a related problem. This idea of using two quantities which converge monotonically (after dividing by the volume) to the same limit was already present in~\cite{aco} and is related to a classical technique for estimating eigenvalues of the Laplacian under various boundary conditions that goes by the name \emph{Dirichlet-Neumann bracketing}. A similar idea also arose in a different context in the works~\cite{as,amourrat} on quantitative stochastic homogenization, and the central idea in these works of quantifying the additivity of the energy by a bootstrap (or renormalization) argument inspired the strategy of the present paper (see~\cite{akm} and references therein for more on these developments). 
The main difference here from previous works is that we must apply such ideas in a probabilistic setting, in the context of a Gibbs measure, rather than a deterministic variational problem. 

\smallskip
  
This requires us to revisit and significantly revise  the previous screening construction of \cite{ssgl,rs,ps}. We simplify it, optimize it and  turn  it into a probabilistic procedure by sampling the screening points from a given 
Gibbs measure instead of constructing them by hand. This allows to reduce the  energy and volume errors to surface terms as explained in~\eqref{toter}, which is crucial when treating the regime of small $\beta$.   In particular compared to \cite{lebles} we dispense with the use of several parameters which needed to be sent to $0$ with no explicit rates for the convergences. This is made possible by  a new  truncation approach borrowed from  \cite{lsz,ls2} and improved here.  
The precisely quantified  screening error allows to  estimate the additivity error of the free energies associated to (a variant of $\F$) and $\G$.  As in \cite{aco,as},  
   in view of their monotonicity one  then naturally  obtains a rate of convergence to the limit.   
    
\smallskip
      
    Let us now give a more precise glimpse into the bootstrap argument used to prove the central estimate, which is~\eqref{locallawint0}. We denote $\K(U)$ or $\K^\beta (U)$  the partition function associated  to the energy $\G$ in the set $U\subset \R^\d$.  We start by proving a first bound  of the form 
    \be \label{roughb}
    \left|\log \K(U)\right|
    \leq C \beta \chi(\beta) |U|\ee
    (modulo some additional error terms in dimension $\d \ge 5$). 
    The upper bound holds thanks to  the general lower bound  $\G(X_N) \ge - CN$ where $N$ is the number of points, equal to $\mu(U)$ (see Lemma \ref{minoG}).
    The lower bound holds thanks to a Jensen argument inspired by \cite{gz} (see Proposition \ref{minoK}). 
    Combining the lower bound for $\beta$ and the upper bound for $\beta/2$ we obtain that the local law~\eqref{locallawint0} holds at the largest scale $N^{\frac1\d}$. The result for smaller scales is then proved by a bootstrap: assuming it is true down to scale $2R$, we try to prove 
    that it is true down to scale $R$, as long as $R\ge \rb$.
    Let us consider a hyperrectangle  $\Omega\subset \Sigma$  of sidelengths comparable to $R$, such that $\mu(\Omega)$ is an integer, and let us denote   $\mathrm{n}= \mu(\Omega)$.

For any configuration  $\XN$ of points in $\R^\d$, let us denote by
$n$ the number of points it has in $\Omega$.   
To control the left hand side of~\eqref{locallawint0}, we start by  using ~\eqref{nop} to write that 
\begin{align}
\label{start}
&
\Esp_{\PNbeta}\( \exp\(\hal \beta \G^{\Omega} (\XN) \) \) 
\\ & \qquad \notag
\le \displaystyle  \frac{ 
\displaystyle \int_{(\R^\d)^N}   \exp\(- \hal \beta  \G^{\Omega} (\XN)\)    \exp\(- \beta \G^{\Omega^c} (\XN)\) d\mu(x_1) \dots d\mu(x_N)}
{\displaystyle \int_{(\R^\d)^N } \exp\( - \beta \G (\XN)\) d\mu(x_1)\dots d\mu(x_N)}.
\end{align}
We wish  to bound from above  the numerator and  bound from below the denominator.
To bound the numerator from above we  use the comparison between Neumann-based and Dirichlet-based energies which easily yields
$$\G^\Omega (\XN) \ge \F^\Omega(\XN|_\Omega) \ge \F(X_N|_\Omega, \Omega) \qquad \G^{\Omega^c}(X_N)  \ge   \F^{\Omega^c} (X_N|_{\Omega^c})  \ge \F(X_N|_{\Omega^c}, \Omega^c)$$
hence separating the integral according to the value of $n$, we find
\begin{align}
\label{sum} 
& \int_{(\R^\d)^N} \exp\(-\hal \beta \F^{\Omega} (\XN) \) \exp\(- \beta \F^{\Omega^c} (\XN)\) d\mu(x_1)\dots d\mu(x_N)
\\ & \qquad \notag
\leq  
\sum_{n=0 }^N 
\binom{N}{n}
\int_{\Omega^n} \exp\(-\hal\beta \F(X_n, \Omega) \) d\mu^{\otimes n} (X_n) 
\\ & \notag \qquad \qquad 
\times \int_{(\Omega^c)^{N-n}} \exp\(-\beta \F(X_{N-n}, \Omega^c) \) d\mu^{\otimes (N-n)} (X_{N-n}) .
\end{align}
On the other hand for the denominator we may use the subadditivity of~$\G$, which translates into a superadditivity of the associated partition function, to write that 
\begin{multline*}  \int_{(\R^\d)^N } \exp\( - \beta \G (\XN)\) d\mu(x_1) \dots \mu(x_N)
\\  
\ge 
\binom{N}{\mn} 
\int_{\Omega^\mn} \exp\(-\beta \G(X_\mn, \Omega) \) d\mu^{\otimes \mn} (X_\mn) \int_{(\Omega^c)^{N-\mn}} \exp\(-\beta \G(X_{N-\mn}, \Omega^c) \) d\mu^{\otimes (N-\mn)} (X_{N-\mn}) .\end{multline*}
 We can expect the sum  above to concentrate near $n \simeq \mn$, because other terms correspond to a large discrepancy in  the number of points in~$\Omega$, which we can show leads to a large energy in~$\Omega$.
 Reducing to such terms, in order to bound the left-hand side of~\eqref{start} the next step is 
  to bound from above the Dirichlet energy associated to $\F$ in terms of that associated to~$\G$. That is, we show that we may replace~$\F$ with~$\G$ in the right-hand side of~\eqref{sum}, up to a suitably small error. Then there only remains~$\K^{\beta/2}(\Omega)/\K^\beta(\Omega)$ in the right-hand side of~\eqref{start}, for which we have the desired bound (in~$C \beta \chi(\beta) R^\d$) thanks to~\eqref{roughb}.
      The core of the work is thus to
    prove that  $$  \int_{\Omega^n} \exp\(-\beta \F(X_n, \Omega) \) d\mu^{\otimes n} (X_n)\le  \int_{\Omega^\mn} \exp\(-\beta \G(X_\mn ,\Omega) \) d\mu^{\otimes \mn} (X_\mn)$$  and the same in $\Omega^c$, up to a small error.  This is accomplished thanks to the configuration-by-configuration  screening procedure, which replaces each configuration $X_n$ of $n$ points by a configuration $X_\mn$ of the correct number of points $\mn$  that coincides with $X_n$ except on a thin boundary layer.  The energy and volume errors associated to the procedure  it are kept  as the small surface errors mentioned in ~\eqref{toter}
     by using the fact that the local laws hold at the slightly larger scale $2R$ which provides good energy controls.
     
\smallskip
         
  The  local law ~\eqref{locallawint0} also allows to show the   additivity of the  free energy itself,  up to  the surface error  terms in $\beta R^{\d-1}$ (roughly) for a cube of size $R$.  As in \cite{aco,amourrat},  this is the best that one can hope with such a method. This implies the existence
  of the order $N$ term in  the free energy expansion as in~\cite{LiLe1,LN}, except now with an explicit convergence rate. In that sense, what we prove is a \emph{quantitative thermodynamic limit}.  
 Note that an expansion  up to order $N$  of the free energy with the variational interpretation~\eqref{fb} for the order $N$ coefficient and an error $o(N)$, was already obtained in all dimensions in \cite{lebles}, it came as a corollary of the  LDP. In the two-dimensional case, an error term of $N^{1-\ep}$ for some small (explicit) $\ep>0$ was obtained in \cite{bbny2} by a Yukawa approximation argument.

\smallskip

In \cite{ps,lebles}, we treated Riesz interactions and one-dimensional logarithmic interactions as well as Coulomb interactions (with the important motivation of log gases). This introduced some (not only notational) complications because Riesz kernels are kernels of nonlocal operators and a dimension extension is needed. This is why we  leave the generalization to Riesz and one-dimensional log gases to future work.


\subsection{Outline of the paper}
The paper is organized as follows.
 In Section \ref{sec2} we introduce the precise definitions of the sub and superadditive energies, the appropriate renormalizations (whose specifics are new), and of the corresponding partition functions.
 In Section \ref{sec3} we give some preliminary results, including the sub and superadditivity of the energies, a priori bounds on the energies and partition functions.  Estimates showing how the energies control the fluctuations of the configurations and adapted from previous work are gathered in Appendix \ref{appendixa}.
 In Section \ref{sec4} we give the main newly optimized result of the screening procedure, that allows to bound from above  the additivity error. The proof of the screening itself is postponed to Appendix \ref{appa}.
  Section~\ref{sec5} is the core of the proof that accomplishes the bootstrap procedure: starting from the a priori bounds on the largest scale, it shows how the screening allows to obtain energy controls on smaller and smaller scales.
  In Section \ref{sec6} we investigate the consequences of the bootstrap procedure and deduce from the local laws  the proof of the almost additivity of the free energy hence the  free energy expansion with a rate, in the uniform background case.
  In Section \ref{sec9} we describe the proof of the LDP result of Theorem \ref{theoldp}. Finally in Section \ref{secmini} we adapt our results to the case of energy minimizers to obtain Theorem \ref{th4}.

\smallskip

\noindent
{\bf Acknowledgements.} We thank Thomas Lebl\'e for helpful discussions and comments. The first author was supported by NSF grant DMS-1700329 and a grant of the NYU-PSL Global Alliance. The second author was supported by NSF grant DMS-1700278 and by a Simons Investigator grant.

\section{Additional  definitions}\label{sec2}
 
\subsection{Splitting formula and rescaling}\label{secsplit}
We adapt here the splitting formula, introduced in \cite{ssgl,rs}. It is an exact formula that allows to separate the leading order term in the energy from the next order term, already giving the leading order coefficient in the free energy expansion. 
Here we provide a new  formula by  expanding the energy around  the thermal equilibrium measure $N\mub$, yielding more exact results and allowing to prove the local laws even when the  temperature gets large.

\smallskip

We recall that $\theta= \beta N^{\frac2\d}$ and that the thermal equilibrium measure $\mub$ minimizing ~\eqref{1.9} 
satisfies 
\be \label{eqmb}
\g* \mub+V + \frac{1}{\theta}\log \mub=C\quad \text{in} \ \R^\d\ee
where $C$ is a constant.
  We then define 
 \be \label{zetab}
 \zb:= -  \frac{1}{\theta}\log \mub.
 \ee

\begin{lem}[Splitting formula with the thermal equilibrium measure]
For any configuration $\XN \in (\R^\d)^N$,  we have 
\begin{multline}\label{splitting}
\HN(\XN)= N^2 \mathcal {E}_\theta(\mub)+ N\sum_{i=1}^N \zb(x_i) \\
+ \hal \iint_{\triangle^c} \g(x-y)d\( \sum_{i=1}^N \delta_{x_i} -N\mub  \) (x)d\( \sum_{i=1}^N \delta_{x_i} -N\mub\)(y)
\end{multline}
where $\mathcal{E}$ is as in~\eqref{defE},  $\zb$  as in~\eqref{zetab} 
and $\triangle$ denotes the  diagonal in $\R^\d\times \R^\d$.\end{lem}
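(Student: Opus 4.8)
The plan is to expand the quadratic form appearing in the last term of~\eqref{splitting} and then to feed in the Euler--Lagrange equation~\eqref{eqmb} characterizing $\mub$. Write $\fluct_N := \sum_{i=1}^N \delta_{x_i} - N\mub$ for the fluctuation measure, so that the last term of~\eqref{splitting} is $\hal \iint_{\triangle^c} \g(x-y)\, d\fluct_N(x)\, d\fluct_N(y)$. Expanding bilinearly produces four pieces: the pure point--point term $\iint_{\triangle^c}\g(x-y)\, d(\sum_i\delta_{x_i})(x)\, d(\sum_j\delta_{x_j})(y)$, which equals $\sum_{i\neq j}\g(x_i-x_j)$ precisely because the diagonal has been removed; two mixed terms, each equal to $-N\sum_i (\g*\mub)(x_i)$ (using that $\g$ is even); and the pure background term $N^2\iint_{\triangle^c}\g\, d\mub\, d\mub = N^2\iint\g\, d\mub\, d\mub$. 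The key bookkeeping remark is that $\mub$ is absolutely continuous, hence charges no Lebesgue-null set and in particular no subset of $\triangle$, so that deleting the diagonal affects only the point--point term; all the integrals written are finite since $\mathcal{E}_\theta(\mub)<\infty$ and $\g*\mub$ is finite $\mub$-a.e.

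First I would substitute this expansion into the definition~\eqref{def:HN}. Since $\HN(\XN) = \hal\sum_{i\neq j}\g(x_i-x_j) + N\sum_i V(x_i)$, isolating $\hal\sum_{i\neq j}\g(x_i-x_j)$ from the expansion of $\hal\iint_{\triangle^c}\g\, d\fluct_N\, d\fluct_N$ gives
\[
\HN(\XN) = \hal\iint_{\triangle^c}\g(x-y)\, d\fluct_N(x)\, d\fluct_N(y) + N\sum_{i=1}^N\bigl((\g*\mub)(x_i)+V(x_i)\bigr) - \frac{N^2}{2}\iint\g\, d\mub\, d\mub.
\]
Next I would invoke~\eqref{eqmb}: there is a constant $C$ with $\g*\mub + V = C - \frac{1}{\theta}\log\mub = C + \zb$ on $\R^\d$, with $\zb$ as in~\eqref{zetab}. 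Plugging this pointwise identity into the middle term turns it into $N\sum_i(C+\zb(x_i)) = N^2 C + N\sum_i\zb(x_i)$, which already produces the advertised term $N\sum_i\zb(x_i)$.

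It then remains to identify the constant contribution. Integrating~\eqref{eqmb} against the probability measure $\mub$ gives $C = \iint\g\, d\mub\, d\mub + \int V\, d\mub + \frac{1}{\theta}\int\mub\log\mub$, and recalling from~\eqref{1.9} that $\mathcal{E}_\theta(\mub) = \hal\iint\g\, d\mub\, d\mub + \int V\, d\mub + \frac{1}{\theta}\int\mub\log\mub$, we obtain $N^2 C - \frac{N^2}{2}\iint\g\, d\mub\, d\mub = N^2\mathcal{E}_\theta(\mub)$. Assembling the three identities yields~\eqref{splitting}.

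There is no genuine obstacle here: the statement is an exact algebraic rearrangement. The only points deserving care are the diagonal bookkeeping described above (making sure that only the point--point term retains the constraint $i\neq j$, which is where the $N^2$ term needs no diagonal correction because $\mub$ has no atoms) and the finiteness of each integral appearing, which is guaranteed by $\mathcal{E}_\theta(\mub)<\infty$ together with the standing assumptions~\eqref{integrabilite}--\eqref{Vpg} ensuring the existence and regularity of $\mub$.
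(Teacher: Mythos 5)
Your proof is correct and follows essentially the same approach as the paper: expand the bilinear form $\hal\iint_{\triangle^c}\g\, d(\sum\delta_{x_i}-N\mub)^{\otimes 2}$, rearrange against the definition of $\HN$, and invoke the Euler--Lagrange relation~\eqref{eqmb} (integrated against $\mub$ to pin down the constant). The paper's proof is merely a terse sketch of this same computation; you have supplied the diagonal bookkeeping and the identification of the constant $N^2\mathcal{E}_\theta(\mub)$ that it leaves implicit.
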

\begin{proof}
It suffices to rewrite $\HN(\XN)$ as
$$\HN(\XN)= \iint_{\triangle^c} \g(x-y)d \(  \sum_{i=1}^N \delta_{x_i}\)(x)d \( \sum_{i=1}^N \delta_{x_i}\) (y) + N \int_{\R^\d} V(x) d \(  \sum_{i=1}^N \delta_{x_i}\)(x),$$
expand the integral after writing 
$\sum_{i=1}^N \delta_{x_i}= N\mub+ \( \sum_{i=1}^N \delta_{x_i}- N\mub\)$ and use~\eqref{eqmb}.
\end{proof}

Let us point out that as mentioned in the introduction, from this formula we see $-\frac{1}{\theta} \log \mub$ appearing as an effective confining potential (in place of $\zeta$ in the previous splitting formula of \cite{ss2,rs}).
We next  rescale the  coordinates by  setting $\XN'$ to be the configuration $(N^{\frac1\d}x_1, \dots, N^{\frac1\d} x_N)$. 
The blown-up thermal equilibrium  measure is $\mub'(x)= \mub(xN^{-\frac{1}{\d}})$, it is  a measure of mass $N$ which slowly varies. We also define the rescaling of $\zb $ to be $\zb'(x)= N^{\frac2\d} \zb(x N^{-\frac1\d})$. By definition~\eqref{zetab} we thus have 
\be \label{zetabp}
\zb'(x)=  - \frac{1}{\beta}  \log \mub' (x).\ee
We also have the following scaling relation 
\begin{multline}
\label{scalingF} 
 \iint_{\triangle^c} \g(x-y)d\( \sum_{i=1}^N \delta_{x_i} -N\mub  \) (x)d\( \sum_{i=1}^N \delta_{x_i} -N\mub\)(y)
\\=  N^{1-\frac{2}{\d}}    \iint_{\triangle^c} \g(x-y)d\( \sum_{i=1}^N \delta_{x_i'}- \mub'\)
 (x) d\( \sum_{i=1}^N \delta_{x_i'}- \mub'\) (y) - \(\frac{N}{2}\log N\) \indic_{\d=2}.\end{multline}

We may now define for any point configuration $\XN$ and density $\mu$,  the next-order energy to be
\begin{equation}
\label{defF}
\G(\XN, \mu):=  \hal \iint_{\triangle^c} \g(x-y)d \(  \sum_{i=1}^N \delta_{x_i} -\mu\) (x) d\(  \sum_{i=1}^N \delta_{x_i} -\mu\) (y),\end{equation}  and the next-order partition function to be 
\begin{equation}\label{pdef}
\K(\mu):= N^{-N} \int_{(\R^\d)^N} \exp\( - \beta \G(\XN, \mu)\) d\mu^{\otimes N} (\XN) .\end{equation}

Inserting~\eqref{splitting},~\eqref{zetabp} and~\eqref{scalingF} into~\eqref{defZ},  and using the change of variables $\XN'= N^{\frac1\d}\XN$ and~\eqref{1.9}, we directly find
\begin{equation}\label{resplitz}
\ZNbeta= \exp\(-\beta N^{1+\frac2\d} \mathcal {E}_\theta(\mub)+ \(\frac{\beta}{4}N \log N\)\indic_{\d=2} \)  \K(\mub').\end{equation} 

Note that a main difference with using the previous splitting formula is that here no effective confining potential term remains, and the reduced partition functions are defined with integrations against $\mu^{\otimes N}$ instead of the Lebesgue measure, which makes handling the entropy terms much more convenient.

\smallskip

From now on we will thus be interested in expanding the logarithm of partition functions of the type~\eqref{pdef} for generic densities $\mu$ such that $\int_{\R^\d} d\mu=N$.



\subsection{Electric formulation and truncations}
We now focus on reexpressing $\G(\XN ,\mu)$ in ``electric form", i.e via the electric (or Coulomb) potential generated by the points.
This is the crucial ingredient that exploits the Coulomb nature of the interaction and makes the energy additive.
We rely here on  a rewriting via  truncations as in \cite{rs,ps} but  using  as in \cite{lsz,ls2} the nearest neighbor distance as a   specific truncation  distance so that   no error term is created. This technical improvement is crucial and in particular allows to dispense with the ``regularization procedure" of \cite{lebles}.

We consider the potential $h$ created by the configuration $\XN$ and the background $\mu$, defined by 
\begin{equation}
\label{def:hnmu} 
h(x) := \int_{\R^\d} \g(x-y) d\(\sum_{i=1}^N \delta_{x_i} - \mu\)(y).\end{equation}
Since $\g$ is (up to the constant $\cd$), the fundamental solution to Laplace's equation in dimension $\d$, we have 
\be -\Delta h= \cd\( \sum_{i=1}^N \delta_{x_i} - \mu\).\ee

We note that $h$ tends to $0$ at infinity because $\int \mu=N$ and the ``system" formed by the positive charges at $x_i$ and the negative background charge $N\mu$ is neutral.
We would like to formally rewrite $\G(\XN,\mu)$  defined in~\eqref{defF} as $\int |\nab h|^2$, however this is not correct due to the singularities of $h$ at the points $x_i$ which make the integral diverge. 
This is why we use a truncation procedure which allows to give a renormalized meaning to this integral. 

For any number $\eta>0$, we denote
\be\label{def:truncation} 
\f_{\eta} (x) := (\g(x)-\g(\eta))_+,\ee where $(\cdot)_+$ denotes the positive part of a number, 
and point out that $\f_\eta$ is supported in $B(0,\eta)$. This is a truncation of the Coulomb kernel.
We also denote by $\delta_{x}^{(\eta)}$ the uniform measure of mass $1$ supported on $\partial B(x, \eta)$, which is a smearing of the Dirac mass at $x$ on the sphere of radius $\eta$. We notice that 
\be \label{fconv} \f_{\eta}=\g*\( \delta_0-\delta_0^{(\eta)}\)\ee
so that 
\be\label{deltaf}-\Delta \f_{\eta}= \cd\(  \delta_0-\delta_0^{(\eta)}\).\ee

 For any $\vec{\eta}=(\eta_1, \dots, \eta_N)\in \R^N$, and any  function $h$ satisfying a relation of the form 
\begin{equation}\label{formu}
-\Delta h =  \cd\(\sum_{i=1}^N \delta_{x_i}- \mu\)
\end{equation}
we then define the truncated potential
\begin{equation}\label{formu2}
h_{\vec{\eta}}= h- \sum_{i=1}^N \f_{\eta_i}(x-x_i).\end{equation}
  We note that  in view of~\eqref{deltaf} the function $h_{\vec{\eta}}$ then satisfies 
  \be -\Delta h_{\vec{\eta}} = \cd\( \sum_{i=1}^N \delta_{x_i}^{(\eta_i)}- \mu\).\ee

We then define a particular choice of truncation parameters: if $\XN= (x_1, \dots, x_N)$ is a $N$-tuple of points in $\R^\d$ we denote for all $i=1, \dots, N$,
\begin{equation}\label{def:trxi}
\rr_i := \frac{1}{4} \min\left(\min_{j \neq i} |x_i-x_j|, 1 \right)
\end{equation}
which we will think of as the \textit{nearest-neighbor distance} for $x_i$. 
The following is proven in \cite[Prop. 2.3]{ls2} and \cite[Prop 3.3]{smf} (here we just need to rescale it).
\begin{lem} \label{lem:monoto} 
Let $\XN$ be in $(\R^\d)^N$. If $ (\eta_1, \dots, \eta_N)$ is such that $0 < \eta_i \le \rr_i$ for each $i = 1, \dots, N$, we have
 \begin{equation}
\label{fnmeta}
 \G(\XN,\mu) = \frac{1}{2\cd} \left(\int_{\R^\d}|\nab h_{ \vec{\eta}}|^2  -\cd \sum_{i=1}^N  \g(\eta_i)  \right)  
 - \sum_{i=1}^N \int_{\R^{\d}} \f_{\eta_i}(x - x_i) d\mu(x),
\end{equation} where $h$ is as in~\eqref{def:hnmu}.
 \end{lem}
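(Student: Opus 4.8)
\textbf{Proof strategy for Lemma \ref{lem:monoto}.}

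The plan is to first establish the identity by a direct electrostatic computation, reducing to the case $N=1$ of a single point by a superposition/bilinearity argument, and then invoke the stated monotonicity. First I would write the truncated potential $h_{\vec\eta}$ via $h_{\vec\eta} = h - \sum_i \f_{\eta_i}(\cdot - x_i)$, and compute $\int_{\R^\d}|\nabla h_{\vec\eta}|^2$ by integrating by parts against the equation $-\Delta h_{\vec\eta} = \cd(\sum_i \delta_{x_i}^{(\eta_i)} - \mu)$. This turns the Dirichlet integral into $\cd \int h_{\vec\eta}\, d(\sum_i \delta_{x_i}^{(\eta_i)} - \mu)$; the decay of $h$ (hence of $h_{\vec\eta}$) at infinity, noted just above via neutrality, justifies the absence of boundary terms. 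The key local computation is that near $x_i$ the singular part of $h$ is $\g(\cdot - x_i)$, so $h - \f_{\eta_i}(\cdot - x_i)$ is bounded near $x_i$ and in fact, on $\partial B(x_i,\eta_i)$, one has $\f_{\eta_i} \equiv 0$ there while the remaining truncations $\f_{\eta_j}(x-x_j)$ for $j\neq i$ vanish at $x_i$ precisely because $\eta_j \le \rr_j \le \tfrac14|x_i - x_j|$, so their supports $B(x_j,\eta_j)$ are disjoint from a neighborhood of $x_i$. This is exactly where the nearest-neighbor truncation choice \eqref{def:trxi} is used: it guarantees the balls $B(x_i,\rr_i)$ are mutually disjoint and bounded away from one another.

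The main step is the bookkeeping of the self-interaction terms. Expanding $\G(\XN,\mu)$ in \eqref{defF} formally gives $\tfrac{1}{2\cd}\int |\nabla h|^2$ minus the divergent self-energies $\tfrac12\g(0)$ of each point; the truncation replaces each point charge $\delta_{x_i}$ by the smeared charge $\delta_{x_i}^{(\eta_i)}$, whose self-energy is the finite quantity $\tfrac12 \iint \g(x-y)\,d\delta_{x_i}^{(\eta_i)}(x)\,d\delta_{x_i}^{(\eta_i)}(y) = \tfrac12\g(\eta_i)$ (using that the Newtonian potential of a uniform sphere of radius $\eta$ equals $\g(\eta)$ on and inside the sphere). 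Matching terms, one finds that $\tfrac{1}{2\cd}\int|\nabla h_{\vec\eta}|^2$ equals $\G(\XN,\mu)$ plus $\tfrac12\sum_i \g(\eta_i)$ plus the cross terms $\sum_i \iint \g(x-y)\,d\delta_{x_i}^{(\eta_i)}(x)\,d\mu(y)$ minus $\sum_i\iint\g(x-y)\,d\delta_{x_i}(x)\,d\mu(y)$, and the latter difference is precisely $-\cd\sum_i\int \f_{\eta_i}(x-x_i)\,d\mu(x)$ by \eqref{fconv}. Rearranging yields \eqref{fnmeta}. Strictly speaking, since $\G$ as written is finite only after renormalization, I would carry out the computation by first proving it with all point masses replaced by $\delta_{x_i}^{(\alpha)}$ for $\alpha$ small, where everything is a convergent integral, and then either pass to the limit $\alpha\to 0$ using the explicit formula for $\g(\alpha)$-type self-energies, or simply cite that this renormalized identity is the content of \cite[Prop. 2.3]{ls2} and \cite[Prop. 3.3]{smf}, which is what the lemma statement does — we only need the rescaled version.

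The step I expect to require the most care is the passage from the smeared/regularized identity to the renormalized one, i.e. making rigorous the cancellation of $\g(0)$-type divergences and verifying that the remaining error is exactly the background-interaction term $-\sum_i\int\f_{\eta_i}(x-x_i)\,d\mu(x)$ with no leftover contribution — in particular checking that $\f_{\eta_i}(x-x_i)$ integrates against $\mu$ without any lower-order correction, and that the cross terms between distinct smeared charges coincide with the cross terms between the distinct Diracs (again using disjointness of supports from \eqref{def:trxi}). Since the proof in the literature does exactly this, the cleanest route here is to simply reduce to those references by the rescaling $x \mapsto N^{-1/\d}x$, $\mu \mapsto \mu$, which transforms the statements of \cite[Prop. 2.3]{ls2} and \cite[Prop. 3.3]{smf} into \eqref{fnmeta}; one only needs to track how $\G$, $h$, $\f_\eta$ and the truncation radii scale, which is immediate from their definitions.
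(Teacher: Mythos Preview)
Your proposal is correct and converges to exactly the paper's approach: the paper's proof is simply the citation ``This is proven in \cite[Prop.~2.3]{ls2} and \cite[Prop.~3.3]{smf} (here we just need to rescale it),'' and your final paragraph arrives at precisely this reduction, while your earlier sketch accurately reconstructs the integration-by-parts and disjoint-balls argument underlying those references. One cosmetic slip: the cross-term difference you compute should be $+\sum_i\int \f_{\eta_i}(x-x_i)\,d\mu(x)$ (no factor of $\cd$), which then rearranges directly to \eqref{fnmeta}; otherwise the bookkeeping is sound.
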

 This shows in particular that the expression in the right-hand side is independent of the truncation parameter, as soon as it is small enough.  Choosing $\eta_i=\rr_i$ thus yields an exact (electric) representation for $\G$. In Appendix \ref{appendixa} we provide monotonicity results which  show that taking truncation parameters $\eta_i$ larger than $\rr_i$ can only decrease the value of the right-hand side of~\eqref{fnmeta}.

\subsection{Dirichlet and Neumann local problems}
We now  introduce new local versions of these problems, that will serve to  define the sub and super additive energy approximations.
Let us consider $U$ a subset of $\R^\d$ with piecewise $C^1$ boundary, bounded or unbounded.  Most often, $U$ will be $\R^\d$ or a hyperrectangle or the complement of a hyperrectangle.
Although $N$ originally denoted the number of points in $\R^\d$ and defined the blown-up scale at which we are working, when ambiguous we will also use the notation $N$ to denote the total number of points a system has in a generic set $U$, which may not be the whole space.

\smallskip

The main quantity we will use is one obtained by solving a relation of the type~\eqref{formu} with a zero Neumann boundary condition.
We need to introduce a new modified version of the minimal distance to make the energy subadditive:
 we let
\begin{equation}\label{defrrc}
\rrh_i := \frac{1}{4}\min \( \min_{x_j\in U, j\neq i} |x_i-x_j|, \dist(x_i, \p U), \dm\).
\end{equation} 
In order to have an energy which is always bounded from below, we need to add some energy to points that approach the boundary. 
To that effect we   define
\be \label{defGG}\GG(x_i):= \(\g\(\tfrac14\dist(x_i, \pa U)\) - \g\(\tfrac14\)\)_+.\ee
If $\mu (U)=N$, an integer,  
 for a configuration $\XN$ of points in $U$,  we now define
\begin{equation}\label{minneum}
\G(X_N,\m,U) :=\frac{1}{2\cd}\( \int_{U}|\nab u_{\rrh}|^2 - \cd \sum_{i=1}^N \g(\rrh_i)\)  - \sum_{i=1}^N \int_{U} \f_{\rrh_i}(x - x_i) d\m(x)+
\sum_{i=1}^N \GG(x_i)
\end{equation}
where  $u$ solves
\begin{equation}\label{defv}
\left\{\begin{array}{ll}
 -\Delta u = \cd\Big( \sum_{i=1}^{N} \delta_{x_i}- \m \Big) &\ \text{in} \ U \\
 \frac{\pa u}{\pa \nu}=0 &\ \text{on} \ \partial U. \end{array}\right.
\end{equation}
Note that  under the condition $ \m(U)  =N$ the solution of~\eqref{defv} exists and is unique up to addition of a constant.
Unless ambiguous, we will denote $\G(\XN, U)$ instead of $\G(\XN, \mu, U)$.
We note that from~\eqref{fnmeta}, $\G(\cdot, \R^\d)$ coincides with $\G$ defined in~\eqref{defF}.

\smallskip

We will use  a localized version of this energy: if $u$ solves~\eqref{defv} and $\Omega $ is a strict (closed) subset of $U$, we define 
\be\label{rrc}  
\rrc_i := \frac14
\left\{ 
\begin{aligned}
& \min \(\min_{x_j \in \Omega, j\neq i} |x_i-x_j|, \dist(x_i, \partial U \cap \Omega),  \dm\) && \text{if} \ \dist(x_i, \partial \Omega \backslash \partial U ) \ge \hal,\\
& \min\(1, \dist (x_i, \pa U\cap \Omega) \)&& \text{otherwise.}
\end{aligned}
\right. 
\ee
and
\begin{align}
\label{Glocal}
&
\G^{\Omega}(X_N, U)
\\ & \quad \notag
:=
\frac{1}{2\cd}\( \int_{\Omega} |\nab u_{\rrc}|^2 - \cd \sum_{i, x_i \in \Omega} \g(\rrc_i) \) -
\sum_{i, x_i\in \Omega}\int_U \f_{\rrc_i}(x - x_i) d\m(x) +
\sum_{i,x_i \in \Omega} \GG(x_i).
\end{align}
We will also use the following variant of $\rrc_i$ which only differs near $\pa \Omega\setminus \pa U$:
\be\label{rrtt}  
\rrtt_i := \frac14
\left\{ 
\begin{aligned}
& \min \(\min_{x_j \in \Omega, j\neq i} |x_i-x_j|, \dist(x_i, \partial U \cap \Omega) , \dm\) && \text{if} \ \dist(x_i, \partial \Omega \backslash \partial U ) \ge 1,\\
& \min\(1, \dist (x_i, \pa U\cap \Omega) \)&& \text{otherwise.}
\end{aligned}
\right. 
\ee
Let us point out that when $\Omega=U$ then $\rrc_i=\rrtt_i=\rrh_i$.

Our second quantity is obtained by minimizing the energy with respect to all possible functions $u$ compatible with the points 
in the sense of satisfying~\eqref{formu}, it naturally leads to a Dirichlet problem and to a superadditive energy.
For a configuration $X_N $ of points in $U$, imitating~\eqref{fnmeta} we    define the energy relative to the set $U$ as 
\begin{equation}\label{mindiri}
\F(X_N,\mu, U):= \frac{1}{2\cd} \( \int_{U} |\nab v_{\rrc} |^2 - \cd \sum_{i=1}^N \g(\rrc_i)\) - \sum_{i=1}^N \int_{U} \f_{\rrc_i}(x - x_i) d\m(x) \end{equation}
where $\rrc$ is as in~\eqref{rrc} with $\varnothing$ in place of $U$ and~$U$ in place of~$\Omega$, and  
\begin{equation}\label{defu}
\left\{\begin{array}{ll}
 -\Delta v = \cd\Big( \sum_{i=1}^{N} \delta_{x_i}- \m \Big) &\ \text{in} \ U \\
 v_{\rrc}=0 &\ \text{on} \ \partial U. \end{array}\right.
\end{equation}
We will often omit (unless ambiguous) the dependence in $\mu $  in the notation and simply write $\F(\XN, U)$. Using standard variational arguments, we may check that we have
\begin{multline} 
\label{3.4}
\F(X_N, U)
=
\min \Bigg\{ \frac{1}{2\cd}\( \int_{U} |\nab u_{\rrc} |^2 - \cd \sum_{i=1}^N \g(\rrc_i) \)- 
\sum_{i=1}^N \int_{U} \f_{\rrc_i}(x - x_i) d\m(x) \,:\, \\
 -\Delta u = \cd\Bigg( \sum_{i=1}^{N} \delta_{x_i}- \m \Bigg) \ \text{in} \ U\Bigg\}.\end{multline}
We will not use $\F$ very much but rather  a variant (mixed version of the energy), for  $\Omega$ a subset of $U$ that may touch $\partial U$.  
 For $\XN$ a configuration of $N$ points in $\Omega \cap U$, imitating the definition of $\F$ we set
\begin{equation}
\label{minmix}
\H_U(X_N, \Omega) :=  \frac{1}{2\cd}\( \int_{\Omega\cap U} |\nab w_{\rrc}|^2 - \cd \sum_{i=1}^N \g(\rrc_i) \)
- \sum_{i=1}^N \int_{\Omega \cap U} \f_{\rrc_i} (x-x_i) \,d\mu(x),
\end{equation}
where  $\rrc$ is as in~\eqref{rrc}  and 
\be \label{defw}
\left\{\begin{array}{ll}
 -\Delta w = \cd\Big( \sum_{i=1}^{N} \delta_{x_i}- \m \Big) &\ \text{in} \ \Omega \cap U \\
 \frac{\pa w}{\pa \nu}=0 &\ \text{on} \ \partial U\cap \Omega\\
 w_{\tilde\rr}=0 &\  \text{on} \ \partial (\Omega \cap U) \backslash \partial U
 .\end{array}\right.
\end{equation}
We can check that 
\begin{multline}\label{232}
\H_U(X_N, \Omega)\\= \min \Bigg\{ \frac{1}{2\cd}\( \int_{\Omega\cap U} |\nab w_{\rrc}|^2 - \cd \sum_{i=1}^N \g(\rrc_i) \)
- \sum_{i=1}^N \int_{\Omega \cap U} \f_{\rrc_i} (x-x_i) d\mu(x) \,:\,\\
-\Delta w=\cd\( \sum_{i=1}^N \delta_{x_i} -\mu\) \  \text{in} \ U\cap \Omega, \quad \frac{\partial w}{\partial \nu}=0\  \text{on} \ \partial U \cap \Omega\Bigg\}.
\end{multline}
We then define a localized version: if $\omega$ is a subset of $\Omega$,
\be  \H_U^{\omega}(X_N, \Omega) :=  \frac{1}{2\cd}\( \int_{\omega\cap U} |\nab w_{\rrc}|^2 - \cd \sum_{i, x_i \in \omega }\g( \rrc_i) \)
- \sum_{i, x_i \in \omega} \int_{\Omega \cap U} \f_{\rrc_i} (x-x_i) d\mu(x)  ,\ee
where  $\rrc_i$  is now relative to $\partial \omega$. 
We note that if $U=\R^\d$ or if $\Omega $ is a strict subset of $U$, $\H_U$ coincides with $\F$.

\subsection{Partition functions}

We next define a partition function relative to~$U$. If~$\m(U) =N$, then we define
\begin{equation}
\label{defK}
\K(U, \mu):=  N^{-N}\int_{U^N} 
\exp\left(- \beta \G(X_N,\m,U) \right)
\, d\mu^{\otimes N} (X_N).\end{equation}
We also define the  associated Gibbs measure by \begin{equation}\label{defQ}
\Q(U, \m)
:= 
\frac{1}{N^N\K(U, \mu)} 
\exp\left( -\beta \G(\XN,\mu, U)\right)
\, d\mu^{\otimes N} (X_N).
\end{equation}
We may also consider in the same way (although we will not give details)
\begin{equation}\label{defP}
\P_N (U, \m) := \frac{1}{N^N \L_N(U, \mu)} 
\exp\left( -\beta  \F(\XN,\mu, U) \right) 
\, d\mu^{\otimes N} (X_N).
\end{equation}



We will assume without loss of generality that the points in $X_{N}$ never intersect the boundary of the considered cubes, which is legitimate since this would correspond to a zero-measure set in the integrals defining $ \K$. 
As above, we will simply denote (unless ambiguous) these quantities by  $\K(U)$ and $\Q(U)$.
We note that    $\K(\R^\d,\mu)$  coincides with  $\K(\mu)$ defined in~\eqref{pdef}, and that in view of the splitting formula~\eqref{splitting} and~\eqref{resplitz},   $\Q(\R^\d)$   coincides with the original Gibbs measure $\PNbeta$ defined in~\eqref{def:PNbeta}.

    \section{Preliminary results}\label{sec3}

\subsection{Partitioning into hyperrectangles with quantized mass}
We will use throughout the paper the following definition.
\begin{defi}\label{def32}
For any $R$ we let $\mathcal Q_R$ be the set of hyperrectangles $Q$ whose sidelengths belong to $[R, 2R]$ 
and which are such that 
$\mu(Q)$ is an integer.
\end{defi}

\begin{lem}\label{rect}Assume $\mu \ge m>0$ in a set $U$.
 There exists a constant $R_0>0$ depending only on $\d$  and $m$ such that 
   given any $R>R_0$ 
 there exists a   collection $\mathcal K_R$ of closed hyperrectangles with disjoint interiors belonging to $\mathcal Q_R$,   and such that
\begin{equation}\label{inclusion} \{x\in U: d(x,\partial U)\le R \}\subset  U\setminus\bigcup_{K\in\mathcal K_R} K \subset \{x\in U : d(x,\partial U)\le  2R \}.\end{equation}
Moreover, if $U$ is a hyperrectangle, we can require that $\bigcup_{K\in\mathcal K_R} K=U$.
\end{lem}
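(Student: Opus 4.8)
The plan is to construct $\mathcal K_R$ by a greedy ``sweeping'' argument carried out one coordinate at a time, the role of $R_0$ being to provide enough slack to make every box have integer $\mu$-mass. The elementary building block is the one-dimensional statement: \emph{if $I$ is an interval (bounded or half-infinite) of length $\ge R$ and $\rho\ge 0$ is a density on $I$ with $\int_{I'}\rho\ge 2$ for every subinterval $I'\subseteq I$ of length $R$, then $I$ can be partitioned into subintervals of length in $[R,2R]$, and, if moreover $\int_I\rho\in\Z$ (or $I$ is half-infinite), so that each piece has integer $\rho$-mass.} This is proved by sweeping from the left: having placed the left endpoint $s_{\ell-1}$ of the current piece, the map $t\mapsto\int_{s_{\ell-1}}^{t}\rho$ is continuous, nondecreasing, and increases by at least $2$ as $t$ runs over $[s_{\ell-1}+R,\,s_{\ell-1}+2R]$, so it takes an integer value at some admissible $s_\ell$; one stops once the remainder has length $<3R$ and places the final one or two pieces by a standard anticipation argument, again using the ``$\ge 2$''. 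One then fixes $R_0$ to be a large constant times $\max(m^{-1/\d},1)$, large enough that $mR^\d\ge 2$ for all $R\ge R_0$; this is the only way $R_0$ enters, and it depends only on $\d$ and $m$.

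I would prove the ``moreover'' first. Let $U=\prod_{k=1}^{\d}[a_k,b_k]$ be a hyperrectangle; one needs $b_k-a_k\ge R$ for all $k$ and $\mu(U)\in\Z$, both of which hold in every application of the lemma (the latter because there $U$ is itself an element of a coarser $\mathcal Q_{R'}$). The claim is that $U$ can be partitioned into elements of $\mathcal Q_R$, argued by induction on $\d$. Apply the building block to the fiber density $t\mapsto\int_{\prod_{k<\d}[a_k,b_k]}\mu(y,t)\,dy$, which is $\ge m\prod_{k<\d}(b_k-a_k)\ge mR^{\d-1}$, to obtain $a_\d=s_0<\dots<s_P=b_\d$ with $s_\ell-s_{\ell-1}\in[R,2R]$ and each slab $S_\ell:=\prod_{k<\d}[a_k,b_k]\times[s_{\ell-1},s_\ell]$ of integer $\mu$-mass (the last slab automatically, since $\mu(U)\in\Z$). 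Partitioning $S_\ell$ into $\mathcal Q_R$-boxes is equivalent to partitioning the $(\d-1)$-dimensional box $\prod_{k<\d}[a_k,b_k]$ into $(\d-1)$-dimensional boxes of sidelengths in $[R,2R]$ and integer mass for the pushed-down density $\rho_\ell(y):=\int_{s_{\ell-1}}^{s_\ell}\mu(y,x_\d)\,dx_\d\ge m(s_\ell-s_{\ell-1})\ge mR$; since the lower bound on $\rho_\ell$ is larger by a factor $\ge R$, the ``$\ge 2$'' hypothesis holds again with the same constant $mR^\d$, so the inductive hypothesis applies ($\d=1$ being the building block). Taking products with $[s_{\ell-1},s_\ell]$ and unioning over $\ell$ gives the partition of $U$.

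For the inclusion \eqref{inclusion} for a general $U$ on which $\mu\ge m$, write $U_t:=\{x\in U:\dist(x,\partial U)>t\}$; if $U_{2R}=\varnothing$ take $\mathcal K_R=\varnothing$. Otherwise the scheme is: produce a (locally) finite family of closed product regions (products of intervals, possibly half-infinite) with disjoint interiors, contained in $U_R$ and covering $U_{2R}$, and sub-partition each of them by the construction of the previous paragraph — after first pushing finitely many of their faces by an amount $O\!\big(1/(mR^{\d-1})\big)=O\!\big(R/(mR^\d)\big)\ll R$ to make each mass an integer, which is possible since $\mu$ is a density $\ge m$ on $U_R\subseteq U$. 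For the sets that actually occur this family is explicit: if $U$ is a hyperrectangle, take the single box $\overline{U_{3R/2}}$ (whose sidelengths are $\ge R$ once $U_{2R}\ne\varnothing$), push one face outward toward $\partial U$ by $\ll R$ to make its mass an integer, and invoke the ``moreover''; if $U$ is the complement of a hyperrectangle $B$, decompose $U_R$ into the $3^\d-1$ ``orthant'' product regions obtained by pushing the faces of $B$ out by amounts in $[R,\tfrac{3}{2}R]$, and handle each (a product of bounded and half-infinite intervals) by the same construction run bi-infinitely where needed; the case $U=\R^\d$ is the building block run bi-infinitely in every coordinate. In every case the tiles remain at distance $>R$ from $\partial U$, whence $U\setminus\bigcup_{K}K\supseteq\{x\in U:\dist(x,\partial U)\le R\}$, and they cover everything at distance $>2R$, whence $U\setminus\bigcup_{K}K\subseteq\{x\in U:\dist(x,\partial U)\le 2R\}$; this is \eqref{inclusion}.

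The single delicate point — and the reason $R_0$ must be large — is the tension between the two constraints on each box: its sidelengths must lie in $[R,2R]$, while its $\mu$-mass must be an integer. Once $mR^\d\ge 2$, the $\mu$-mass of a candidate box changes by more than $1$ as any one of its sidelengths sweeps through $[R,2R]$, so one can always land on an integer without leaving the sidelength window; and the ``forced'' last piece of each one-dimensional sweep is made admissible by anticipating it one or two steps early. Everything uses only that $\mu$ is an absolutely continuous density bounded below by $m>0$ on $U$, so that the mass-versus-face-position functions appearing above are continuous, strictly increasing, with derivative $\ge mR^{\d-1}$; no upper bound on $\mu$ enters. The remaining work is bookkeeping in the induction and, for \eqref{inclusion}, the elementary geometry of fitting the near-boundary tiles inside the layer $R<\dist(\cdot,\partial U)$ while still reaching $\dist(\cdot,\partial U)=2R$, which is why those tiles are placed hugging $\partial U$ from inside rather than on a fixed grid.
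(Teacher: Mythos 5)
Your overall architecture---a one-dimensional sweeping lemma as building block, induction on dimension via marginal densities for the ``moreover,'' and explicit product decompositions hugging $\partial U$ for \eqref{inclusion}, with $R_0\sim m^{-1/\d}$ providing the slack to land on integer masses---is the right one, and matches what an adaptation of \cite[Lemma 7.5]{ss1} (to which the paper defers without details) would look like. The reduction to one dimension, the use of the pushed-down density $\rho_\ell\ge mR$, and the book-keeping of why the inductive hypothesis persists are all correctly set up.

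The genuine gap is in the one-dimensional building block, and it is not a bookkeeping matter: the claim is \emph{false} as you state it, and the phrase ``a standard anticipation argument, again using the $\ge 2$'' does not repair it. Take $L=2R+\epsilon$ with $\epsilon\in(0,R)$. Any admissible partition has exactly two pieces, so the only freedom is the split point $t\in[R,R+\epsilon]$, a window of length $\epsilon$; the hypothesis ``$\int_{I'}\rho\ge 2$ on every length-$R$ window'' (even upgraded to the pointwise bound $\rho\ge 2/R$) only gives $F(R+\epsilon)-F(R)\ge 2\epsilon/R<1$ for $\epsilon<R/2$, so $[F(R),F(R+\epsilon)]$ can miss $\Z$ entirely. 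Concretely, with $\epsilon=R/10$ and $\rho=2.5/R$ on $[0,R]$, $3/R$ on $[R,1.1R]$, $2.2/R$ on $[1.1R,2.1R]$, one has $\rho\ge 2/R$, $\int_0^L\rho=5\in\Z$, yet $F$ sweeps only $[2.5,2.8]$ over the admissible window. Lifting this to $U=[0,2.1R]\times[0,R]^{\d-1}$ with $\mu(x)=\rho(x_1)/R^{\d-1}$ gives $\mu\ge m$, $\mu(U)\in\Z$, and no partition into $\mathcal Q_R$-boxes, so the ``moreover'' of the lemma itself fails for such a $U$ regardless of how large $R_0$ is. The issue evaporates in every actual application in the paper because the tiled domain has all sidelengths much larger than the tile scale, but that is an additional (currently implicit) hypothesis. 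To make the argument correct you should either add the hypothesis that every sidelength of $U$ is $\ge CR$ for a suitable dimensional constant, and then replace the pure greedy by one with genuine multi-step lookahead (choosing each split among the several points of $F^{-1}(\Z)$ available so as to avoid leaving a remainder in the narrow bad zone near $2R$), or strengthen the one-dimensional hypothesis so that $F^{-1}(\Z)$ has gaps $\ll R$ and then prove the endgame lemma for such point sets; either way the $\ge 2$ alone does not carry the last two pieces.
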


\begin{proof} 
The proof can easily be adapted from  \cite[Lemma 7.5]{ss1}.
\end{proof}


  \subsection{Sub and superadditivity}
  Here, we show that $\G$ is subadditive, as desired (one can also easily check that $\F$ is superadditive).
  We will use various results on the monotonicity of the energy with respect to the truncation parameter, which are stated and  proven in Appendix \ref{appendixa}.
  In the rest of the paper, when talking about ``disjoint union of two sets", we mean the union of the closures of two sets whose interiors are disjoint.
\begin{lem}\label{additi}
For any configuration $\XN$ defined in $U$ with $N=\mu(U)$, if $\Omega$ is a subset of $U$ and $\omega$ a subset of $\Omega$, we have 
\be 
\label{restri1} \G^{\Omega}(\XN, U) \ge \H_U(\XN|_{\Omega}, \Omega)\qquad
\H_U^{\omega}(\XN, \Omega) \ge \H_U^\omega(\XN|_{\omega}, \omega) .
\ee
and if $\omega$ is the disjoint union of $\omega_1 $ and $\omega_2$, 
\be \label{add3}
\H^{\omega}_U(X_N, \Omega) \ge \H^{\omega_1}_U(X_N, \Omega) + \H^{\omega_2}_U(X_N, \Omega) . 
\ee

\end{lem}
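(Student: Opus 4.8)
\textbf{Proof plan for Lemma \ref{additi}.}

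The plan is to derive all three inequalities from the variational characterizations \eqref{232} (and its localized version) together with the truncation monotonicity results of Appendix \ref{appendixa}. The common mechanism is that each inequality asserts that a constrained minimization over a \emph{smaller} domain (or with \emph{more} restrictive admissibility) produces a \emph{larger} energy, which is a standard domain-monotonicity phenomenon once the singular self-energy contributions are handled via truncations.

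First I would treat the first inequality of \eqref{restri1}, $\G^{\Omega}(\XN,U)\ge \H_U(\XN|_\Omega,\Omega)$. The idea is to start from the definition \eqref{Glocal} of $\G^\Omega(\XN,U)$, which uses the Neumann-type potential $u$ solving \eqref{defv} on all of $U$, restricted to $\Omega$. The restriction $u|_{\Omega\cap U}$ is an admissible competitor for the minimization problem \eqref{232} defining $\H_U(\XN|_\Omega,\Omega)$: it solves $-\Delta u=\cd(\sum_{x_i\in\Omega}\delta_{x_i}-\mu)$ in $\Omega\cap U$ (the points of $\XN$ outside $\Omega$ play no role there) and satisfies $\partial u/\partial\nu=0$ on $\partial U\cap\Omega$, while no constraint is imposed on $\partial(\Omega\cap U)\setminus\partial U$ in \eqref{232}. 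The one subtlety is the truncation parameters: $\G^\Omega$ uses $\rrc_i$ as in \eqref{rrc} (relative to $\partial\Omega$), whereas $\H_U$ uses $\rrc_i$ as in \eqref{rrc} with $\Omega$ playing the role of $U$; near $\partial\Omega\setminus\partial U$ these may differ, and one also has to account for the extra boundary terms $\GG(x_i)$ present in \eqref{Glocal} but absent in \eqref{minmix}. Here I would invoke the monotonicity of the truncated energy in the truncation parameter (Appendix \ref{appendixa}) to replace $\rrc_i$ by a common smaller value, and observe that the terms $\GG(x_i)\ge 0$ can simply be dropped when passing to a lower bound. The second inequality of \eqref{restri1}, $\H_U^\omega(\XN,\Omega)\ge \H_U^\omega(\XN|_\omega,\omega)$, is handled the same way: $w|_{\omega\cap U}$ is admissible for the problem defining $\H_U^\omega(\XN|_\omega,\omega)$.

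For \eqref{add3}, the superadditivity $\H^\omega_U(X_N,\Omega)\ge \H^{\omega_1}_U(X_N,\Omega)+\H^{\omega_2}_U(X_N,\Omega)$ when $\omega=\omega_1\cup\omega_2$ is a disjoint union, I would argue by splitting the integral: writing $\int_{\omega\cap U}=\int_{\omega_1\cap U}+\int_{\omega_2\cap U}$ in the definition of $\H^\omega_U$, and similarly splitting the point sum and the background-interaction sum according to which $\omega_k$ contains $x_i$. The point is that the single potential $w$ used on $\omega\cap U$ is, upon restriction to each $\omega_k\cap U$, an admissible competitor for the minimization defining $\H^{\omega_k}_U$, so each restricted piece of the energy is bounded below by $\H^{\omega_k}_U$; summing gives the claim. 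Again the truncation parameters $\rrc_i$ on the left are computed relative to $\partial\omega$ while on the right they are relative to $\partial\omega_k$, so one uses the Appendix \ref{appendixa} monotonicity to pass between them (decreasing $\rrc_i$ only decreases the energy, which is the direction we need for a lower bound), noting that for $x_i\in\omega_k$ one has $\dist(x_i,\partial\omega_k)\le\dist(x_i,\partial\omega)$ is \emph{not} automatic—rather, the relevant comparison of minimal distances has to be checked case by case from the definition \eqref{rrc}.

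The main obstacle I anticipate is purely bookkeeping: carefully matching up the three slightly different minimal-distance truncations $\rrc_i$, $\rrtt_i$, $\rrh_i$ across the regions and boundaries involved, and verifying in each case that the monotonicity lemmas of Appendix \ref{appendixa} apply in the correct direction so that every truncation adjustment and every dropped nonnegative term ($\GG\ge 0$, and the discarded cross-region Dirichlet energy in the superadditivity step) pushes the inequality the right way. The analytic content—domain monotonicity of the Dirichlet energy under restriction, and the variational (energy-minimizing) characterization of the Dirichlet/mixed solutions—is standard; the delicate part is ensuring the singular-core contributions are neutralized uniformly by the truncation procedure so that no uncontrolled error term appears.
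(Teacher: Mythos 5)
Your plan for \eqref{restri1} follows the paper's intended argument: both inequalities use the variational characterization \eqref{232} with the restriction of the larger-domain potential as an admissible competitor, together with dropping the nonnegative $\GG$ terms. This part is correct.

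For \eqref{add3} there is a genuine gap. You invoke ``the minimization defining $\H^{\omega_k}_U$,'' but there is none: the localized quantity $\H^{\omega_k}_U(X_N,\Omega)$ is evaluated on the \emph{same} fixed $w$ solving \eqref{defw} on $\Omega\cap U$ as $\H^{\omega}_U(X_N,\Omega)$. Only the unlocalized $\H_U$ in \eqref{232} has a minimization characterization. The correct mechanism is simply (i) changing the truncation radii from those computed relative to $\omega$ to those relative to $\omega_k$, and (ii) splitting $\int_{\omega\cap U}=\int_{\omega_1\cap U}+\int_{\omega_2\cap U}$; the resulting expression is then literally $\H^{\omega_1}_U(X_N,\Omega)+\H^{\omega_2}_U(X_N,\Omega)$ by definition, with no competitor argument. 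You also have the direction of both monotonicities reversed. For $x_i\in\omega_k$ the $\rrc_i$ computed relative to $\omega_k$ via \eqref{rrc} is \emph{larger} than the one relative to $\omega$: the nearest-neighbor minimum is taken over the smaller set $\{x_j\in\omega_k\}$, $\dist(x_i,\partial U\cap\omega_k)\ge\dist(x_i,\partial U\cap\omega)$, and the case split in \eqref{rrc} can only move to the more permissive branch. Lemma \ref{monoto} then says that \emph{increasing} the truncation parameter \emph{decreases} the energy expression, which is exactly what is needed for the lower bound. Your statement that ``decreasing $\rrc_i$ only decreases the energy'' is the opposite of Lemma \ref{monoto}, and your hedge that $\dist(x_i,\partial\omega_k)\le\dist(x_i,\partial\omega)$ might fail is misplaced: that inequality holds automatically for $x_i\in\omega_k\subset\omega$, but it is not the quantity \eqref{rrc} actually makes $\rrc_i$ a function of.
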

\begin{proof}
 Let us first change $\rrc_i$ relative to $\omega$ into $\rrc_i $ relative to $\omega_1 $ for $x_i \in \omega_1$ respectively $\rrc_i $ relative to $\omega_2$ for $x_i \in \omega_2$. This   increases these  truncation parameters,  hence in view of Lemma \ref{monoto},  it may only  decrease the computed value of $\H_U$.
Splitting the obtained integral into two pieces we deduce that 
\begin{multline*}\H_U^{\omega}(\XN,  \Omega) \ge \frac{1}{2\cd}\( \int_{\omega_1} |\nab w_{\rrc_i}|^2 -\cd \sum_{i, x_i \in \omega_1} \g(\rrc_i)\)  - \sum_{i, x_i \in \omega_1}\int_{\Omega \cap U} \f_{\rrc_i} (x-x_i) d\mu(x) \\
+  \frac{1}{2\cd}\( \int_{\omega_2} |\nab w_{\rrc_i}|^2 -\cd \sum_{i, x_i \in \omega_2} \g(\rrc_i)  \)- \sum_{i, x_i \in \omega_2}\int_{\Omega\cap U}\f_{\rrc_i} (x-x_i) d\mu(x),
\end{multline*}
where $w$ is as in~\eqref{defw} and the $\rrc_i$ are those relative to $\omega_1$, resp. $\omega_2$.
 It follows that~\eqref{add3} holds. 
 The first item of~\eqref{restri1} is a consequence of the minimality property~\eqref{232}. The second item is proven  by using the minimality property~\eqref{232}.  
\end{proof}

 As already observed and used in \cite{ssgl,ss1,rs,ps,lebles}, solving Neumann problems  allows to get   subadditive energy estimates over subcubes by using the following lemma (whose proof we omit) which exploits that the Neumann electric field is the $L^2$ projection of any compatible electric field onto gradients.  
  \begin{lem}[Projection lemma]\label{projlem}
  Assume that $U$ is a compact subset of $\R^\d$ with piecewise $C^1$ boundary.
  Assume $E$ is a vector-field satisfying a relation of the form
  \begin{equation}\label{eqe}
   \left\{\begin{array}{ll}  
  -\div E= \cd\( \sum_{i=1}^N \delta_{x_i} -\mu\) &\quad \text{in}  \ U\\
  E \cdot \nu=0 & \text{on} \ \partial U,\end{array}\right.\end{equation}
and $u$ 
  solves 
$$  \left\{\begin{array}{ll}  
  -\Delta u= \cd\( \sum_{i=1}^N \delta_{x_i}- \mu\)& \quad \text{in}  \ U\\
  \frac{\p u}{\p \nu}=0 & \text{on} \ \partial U.\end{array}\right.$$ 
  Then
  $$\int_{U} |\nab u_{\rrc}|^2 \le \int_U |E-\sum_{i=1}^N \nab \f_{\rrc_i}(\cdot -x_i) |^2.$$ 
  \end{lem}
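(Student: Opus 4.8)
The plan is to read the inequality as an $L^2$ projection statement: $\nab u_{\rrc}$ is the orthogonal projection, in $L^2(U)$, of the truncated compatible field $\td E:=E-\sum_{i=1}^N\nab\f_{\rrc_i}(\cdot-x_i)$ onto the subspace of gradients of functions with vanishing Neumann trace, and the asserted bound is just the Pythagorean identity for that projection. Here is how I would make this precise.

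First I would dispose of the trivial case: if $\td E\notin L^2(U)$ the right-hand side is infinite and there is nothing to prove, so I assume $\td E\in L^2(U)$. Set $F:=E-\nab u$. Two observations: (i) $F$ is divergence-free in $U$, because $-\div E$ and $-\Delta u$ both equal $\cd\bigl(\sum_i\delta_{x_i}-\mu\bigr)$ in $U$, so the Dirac masses cancel; (ii) $F=\td E-\nab u_{\rrc}$, since by definition $\nab u_{\rrc}=\nab u-\sum_i\nab\f_{\rrc_i}(\cdot-x_i)$ and the truncation terms cancel upon subtraction. The role of (ii) is to gain $F\in L^2(U)$: the whole point of truncating at $\rrc_i$ (a fixed fraction of the nearest-neighbour distance and of $\dist(x_i,\p U)$) is that near each $x_i$ one has $\f_{\rrc_i}(\cdot-x_i)=\g(\cdot-x_i)-\g(\rrc_i)$, so $u_{\rrc}$ is bounded in a neighbourhood of $x_i$, lies in $H^1_{\mathrm{loc}}(U)$, and satisfies $\nab u_{\rrc}\in L^2(U)$ (this is exactly the computation behind Lemma~\ref{lem:monoto}); consequently $F=\td E-\nab u_{\rrc}\in L^2(U)$.

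Next I would integrate by parts. On $\p U$ one has $E\cdot\nu=0$ by hypothesis and $\p u/\p\nu=0$ since $u$ solves the Neumann problem, hence $F\cdot\nu=0$ on $\p U$; together with $\div F=0$ this places $F$ in $H(\div,U)$ with vanishing normal trace, so Green's formula gives, for every $\phi\in H^1(U)$, $\int_U\nab\phi\cdot F=-\int_U\phi\,\div F+\langle F\cdot\nu,\phi\rangle_{\p U}=0$. Applying this with $\phi=u_{\rrc}$ (which we just saw is in $H^1(U)$) yields $\int_U\nab u_{\rrc}\cdot F=0$, and expanding the square finishes the proof:
$$\int_U|\td E|^2=\int_U|\nab u_{\rrc}+F|^2=\int_U|\nab u_{\rrc}|^2+2\int_U\nab u_{\rrc}\cdot F+\int_U|F|^2=\int_U|\nab u_{\rrc}|^2+\int_U|F|^2\ \ge\ \int_U|\nab u_{\rrc}|^2 .$$

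The main obstacle is purely technical: justifying the two regularity inputs — that the truncation makes $u_{\rrc}$ of finite Dirichlet energy on $U$ (needed both for $\nab u_{\rrc}\in L^2$ and for $u_{\rrc}\in H^1(U)$, so that Green's formula applies) and the validity of Green's formula for the pairing of an $H^1$ function with a divergence-free $L^2$ field of vanishing normal trace when $\p U$ is only piecewise $C^1$. When $U$ is unbounded (e.g. $U=\R^\d$ or the complement of a hyperrectangle) I would run the argument on a bounded exhaustion of $U$ and pass to the limit using the decay of $u$, equivalently of $h$, at infinity. Everything else is algebraic manipulation.
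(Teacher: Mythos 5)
Your argument is correct and is precisely the one the paper has in mind: the text just above the lemma states that the proof is omitted because the key point is that "the Neumann electric field is the $L^2$ projection of any compatible electric field onto gradients," which is exactly the Pythagorean decomposition $\td E=\nab u_{\rrc}+F$ with $F=E-\nab u$ divergence-free and of zero normal trace, so that the cross term vanishes. (The remark about exhausting an unbounded $U$ is harmless but unnecessary here since the lemma assumes $U$ compact.)
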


We now check that the energies $\G$ is  sub-additive, as desired. One can check that $\F$ is superadditive as a consequence of~\eqref{add3}.

\begin{lem}[Sub and superadditivity] Assume $U$ is the union of two sets $ U_1$, $ U_2$ with disjoint interiors and   piecewise $C^1 $ boundaries. If $\XN$ is a configuration in $U_1 $ and $Y_{N'}$ a configuration in $U_2$ with $\mu(U_1)=N$, $\mu (U_2)=N'$, then 
\begin{equation}\label{subad1}
\G(X_N\cup Y_{N'},  U) \le \G(X_{N},  U_1) + \G(Y_{N'},  U_2) .\end{equation}
\end{lem}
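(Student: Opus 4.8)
The statement to prove is the subadditivity \eqref{subad1} of the Neumann energy $\G(\cdot, U)$. The guiding principle is the Projection Lemma (Lemma~\ref{projlem}): the Neumann potential minimizes the (truncated) Dirichlet energy among all vector fields compatible with the given charge distribution and carrying no flux through $\partial U$. So the plan is to build a \emph{competitor} vector field $E$ on $U$ out of the two Neumann solutions on $U_1$ and $U_2$, feed it into Lemma~\ref{projlem}, and check that the bookkeeping terms (the $\g(\rr)$ counterterms, the $\f_{\rr}$-against-$\mu$ terms, and the boundary penalties $\GG$) match up so that no loss is incurred beyond what is already present in $\G(X_N, U_1) + \G(Y_{N'}, U_2)$.

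\textbf{Key steps.} First, let $u_1$ solve \eqref{defv} on $U_1$ with charges $X_N$ and background $\mu$, and $u_2$ the analogous problem on $U_2$ with charges $Y_{N'}$; set $E := \nabla u_1 \1_{U_1} + \nabla u_2 \1_{U_2}$. Because both $u_1$ and $u_2$ have zero normal derivative on all of their respective boundaries (in particular on the common interface $\partial U_1 \cap \partial U_2$ and on $\partial U$), the field $E$ satisfies $-\div E = \cd(\sum \delta_{x_i} + \sum \delta_{y_j} - \mu)$ in $U$ in the distributional sense — no spurious interface charge is created — and $E\cdot\nu = 0$ on $\partial U$. Second, apply Lemma~\ref{projlem} on $U$ with this $E$ and with $u$ the Neumann solution on $U$ for the combined configuration $X_N \cup Y_{N'}$: this gives
\be
\int_U |\nab u_{\vec\rr}|^2 \le \int_U \Big| E - \sum_{x_i} \nab\f_{\rr_i}(\cdot - x_i) - \sum_{y_j}\nab\f_{\rr_j}(\cdot - y_j)\Big|^2,
\ee
where on the left the truncation radii are those of $X_N\cup Y_{N'}$ in $U$, and on the right one must use \emph{those same} radii in the truncation corrections. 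Third, observe that passing from the radii $\rrh_i$ computed for $X_N$ in $U_1$ (resp.\ $Y_{N'}$ in $U_2$) to the radii computed for the combined configuration in $U$ can only \emph{decrease} them — a point near the interface may now have a closer neighbor across it, and $\dist(x_i, \partial U) \ge \dist(x_i, \partial U_1)$ — so by the monotonicity of the energy in the truncation parameter (Lemma~\ref{monoto} / Appendix~\ref{appendixa}), replacing in $\G(X_N, U_1) + \G(Y_{N'}, U_2)$ the native radii by the smaller combined radii only decreases the value; equivalently, I may assume from the outset that all three energies are computed with the combined radii. Fourth, expand the square on the right-hand side: since $E$ is supported piecewise on $U_1$ and $U_2$ with disjoint interiors, and each $\f$-bump $\nab\f_{\rr_i}(\cdot - x_i)$ is supported in $B(x_i, \rr_i) \subset U_1$ (the radius $\rrh_i$ is $\le \tfrac14\dist(x_i,\partial U_1)$, so the bump does not cross into $U_2$), the cross terms vanish and the integral splits exactly as $\int_{U_1}|\nab u_1 - \sum_{x_i}\nab\f_{\rr_i}|^2 + \int_{U_2}|\nab u_2 - \sum_{y_j}\nab\f_{\rr_j}|^2$. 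Finally, subtract the common counterterms $\cd\sum \g(\rr_i)$, $\cd\sum\g(\rr_j)$, the $\f_{\rr}$-against-$\mu$ integrals, and add back the boundary penalties $\GG$ (which are additive over $U_1, U_2$ since each $\GG(x_i)$ only sees $\dist(x_i,\partial U_1)$, and $\dist(x_i,\partial U)\ge\dist(x_i,\partial U_1)$ makes the $U$-penalty no larger); matching \eqref{minneum} on each side yields exactly \eqref{subad1}.

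\textbf{Main obstacle.} The delicate point is the bookkeeping around the interface $\partial U_1 \cap \partial U_2$ and the interaction between the three different definitions of truncation radius ($\rrh$ for the full Neumann energy, and the fact that for $\G(X_N, U_1)$ one uses radii cut off by $\dist(\cdot, \partial U_1)$). One must verify carefully that (i) gluing $\nabla u_1$ and $\nabla u_2$ genuinely produces a distributional solution with no interface singular layer — this relies on both being \emph{Neumann} solutions, which is exactly why the subadditive quantity is the Neumann one; (ii) the support-disjointness of the truncation bumps across the interface really holds, which uses the factor $\tfrac14$ in \eqref{defrrc} and the $\dist(x_i,\partial U)$ term therein; and (iii) the penalty terms $\GG$ and the $\f_{\rr}\ast\mu$ corrections are handled consistently under the radius reduction. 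The monotonicity lemmas from Appendix~\ref{appendixa} do the heavy lifting for (iii), so once those are invoked the argument is essentially the projection estimate plus a clean splitting of a square; there is no genuinely hard estimate, only care with definitions.
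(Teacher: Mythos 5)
Your gluing-then-projection strategy (build $E^0 = \nabla u_1 \1_{U_1} + \nabla u_2 \1_{U_2}$, observe that the matching zero normal components create no singular layer on the interface, and apply Lemma~\ref{projlem}) is exactly the paper's, which is in fact terser and stops essentially at the projection step.

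The problems are in your steps~3 and~4, concerning the truncation radii. The claim that the combined radii $\rrh^U_i$ are always $\le$ the native $\rrh^{U_1}_i$ is false: you correctly record $\dist(x_i,\partial U)\ge\dist(x_i,\partial U_1)$ (since $\partial U_1$ contains the interface), but that makes the boundary-distance factor in \eqref{defrrc} \emph{larger} on passing from $U_1$ to $U$, so although the nearest-neighbor factor can only shrink, the minimum can move either way. Concretely, a point $x_i\in U_1$ at distance $0.1$ from the interface and at distance $\ge 1$ from $\partial U$, with no point of either configuration within distance $4$, has $\rrh^{U_1}_i = 0.025$ but $\rrh^U_i = 1/4$. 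Relatedly, the inclusion $B(x_i,\rrh^U_i)\subset U_1$ that you invoke in step~4 is not guaranteed, since $\rrh^U_i$ is controlled by $\tfrac14\dist(x_i,\partial U)$, not by $\tfrac14\dist(x_i,\partial U_1)$; in the same example the bump crosses the interface, so the splitting of $\int_U|E^0 - \sum_i\nabla\f_{\rrh^U_i}(\cdot-x_i)|^2$ across $U_1,U_2$ is not exact. The correct way to close the argument is to apply the projection lemma with a common truncation $\alpha_i:=\min(\rrh^U_i,\rrh^{U_k}_i)$, where $U_k$ is the piece containing the $i$-th point: then each bump stays inside its piece so the splitting is exact, and the \emph{equality case} of Lemma~\ref{monoto} (applicable because the canonical radii in each of $U$, $U_1$, $U_2$ make the balls pairwise disjoint and contained in the respective domain) shows that all three renormalized energies are unchanged under this further truncation, so no directional monotonicity is needed, only the invariance. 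Your treatment of the $\GG$ penalties is correct.
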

\begin{proof}
 For~\eqref{subad1}, let $u$ and $u'$ be the solutions to the Neumann problems associated with the definition of $\G$ in~\eqref{minneum} and set $E= \nab u$, $E'= \nab u'$. 
 We have 
 \begin{equation}\label{dve}
 - \div E= \cd \Big( \sum_{i=1}^{N} \delta_{x_i} - \m\Big)  \ \text{in} \ U_1 \qquad -\div E'= \cd\Big(\sum_{i=1}^{N'}\delta_{y_i}- \m\Big)\ \text{in} \ U_2 .\end{equation}
  We may now define $E^0= E\indic_{U_1}+ E'\indic_{U_2}$ and note that it satisfies 
  \begin{equation}\label{eqq2}
 \left\{\begin{array}{ll}
 -\div E^0= \cd\Big(\sum_{p\in \XN\cup Y_{N'}} \delta_p - \mu\Big)
 &\quad \text{in} \ U\\
 E^0\cdot  \nu = 0 & \quad\text{on} \ \p U \end{array}\right.\end{equation}
 Indeed, no divergence is created across $\partial U_1\cap \partial U_2$ thanks to the vanishing normal components on both sides.
The result then follows from Lemma \ref{projlem}.
\end{proof}
  
The subadditivity property has the following counterpart for the partition functions.
\begin{lem}Assume $U$  is  partitioned into $p$ disjoint sets $Q_i$,  $i\in [1,p]$ which are such that $\mu(Q_i)=N_i$ with $N_i$ integer.
We have
\begin{equation}\label{superad2}
\K(U) \ge    \frac{N! N^{-N}}{{N_1}! \dots {N}_p! N_1^{-N_1} \dots N_p^{-N_p}}    \prod_{ i=1  }^p \K(Q_i).\ee
\end{lem}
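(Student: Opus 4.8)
The plan is to combine the subadditivity of $\G$ (Lemma on sub/superadditivity, \eqref{subad1}) with a direct combinatorial unfolding of the Gibbs integral \eqref{defK}. First I would observe that the partition function $\K(U)$ integrates $\exp(-\beta \G(X_N, U))$ over $U^N$ against $\mu^{\otimes N}$, weighted by $N^{-N}$. Since $U$ is partitioned into the $Q_i$ with $\mu(Q_i)=N_i$ and $\sum_i N_i = N$, I would decompose the domain of integration $U^N$ according to how the $N$ labelled points distribute among the pieces: for each choice of which points land in which $Q_i$, the $N$ coordinates split into groups of sizes $N_1,\dots,N_p$. The number of ways to assign the labels $\{1,\dots,N\}$ to the blocks with prescribed sizes is the multinomial coefficient $\binom{N}{N_1,\dots,N_p} = \frac{N!}{N_1!\cdots N_p!}$, and on each such piece the integral is (by symmetry of $\G$ and of $\mu^{\otimes N}$ under relabelling) the same. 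Hence
\begin{equation*}
\K(U) = N^{-N}\frac{N!}{N_1!\cdots N_p!}\int_{Q_1^{N_1}\times\cdots\times Q_p^{N_p}} \exp\bigl(-\beta\,\G(X_{N_1}\cup\cdots\cup X_{N_p},\,U)\bigr)\, d\mu^{\otimes N_1}\cdots d\mu^{\otimes N_p},
\end{equation*}
where $X_{N_i}$ denotes the configuration of the $N_i$ points lying in $Q_i$.

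\smallskip

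The second step is to invoke subadditivity. Applying \eqref{subad1} inductively to the partition $U = Q_1\cup\cdots\cup Q_p$ (iterating the two-piece statement, writing $U$ as $Q_1$ joined with the union of the remaining pieces, and noting $\mu(Q_i)=N_i$ are integers as required), we get
\begin{equation*}
\G(X_{N_1}\cup\cdots\cup X_{N_p},\, U) \le \sum_{i=1}^p \G(X_{N_i},\, Q_i).
\end{equation*}
Since the exponential is decreasing, $\exp(-\beta\G(\,\cdot\,,U)) \ge \exp(-\beta\sum_i \G(X_{N_i},Q_i)) = \prod_i \exp(-\beta\G(X_{N_i},Q_i))$ (using $\beta>0$). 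Substituting into the identity above and using Fubini to factor the product integral over $Q_1^{N_1}\times\cdots\times Q_p^{N_p}$, we obtain
\begin{equation*}
\K(U) \ge N^{-N}\frac{N!}{N_1!\cdots N_p!}\prod_{i=1}^p \int_{Q_i^{N_i}} \exp\bigl(-\beta\,\G(X_{N_i},\,Q_i)\bigr)\, d\mu^{\otimes N_i}.
\end{equation*}
Finally I would recognize each factor $\int_{Q_i^{N_i}}\exp(-\beta\G(X_{N_i},Q_i))\,d\mu^{\otimes N_i} = N_i^{N_i}\,\K(Q_i)$ by the definition \eqref{defK} applied to the set $Q_i$ with $\mu(Q_i)=N_i$ points, which after rearranging the powers of $N$ and $N_i$ yields exactly \eqref{superad2}.

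\smallskip

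The main technical point to be careful about — rather than a genuine obstacle — is the bookkeeping of the combinatorial factor and the powers $N^{-N}$ versus $\prod_i N_i^{-N_i}$: one must check that the per-block normalization in the definition of each $\K(Q_i)$ supplies $N_i^{N_i}$, while the global integral carries $N^{-N}$, so the stated ratio $\frac{N! N^{-N}}{N_1!\cdots N_p! N_1^{-N_1}\cdots N_p^{-N_p}}$ appears. A secondary point is to justify that the decomposition of $U^N$ into labelled blocks covers all configurations up to a $\mu^{\otimes N}$-null set (points on the shared boundaries $\partial Q_i\cap \partial Q_j$), which is harmless since $\mu$ is absolutely continuous and these boundaries have measure zero, and anyway the paper has already adopted the convention that points never lie on cube boundaries. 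The inductive application of \eqref{subad1} requires that at each stage the partial union $Q_{k}\cup\cdots\cup Q_p$ has integer $\mu$-mass, which holds since it equals $N_k+\cdots+N_p$.
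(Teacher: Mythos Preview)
Your argument is correct and follows exactly the paper's approach: restrict to the event that precisely $N_i$ of the labelled points land in $Q_i$, count the $\binom{N}{N_1,\dots,N_p}$ labellings, apply the subadditivity \eqref{subad1}, and factor. One small correction: your first displayed relation should read $\K(U)\ge\ldots$, not $=$, since you are retaining only the portion of $U^N$ where the point counts are exactly $(N_1,\dots,N_p)$ and discarding all other distributions; the integrand is positive, so this is a lower bound, which is all that is needed (and is also why the subadditivity \eqref{subad1}, which requires the correct number of points in each piece, applies).
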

\begin{proof}
It suffices to partition the phase space into sets of the form $\{x_{i_1}, \dots, x_{i_{N_j}} \in Q_j\}$ for each $j=1,\dots, p$, then to use  
~\eqref{subad1},  noting  that 
the number of ways to distribute $N$ points in the $p$ sets with ${N}_i$ points in each set is 
$
 \frac{N!}{{N}_1! \dots {N}_p! } $.
\end{proof}

\subsection{Preliminary energy and free energy controls}
We start with a  rough bound  on~$\G$ which yields an upper bound for~$\K$.

 \begin{lem}[Upper bound for $\K(U)$]\label{minoG}
Assume $\mu(U)=N$, then we have for any $\XN $, 
 \begin{equation}\label{minog}
 \G(\XN, U)\ge - C N,\end{equation}
  and
  \begin{equation}\label{majok}
\log \K(U) \le  C \beta N,
 \end{equation} 
 where $C>0$ depends only on $\d$ and $\Lambda$. \end{lem}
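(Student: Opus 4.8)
\textbf{Proof plan for Lemma \ref{minoG}.}

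The plan is to establish the pointwise lower bound \eqref{minog} first, and then deduce \eqref{majok} by integrating the trivial consequence $\exp(-\beta \G(\XN,U)) \le \exp(C\beta N)$ against the probability measure $N^{-N} d\mu^{\otimes N}$, using that $\mu(U) = N$ so that $N^{-N}\int_{U^N} d\mu^{\otimes N} = 1$. So the whole content is in \eqref{minog}.

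For \eqref{minog}, I would work from the electric/truncated representation \eqref{minneum}, taking the canonical truncation parameters $\rrh_i$ as in \eqref{defrrc}, which are bounded above by $\frac14$ (indeed by $\frac14 \dm = \frac14$). The term $\sum_i \GG(x_i)$ is nonnegative by its definition \eqref{defGG}, so it can be dropped from below. The squared-gradient term $\frac{1}{2\cd}\int_U |\nab u_{\rrh}|^2$ is manifestly nonnegative, so it too can be dropped. This leaves me needing to bound from below
$$ -\frac{1}{2}\sum_{i=1}^N \g(\rrh_i) \;-\; \sum_{i=1}^N \int_U \f_{\rrh_i}(x-x_i)\, d\mu(x). $$
For the first sum: since $\rrh_i \le \frac14 < 1$, in dimension $\d \ge 3$ we have $\g(\rrh_i) = \rrh_i^{2-\d} \ge 0$, so $-\frac12 \g(\rrh_i) \le 0$ — wait, that is the wrong sign, so I must be more careful: $\g(\rrh_i)$ is \emph{positive and possibly large} when $\rrh_i$ is small, so $-\frac12\g(\rrh_i)$ is negative and this is precisely where a genuine lower bound is needed. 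The key point is that the large positive contribution hidden inside $\int_U |\nab u_{\rrh}|^2$ near each point — which behaves like $\cd\,\g(\rrh_i)$ — exactly cancels the $-\cd\sum_i \g(\rrh_i)$ term; this is the content of why the renormalized energy \eqref{defF} is bounded below. The cleaner route is therefore \emph{not} to drop the gradient term but to recall that $\G(\XN,U)$ as defined via \eqref{minneum} is, by the monotonicity lemma (Lemma \ref{lem:monoto} and its Appendix \ref{appendixa} extensions), comparing truncated energies, and that the standard lower bound for such renormalized Coulomb energies (as in \cite{rs,ps,ssgl}) gives $\frac{1}{2\cd}(\int_U |\nab u_{\rrh}|^2 - \cd\sum_i \g(\rrh_i)) \ge -CN$, where the constant comes from controlling the interaction of each smeared charge $\delta_{x_i}^{(\rrh_i)}$ with the background $\mu$ and with the truncated potential; this uses $\|\mu\|_{L^\infty}\le \Lambda$ to bound $\int_U \f_{\rrh_i}(x-x_i)\,d\mu(x) \le \Lambda \int_{B(0,\rrh_i)} \f_{\rrh_i} \le C\Lambda \rrh_i^2 \le C\Lambda$ uniformly (since $\rrh_i \le \frac14$), giving the second sum a lower bound of $-CN$ as well.

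Assembling: \eqref{minog} follows with $C = C(\d,\Lambda)$, and then \eqref{majok} is immediate as noted above. The main obstacle — really the only nonroutine point — is making rigorous the claim that the truncated gradient energy minus $\cd\sum_i\g(\rrh_i)$ is bounded below by $-CN$; this is a standard fact in this literature (it is essentially the statement that the renormalized energy of a point configuration against a bounded background cannot be too negative, proven by smearing each point charge over a ball of radius $\rrh_i$ and estimating self-interactions), and I would invoke it via the monotonicity results collected in Appendix \ref{appendixa} together with the elementary estimate $\int \f_\eta \,d\mu \le C\Lambda \eta^2$. No delicate geometry of $U$ is needed since we only use nonnegativity of the Neumann Dirichlet energy and of $\GG$.
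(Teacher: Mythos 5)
Your plan to derive \eqref{majok} from \eqref{minog} by integration is exactly right, and you correctly identify the crux of \eqref{minog}: the cancellation between $\int_U|\nab u_{\rrh}|^2$ and $\cd\sum_i\g(\rrh_i)$. But the way you finish has a gap. After dropping the nonnegative $\sum_i\GG(x_i)$, you reduce to the claim that $\frac{1}{2\cd}\big(\int_U|\nab u_{\rrh}|^2-\cd\sum_i\g(\rrh_i)\big)\ge-CN$, and attribute it to ``standard facts'' and Appendix~\ref{appendixa}. That GG-free statement is not what is proved there. The works \cite{rs,ps,ssgl} treat whole-space renormalized energies (as in \eqref{fnmeta}), not a Neumann-domain version with the truncation $\rrh_i$ constrained by $\dist(x_i,\partial U)$; and Lemma~\ref{lem:contrdist1} proves bounds on $\G^\Omega$ \emph{with the $\GG$ term kept}, precisely because its proof hinges on the inequality $\GG(x_i)-\g(\eta_i)\ge -C$ near $\partial U$ (this is exactly why the paper inserts $\GG$ ``to have an energy which is always bounded from below''). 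Discarding $\GG$ before invoking an energy lower bound is therefore not innocuous: the reduced quantity equals $\G+\sum_i\int\f_{\rrh_i}\,d\mu-\sum_i\GG(x_i)$, and $\sum_i\GG(x_i)$ is not $O(N)$ in general, so your claim would require a separate argument you have not supplied.

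The paper's derivation keeps the three pieces of $\G$ together and is one line: \eqref{14} of Lemma~\ref{lem:contrdist1}, taken with $\Omega=U$ so that $\rrtt=\rrh$, reads
\begin{equation*}
\int_U|\nab u_{\rrh}|^2\;\le\;4\cd\,\G(\XN,U)+C\,N,
\end{equation*}
and since the left-hand side is nonnegative this yields $\G(\XN,U)\ge -CN/(4\cd)$, i.e.\ \eqref{minog}; then \eqref{majok} follows by integrating $\exp(-\beta\G)\le e^{C\beta N}$ against the probability measure $N^{-N}\mu^{\otimes N}|_{U^N}$, as you say. So the fix is simply to invoke \eqref{14} directly rather than decompose $\G$ into pieces: it already packages the renormalized lower bound together with the boundary correction.
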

\begin{proof}
The relation~\eqref{minog} is a consequence of~\eqref{14} and~\eqref{majok} follows directly.
\end{proof}

Obtaining a  lower bounds is a much more delicate task. For that we use an argument inspired by \cite{gz}.
We  have the following a priori lower bound, in which the logarithmic correction $\chi(\beta)$ (in dimension 2, for $\beta$ small) appears for the first time.
At this point we need to distinguish between the number of points a configuration has in a generic set $U$, that we will denote $\bar N$, and the number of points in the original problem, denoted $N$, which corresponds to $\mu(\R^\d)$ and also dictated  the blow-up lengthscale $N^{-\frac1\d}$.

\begin{prop}\label{minoK}
Assume $U$ is either $\R^\d$ or a  finite disjoint union of  hyperrectangles with parallel sides belonging to $ \mathcal Q_R$ for some $R \ge \max(1,\beta^{-\frac1\d})$ all included in $\Sigma$, or the complement of such a set. 
Let $\mu$ be a density such that $0<m\le \mu \le \Lambda $ in the set $\Sigma$ and satisfying~\eqref{asslog}. Assume $\mu(U)=\bar N$ is an integer. If $\d=2$ assume in addition (if $U$ is unbounded) that 
 \be \label{assinter}
\mu(\Sigma^c \cap U) \le C \frac{\bN}{\log N}\ee
and
\be \label{gmm}
\iint_{U^2} \g(x-y) d\mu(x) d\mu(y)  \ge - C \bN^2 \log N.\ee
There exists $C>0$, depending only on~$m,\d$ and $\Lambda$ and the constants in~\eqref{assinter} and~\eqref{gmm}, such that
\begin{equation}\label{minok}
 \log \K(U) \ge  - C\begin{cases}\beta  \chi(\beta)  \bN & \text{in} \ \d=2,\\
 \beta \bN + |\pa U| \xib & \text{in} \ \d\ge 3.
 \end{cases}\ee
\end{prop}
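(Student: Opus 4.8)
\textbf{Proof strategy for Proposition~\ref{minoK}.}
The plan is to produce the lower bound on $\log \K(U)$ by a Jensen-type argument in the spirit of \cite{gz}. Recall $\K(U) = \bN^{-\bN}\int_{U^{\bN}} \exp(-\beta \G(\XN,\mu,U))\, d\mu^{\otimes \bN}(\XN)$. By Jensen's inequality applied to the probability measure $\bN^{-\bN} d\mu^{\otimes \bN}$ restricted to a suitable subset $A \subset U^{\bN}$ of positive measure (to be chosen), we get
\begin{equation}\label{eq:jensen-plan}
\log \K(U) \ge \log\(\bN^{-\bN}\mu^{\otimes \bN}(A)\) - \frac{\beta}{\bN^{-\bN}\mu^{\otimes \bN}(A)} \int_A \G(\XN,\mu,U)\, \bN^{-\bN} d\mu^{\otimes \bN}(\XN).
\end{equation}
So the task reduces to two estimates: (i) a lower bound on the Gibbs-type volume $\bN^{-\bN}\mu^{\otimes \bN}(A)$, i.e.\ an entropy lower bound, and (ii) an upper bound on the average of $\G$ over $A$. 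The natural choice for $A$ is the event that the $\bN$ points are ``well-separated'' at the scale of the typical interparticle distance (here of order $1$ since we are blown up and $\mu \sim 1$), e.g.\ each point $x_i$ lies in its own small ball $B(z_i, c)$ around the centers $z_i$ of a fixed nice packing of $\Sigma$ (or of $U$ for the finite hyperrectangle case), with $c$ a small dimensional constant. Since $m \le \mu \le \Lambda$ on $\Sigma$, the measure $\mu^{\otimes\bN}(A)$ of such an event is bounded below by $(c'm)^{\bN}$ for a dimensional $c'>0$, which after combining with the $\bN^{-\bN}$ and Stirling gives $\log(\bN^{-\bN}\mu^{\otimes\bN}(A)) \ge -C\bN$, contributing the harmless $\bN$ term (not the $\beta\bN$ term — that will come from (ii)).

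For step (ii), on the event $A$ each configuration is well-separated, so the truncation parameters $\rrh_i$ (or $\rrc_i$) are bounded below by a dimensional constant and one can take $\eta_i = \rrh_i \gtrsim 1$ in the electric representation \eqref{minneum}. The energy $\G(\XN,\mu,U)$ then splits into: the smeared Dirichlet energy $\frac{1}{2\cd}\int_U |\nabla u_{\rrh}|^2$, the $-\frac12\sum \g(\rrh_i)$ term which is $O(\bN)$ since $\rrh_i$ is of order $1$, the $-\sum_i \int \f_{\rrh_i}(x-x_i)\,d\mu(x)$ term which is $O(\Lambda \bN)$, and the boundary penalty $\sum_i \GG(x_i)$ which vanishes when the points stay at distance $\gtrsim 1$ from $\partial U$. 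For the Dirichlet-energy term one writes $u_{\rrh} = \g * (\sum_i \delta_{x_i}^{(\rrh_i)} - \mu)$ and expands $\int|\nabla u_{\rrh}|^2$ as a double integral of $\g$ against $\sum_i \delta_{x_i}^{(\rrh_i)} - \mu$; the self-interaction of each smeared charge is $O(1)$ per point, the cross terms between distinct smeared charges are controlled because the points are separated, the point-background cross term is $O(\bN)$, and the background self-energy $\iint_{U^2}\g\, d\mu\,d\mu$ is exactly the quantity assumed bounded below in \eqref{gmm} (in $\d=2$; for $\d\ge3$ it is automatically $O(\bN)$ by the bound $\mu\le\Lambda$ on $\Sigma$ combined with integrability of $\g$ at short range and — for the tails — a crude bound). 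This is where the dimensional dichotomy enters: in $\d=2$ the logarithmic kernel forces $\log N$ losses in several of these terms (background self-energy, and the contribution of the portion of mass $\mu(\Sigma^c\cap U)$ sitting in the tails, controlled by \eqref{assinter}), and averaging over $\beta$ these become the $\beta\chi(\beta)\bN$ of the claim; in $\d\ge3$ one gets $\beta\bN$ plus the surface term $|\partial U|\xi_\beta$ coming from the boundary layer where points may come within $O(1)$ of $\partial U$ (there the $\rrh_i$ shrink and one picks up $\g(\rrh_i)$-type contributions and the $\GG$ penalty, whose total is proportional to the number of boundary cells, i.e.\ $|\partial U|$, times the relevant length scale $\xi_\beta = \min(\beta^{1/(\d-2)},1)$).

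A few technical points will need care. First, one must verify that $\G(\XN,\mu,U) \le C(\beta)(\bN + \text{surface})$ \emph{pointwise} on $A$ (not just in average), which is cleaner and suffices; then the average in \eqref{eq:jensen-plan} is trivially controlled. Second, the well-separated event $A$ should also force the points away from $\partial U$ by a fixed distance so that $\GG \equiv 0$ and $\rrh_i$ doesn't degenerate near the boundary — this is possible at the cost of shrinking $A$ by a factor accounting for a boundary layer of volume $\lesssim |\partial U|$, which only weakens the entropy bound by $C|\partial U|$, still within the claimed error. Third, for unbounded $U$ (complement of hyperrectangles, or $\R^\d$) one must handle the tails of $\mu$ outside $\Sigma$: the packing covers only the bulk, the tail mass is $\mu(\Sigma^c\cap U)$ which in $\d\ge3$ is finite and absorbed into $\bN$, while in $\d=2$ one invokes exactly \eqref{assinter} to bound its interaction contribution by $C\bN$ (using also \eqref{gmm} and \eqref{asslog} for the $\int\log$-type terms).

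\textbf{Main obstacle.} The delicate point is step (ii) in dimension $\d = 2$: bounding the smeared Dirichlet energy $\int_U |\nabla u_{\rrh}|^2$ from above by $C\chi(\beta)\bN$ (after multiplying by $\beta$) requires carefully tracking the logarithmic divergences — the background self-energy $\iint\g\,d\mu\,d\mu$ over an unbounded $U$, the interaction of the bulk with the tail mass, and the fact that individual smeared self-energies $\g(\rrh_i) \sim -\log \rrh_i$ are only $O(1)$ but accumulate to $O(\bN)$ — and showing that all these losses are at worst of size $\log N$ per unit, so that when combined with the $\beta$ prefactor and optimized they yield $\beta\chi(\beta)\bN = \beta(1+\max(-\log\beta,0))\bN$ rather than something with an honest $\log N$. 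This balancing of $\log N$ against $\log\beta$ (which is precisely how $\chi(\beta)$ arises as the analog of $\log N$ ``at the blown-up scale'') is the crux, and is exactly why assumptions \eqref{assinter}--\eqref{gmm} are imposed. The case $\d\ge3$ is considerably easier since $\g$ is integrable at infinity against a bounded measure and the only genuinely new contribution is the clean surface term $|\partial U|\xi_\beta$.
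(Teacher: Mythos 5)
There is a genuine gap, and it is structural: the Jensen step restricted to a well-separated event $A$ cannot yield the claimed $\beta$-dependence. Your bound has the form
\[
\log \K(U) \ge \underbrace{\log\bigl(\bN^{-\bN}\mu^{\otimes\bN}(A)\bigr)}_{\ge -C\bN} \;-\; \beta\,\Esp[\G\mid A],
\]
and the first term costs $-C\bN$ \emph{uniformly in $\beta$}. But the statement to be proved is $\log\K(U) \ge -C\beta\chi(\beta)\bN$ in $\d=2$ (and $\ge -C(\beta\bN + |\partial U|\min(\beta^{1/(\d-2)},1))$ in $\d\ge 3$), and both of these right-hand sides tend to $0$ as $\beta\to 0$ (which is forced, since $\K(U)\to 1$ as $\beta\to 0$). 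An entropy loss of $-C\bN$ is therefore catastrophic for small $\beta$: no improvement of your estimate of $\Esp[\G\mid A]$ can repair it. The only way a Jensen argument can be sharp in $\beta$ is to make the first term vanish, i.e.\ to apply Jensen to (essentially) the \emph{full} reference measure, not a restricted well-separated event.

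This is precisely what the paper does, and it buys a second crucial simplification that your sketch also misses. Applying Jensen to the unrestricted $\mu^{\otimes\bN}$ (case $U=\R^\d$) one computes $\Esp_{\mu^{\otimes\bN}}[\G]$ in closed form: all pair, cross, and background terms combine into a single multiple of $\iint \g\,d\mu\,d\mu$, with the lucky sign
\[
\log\K(\R^\d) \ge \frac{\beta}{2\bN}\iint \g(x-y)\,d\mu(x)\,d\mu(y),
\]
which is $\ge 0$ when $\d\ge 3$ and is handled by \eqref{gmm} when $\d=2$. For a general domain the paper replaces $\g$ by the Neumann Green function $G_U$ and replaces $\mu$ by a boundary-damped modification $\bar\mu$ (so that the boundary penalty $\GG$ and the Green-function singularity near $\partial U$ are simultaneously tamed); the relevant pointwise estimate on $G_U - \g$ is Proposition~\ref{proappgreen}, and the residual costs are pure boundary terms of size $|\partial U|\,\min(1,\beta^{1/(\d-2)})$. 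Finally, the $\beta\chi(\beta)\bN$ bound in $\d=2$ is obtained not by ``balancing $\log N$ against $\log\beta$'' within a single cell estimate, but by a genuine superadditivity argument \eqref{superad2}: one tiles $U$ by cells of side $\sim\max(1,\beta^{-1/2})$, applies the crude Green-function bound on each cell (where $\log R \sim -\log\beta$ rather than $\log N$), and sums. Your sketch does not contain either mechanism — the exact cancellation in the full Jensen average, or the scale-$\beta^{-1/2}$ tiling — and without them the $\beta\to 0$ regime is out of reach.

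A secondary caveat, even granting the entropy problem could be circumvented: the pointwise bound $\G(\XN,\mu,U) \le C\bN$ on a well-separated event in $\d=2$ is true but requires a screening/electrostatic argument of its own (bounded local discrepancy, matching to a lattice), which is not simpler than the exact computation; and it would still not produce the $\chi(\beta)$ factor, since nothing in that bound sees $\beta$ at all.
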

We note that~\eqref{gmm} is automatically satisfied by scaling with $C=\hal$ if $\mu$ is the blown-up of $ \mut$ by $N^{\frac{1}{\d}}$.

\begin{proof}{\bf Step 1: the case of the whole space}.
In the whole space with $\mu(U)=\mu(\R^\d)=N$ we have 
\be\label{reneman0}
\G(X_N, \R^\d)= \hal \iint_{\triangle^c} \g(x-y) d\(\sum_{i=1}^N \delta_{x_i}- \mu\)(x)  d\(\sum_{i=1}^N \delta_{x_i}- \mu\)(y).\ee
Starting from~\eqref{defK} we have 
$$
 \K(\R^\d) = N^{-N} \int_{(\R^\d)^N} \exp\(- \beta \G(\XN, \R^\d)  \) d  \mu^{\otimes N} (\XN).
$$
Using  Jensen's inequality as inspired by  \cite{gz}, we may then write
\begin{equation*}
\log  \K(\R^\d)\ge - \frac{\beta}{N^N} \int_{(\R^\d)^N}
  \G(\XN, \R^\d)   
 d\mu^{\otimes N} (X_N).
\end{equation*} 
We next insert the result of ~\eqref{reneman0}  to obtain
\begin{align*}
\lefteqn{\int_{(\R^{\d })^N} \G(\XN, \R^\d)   d \mu^{\otimes N}
} \qquad & 
\\ &
= 
\hal \int_{(\R^{\d })^N} \Bigg( \sum_{i\neq j} \g(x_i- x_j) - 2\sum_{i=1}^N\int_{\R^\d} \g(x_i- y) d\mu(y) 
\\ & \qquad\qquad\qquad\qquad\qquad\qquad
+\iint_{\R^{\d}\times \R^\d} \g(x-y) d\mu(x) d\mu(y)  \Bigg) d\mu^{\otimes N} (\XN)\\
&=\hal \(N(N-1) N^{N-2}- 2N N^{N-1} +N N^{N-1}\)   \iint_{(\R^\d)^2}\g(x-y) d \mu(x) d\mu(y)\\ & = -\hal  N^{N-1}  \iint_{(\R^\d)^2}\g(x-y) d \mu(x) d\mu(y).
\end{align*}
It follows that 
\be \log \K(\R^\d) \ge \frac{\beta}{2 N} \iint_{(\R^\d)^2}\g(x-y) d \mu(x) d\mu(y).
\ee
If $\d \ge 3$, $\g \ge 0$ hence  this yields $\log \K(\R^\d) \ge 0$, which implies the desired result.
If $\d=2$, this yields $\log \K(\R^2) \ge - \beta N \log N$, which is unsufficient  if $\beta $ is not very small.
We will improve this below. 

\smallskip

\noindent
{\bf Step 2: The case of a more general domain}. 
\\
{\bf Substep 2.1: setting up the Green function}.\\
Let $U$ be a general domain with piecewise $C^1$ boundary such that $\mu(U)=\bN$ an integer.
We note that the assumption on $U$ implies that $\pa U$ is a bounded set.

\smallskip

Denote $\hat U := \{x \in U \,:\, \dist (x, \pa U) \le 1\}$ and let $\bar \mu$ be defined   in $\hat U$ by
\be \label{barmuchoix}\bar \mu(x):= 
\begin{cases}
\mu(x) \exp\(- \beta  M \GG(x)\) &\text{if} \ \beta \le 1,\\ 
0  & \text{if} \ \beta > 1,
\end{cases}
\ee
where~$\GG$ is as in~\eqref{defGG} and~$M>0$ is a constant to be selected below. Below (in Substep~2.3) we will extend the definition of~$\bar \mu$ to the rest of~$U$ in such a way that it remains bounded, that~$\mu=\bar \mu$ on~$\{ x\in U\,:\, \dist(x,\partial U) > 2 \}$, and that~$\bar \mu(U)= \mu(U)=\bN$.

First we claim that we have
\begin{multline}\label{reneman}
\G(X_\bN,U)= \hal \iint_{\triangle^c} G_{U}(x,y) d\(\sum_{i=1}^\bN \delta_{x_i}- \mu\)(x)  d\(\sum_{i=1}^\bN \delta_{x_i}- \mu\)(y) \\
+\hal \sum_{i=1}^\bN H_U(x_i)+\sum_{i=1}^\bN \GG(x_i) \end{multline} where $G_U$ is the Neumann Green kernel of $U$, characterized as the solution of $$\left\{\begin{array}{ll}
-\Delta G_U(x,y)= \cd (\delta_y(x)-\frac{1}{\bar \mu(U)}\bar \mu) & \text{in} \ U,\\
  -\frac{\pa G_U}{\pa \nu}=0  & \text{on} \ \pa U,\end{array}\right.$$
  and 
  \be \label{defHU}
  H_U(x) := \lim_{y\to x} G_U(x,y) - \g(x-y).\ee
  We check that $G_U$ and thus $H_U$ exist and are well-defined up to an additive constants.
First, under our assumptions  we claim that 
 $v=\g*(\delta_y- \frac{1}{\bar \mu(U)}\bar \mu)$ is well-defined.
 Indeed, in dimension $\d\ge 3$ the convolution of $\g$ with $\bar \mu$ is well-defined (since $\bar \mu \in \cap_p L^p$) and is in $L^p$ by the Hardy-Littlewood Sobolev inequality. 
 In dimension $\d=2$ we need that $\int_{U} \g(x-z) d\bar \mu(z)<\infty$. If $U$ is bounded then this is immediate from the boundedness of~$\mu $ and~$\bar \mu$. If~$U$ is unbounded, since $\bar \mu$ and $\mu$ differ only near $\pa U$ which is bounded, it follows from~\eqref{asslog}.
  Secondly, we may  solve for $w=G_U-v$ which satisfies 
 $$\left\{\begin{array}{ll}
-\Delta w= 0 & \text{in} \ U\\
  \frac{\pa w}{\pa \nu}=- \frac{\pa v}{\pa \nu} & \text{on } \ \pa U\end{array}\right.$$
which can be done variationally since $\pa U$ is bounded.

We may now observe that $u:=\int_U G_U(x,y) df(y)$ is solution to 
\be \label{equf}
\left\{\begin{array}{ll}  - \Delta u=\cd\( f- \frac{\bar \mu}{\bar \mu(U)} \int_U f\)&\  \text{in} \ U\\ \frac{\pa u}{\pa \nu}=0& \text{on}  \ \pa U.\end{array}\right.\ee The function $u$ of~\eqref{defv} is thus equal to 
$\int_U G_U(x,y) d \( \sum_{i=1}^\bN \delta_{x_i}- \mu\) (y)$.
To obtain the claim ~\eqref{reneman} it then  suffices to integrate by parts from the formula~\eqref{minneum} similarly as in~\eqref{fnmeta}.

\smallskip

\noindent
{\bf Substep 2.2: lower bound.} \\
Starting from~\eqref{defK} we have in both cases $\beta \ge 1$ or $\beta \le 1$,
$$
 \K(U) \ge \bN^{-\bN} \int_{U^\bN} \exp\(- \beta \G(X_\bN, U) -\sum_{i=1}^\bN  \log \frac{\bar \mu}{\mu} (x_i)   \) d \bar \mu^{\otimes \bN} (X_\bN),
$$with the convention that $\bar \mu \log \bar \mu=0$ when $\bar \mu=0$.
Using  Jensen's inequality, we may then write
\begin{equation*}
\log  \K(U)\ge    \frac{1}{\bN^\bN} \int_{U^\bN} 
\( - \beta  \G(X_\bN, U) -\sum_{i=1}^\bN  \log \frac{\bar \mu}{\mu} (x_i)     \)
 d\bar \mu^{\otimes \bN} (X_\bN). 
 \end{equation*} 
We next insert the result of ~\eqref{reneman}  to find
\begin{align*}
&
\int_{U^\bN}\Bigg( \G(X_\bN, U)   + \frac1\beta \sum_{i=1}^\bN \log \frac{\bar \mu}{\mu}(x_i) \Bigg) d\bar \mu^{\otimes \bN}(X_\bN)
\\  &
= 
\hal \int_{U^\bN} \Bigg( \sum_{i\neq j} G_U(x_i, x_j) - 2\sum_{i=1}^\bN\int_U  G_U(x_i, y) d\mu(y) +\iint_{U^2} G_U d\mu d\mu  \Bigg) d\bar \mu^{\otimes \bN}(X_\bN)
\\ & \qquad
+ \int_{U^\bN} \Bigg( \hal \sum_{i=1}^\bN H_U(x_i)+\sum_{i=1}^\bN \GG(x_i) +\frac1\beta   \sum_{i=1}^\bN  \log \frac{\bar \mu}{\mu} (x_i) \Bigg)  d\bar \mu^{\otimes \bN}(X_\bN)\\
&
= 
\hal  \bN(\bN-1) \bN^{\bN-2} \iint_{U^2} G_U(x,y) d\bar \mu(x) d\bar\mu(y) - \bN^{\bN} \iint_{U^2} G_U(x,y) d\bar \mu(x) d \mu(y) 
\\ & \qquad 
+ \hal \bN^\bN \iint_{U^2} G_U(x,y) d\mu(x) d\mu(y)+\hal   \bN^{\bN}\int_U \(H_U(x)+ \frac{2}{\beta} \log \frac{\bar \mu}{\mu}(x) + 2\GG (x)\) d\bar \mu(x) 
\\ &
= \hal \bN^\bN \Bigg[ \iint_{U^2} G_U(x, y) d\( \mu-  \bar \mu\) (x)  d\( \mu-  \bar \mu\) (y) - \frac{1}{\bN } \iint_{U^2} G_U(x,y) d\bar \mu d\bar \mu
\\ & \qquad\qquad 
+  \int_U \( H_U +\frac2{\beta}\log \frac{\bar \mu}{\mu} + 2\GG\) 
 d\bar \mu\Bigg].
\end{align*}
It follows that 
\begin{align} 
\label{3term} 
\log  \K(U)
& \geq
-\beta \hal  \Bigg[\iint_{U^2} G_U(x, y) d\!\( \mu-  \bar \mu\) (x)  d\!\( \mu-  \bar \mu\) (y) - \bN^{-1} \iint_{U^2} G_U(x,y) d\bar \mu (x) d\bar \mu(y)
\\ & \qquad\qquad\qquad \notag 
+  \int_U \( H_U +\frac2{\beta}\log \frac{\bar \mu}{\mu} + 2\GG\) 
 d\bar \mu\Bigg].
 \end{align}

\noindent
{\bf Substep 2.3: discussion of the three terms and end of the definition of $\bar \mu$}.
We now give an upper bound for the three terms in the right-hand side. We observe that by definition \eqref{barmuchoix}
bounding  $\int  \( 2\log \frac{\bar \mu}{\mu} + 2 \beta \GG\) d\bar \mu$  involves evaluating
$\int  \beta \g(r) \exp\(- C \beta ( \g(r)-C)\) \, dr$, while bounding $\int |\mu -\bar \mu|$ involves evaluating
$\int |\exp \( - C \beta ( \g(r)-C)\) - 1| \, dr$, 
 where $r$ is the distance to $\partial U$.
 
With explicit computations using the expression for $\g$ and a change of variables  we observe  that
\be \label{intgeg}
\int_0^1 \beta \g(r) \exp\left( -\beta \g(r) \right) dr \le  C \begin{cases} \min (1, \beta^{\frac{1}{\d-2}} )  & \text{if} \ \d \ge 3, \\  \min(1, \beta) & \text{if} \ \d=2,\end{cases}\ee
and
\be \int_0^1 | \exp\left(-\beta  \g(r)  \right) -1|\, dr\le C \begin{cases}\min(1, \beta^{\frac{1}{\d-2}} ) & \text{if} \ \d \ge 3,
\\ 
\min(1,\beta) & \text{if} \ \d=2.\end{cases}\ee
Thus we find  \be
\label{thus}
\int_{\hat U}| \mu- \bar \mu |    + \int_{\hat U}  \(  \log \frac{\bar \mu}{\mu}+ \beta \GG(x) \) d\bar \mu\le C \mu(\hat U)   
\begin{cases} 
\min (1, \beta^{\frac{1}{\d-2}} )& \text{if} \ \d \ge 3,
\\ \min(1,\beta) & \text{if} \ \d=2.
\end{cases}
\ee
We next claim that we can distribute $\bar \mu-\mu$ in $\{x\in U, \dist(x, \pa U) \le 2\} \backslash \hat U$ so that  
 \be \label{bot1} \iint_{U^2} G_U(x,y) d( \mu-\bar \mu)(x) d(\mu-\bar \mu)(y)\le  C\mu(\hat U) \ee and
 \be  \label{bot2} \int_{U} \log \frac{\bar \mu}{\mu } d\bar \mu \le C  \mu(\hat U)   \begin{cases}\min(1,\beta^{\frac{1}{\d-2}} )& \text{if} \ \d \ge 3,\\ \min(1,\beta) & \text{if} \ \d=2.\end{cases}\ee
This will allow us to extend $\bar \mu-\mu $ by $0$ in $\{x \in U, \dist(x, \pa U) \ge 2\}$  in such a way that $\bar  \mu(U)=\mu(U)$ and $\bar \mu \le C $. 
This is accomplished by partitioning $\{ x\in U, \dist(x, \pa U) \le 2\}$ into disjoint cells $\mathcal C_i$ of bounded size, and then design $\bar \mu$ in $\mathcal C_i$ so that $\bar \mu$ remains bounded  by a constant depending only on $\d, m $ and $\Lambda$, and  $\int_{\mathcal C_i} \mu-\bar \mu=0$. We may then solve for 
 $-\Delta u_i= \mu-\bar \mu$ with zero Neumann data on the boundary of each cell $\mathcal C_i$.
 Letting $E=\sum_i \indic_{\mathcal C_i} \nab u_i$ we have that $-\div E= \mu -\bar \mu$ in $U$ and $E \cdot \nu= 0$ on $\pa U$. Then, by $L^2$ projection argument as in  Lemma~\ref{projlem} we find that 
 $$    \iint_{U^2} G_U(x,y) d( \mu-\bar \mu)(x) d(\mu-\bar \mu)(y)\le \int_U |E|^2 \le \sum_{i} \int_{\mathcal C_i} |\nab u_i|^2 ,$$
and this is bounded by a constant times the number of cells, which is proportional to $|\pa U|$, hence equivalently to $\mu (\hat U)$ since $\mu $ is bounded below in $\Sigma$ and $\pa U$ must be included in $\Sigma$ by assumption.   This proves~\eqref{bot1} and ~\eqref{bot2} is bounded by the number of cells times  the bound of~\eqref{intgeg}.

We then apply Proposition \ref{proappgreen}  of the appendix with the measure $\frac{\bar \mu}{\bar \mu(U)}$,  up to  adding a constant to $G_U$ (hence subtracting it from $H_U$, which has  a null total effect in the above right-hand side)
 we have
 \be\label{intGU}  \int_U G_U(x,y) dx=0\ee
and 
\be \label{ubh} H_U(x) \le  -  \bN^{-1} \int_U \g(x-z) d\bar\mu(z) + C \max  \(\g(\dist(x, \pa U)) ,1\) .\ee
 We then deduce that in view of ~\eqref{barmuchoix}, we have 
 $$ H_U  +\frac{2}{\beta}\log \frac{\bar\mu}{\mu}  + \GG \le    -  \bN^{-1} \int_U \g(x-z) d\bar\mu(z) +C\max \( \g(\dist(x, \pa U)) ,1\) - 2 M\mathsf{h} +\GG \quad \text{in} \ \hat U
.$$ In $U\backslash \hat U$,   since   $\dist (x, \pa U) \ge 1$, thanks to ~\eqref{ubh}, we have  instead 
$$H_U + \GG \le -  \bN^{-1} \int_U \g(x-z) d\bar\mu(z) +C .
$$
 Choosing $M$ so that $2M-1=C$ with $C$ the same constant as in~\eqref{ubh},  and using~\eqref{bot2}, we deduce that 
   \begin{multline}\label{finHU}\int_U \( H_U  +\frac{2}{\beta}\log \frac{\bar\mu}{\mu}+\GG \) d\bar \mu\\  \le  -  \bN^{-1} \iint_{U^2}\g(x-y) d\bar\mu(x)d \bar \mu(y) + C \bN + C \mu(\hat U)\begin{cases}\frac{1}{\beta}\min(1,\beta^{\frac{1}{\d-2}})  & \text{if} \ \d \ge 3,\\ \frac{1}{\beta} \min(1,\beta) & \text{if} \ \d=2.\end{cases} \end{multline}
In view of~\eqref{equf}, we have that 
 $\int_U G_U(x,y) d\bar \mu(y)=cst$, while $\iint_{U^2} G_U(x,y) d\bar \mu(y) dx= 0$  from~\eqref{intGU}, hence $cst=0$.
 It follows that 
   \be \label{330}
   \iint_{U^2} G_U d\bar \mu(x) d\bar \mu(y)=0.\ee
   Finally
   $$\bN^{-1}\iint_{U^2}  \g(x-y) d\bar \mu(x) d\bar\mu(y) \ge\begin{cases}  0 & \text{if } \d \ge 3,\\
-C \bN( \log  R+1) & \text{if $\d=2$, $U=Q_R$,}\\
- C \bN \log N & \text{otherwise}.\end{cases} $$ 
 Inserting this and~\eqref{330},~\eqref{finHU} and~\eqref{bot1}  into~\eqref{3term} and using~\eqref{gmm} we conclude that, for a constant~$C>0$ depending only on $\d$, $m$, and~$\Lambda$, 
\be \label{concjen}
\log \K(U) \ge   - C \mu(\hat U)   \begin{cases}\min(1,\beta^{\frac{1}{\d-2}} )& \text{if} \ \d \ge 3,\\ \min(1,\beta) & \text{if} \ \d=2 \end{cases} \ -  C\beta \begin{cases}  0 & \text{if } \d \ge 3,\\ \bN (\log  R +1)& \text{if $\d=2$, $U=Q_R$,}\\
 \bN \log N & \text{if $\d=2$, $U$ unbounded.}  \end{cases}
\ee
In the case~$\d \ge 3 $ this completes the proof. 

\smallskip

We next treat the case of a cube in $\d=2$ by a superadditivity argument.

\smallskip 

 \noindent
 {\bf Substep 2.4: the superadditivity argument.}
 Let us  now partition $U=Q_R$  into $p$ hyperrectangles in $\mathcal{Q}_{r}$ with  $r=  \max(1,\beta^{-\hal})$. Note that this scale is  roughly equal to $\rb$, the minimal lengthscale at temperature $\beta$, see~\eqref{minoR0}. For each hyperrectangle, we have from \eqref{concjen} a $\log \K$ bounded below by 
 $- C r^{\d-1} \min (1, \beta) - C \beta r^\d (1+\log r)$.

\smallskip
 
 Using~\eqref{superad2} and Stirling's formula (the $\log (N! N^{-N}) $ cancels with 
 $\sum_i \log (N_i ! N_i^{-N_i})$ up to order $\log N$) and since $p= O (\beta \bar N)$ we thus get  
  \begin{align*}
 \log \K(U)
 & \ge   - C p \log r^\d  - Cp + p\( - C\min (1,\beta) r^{\d-1}-C \beta  r^\d(1+ \log r) \)\\
 & \ge   \begin{cases} 
-  C \bN \beta (1+ |\log \beta |)  & \text{if } \ \beta \le 1\\
- C \beta \bN  & \text{if } \ \beta > 1. \end{cases}  \end{align*}
 In view of~\eqref{defchib}, we thus  conclude as desired  that 
 \be \label{logk2}
  \log \K(U)\ge  - C \beta \chi(\beta) \bN  .\ee
This completes the proof in the case $\d=2$ and $U$ is a cube. We can check that the same argument works as well for other nondegenerate Lipschitz cells.
 
 \noindent
 {\bf Step 3: the case of  general $U$.}  We  split $\Sigma \cap U $ (which is a set which a Lipschitz boundary) into nondegenerate cells $Q_i$ of size $\min (1, \beta^{-\hal})$ with $\mu(Q_i)$ integer.  The same superadditivity argument as in the last step provides the bound
 \be \label{logkint}
 \log \K( \Sigma\cap U) \ge - C \beta \chi(\beta) \mu(\Sigma \cap U) \ge - C \beta \chi(\beta) \bN.\ee
On the other hand we may insert ~\eqref{assinter} into~\eqref{concjen} to get 
 $\log \K(\Sigma^c\cap U) \ge - C \beta \bar N$. Another application of  the superadditivity~\eqref{superad2} relative to $\Sigma\cap U$ and $\Sigma^c \cap U$
 concludes the proof of~\eqref{minok}.
\end{proof}

Thanks to the a priori bounds~\eqref{majok} and~\eqref{minok}, we deduce a first control on the exponential moments of the energy.
(In the rest of the paper, we highlight when needed the dependence in $\beta $ of the partition functions, as a superscript.)
\begin{coro}\label{pttproba}
Assume $U$ and $\mu$ are as in Proposition \ref{minoK}, and  $\mu(U)=\bN$. 
There exists a constant $C>0$ depending only on $\d,m$, $\Lambda$ and the constants in~\eqref{assinter} and ~\eqref{gmm} such that 
\begin{equation}
\label{momex}
 \log \Esp_{\Q(U)} \(\exp\(\frac{\beta}{2} \G(X_\bN, U) \) \)\le   \log    \frac{ \K^{\beta/2} (U)}{ \K^{\beta} (U)}  
\le C\beta  \chi(\beta)\bN+ C |\pa U| \xib.\end{equation}
\end{coro}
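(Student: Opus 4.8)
The plan is to observe that this corollary follows immediately from the definitions of the partition function and Gibbs measure together with the two a priori bounds \eqref{majok} and \eqref{minok} already in hand; no new idea is required.

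First I would establish the first inequality, which is in fact an equality, by a direct computation. Writing $\Esp_{\Q(U)}\(\exp\(\tfrac\beta2\G(X_\bN,U)\)\)$ out using the definition \eqref{defQ} of $\Q(U)$, the factor $\exp\(\tfrac\beta2\G\)$ multiplies the density $\propto\exp(-\beta\G)$ to produce $\exp\(-\tfrac\beta2\G\)$; dividing by the normalization $\bN^{\bN}\K^\beta(U)$ and recognizing the resulting numerator as $\bN^{\bN}\K^{\beta/2}(U)$ via \eqref{defK}, one obtains exactly $\K^{\beta/2}(U)/\K^\beta(U)$.

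Next I would bound $\log\K^{\beta/2}(U)-\log\K^\beta(U)$ from above. For the numerator I apply the crude upper bound \eqref{majok} of Lemma \ref{minoG} with inverse temperature $\beta/2$ in place of $\beta$, which gives $\log\K^{\beta/2}(U)\le C\beta\bN$. For the denominator I invoke Proposition \ref{minoK}, whose hypotheses on $U$ and $\mu$ are precisely the ones assumed here; its conclusion \eqref{minok} gives $\log\K^\beta(U)\ge -C\beta\chi(\beta)\bN-C|\pa U|\xib$ once the two cases $\d=2$ and $\d\ge 3$ are merged using $\chi(\beta)\ge 1$ and $|\pa U|\xib\ge 0$. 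Subtracting these two bounds, and using $\chi(\beta)\ge 1$ once more to absorb the term $C\beta\bN$ coming from the numerator, yields the claimed estimate; the final constant inherits its dependence ($\d$, $m$, $\Lambda$, and the constants in \eqref{assinter}--\eqref{gmm}) from Proposition \ref{minoK}.

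There is essentially no obstacle here: the only point needing care is bookkeeping, namely checking that Lemma \ref{minoG} and Proposition \ref{minoK} apply verbatim with the parameters $\beta/2$ and $\beta$ respectively, and that every case distinction in \eqref{minok} is dominated by the single right-hand side $C\beta\chi(\beta)\bN+C|\pa U|\xib$. The genuinely hard work, the Jensen/Green-function lower bound on $\log\K(U)$, has already been carried out in the proof of Proposition \ref{minoK}.
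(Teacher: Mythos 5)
Your proof is correct and is precisely the argument the paper intends (the paper simply says the corollary follows from \eqref{majok} and \eqref{minok}): the first step is indeed an exact identity $\Esp_{\Q(U)}\bigl(\exp(\tfrac{\beta}{2}\G)\bigr)=\K^{\beta/2}(U)/\K^{\beta}(U)$ from the definitions \eqref{defQ} and \eqref{defK}, and the second step combines the upper bound \eqref{majok} at inverse temperature $\beta/2$ with the lower bound \eqref{minok} at inverse temperature $\beta$, using $\chi(\beta)\ge 1$ to merge the terms. No gaps.
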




\section{Comparison of Neumann and Dirichlet problems by screening}\label{sec4}

The screening procedure first introduced in \cite{ssgl} using ideas of \cite{aco}
 consists in taking a configuration $X_n$  in a set whose energy $\H$ or $\F$ is known  and modifying it near the boundary of the set to produce some configurations $Y_\mn$ with a corrected number of points for which the energy $\G$ is controlled by $\H(X_n)$ plus a small, well quantified, error. It has been improved over the years, and we here  provide for the first time a result with optimal errors. An informal description of  the method as well as the proof of the following main result, are postponed to Appendix~\ref{appa}.

 In the following result, two lengthscales $\l $ and $\tilde \l$ will appear,   $\tilde \l$ represents the distance over which one needs to look for a good contour by a mean value argument, then
$\l$ represents the distance needed to screen the configuration away from that good contour. The screening will only be possible if that distance is large enough compared to the boundary energy. In other words, only configurations with  well controlled boundary energy are ``screenable".

For any given configuration, the set $\Old$ (like ``old") represents the interior set in which the configuration and the associated field are left unchanged, while  in the complement  denoted $\New$ (like ``new"), the configuration is discarded and replaced by an arbitrary configuration with the correct number of points. Because we are dealing with statistical mechanics, we need not only to construct one screened configuration, but a whole family of them in order to retrieve a sufficient volume of configurations.  A new feature here is to sample the new points of the screened configuration according to a Coulomb Gibbs measure in the set $\New$ (this done in Proposition \ref{proscreen}). 

By abuse of notation we will also write $Q_{R+t}$ to denote the $t$-neighborhood of $Q_R$ if $t \ge 0$, and the set $\{x\in Q_R , \dist (x, \pa Q_R) \ge |t|\}$ if $t \le 0$.

We have to perform two variants of the screening: an ``outer screening" when $\Omega=Q_R$ and an ``inner screening" when $\Omega=U \backslash Q_R$. Both are entirely  parallel.
The main result is the following 
\begin{prop}[Screening]\label{outscreen} Assume $U$ is either $\R^\d$ or a  finite disjoint union of  hyperrectangles with parallel sides belonging to $ \mathcal Q_R$ for some $R \ge \max(1,\beta^{-\frac1\d})$ all included in $\Sigma$, or the complement of such a set. 
 Assume $\mu$ is a density satisfying $0 <m\le \mu \le  \Lambda$ in 
$\Omega= Q_{R}\cap U $ (outer case), resp. $\Omega= U\backslash Q_R$ (inner case) where $Q_R$ is a hyperrectangle of sidelengths in $[R,2R]$ with sides parallel to those of $U$, and such that 
 $\mu(\Omega)=\mn$, an integer.
There exists   $C>5$ depending only on $\d,m$ and $\Lambda$  such that the following holds.
Let $\ell$ and $\tilde \ell$ be such that  $R\ge \tilde \l \ge \l \ge C$ and in the inner case also assume $Q_R \cap U \subset\{ x\in U, \dist (x,\pa \Sigma \cap U) \ge \tilde \ell\}$.

 Let $X_n$ be a configuration of points in  $ \Omega$ and let  $u$ solve
 \be \label{eqsp}
\left\{\begin{array}{ll}
 -\Delta u= \cd \( \sum_{i=1}^n \delta_{x_i}-\mu\)  & \text{in} \  \Omega \\
 \frac{\partial u}{\partial \nu}= 0 \ & \text{on}\ \partial U \cap \Omega.\end{array}\right.\ee
We denote  if $\Omega= Q_R \cap U$
\be \label{definitionsannex0}
S(X_n) =  \int_{(Q_{R-\tilde \l}\backslash Q_{R-2\tilde \l})\cap U} |\nab u_{ \rrtt}|^2  \qquad S'(X_n)=\sup_x \int_{(Q_{R-\tilde \l}\backslash Q_{R-2\tilde \l}) \cap \car_{\tilde \l}(x)\cap U} |\nab u_{ \rrtt}|^2,\ee 
respectively if $\Omega= U \backslash Q_R$, 
\be \label{definitionsannex2}
S(X_n) =  \int_{(Q_{R+2\tilde \l}\backslash Q_{R+\tilde \l} )\cap U } |\nab u_{\rrtt}|^2  \qquad S'(X_n)=\sup_x \int_{\((Q_{R+2\tilde \l}\backslash Q_{R+\tilde \l}) \cap \car_{\tilde\l}(x)\)\cap U} |\nab u_{\rrtt}|^2,\ee where $\rrtt$ is defined as in~\eqref{rrtt}. 

 Assume the screenability condition 
\be \label{screenab}
 \l^{\d+1}>C\min \(S'(X_n), \frac{S(X_n)}{\tilde \ell}\).\ee
 
  There exists a $T \in  [\tilde \ell, 2 \tilde \ell]$, a set $\Old $ such that $Q_{R-T-1}\cap U \subset \Old \subset Q_{R-T+1}\cap U$ (resp. $   U\backslash Q_{R-T+1}\subset \Old \subset U\backslash Q_{R-T-1}$), a subset~$I_\pa\subseteq \{1,\ldots,n\}$
 and a positive  measure  $\tilde \mu$ in $\New:= \Omega \backslash \Old$ (all depending on $X_n$) such that the following holds:
\begin{itemize}
\item $\N$ being  the number of points of $X_n$ such that $B(x_i, \rrc_i)$ intersects $\Old$, we have 
\be \label{bornimp}
\tilde \mu(\New)= \mn-\N,\qquad |\mu(\New)-\tilde \mu(\New)|\le C\(R^{\d-1}+ \frac{S(X_n)}{\tilde \ell}\)
\ee
\be
\label{mmut2}  \|\mu -\tilde \mu\|_{L^\infty(\New)} \le \frac{m}{2},\qquad 
\int_{\New} (\tilde \mu-\mu)^2 \le C \frac{S(X_n)}{\l\tilde \ell}
\ee
\item we have $\# I_\pa\le C \frac{S(X_n)}{\tilde \ell}$
\item   for any configuration $Z_{\mn-\N}$ of $\mn -\N$ points  in $\New$, the configuration $Y_{\mn}$ in $\Omega$ made by the union of the points $x_i$ of $X_n$ such that $B(x_i, \rrc_i)$ intersects $\Old$  and the points $z_i$ of  $Z_{\mn - \N}$ satisfies
  \begin{multline} \label{nrjy}
\G(Y_\mn, \Omega)\le \H_U(X_n, \Omega)
\\+  C\( \frac{\l S(X_n)}{\tilde \l}  +   R^{\d-1} \tilde \ell+  \G(Z_{\mn-\N} ,\tilde\mu, \New)   + |\mn-n|
+
\sum_{(i,j) \in J} \g(x_i-z_j) \)
\end{multline} 
where the index set $J=J(X_n)$ in the sum is given by
\begin{equation}
J:=\left\{ (i,j) \in I_\partial \times \{1,\ldots, \mn-n_\Old\}  \,:\, |x_i-z_j|\le \rrc_i \right\}. 
\end{equation}
\end{itemize}
\end{prop}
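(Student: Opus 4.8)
The plan is to carry out the screening construction in three stages, treating the outer case $\Omega=Q_R\cap U$ in detail; the inner case $\Omega=U\setminus Q_R$ is entirely parallel, with $Q_R$ replaced by its complement, and throughout $u$ denotes the Neumann field solving \eqref{eqsp} and $u_{\rrtt}$ its truncation at the radii \eqref{rrtt}. \emph{Choosing a good contour.} First I would slice the corridor $(Q_{R-\tilde\ell}\setminus Q_{R-2\tilde\ell})\cap U$ into $\sim\tilde\ell$ concentric sub-shells of unit width. Since the truncated Dirichlet energy of $u$ over this corridor equals $S(X_n)$, a pigeonhole argument produces a contour at some distance $T\in[\tilde\ell,2\tilde\ell]$ from $\partial Q_R$ across which the energy is $\le CS(X_n)/\tilde\ell$; a counting argument, using that a truncation ball straddling this low-energy contour must carry a definite share of the local (renormalized) energy, simultaneously bounds by $CS(X_n)/\tilde\ell$ the number of such straddling balls, which defines $I_\pa$. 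This fixes $\Old$ — the part of $\Omega$ interior to the contour, so that $Q_{R-T-1}\cap U\subset\Old\subset Q_{R-T+1}\cap U$ — together with $\New=\Omega\setminus\Old$ and $\N$, the number of points whose truncation balls meet $\Old$ (the ones we keep).

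\emph{Adjusting the background.} In $\New$ we must place exactly $\mn-\N$ new points, which generically differs from $\mu(\New)$; I would therefore redefine $\tilde\mu$ on a partition of $\New$ into cells of size $\sim\ell$ by adding a bounded piecewise-constant correction, chosen so that $\tilde\mu(\New)=\mn-\N$ and $\tilde\mu$ stays nonnegative. The discrepancy $|\mu(\New)-\tilde\mu(\New)|$ is controlled by the number of discarded points plus $\#I_\pa$, yielding $C(R^{\d-1}+S(X_n)/\tilde\ell)$ as in \eqref{bornimp}; and since the correction is spread over a corridor of volume $\sim R^{\d-1}\tilde\ell$, the screenability hypothesis \eqref{screenab} makes it small enough to force $\|\mu-\tilde\mu\|_{L^\infty(\New)}\le m/2$ and $\int_\New(\tilde\mu-\mu)^2\le CS(X_n)/(\ell\tilde\ell)$.

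\emph{Building and gluing the screened field.} On $\partial\Old$ the field $\nabla u$ has a normal trace small in $L^2$ by the contour choice; I would replace it by a piecewise-constant normal flux on the cells of the $\ell$-partition lying along the contour, carrying the same total flux through each cell, keep the homogeneous Neumann condition on $\partial U\cap\New$, and impose zero normal flux on $\partial Q_R$, so that the reconstructed field extends by $0$ outside $\Omega$. Solving a Neumann problem cell by cell — which is exactly where $\ell^{\d+1}$ must dominate $\min(S'(X_n),S(X_n)/\tilde\ell)$, namely to leave room to realize both the prescribed flux and the prescribed number of new points with energy proportional to the cell volume — produces, for any placement $Z_{\mn-\N}$ of the new points, a field whose energy sums to $\lesssim R^{\d-1}\tilde\ell$ over the $\sim R^{\d-1}\tilde\ell/\ell^\d$ cells, plus the intrinsic energy $\G(Z_{\mn-\N},\tilde\mu,\New)$. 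Gluing $u$ on $\Old$ with this field on $\New$ gives an admissible vector field for $\G(Y_\mn,\Omega)$, and the projection Lemma~\ref{projlem} bounds $\G(Y_\mn,\Omega)$ by its truncated Dirichlet energy (with the $\g(\rrc_i)$ and $\GG$ terms carried along). That expression splits into the $\Old$-part, equal to $\H_U(X_n,\Omega)$ up to the reshuffled corridor energy $\ell S(X_n)/\tilde\ell$; the $\New$-part estimated above; the cost $|\mn-n|$ of reconciling the point-count mismatch between the surviving configuration and the neutral target; and the only genuinely new cross-interactions, between points indexed by $I_\pa$ and nearby new points $z_j$ with $|x_i-z_j|\le\rrc_i$, which cannot be absorbed and are kept as the sum over $J$. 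This is exactly \eqref{nrjy}.

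\emph{Main obstacle.} I expect the hardest part to be the third stage: obtaining \eqref{nrjy} with the claimed sharp error terms — in particular reducing the ``volume'' error to the surface term $R^{\d-1}\tilde\ell$ and the ``reshuffling'' error to $\ell S(X_n)/\tilde\ell$, rather than to larger quantities — and carrying out the truncation bookkeeping with the nearest-neighbor-type radii $\rrtt_i$, $\rrc_i$ so that no spurious additive error is created beyond the sum over $J$; this is where the truncation device of \cite{lsz,ls2} is essential. The full construction and estimates are deferred to Appendix~\ref{appa}.
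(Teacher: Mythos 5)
Your proposal tracks the paper's construction closely in its overall architecture: the pigeonhole-on-shells choice of contour, the count of straddling balls defining $I_\partial$, the piecewise-constant correction on $\ell$-cells producing $\tilde\mu$ with the screenability hypothesis \eqref{screenab} ensuring $\|\mu-\tilde\mu\|_{L^\infty}\le m/2$, the cell-by-cell Neumann reconstruction glued to $\nabla u$ via Lemma~\ref{projlem}, and the recognition that the residual cross-interactions become the sum over $J$. Two issues remain, one of which is a genuine omission.

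The gap: your cell-by-cell Neumann solve on $\New$ accounts for the prescribed boundary flux across $\Gamma$ and for the new points $Z_{\mn-\N}$ against the background $\tilde\mu$, but not for the \emph{completion of the straddling smeared charges}. The kept points are by definition those whose truncation balls $B(x_i,\rrc_i)$ meet $\Old$, so for $i\in I_\partial$ the measure $\delta_{x_i}^{(\rrc_i)}$ has a nonzero part inside $\New$ which must appear as a source of the reconstructed field there; otherwise $-\div E^\circ$ in $\New$ does not equal $\cd\bigl(\sum_{y\in Y_\mn\cap\New}\delta_y-\mu\bigr)$ and Lemma~\ref{projlem} cannot be invoked for $Y_\mn$. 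The paper handles this with a dedicated field $E_1$ whose energy is estimated through the Neumann Green-function bound of Lemma~\ref{chargesnearbdry} (resting on Proposition~\ref{proappgreen} in Appendix~A), producing a term in $\sum_{i\neq j}\g(x_i-x_j)+\sum_i\g(\rrc_i)$ which is then absorbed via the monotonicity estimate \eqref{supplem}; this energy is not ``proportional to the cell volume'' and your sketch has no mechanism for it. A secondary, lesser issue is the bookkeeping in stage three: the flux-correction energy from the cell problems is bounded by $C\ell S(X_n)/\tilde\ell$ via Lemma~\ref{lem57} (order $\ell$ times the boundary flux energy, summed over $\Gamma$), whereas $R^{\d-1}\tilde\ell$ enters \eqref{nrjy} as a volume count through $|\mn-\N|\le|\mn-n|+|n-\N|$ with $|n-\N|\lesssim\mu(\New)\lesssim R^{\d-1}\tilde\ell$; your stage three misattributes the latter term to the cell Neumann energy and gives \eqref{screenab} a second role (controlling per-cell energy) that it does not actually play.
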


Once this result is established one may tune the parameters $\l, \tilde \l$ to obtain the best results.
For instance, at the beginning we may only know that $\int_{Q_R} |\nab u_\rr|^2 $ is bounded by $O(R^\d)$, we then bound $S(X_n)$ and $S'(X_n)$ by $O(R^\d)$, optimize the right-hand side of~\eqref{nrjy} and choose $\ell \le\tilde \ell$ satisfying the constraints and obtain 
$$\G(Y_\mn,  \Omega) \le \H_U(X_{n} , \Omega) + C (R^{\d-\sigma}+  |n-\mn|),$$
for some $\sigma>0$, i.e. we get an error which is smaller than the order of the energy. The error $|n-\mn|$ can be controlled via the energy on a slightly larger domain, and shown to be negligible as well.

At the end of the bootstrap argument, we will know that the energy and points are well distributed down to say, scale $C$.
This means that we then know that (for good configurations) $S'(X_n)$ is controlled by $\tilde \l^\d$ and $S(X_n)$ by $R^{\d-1}\tilde \l$. The condition~\eqref{screenab} is then automatically satisfied and  we can  thus take $\l=C$, $\tilde \l=C$, and  we may also control $n-\mn$ by $O(R^{\d-1})$ to obtain a bound 
$$\G(Y_\mn, \Omega)\le \H_U(X_{n} , \Omega)+ C R^{\d-1}$$
i.e. with an error only proportional to the surface, the best one can hope to achieve by this approach.

The above  proposition is sufficient when studying energy minimizers, but when studying Gibbs measures, we actually need to show that given a set of configurations with well-controlled energy, we may screen them and sample new points in $\New $  to obtain a set with large enough volume in which~\eqref{nrjy} holds.  This is possible  and  yields comparison   of partition functions (reduced to screenable configurations) as stated in the following.

\begin{prop}\label{proscreen}With the same assumptions and notation as in the previous proposition. Assume in addition that 
$\tilde \ell \ge \beta^{-\hal}$ if $\d=2$. 
Let us define the set $\mathcal D_{s,z}$ to be 
\begin{equation}
\label{defA}
\mathcal D_{s,z}= \left\{X_n \in \Omega^n, S(X_n) \le s \ \text{and} \ S'(X_n) \le z
 \right\}\end{equation}
where $S, S'$ are as in~\eqref{definitionsannex0}, resp.~\eqref{definitionsannex2}. For any  number $s$  such that 
\be\label{constr2}
 \l^{\d+1}> C \min ( \frac{s}{\tilde \l} , z) ,\ee  
 and \be\label{constr3}s< c \tilde \l^2 R^{\d-1}\ee for some $c>0$ small enough (depending only on $\d, m, \Lambda$), 
 there exists $\alpha, \alpha'$ satisfying 
 \be \label{cara} \left|\frac{\alpha'}{\alpha}-1\right|\le C\( \frac{1}{\tilde \ell}+\frac{s}{\tilde \ell^2 R^{\d-1}}\), \qquad  \frac1C \tilde \ell R^{\d-1}\le \alpha \le C \tilde \ell R^{\d-1}\ee
such that letting  
 \be \ep_{e} := C\( \frac{s\l}{\tilde \ell} + R^{\d-1}\tilde \ell \chi(\beta) +|n-\mn|\) \ee 
 and 
 \be\label{epv}
 \ep_{v}:=  C\frac{s}{\ell \tilde \l} +\alpha-\alpha'+ (\mn-n-\alpha) \log \frac{\alpha}{\alpha'} - (\alpha+n-\mn+\hal) \log \(1+\frac{n-\mn}{\alpha}\)  + \hal \log \frac{n}{\mn} 
 \ee   we have 
\begin{equation}\label{resscreen}
n^{-n}  \int_{\mathcal D_{s,z}} \exp\(-\beta \H_U(X_n, \Omega) \)d\mu^{\otimes n} (X_n)
\le  C \K(\Omega) \exp\(  \beta \ep_e+ \ep_v \). \end{equation}
\end{prop}

Here the quantity $\ep_e$ corresponds to the energy error while $\ep_v$ corresponds to the volume error. We want the volume errors to be bounded by $O(\beta)$ times the volume, which is   more difficult to obtain when  $\beta$ is small.

\begin{proof}
For each $X_n\in \mathcal D_{s,z}$ with $s,z $ satisfying~\eqref{constr2}, the screening construction of Proposition~\ref{outscreen} can be applied, providing a number~$\N(X_n)$ and  a set~$\Old (X_n)$ (we emphasize here for a moment their dependence on~$X_n$). When screening we delete~$n-\N$ points in the configuration, those that fell outside of~$\Old$, there are~$\binom{n}{\N}$ ways of choosing the indices of the points that get deleted. In  terms of volume of configurations, this loses at most~$\mu(\New)^{n-\N}$ volume. In addition we glue each~$X_n|_{\Old}$  with~$\mn-\N$ points of~$Z_{\mn-\N}=(z_1, \dots , z_{\mn-\N})$, there are~$\binom{\mn}{\N}$ ways of choosing the indices for the gluing, resulting in configurations~$Y_\mn$ in~$\Omega^\n$ satisfying~\eqref{nrjy}. We integrate the choices of $(z_1, \dots , z_{\mn-\N})$ with respect to the measure $\mu$ restricted to $\New$.
We deduce that 
\begin{align}
\label{depart}
& \int_{\Omega^\mn}\exp\(-\beta \G(Y_\mn, \Omega) \) d\mu^{\otimes \mn}(Y_\mn)
\\ & \quad \notag
\geq
\int_{\mathcal D_{s,z}}\int_{ \New(X_n)^{\mn-\N}} \exp\bigg[-\beta \H_U(X_n , \Omega)- C\beta
\bigg( \frac{s\l}{\tilde \ell} + R^{\d-1}\tilde \ell+ \G(Z_{\mn-\N}, \tilde \mu(X_n), \New(X_n))
\\ & \qquad\qquad\qquad\qquad\qquad\qquad\qquad\qquad\notag
  + |\mn-n| 
+\sum_{(i,j) \in J}   \g(x_i-z_j)  \bigg) \bigg] 
\\ & \qquad \qquad\qquad\qquad\qquad\qquad\qquad\qquad\qquad\notag
\times \frac{\binom{\mn}{\N}}{\binom{n}{\N}}  \frac{1}{\mu(\New)^{n-\N} } \, d\mu|_{\New}^{\otimes (\mn-\N)} (Z_{\mn-\N})\, d\mu^{\otimes n}(X_n) .
\end{align}
Below we will show that, for each $X_n\in \mathcal D_{s,z}$, we have 
\begin{multline} \label{claim44}
\int_{\New^{\mn-\N}} 
\exp\bigg(-C\beta \bigg( \G(Z_{\mn-\N}, \tilde \mu,\New)  +\sum_{(i,j) \in J}   \g(x_i-z_j)\bigg) \bigg) d\mu|_{\New}^{\otimes (\mn-\N)} (Z_{\mn-\N}) \\ 
\geq 
(\mn- \N)^{\mn-\N}
 \exp\(    \int_{\New} \tilde \mu \log \frac{\mu}{\tilde \mu}- C \beta \chi(\beta)  R^{\d-1} \tilde \ell  \).
\end{multline}
Before giving the proof of~\eqref{claim44}, we use it to obtain the proposition. Thanks to~\eqref{mmut2},~\eqref{constr3} and~\eqref{bornimp} we have $|\frac{\mu}{\tilde \mu} -1|< \hal$  if $c$ is chosen small enough and thus by Taylor expansion
\be\label{413}\int_{\New} \tilde \mu \log \frac{\mu} {\tilde \mu} = \int_{\New} \mu - \tilde \mu + O\( \int_{\New} \frac{|  \mu- \tilde \mu|^2}{\tilde \mu} \)  = \mu(\New)- \tilde \mu(\New) + O \( \frac{s}{\ell \tilde \l}\).   \ee
By Stirling's formula, 
\begin{align}
\label{414}
& \log  \( \frac{ \mn! (n-\N)!    }{ n! (\mn-\N)!}  \frac{(\mn-\N)^{\mn-\N}}{\mu(\New)^{n-\N} }\) 
\\ & \qquad \notag
\geq \mn \log \mn -n \log n  + (n-\N) \log \frac{n-\N}{\mu(\New)}   + \hal \log \frac{\mn  (n-\N)}{n  (\mn-\N)} - C. 
\end{align}
Combining~\eqref{claim44}--\eqref{414} and inserting into~\eqref{depart}, we obtain, for a constant~$C$ depending only on~$\d, m$ and $\Lambda$,
\begin{align*}
\lefteqn{
\int_{\Omega^\mn}\exp\(-\beta \G(Y_\mn, \Omega) \) d\mu^{\otimes \mn}(Y_{\mn})
} \quad & 
\\ & 
\geq 
\exp\( -C \beta\( \frac{s\l}{\tilde \ell} +  R^{\d-1} \tilde \l \chi(\beta)+ |\mn-n|\) -C \frac{s}{\ell \tilde \l} \)
\\ & \qquad \times 
\int_{\mathcal D_{s,z}} 
\bigg[ \exp\(-\beta \H_U(X_n, \Omega)  + \mn \log \mn -n \log n  +  \mu(\New)- \tilde \mu(\New)  \)
\\ & \qquad\qquad\qquad \times 
\exp\( (n-\N) \log \frac{n-\N}{\mu(\New)} + \hal \log \frac{\mn  (n-\N)}{n  (\mn-\N)}-C \)\bigg] d\mu^{\otimes n}( X_n).
\end{align*}
We may next use a mean-value argument to obtain, for some  configuration $X_n^0\in \Omega^n$,
\begin{align*}
\lefteqn{
\int_{\Omega^\mn}\exp\(-\beta \G(Y_\mn, \Omega) \) d\mu^{\otimes \mn}(Y_{\mn}) 
} \quad & 
\\ &
\geq 
\exp\bigg[ \mn \log \mn -n \log n +\mu(\New(X_n^0))-\tilde \mu(\New (X_n^0))+  (n-\N(X_n^0)) \log \frac{n-\N(X_n^0)}{\mu(\New(X_n^0))}   
\\ & \qquad\qquad 
+ \hal \log \frac{\mn  (n-\N(X_n^0))}{n  (\mn-\N(X_n^0))}-C 
  -C \beta\( \frac{s\l}{\tilde \ell} +  R^{\d-1} \tilde \l \chi(\beta)+ |\mn-n|\)   -  \frac{Cs}{\ell \tilde \l}     \bigg]  
\\ & \qquad \times \notag
\int_{\mathcal D_{s,z}} \exp\(-\beta \H_U(X_n, \Omega) \) d\mu^{\otimes n} (X_n).
\end{align*}
Letting then $\alpha = \tilde \mu(\New (X_n^0))$ and $\alpha'= \mu(\New (X_n^0))$, we have in view of~\eqref{bornimp} that~\eqref{cara} holds and we may rewrite the second exponential term as 
 $$\exp\(\mn \log \mn -n \log n  + \alpha'-\alpha  +  (n-\mn + \alpha) \log \frac{n- \mn+\alpha}{\alpha'}   + \hal \log \frac{\mn  (n-\mn+\alpha)}{n  \alpha}\).$$
Rearranging terms we obtain the proposition. 

\smallskip 
 
It remains to prove~\eqref{claim44}.
Applying Jensen's inequality, we find
\begin{multline*}
\int_{\New^{\mn-\N}} 
\exp\bigg(-C \beta \bigg(\G(Z_{\mn-\N}, \tilde \mu,\New)   +\sum_{(i,j) \in J}   \g(x_i-z_j)\bigg) \bigg) \, d\mu|_{\New}^{\otimes (\mn-\N)} (Z_{\mn-\N}) 
\\
= \!\int_{\New^{\mn-\N}} \!\exp\bigg[\!\!-\!C\beta \bigg( \G(Z_{\mn-\N}, \tilde \mu,\New) \!+\!\sum_{(i,j) \in J}   \g(x_i-z_j) \bigg)+ \!\sum_{i=1}^{\mn-\N} \!\log \frac{\mu}{\tilde \mu} (z_i)  \bigg] d\tilde\mu^{\otimes (\mn-\N)} (Z_{\mn-\N}) 
\\
\geq 
\tilde \mu(\New)^{\mn-\N} 
\exp\Bigg[  \tilde \mu(\New)^{\N-\mn} 
\int_{\New^{\mn-\N} } \bigg( -C\beta\bigg( \G(Z_{\mn-\N}, \tilde \mu,\New)
 +\sum_{(i,j) \in J}   \g(x_i-z_j)\bigg)
\\+ \sum_{i=1}^{\mn-\N} \log \frac{\mu}{\tilde \mu} (z_i)  \bigg) \,
d\tilde \mu^{\otimes (\mn-\N)} (Z_{\mn-\N}) \Bigg]
\end{multline*}where we recall  that $\tilde \mu(\New)=\mn-\N$.
We then use the same proof as that of Proposition \ref{minoK}. The term $\sum_{(i,j) \in J}   \g(x_i-z_j)$ adds a contribution 
\begin{equation*}- C \beta  (\mn - \N)^{\mn- \N}  \sum_{i\in I_\pa}\int_{|z-x_i|\le \rrc_i}       \g(x_i-z)  d\tilde \mu(z)
    \ge - C \beta  (\mn - \N)^{\mn- \N}   \# I_{\pa}
\end{equation*}
and, by $\#I_\pa \le Cs/\tilde \ell$ and~\eqref{constr3}, we conclude that 
\begin{align*}
&
\int_{\New^{\mn-\N}} \exp\bigg(-C \beta \bigg(\G(Z_{\mn-\N}, \tilde \mu,\New)   +\sum_{(i,j) \in J}   \g(x_i-z_j)\bigg) \bigg) d\mu|_{\New}^{\otimes (\mn-\N)} (Z_{\mn-\N})
\\ & \qquad 
\ge   (\mn-\N)^{\mn-\N} \exp\bigg(    \int_{\New} \tilde \mu \log \frac{\mu}{\tilde \mu} - C \beta R^{\d-1}\tilde \l (1+(\log R )\indic_{\d=2}) \bigg).
\end{align*}
In the case $\d=2$, in view of the fact that~$\tilde \ell \ge \beta^{-\hal}$, we see from its construction (in Appendix~\ref{appa}) that~$\New $ can be partitioned into disjoint nondegenerate cells of size $\max(1,\beta^{-\hal})$ in which $\tilde\mu$ integrates to an integer. Using superadditivity as in the proof of Proposition \ref{minoK}, 
we conclude that~\eqref{claim44} holds.

\end{proof}

\begin{coro}\label{coro43}With the same assumptions and notation as in the previous proposition,
there exists $C>0$ depending only on $\d, m, \Lambda$ such that the following holds. Let 
$$\mathcal B_n=\left\{ X_n \in \Omega^n, \sup_x \int_{  \{ 
(\pa \Omega)_{-2\tilde\l}\cap \car_L(x)}|\nab u_{\rrtt}|^2\le \chi(\beta)ML^\d\right\}$$
where $(\pa \Omega)_{-2\tilde\l}$ denotes $Q_{R-\tilde \l}\backslash Q_{R-2\tilde \l}\cap U $ if $\Omega =Q_R\cap U$ and $Q_{R+2\tilde \l}\setminus Q_{R+\tilde \l}\cap U $ if $\Omega= U\setminus Q_R$.
If \be\label{condsurL} R>L > CM \max(1, \beta^{-\hal}\indic_{\d=2}) ,\ee  and
$\dist(Q_R, \pa\Sigma \cap U)\ge L$, we have
\begin{multline}\label{eqsulk}
n^{-n}  \int_{\mathcal B_n} \exp\(-\beta \H_U(X_n, \Omega) \)d\mu^{\otimes n} (X_n) 
\\ \le  C \K(\Omega) \exp\Big(  \beta \(  C R^{\d-1}L  \chi(\beta)M +|n-\mn|   \)   +  \frac{C M \chi(\beta) R^{\d-1}}{  L}    \\
+ \alpha-\alpha'+ (\mn-n-\alpha) \log \frac{\alpha}{\alpha'} - (\alpha+n-\mn+\hal) \log \(1+\frac{n-\mn}{\alpha}\)  + \hal \log \frac{n}{\mn} 
\Big),\end{multline}
with $\alpha, \alpha'$ satisfying 
 $$  \left|\frac{\alpha'}{\alpha}-1\right|\le C \frac{   \chi(\beta) }{L } , \qquad  \frac1C L  R^{\d-1}\le \alpha \le C L R^{\d-1}.$$
 \end{coro}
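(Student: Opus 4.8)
The plan is to deduce this from Proposition~\ref{proscreen} by specializing the screening lengthscales to $\ell=\tilde\ell=L$ and checking that, on the event $\mathcal B_n$, one falls within the regime covered by that proposition. First I would bound $S$ and $S'$ on $\mathcal B_n$: with $\tilde\ell=L$ the cube $\car_{\tilde\ell}(x)$ in the definition~\eqref{definitionsannex0}--\eqref{definitionsannex2} of $S'$ is $\car_L(x)$, so membership in $\mathcal B_n$ gives at once $S'(X_n)\le\chi(\beta)ML^\d=:z$; and since $S$ is the integral of $|\nabla u_{\rrtt}|^2$ over the whole boundary slab $(\pa\Omega)_{-2\tilde\ell}$, a slab of width $L$ and $(\d-1)$-dimensional area $O(R^{\d-1})$, I would cover that slab by $O(R^{\d-1}/L^{\d-1})$ cubes of sidelength $L$ with bounded overlap and sum the bound $\chi(\beta)ML^\d$ over them, obtaining $S(X_n)\le C\chi(\beta)MR^{\d-1}L=:s$. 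Thus $\mathcal B_n\subset\mathcal D_{s,z}$ for these values of $s,z$, in both the inner and the outer case.

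Next I would check the hypotheses of Proposition~\ref{proscreen} with $\ell=\tilde\ell=L$. The ordering $R\ge\tilde\ell\ge\ell\ge C$, and when $\d=2$ also $\tilde\ell\ge\beta^{-\hal}$, follow from~\eqref{condsurL} (assuming, as we may, that $M\ge 1$, and using $\chi(\beta)\le C\max(1,\beta^{-\hal}\indic_{\d=2})$); the geometric requirement $Q_R\cap U\subset\{x\in U,\ \dist(x,\pa\Sigma\cap U)\ge\tilde\ell\}$ needed in the inner case is exactly the distance hypothesis of the corollary. The screenability constraint~\eqref{constr2}, $\ell^{\d+1}>C\min(s/\tilde\ell,z)$, holds because $\min(s/\tilde\ell,z)\le z=\chi(\beta)ML^\d$ while $\ell^{\d+1}=L^{\d+1}>C\chi(\beta)ML^\d$ by virtue of $L>C\chi(\beta)M$, a consequence of~\eqref{condsurL}; and the constraint~\eqref{constr3}, $s<c\tilde\ell^2R^{\d-1}$, reduces to $L>(C/c)\chi(\beta)M$ and holds for the same reason.

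I would then apply~\eqref{resscreen}: since $\mathcal B_n\subset\mathcal D_{s,z}$,
\[
n^{-n}\!\int_{\mathcal B_n}\!\exp\!\left(-\beta\H_U(X_n,\Omega)\right)d\mu^{\otimes n}(X_n)
\le n^{-n}\!\int_{\mathcal D_{s,z}}\!\exp\!\left(-\beta\H_U(X_n,\Omega)\right)d\mu^{\otimes n}(X_n)
\le C\K(\Omega)\exp\!\left(\beta\ep_e+\ep_v\right),
\]
and it remains to substitute $\ell=\tilde\ell=L$ and $s\le C\chi(\beta)MR^{\d-1}L$ into the definitions of $\ep_e$ and $\ep_v$. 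This gives $\ep_e=C\left(s+R^{\d-1}L\chi(\beta)+|n-\mn|\right)\le C\left(\chi(\beta)MR^{\d-1}L+|n-\mn|\right)$ and $\ep_v=Cs/L^2+(\text{the }\alpha,\alpha'\text{ terms})\le C\chi(\beta)MR^{\d-1}/L+(\text{the }\alpha,\alpha'\text{ terms})$, while~\eqref{cara} becomes $\left|\frac{\alpha'}{\alpha}-1\right|\le C\left(\frac1L+\frac{\chi(\beta)M}{L}\right)$ and $\frac1C LR^{\d-1}\le\alpha\le CLR^{\d-1}$, which is~\eqref{eqsulk} with the asserted control of $\alpha,\alpha'$.

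Since the argument is essentially a direct substitution into Proposition~\ref{proscreen}, there is no genuine obstacle here; the only steps needing a little care are the covering estimate used to bound $S(X_n)$ by $s$, and the bookkeeping showing that~\eqref{constr2},~\eqref{constr3} and the $\d=2$ requirement $\tilde\ell\ge\beta^{-\hal}$ all reduce to the single hypothesis~\eqref{condsurL}.
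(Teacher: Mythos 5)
Your proof matches the paper's (which compresses the whole argument into a sentence) step for step: bound $S$ and $S'$ on $\mathcal B_n$, set $\ell=\tilde\ell=L$ with $s=M\chi(\beta)R^{\d-1}L$ and $z=M\chi(\beta)L^\d$, verify that \eqref{condsurL} implies \eqref{constr2}, \eqref{constr3}, and the $\d=2$ requirement $\tilde\ell\ge\beta^{-\hal}$, and substitute into \eqref{resscreen}. One caveat worth flagging: the bound on $|\alpha'/\alpha-1|$ you derive carries an extra factor $M$, whereas the corollary's statement reads $C\chi(\beta)/L$ with $C$ depending only on $\d,m,\Lambda$; this appears to be a slip in the stated corollary, since the paper's own application of it in Section~\ref{sec6} (via Remark~\ref{remerr}) yields an $M^2\chi(\beta)^2R^{\d-1}/L$ volume-error term, which is consistent with the $M$-dependent bound you obtained and not with the $M$-free one.
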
 
\begin{proof} 
If $X_n$ in $\mathcal B_n$ then 
 $$S (X_n) \le \frac{R^{\d-1}}{L^{\d-1}} M  \chi(\beta) L^\d, \quad S'(X_n) \le M  \chi(\beta) L^\d .$$
 using the definition~\eqref{definitionsannex0} or~\eqref{definitionsannex2}.
  We check that 
   setting  $\ell=\tilde \ell = L$ and $s= M \frac{R^{\d-1}}{L^{\d-1}}  \chi(\beta)L^\d$ and $z= M \chi(\beta) L^\d$ we  have  that if~\eqref{condsurL} holds, then up to making the constant larger in~\eqref{condsurL},  ~\eqref{constr2} and~\eqref{constr3} hold  and  the result follows by applying the result of Proposition~\ref{proscreen}.
\end{proof}

\begin{remark} \label{remerr}
When summing the contributions over $\Omega$ where $n$ points fall and $U\backslash \Omega $ where $N-n$ points fall, the errors of~\eqref{epv} compensate and add up to a well bounded error. More precisely, 
if  $\alpha, \alpha'$, respectively $\gamma, \gamma'$ satisfy ~\eqref{cara} then for every $n$ we have
\begin{multline}\label{425}
\alpha-\alpha'+ (\mn-n-\alpha) \log \frac{\alpha}{\alpha'} - (\alpha+n-\mn+\hal) \log \(1+\frac{n-\mn}{\alpha}\)  + \hal \log \frac{n}{\mn} 
\\+ \gamma'-\gamma+ (n-\mn-\gamma) \log \frac{\gamma}{\gamma'} - (\gamma+\mn-n+\hal) \log \(1+\frac{\mn-n}{\gamma}\)  + \hal \log \frac{N-n}{N-\mn} \\ \le  C\( \frac{R^{\d-1}}{\tilde \ell} + \frac{s^2}{\tilde \ell^3 R^{\d-1}}\)\end{multline}
\end{remark}
\begin{proof}
First we notice that since the expressions arising here  originate in Stirling's formula,  they can be restricted to the case of $\alpha+ n-\mn \ge 1$, $\gamma+\mn-n\ge 1$, $n\ge 1$ and $N-n\ge 1$ (all the quantities involved are integers). 

We then study the expression in the left-hand side of~\eqref{425} as a function of the real variable $n$ (with the above constraints).  Differentiating in $n$, we find that it achieves its maximum when  
\begin{multline*} \log \frac{\gamma \alpha'}{\gamma'\alpha}- \log \( 1+\frac{n-\mn}{\alpha}\)  + \frac{1}{2(\alpha+ n-\mn)} + \log \( 1+ \frac{\mn-n}{\gamma}\)  -\frac{1}{2(\gamma+\mn-n)}\\+ \frac{1}{2n}-\frac{1}{2(N-n)} 
  =0. \end{multline*}
Using $\alpha + n-\mn\ge 1$, $\gamma+\mn-n\ge 1$, $n \ge 1$, $N-n\ge 1$ and~\eqref{cara} we deduce that 
$$\left| \log \( 1+ \frac{\mn-n}{\gamma}\)- \log \( 1+\frac{n-\mn}{\alpha}\)\right|\le C$$
and thus 
$$ \frac{  1+ \frac{\mn-n}{\gamma}}{1+\frac{n-\mn}{\alpha}}   \ \text{is bounded above and below}$$
and it follows easily in view of~\eqref{cara} that $|n-\mn|\le C  \tilde \ell R^{\d-1}$. 
To find the maximum of~\eqref{425} it thus suffices to maximize it for such $n$'s. But for such $n$'s we may check that 
$\hal \log \( 1+\frac{n-\mn}{\alpha}\) $, $\hal \log \(1+ \frac{\mn-n}{\gamma}\)$, $\log \frac{n}{\mn} $ and $\log \frac{N-n}{N-\mn}$
 are all bounded by a constant depending only on $\d, m, \Lambda$, hence it suffices to obtain a bound for 
 \begin{multline}\label{426}
\alpha-\alpha'+ (\mn-n-\alpha) \log \frac{\alpha}{\alpha'} - (\alpha+n-\mn) \log \(1+\frac{n-\mn}{\alpha}\)  
\\+ \gamma'-\gamma+ (n-\mn-\gamma) \log \frac{\gamma}{\gamma'} - (\gamma+\mn-n) \log \(1+\frac{\mn-n}{\gamma}\)  \end{multline}
 Differentiating in $n$, we find that this expression is maximal exactly for 
 $$  1+\frac{\mn-n}{\gamma}  =\frac{\gamma'\alpha}{\gamma\alpha'}  \(1+\frac{ n-\mn}{\alpha}\)  \Leftrightarrow n=\mn+ \frac{\frac{\gamma}{\gamma'}- \frac{\alpha}{\alpha'}} {\frac{1}{\gamma'}+\frac{1}{\alpha'} }$$
 Inserting this into~\eqref{426} we find  that the expression is then equal to 
 \begin{multline*}\alpha-\alpha' - \alpha \log \frac{\alpha}{\alpha'}-\alpha \log  \( 1+\frac{n-\mn}{\alpha}\)-\gamma \log    \( 1+  \frac{\mn -n }{\gamma}\) +(\mn-n) \log \frac{\gamma'}{\gamma}  \\= O\( \frac{R^{\d-1}}{\tilde \ell} + \frac{s^2}{\tilde \ell^3  R^{\d-1}}\)  \end{multline*}
 where we used a Taylor expansion and ~\eqref{cara}.
\end{proof}

The next goal is to select $s, \l,\tilde \l$ to optimize the errors made in Proposition \ref{proscreen}. This way
we  obtain the main result of this section, which shows that if one has good energy controls at some scale, one can deduce  some  control at slightly smaller scales.

In all the rest of the paper, we will denote the event that $X_\bN$ has $n$ points in $\Omega$ by
 \be 
\label{Bndef} \mathcal A_n:=\{ X_\bN \in U^\bN, \# (\{X_\bN\} \cap \Omega) = n\}.
\ee

\begin{prop}\label{proboot}
Assume $U$ is either $\R^\d$  or a  finite disjoint union of disjoint hyperrectangles all included in $\Sigma$ with parallel sides belonging  to $ \mathcal Q_\rho$ for some $\rho \ge \max(1,\beta^{-\frac12})$, or the complement of such a set. 
Let $\mu$ be a density such that $0<m\le \mu \le  \Lambda$  in the set $\Sigma$ and $\mu(U)=\bN$ is an integer.
Let $C_0=  \frac{2C}{4\cd}$ for the constant $C$ of 
~\eqref{14}. 
  
There exists a   constant  $C>0$  depending only on $\d, m$ and  $\Lambda$ such that the following holds.
Assume  that $Q_R$ is a hyperrectangle of sidelengths in $[R,2R]$ with sides parallel to those of $U$, that $\mu (Q_R \cap U)=\mn$ and $Q_R \cap U \subset \Sigma$.
  Assume that  there exists a cube 
$\car_L$ of size $L$ such that 
  \begin{equation}\label{locallaw}
  \left|\log \Esp_{\Q (U)} \( \exp \(\frac{\beta}{2} \( \G^{ \car_L} (\cdot , U)+ C_0 \#( \{X_\bN \}\cap \car_L)  \)    \) \)\right|\le \mathcal C \beta \chi(\beta) L^\d\end{equation} with $\mathcal C>1$, and such that 
    $\car_L$ contains  $ Q_{R+\tilde \l} \cap U$ 
 with $$   L\ge R \ge \hal L,$$  
\be \label{condRL}
    R>C '\max(1, \beta^{-\hal} \chi(\beta)^{\ed{\frac12}}) 
\ee and
 \be\label{deftl} \tilde \l=   
  \ed{ C''  \max  \( \(\frac{R }{ \max(1, \beta^{-\hal} \chi(\beta)^{\frac12})}\)^{-\frac23}   R,   R^{1-\frac{2}{\d+2}}\) },\ee 
\ed{for  some $C', C''$ large enough},  both depending only on $\d$, $m,\Lambda$  and $\mathcal C$.
 Assume in addition that    \be\label{assumpdist}
 \dist (Q_R\cap U , \partial \Sigma \cap U) \ge \tilde \l. \ee
Then there exists a sequence $\gamma_n$ satisfying
\be\label{boundgamma}
\sum_{n=0}^\bN  
\gamma_n\le \exp\(- \mathcal C \beta \chi(\beta)R^\d\).
\ee such that we have
\begin{multline}
\label{resLK}
\Esp_{\Q(U)} \(  \exp \(  \frac{\beta}{2} \G^{Q_{R-2\tilde \ell}}(X_\bN, U)  \indic_{\mathcal A_n}\) \)
\\ \le
\gamma_n + \frac{\K^{\frac\beta2} (Q_R) }{\K^\beta (Q_R)} \exp\(
\beta\( \frac{ \mathcal C }{ 4} \chi(\beta) R^\d  
 +|n-\mn| + \frac{C_0}{2}n\)  \).
\end{multline}
\end{prop}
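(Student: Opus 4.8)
The plan is to unfold the left-hand side of \eqref{resLK} into a ratio of partition-function-type integrals, estimate the numerator by the screening results of Proposition \ref{proscreen} and the denominator by the superadditivity \eqref{superad2}, and collect the configurations that cannot be screened into $\gamma_n$, controlling them by the local law \eqref{locallaw}. Concretely, writing $\Omega=Q_R\cap U$ and using \eqref{defQ}: on the event $\mathcal A_n$ of \eqref{Bndef} a configuration $X_\bN$ is the union of $X_n\in\Omega^n$ and $X_{\bN-n}\in(U\setminus Q_R)^{\bN-n}$ up to the $\binom{\bN}{n}$ labellings, and I combine the a priori comparisons $\G(X_\bN,U)\ge\G^{\Omega}(X_\bN,U)+\G^{U\setminus Q_R}(X_\bN,U)$, $\G^{\Omega}(X_\bN,U)\ge\H_U(X_n,\Omega)$, $\G^{U\setminus Q_R}(X_\bN,U)\ge\H_U(X_{\bN-n},U\setminus Q_R)$ from Lemma \ref{additi}, together with $\G^{Q_{R-2\tilde\l}}(X_\bN,U)\le\H_U^{\omc}(X_n,\Omega)$ where $\omc=Q_{R-2\tilde\l}\cap U$. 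This bounds the left-hand side of \eqref{resLK} by $\frac{\binom{\bN}{n}}{\bN^\bN\K^\beta(U)}$ times the product of $\int_{\Omega^n}\exp(\tfrac\beta2\H_U^{\omc}(X_n,\Omega)-\beta\H_U(X_n,\Omega))\,d\mu^{\otimes n}$ and $\int_{(U\setminus Q_R)^{\bN-n}}\exp(-\beta\H_U(X_{\bN-n},U\setminus Q_R))\,d\mu^{\otimes(\bN-n)}$ (the contribution of $\mathcal A_n^c$, present because the indicator sits inside the exponential, is $\le 1$ and dominated by the main term for $\mathcal C$ not too small). For the denominator, \eqref{superad2} applied to the partition $U=(Q_R\cap U)\cup(U\setminus Q_R)$ — legitimate since both carry the integer masses $\mn$ and $\bN-\mn$ — gives $\K^\beta(U)\ge\tfrac{\bN!\,\bN^{-\bN}}{\mn!\,\mn^{-\mn}(\bN-\mn)!\,(\bN-\mn)^{-(\bN-\mn)}}\,\K^\beta(Q_R\cap U)\,\K^\beta(U\setminus Q_R)$.

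Next I split each numerator integral into a screenable part and its complement. On the screenable part of the $\Omega$-integral, a set of the form $\mathcal D_{s,z}$ of \eqref{defA}, I factor out $\exp(\tfrac{\beta C_0}2 n)$ and apply \eqref{cas2} of Proposition \ref{proscreen} (outer case), getting $\le C\exp(\tfrac{\beta C_0}2 n)\,n^n\,\K^{\beta/2}(Q_R\cap U)\exp(\beta\ep_e+\ep_v)$; on the screenable part of the $(U\setminus Q_R)$-integral I apply the inner-screening version \eqref{resscreen}, for which the hypothesis \eqref{assumpdist} and the imposed separation of $Q_R\cap U$ from $\partial\Sigma\cap U$ are exactly what is needed, getting $\le C\,(\bN-n)^{\bN-n}\,\K^\beta(U\setminus Q_R)\exp(\beta\ep_e'+\ep_v')$. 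Taking the product, the factor $\K^\beta(U\setminus Q_R)$ cancels against the superadditivity bound for $\K^\beta(U)$; the factorial powers $\bN^\bN,n^n,(\bN-n)^{\bN-n}$ versus $\mn^\mn,(\bN-\mn)^{\bN-\mn}$ recombine by Stirling and, together with the volume errors $\ep_v+\ep_v'$, telescope — this is precisely Remark \ref{remerr} — to a bounded quantity; there remains the prefactor $\K^{\beta/2}(Q_R)/\K^\beta(Q_R)$ and the carried factor $\exp(\tfrac{\beta C_0}2 n)$. It only remains to choose $\ell,\tilde\l$ and the thresholds $s,z$: with $\ell$ and $\tilde\l$ governed by \eqref{deftl}, $s,z$ of the order of the boundary-layer energy on the screenable event, and using \eqref{condRL}, one checks that the screenability conditions \eqref{constr2}--\eqref{constr3} hold and that $\beta(\ep_e+\ep_e')\le\beta(\tfrac{\mathcal C}4\chi(\beta)R^\d+|n-\mn|)$ — it is exactly this inequality, after the optimization \eqref{toter}, that pins down the value of $\tilde\l$ in \eqref{deftl}. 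This produces the second term of \eqref{resLK}.

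The remaining (non-screenable) configurations define $\gamma_n$: those $X_\bN$ for which the boundary-layer energy $\sup_x\int_{(\partial\Omega)_{-2\tilde\l}\cap\car_{\tilde\l}(x)\cap U}|\nab u_{\rrtt}|^2$ (and its inner analogue) exceeds the screening threshold. Since $\car_L\supset Q_{R+\tilde\l}\cap U$, additivity of the localized energies together with the pointwise bound $\G\ge-C_0\#$ of \eqref{14} show that this boundary-layer energy is dominated by $\G^{\car_L}(X_\bN,U)+C_0\#(\{X_\bN\}\cap\car_L)$, so ``non-screenable'' forces $\G^{\car_L}+C_0\#\gtrsim\chi(\beta)ML^\d$; the local law \eqref{locallaw} at the scale $L\ge R$, combined via Markov and Cauchy--Schwarz with the a priori exponential-moment bound of Corollary \ref{pttproba}, then yields $\sum_n\gamma_n\le\exp(-\mathcal C\beta\chi(\beta)R^\d)$ once $M$ is taken of order $\mathcal C$. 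The main obstacle is the error accounting: on one hand, showing that all the screening energy errors fit inside $\tfrac{\mathcal C}4\beta\chi(\beta)R^\d$ — which is where the precise formula \eqref{deftl} and the lower bound \eqref{condRL} come from, and which carries all the delicate $\beta$-dependence — while the volume errors and Stirling prefactors cancel through Remark \ref{remerr}; on the other hand, the $\gamma_n$ bound, where one must dominate the weight $\exp(\tfrac\beta2\G^{Q_{R-2\tilde\l}})$ on the bad event and make the large-threshold ($M\sim\mathcal C$) argument genuinely beat the a priori losses using only a local law at the marginally larger scale $L$.
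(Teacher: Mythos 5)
Your proposal matches the paper's proof essentially step for step: the same splitting of $\mathcal A_n$ into a non-screenable set (controlled by the local law \eqref{locallaw} at the larger scale $L$, defining $\gamma_n$) and a screenable set (handled by \eqref{cas2}/\eqref{resscreen} of Proposition~\ref{proscreen}), the same superadditivity \eqref{superad2} to lower-bound $\K(U)$, the same cancellation of Stirling/volume errors via Remark~\ref{remerr}, and the same balancing of $\ep_e$ against $\ep_v$ that pins down $\tilde\ell$ as in \eqref{deftl} under the constraint \eqref{condRL}. Two minor corrections: your fourth comparison should read $\G^{Q_{R-2\tilde\l}}(X_\bN,U)\ge\H_U^{\omc}(X_n,\Omega)$ (restriction and relaxation only decrease the energy, which is the direction you in fact need once the sign in $-\tfrac\beta2\G^{\omc}-\beta\G^{\omc^c}$ is taken into account), and the $\gamma_n$ bound follows from \eqref{locallaw} alone, without Cauchy--Schwarz or Corollary~\ref{pttproba}, since $\mathcal B_n^c$ already forces $\G^{\car_L}+C_0\#\gtrsim M\mathcal C\chi(\beta)L^\d$.
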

Once one has obtained local laws down to the minimal scale $\rb$,  Corollary \ref{coro43} will allow to improve the error term and bound it by $R^{\d-1}$.

\begin{proof}
{\bf Step 1: the case of excess energy}.
Recalling the definition of~$\mathcal{A}_n$ in~\eqref{Bndef}, and letting $S$ be as in~\eqref{definitionsannex0} and $M>0$ be a constant to be determined below, we define 
\begin{equation*}
\mathcal B_n := \left\{X_\bN \in \mathcal{A}_n, S(X_\bN|_{\Omega^c}) \le M \mathcal{C} \chi(\beta) L^\d  , \ S(X_\bN|_{\Omega}) \le  M \mathcal{C} \chi(\beta) L^\d 
 \right\}.
\end{equation*}
We also define 
\begin{align*}
\mathcal B_n^+ 
& := \left\{X_{\bN-n} \in (U\backslash \Omega)^{\bN-n}, S(X_{\bN-n}) \le M \mathcal{C} \chi(\beta) L^\d \right\}, \\
\mathcal B_n^- 
& := \left\{ X_n \in \Omega^n, S(X_n) \le  M \mathcal{C} \chi(\beta) L^\d 
 \right\}.
 \end{align*}
 It is clear that if $X_\bN \in \mathcal B_n$ then $X_\bN|_{\Omega^c} \in \mathcal B_n^+$ and $X_\bN|_{\Omega} \in \mathcal B_n^-$. Also, if $X_\bN \in \mathcal B_n^c$ then, in view of~\eqref{14} and the definition of $S$, we have
 $$\G^{\car_L \backslash \omc} (X_\bN, U) + C_0 \#(\{X_\bN\}\cap \car_L) \ge\frac{M \mathcal{C}  \chi(\beta)L^\d} {C}$$ hence 
\begin{align*}
 & 
\Esp_{\Q (U)} \( \exp \(\frac{\beta}{2} \( \G^{ \car_L} (\cdot , U)+ C_0 \# (\{X_\bN\} \cap \car_L)  \) \indic_{\mathcal B_n^c}   \) \)
\\ & \qquad 
\ge \  \exp\(\frac{\beta}{2} \frac{M \mathcal{C}  \chi(\beta)L^\d} {C}\) \Esp_{\Q (U)} \( \exp \(\frac{\beta}{2}  \G^{ \omc} (\cdot , U)     \) \indic_{\mathcal B_n^c}\).
\end{align*}
It follows  that 
\be \label{controlbad}
\Esp_{\Q (U)} \( \exp \(\frac{\beta}{2}  \G^{ \omc} (\cdot , U)     \) \indic_{\mathcal B_n^c}\)\le  \gamma_n  ,
\ee
with 
$\sum_{n=0}^N
\gamma_n \le   \exp\(- \mathcal C \beta \chi(\beta)R^\d\)$
in view of ~\eqref{locallaw}, provided~$M$ is chosen large enough, depending only on $\d, m, \Lambda$.
We henceforth fix~$M$.

\smallskip

\noindent
{\bf Step 2: the case of good energy bounds.}\\
We now wish to estimate the same quantity in the event $\mathcal B_n$. 
\ed{We wish to choose   $\tilde \l=\tilde \eta R$ with $\tilde \eta <\frac14 $} to be determined later, and \ed{$\ell=\eta\tilde \ell$ with $0\le \eta\le 1$, satisfying}  
\be\label{defl}
 \ed{ \l^{\d+1} \ge  \frac{CM\mathcal C \chi(\beta) L^\d }{ \tilde \l}  }
  \ee with $C$ as in~\eqref{constr2}.
This way, choosing $s= M\mathcal C \chi(\beta) L^\d$, the screenability condition~\eqref{constr2} will be verified.
To apply Proposition \ref{proscreen}, \ed{we also need 
\be  \label{condtl}\max \(\beta^{-\hal} \indic_{\d=2},1\) \le \tilde \eta R
\ee }and 
\be \label{435b} 
CM \mathcal C \chi(\beta)  L^\d<\tilde \ell^2 R^{\d-1}.\ee
Using~\eqref{add3},~\eqref{restri1} and~\eqref{14}, we have 
\begin{align*}
  \frac{\beta}{2}  \G^{ \omc} (X_{\bN} , U)   -\beta \G(X_{\bN}, U) & \le 
 \frac{\beta}{2}\G^{\omc} (X_{\bN}, U)   -\beta \G^{\Omega}(X_{\bN}, U) 
 - \beta\G^{U\backslash \Omega} (X_{\bN}, U)\\
  & \le  \frac{\beta}{2} \G^{\omc} (X_{\bN} , U) -\frac{\beta}{2} \G^{\Omega}(X_{\bN}, U)  -\frac{\beta}{2} \G^{\Omega}(X_{\bN},U) - \beta\G^{U\backslash \Omega} (X_{\bN}, U)
  \\  & \le  -\frac{\beta}{2} \G^{\Omega \backslash  \omc} (X_{\bN}, U)    -\frac{\beta}{2} \G^{\Omega}(X_{\bN},U) - \beta\G^{U\backslash \Omega} (X_{\bN}, U)
  \\
& \le - \frac{\beta}{2} \G^{\Omega}(X_{\bN},U)  - \beta \G^{U\backslash \Omega} (X_{\bN}, U)  + \frac{\beta}{2} C_0 n
\\ & \le  - \frac{\beta}{2} \H_U(X_{\bN}|_{\Omega} , \Omega) - \beta \H_U (X_{\bN}|_{U\backslash \Omega} , U\backslash \Omega) + \frac{\beta}{2}C_0n \end{align*}
Thus,
\begin{align*}
\lefteqn{
\Esp_{ \Q(U)}  \( \exp \(\frac{\beta}{2}  \G^{ \omc} (\cdot , U)     \) \indic_{\mathcal B_n}\)
} \quad & 
\\ & 
= \frac{1}{\bN^\bN \K(U)} \int_{\mathcal B_n} \exp\(  \frac{\beta}{2} \G^{\omc}(X_\bN,U) -\beta \G (X_\bN,U) \) d\mu^{\otimes \bN} 
\\ & 
\le  \frac{1}{\bN^\bN  \K(U)}   \frac{\bN!}{n!(\bN-n)!}  \int_{\Omega^n \cap \mathcal B_n^-} \exp\( - \frac{\beta}{2} \H_U(\cdot , \Omega)    + \frac{\beta}{2} C_0 n\) d\mu^{\otimes n}
\\ & \qquad 
\times \int_{(U\backslash \Omega)^{\bN-n}\cap \mathcal B_n^+ } \exp\( - \beta \H_U(\cdot , U\backslash \Omega) \)d\mu^{\otimes (\bN-n)}.\end{align*}
Inserting~\eqref{resscreen} applied in $\Omega $ (with $\beta/2$  instead of $\beta$) and in $U\backslash \Omega$ and using Remark \ref{remerr}, we deduce that 
\begin{align*}
\lefteqn{
\Esp_{\Q(U)}  \( \exp \(\frac{\beta}{2}  \G^{ \omc} (\cdot , U)     \) \indic_{\mathcal B_n}\)
} \quad & 
\\ & 
\leq
  \frac{1}{\bN^\bN  \K(U)}   \frac{\bN!}{n!(\bN-n)!}    C n^n (\bN-n)^{\bN-n} \K^{\frac{\beta}{2}}(\Omega)  \K^{\beta}(U\backslash \Omega) \exp\(  \beta \ep_e+ \ep_v + \frac{\beta}{2} C_0 n\)
  \end{align*} 
with  \be \ep_{e} := C\(\ell \frac{ M \mathcal C \chi(\beta) L^\d }{\tilde \ell}+ R^{\d-1} \tilde \ell \chi(\beta)+ |n-\mn|\) 
\ee and 
\be \ep_v := C\(\frac{ M \mathcal C \chi(\beta) L^\d  }{\ell \tilde \ell}+ \frac{R^{\d-1}}{\tilde \ell}+ \frac{ ( M \mathcal C \chi(\beta) L^\d )^2}{\tilde \ell^3 R^{\d-1}}\),\ee where we used the choice $s:=M\mathcal C \chi(\beta) L^\d$.
   We may also bound from below  $\K(U)$  using ~\eqref{superad2} applied with $\Omega $ and $U\backslash \Omega$, which yields
   $$ \frac{\bN! \mn^{\mn} (\bN-\mn)^{N-\mn} }{\bN^\bN  \K(U) \mn! (\bN-\mn)!}  \K^{\frac\beta2}(\Omega)\K^{\beta} (U \backslash\Omega) 
   \le  \frac{\K^{\frac \beta2}(\Omega)}{\K^{\beta} (\Omega)}.$$ 
   Inserting into the above, we obtain that 
\begin{align*}
\lefteqn{
\Esp_{\Q(U)}  \( \exp \(\frac{\beta}{2}  \G^{ \omc} (\cdot , U)     \) \indic_{\mathcal B_n}\)
} \quad & 
\\ & 
\leq
  C  \frac{ \mn! (\bN-\mn)!    n^n (\bN-n)^{\bN-n} }{n!(\bN-n)!   \mn^\mn (\bN-\mn)^{\bN-\mn} }      \frac{\K^{\frac \beta2}(\Omega)}{\K^{\beta} (\Omega)} \exp\(  \beta \ep_e+ \ep_v + \frac{\beta}{2} C_0 n\)
\end{align*} 
By Stirling's formula, for every $n \le \bN$, we have 
$$
\frac{ \mn! (\bN-\mn)!    n^n (\bN-n)^{\bN-n} }{n!(\bN-n)!   \mn^\mn (\bN-\mn)^{N-\mn} } \sim \sqrt{ \frac{ \mn (\bN-\mn)}{n(\bN-n)}} \le C.
$$
We may therefore absorb the log of this quantity into $\ep_v$, and conclude that 
    \begin{equation}\label{438}
\Esp_{\Q(U)}  \( \exp \(\frac{\beta}{2}  \G^{ \omc} (\cdot , U)     \) \indic_{\mathcal B_n}\)
\\ 
\leq C   \frac{\K^{\frac \beta2}(\Omega)}{\K^{\beta} (\Omega)} \exp\(  \beta \ep_e+ \ep_v + \frac{\beta}{2} C_0 n\).
\end{equation} 
   
We now search for the smallest~$\tilde \l$ such that the terms of~$\beta \ep_e+\ep_v$ (except those involving~$n$ and $\mn$) are $\le  \beta \ed{\chi(\beta)} \frac{\mathcal C}{4}R^\d$ 
that is 
 \begin{align*} 
\left\{   
\begin{aligned}
& C\frac{ M\mathcal C \chi(\beta) L^\d \ell}{\tilde \ell} \le \frac{\mathcal C }{4}\ed{\chi(\beta)}  R^\d
\\&
 CR^{\d-1}\tilde \ell \chi(\beta) \le \frac{\mathcal C }{4} \ed{\chi(\beta)} R^\d\\
 &  C \frac{M\mathcal C \chi(\beta) L^\d}{\ell \tilde \ell} \le \frac{\mathcal C }{4}\beta  \ed{\chi(\beta)} R^\d
 \\
& C \frac{R^{\d-1}}{\tilde \ell}\le \frac{\mathcal C }{4} \beta \ed{\chi(\beta)} R^\d
 \\ &C \frac{(M \mathcal C \chi(\beta) L^\d )^2}{\tilde \ell^3 R^{\d-1}} \le \frac{\mathcal C }{4}\beta \ed{\chi(\beta)} R^\d
\end{aligned}\right.
\end{align*}
and also~\ed{\eqref{defl},}~\eqref{condtl},~\eqref{435b} are satisfied.
 \ed{With our choice  $R \le L \le 2R$ and $\ell =\eta \tilde \ell$, $\tilde \ell=\tilde \eta R$, after direct computations we find that this reduces to the conditions:
  \begin{align*} 
\left\{   
\begin{aligned}
&  
C \eta M \le \frac14
\\ &
C \tilde \eta \le \frac{\mathcal C}{4} 
\\ &
\frac{ C M }{\eta \tilde \eta^2} \le \frac{\beta}{4} R^2,
\\ &
\frac{C}{\tilde \eta} \le \frac{\mathcal C}{4} \beta \chi(\beta) R^2,
\\ & CM^2\mathcal C\chi(\beta) \le  \frac{\beta}{4} R^2 \tilde \eta^3,
\\ &
\eta^{\d+1}\tilde \eta^{\d+2} R^2 \ge C M \mathcal C \chi(\beta) ,
\\ &
\tilde \eta R \ge \max\(\beta^{-\hal}\indic_{\d=2}, 1\) ,
\\ &
CM \mathcal C \chi(\beta) <\tilde \eta^2  R
\end{aligned} 
\right. 
\end{align*}
    for some constant $C>0$ large enough, and depending only on $\d,m$ and $\Lambda$.
    We can  take $\eta$ small enough that the first condition is realized.
    We can then make the other conditions realized by requiring 
    $$\left\{ \begin{aligned} &R^2\min( \beta \chi(\beta)^{-1},1) \ge C'\\
    & \tilde \eta= C''\max\((R^2 \beta \chi(\beta)^{-1})^{-1/3},  R^{-\frac{2}{\d+2}}\)
    \end{aligned}\right.$$
    where $C', C''$ are large enough constants 
     depending on the other parameters.}
    Combining ~\eqref{438} with~\eqref{controlbad}, we obtain the result. 
\end{proof}

\section{Main bootstrap and first conclusions}\label{sec5}
This section contains the core of the proof, i.e. the bootstrap procedure that allows to show that 
 if local laws hold down to a certain scale, they  hold at slightly smaller scales.
We note that the local laws are valid up to the boundary as long as one remains in the set where  $\mu\ge m>0$.

\begin{prop}\label{proloclaw} Assume $\mu$ and $U$ are as in Proposition \ref{minoK}. 
  Let $\mu$ be a density such that $0<m \le \mu \le  \Lambda $  in  the set $\Sigma$. Assume that   $\mu(U)=\bN$ is  an integer and that 
 \be \label{asspu}\text{if } \ \d \ge 3, \quad
 |\partial U| \min(1, \beta^{\frac{1}{\d-2}} ) \le \beta \bN.\ee 
 There exists $C>2$ depending only on $\d,m$ and $\Lambda$ such that the following holds.
Let
\be \label{minoR}   \rb= C \max\(1, \beta^{-\hal} \chi(\beta)^{\frac12} , \beta^{\frac{1}{\d-2}-1} \indic_{\d\ge 5} \)
.\ee
Let $\car_R(x)$ be a cube of size $R\ge \rb $ centered at $x$ included  in  $\Sigma$ satisfying  \begin{equation}
\label{conddist}
\dist(\car_R(x) , \p \Sigma\cap U) 
\ge  d_0:= 
 \ed{C \max\( \(\frac{N^{\frac1\d} }{ \max(1, \beta^{-\hal} \chi(\beta)^{\frac12})}\)^{-\frac23}   N^{\frac1\d} ,   N^{\frac{1}{ \d+2}}\) }.
\end{equation}
Then  we have, for $C_0 := \frac{2C}{4\cd}$ with~$C$ the constant in~\eqref{14},
\begin{equation}\label{locallawint}
\log \Esp_{\Q(U)} \bigg( \exp\bigg( \hal \beta\bigg( \G^{\car_{R}(x)\cap U}(\cdot, U)+ C_0 \#(\{X_\bN \} \cap \car_R(x)\bigg)  \bigg)\bigg)
\leq 
C\beta\chi(\beta  )R^\d.
\end{equation}
\end{prop}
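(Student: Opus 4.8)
The plan is to prove~\eqref{locallawint} by a \emph{downward induction on scale}: each step decreases the sidelength by a factor at most $2$, and the bound at the new scale is obtained from the bound at the old one via the screening comparison of Proposition~\ref{proboot}. Since~\eqref{locallawint} is one-sided and the matching lower bound $\log(\cdots)\ge0$ is automatic from $\G^{\Omega}(\cdot,U)+C_0\#(\{X_\bN\}\cap\Omega)\ge0$ (see~\eqref{14}), at each stage it suffices to bound the exponential moment from above, and I will carry along the two-sided bound~\eqref{locallaw} with a running constant $\mathcal C$. The decisive feature is the factor $\tfrac14$ in front of $\mathcal C\chi(\beta)R^\d$ in~\eqref{resLK}: it turns the update into $\mathcal C\mapsto\tfrac14\mathcal C+C$, so $\mathcal C$ relaxes after a few steps to $\tfrac43C$ and stays bounded over all $O(\log N)$ scales.

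\emph{Base case.} At the largest scale one takes $\car_L$ to contain $\Sigma\cap U$, so that $\G^{\car_L\cap U}(\cdot,U)=\G(X_\bN,U)$ and $\#(\{X_\bN\}\cap\car_L)=\bN\le CL^\d$. Pulling out the deterministic factor $\exp(\hal\beta C_0\bN)$, Corollary~\ref{pttproba} — using assumption~\eqref{asspu} to absorb $|\partial U|\xib$ into $C\beta\bN$ — bounds the exponential moment above by $e^{C\beta\chi(\beta)L^\d}$, while Jensen's inequality with the lower bound~\eqref{minog} bounds it below by $e^{-C\beta\bN}\ge e^{-C\beta\chi(\beta)L^\d}$; hence~\eqref{locallaw} holds at the top scale once $\mathcal C\ge C$.

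\emph{Inductive step.} Suppose~\eqref{locallaw} holds with constant $\mathcal C$ at scale $L$ and let $\car_R(x)\subset\Sigma$ with $R\ge\rb$ satisfy~\eqref{conddist}; put $R_1:=R+2\tilde\l$, $\tilde\l$ as in~\eqref{deftl}. By Lemma~\ref{rect} pick a hyperrectangle $Q_{R_1}$ with $Q_{R_1}\cap U\subset\Sigma$, sidelengths in $[R_1,2R_1]$, $\mu(Q_{R_1}\cap U)=\mn\in\Z$, and $Q_{R_1-2\tilde\l}\cap U\supset\car_R(x)\cap U$; and a cube $\car_L$ of sidelength $L\in[\tfrac43R_1,2R_1]$ wrapped around $Q_{R_1+\tilde\l}\cap U$ and positioned so that $\dist(\car_L,\partial\Sigma\cap U)\ge d_0$ — possible because enlarging a cube by a bounded factor around a fixed one preserves the distance to $\partial\Sigma$ up to $O(\tilde\l)$, the constant in $d_0$ being chosen to absorb the total $O(\tilde\l)$ loss up to the top scale. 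Then~\eqref{conddist} also gives~\eqref{assumpdist} (as $d_0$ dominates $\tilde\l$), and~\eqref{minoR},~\eqref{condRL} hold since $R\ge\rb$, so Proposition~\ref{proboot} applies with this $Q_{R_1},\car_L$ and yields~\eqref{resLK} for all $n$, with $\gamma_n$ as in~\eqref{boundgamma}. Subadditivity of the localized energies and their pointwise lower bound give $\G^{\car_R(x)\cap U}(\cdot,U)\le\G^{Q_{R_1-2\tilde\l}\cap U}(\cdot,U)+C\#(\{X_\bN\}\cap Q_{R_1})$, and on $\mathcal A_n$ one has $\#(\{X_\bN\}\cap Q_{R_1})=n$ and $\#(\{X_\bN\}\cap\car_R(x))\le n$; summing over the partition $\{\mathcal A_n\}$, using $\Esp(e^{\frac12\beta\G}\indic_{\mathcal A_n})\le\Esp(e^{\frac12\beta\G\indic_{\mathcal A_n}})$ and~\eqref{resLK}, the left side of~\eqref{locallawint} is at most
\begin{multline*}
\sum_{n=0}^{\bN}e^{\frac12\beta(C+C_0)n}\,\Esp_{\Q(U)}\!\Big(\exp\big(\tfrac12\beta\,\G^{Q_{R_1-2\tilde\l}\cap U}(\cdot,U)\big)\indic_{\mathcal A_n}\Big)\\
\le\sum_{n=0}^{\bN}e^{\frac12\beta(C+C_0)n}\Big(\gamma_n+\frac{\K^{\beta/2}(Q_{R_1})}{\K^{\beta}(Q_{R_1})}\,e^{\beta(\frac14\mathcal C\chi(\beta)R_1^\d+|n-\mn|+\frac{C_0}{2}n)}\Big).
\end{multline*}

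\emph{Summation over $n$ — the main obstacle.} The difficulty is that the screening error in~\eqref{resLK} grows like $e^{\beta|n-\mn|}$, so the sum must be split, say at $|n-\mn|=C_1R^\d$. In the bulk range $|n-\mn|\le C_1R^\d$ there are only $O(R^\d)$ terms; there $|n-\mn|+n\le CR^\d$ and Corollary~\ref{pttproba} applied to the single hyperrectangle $Q_{R_1}$ (with $R\ge\rb$ absorbing the term $|\partial Q_{R_1}|\xib$) gives $\K^{\beta/2}(Q_{R_1})/\K^{\beta}(Q_{R_1})\le e^{C\beta\chi(\beta)R^\d}$, so this part is $\le CR^\d e^{(\frac14\mathcal C+C)\beta\chi(\beta)R^\d}$, the polynomial factor being harmless since $\log R\ll\beta\chi(\beta)R^\d$ for $R\ge\rb$. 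In the large-deviation range $|n-\mn|>C_1R^\d$, a configuration in $\mathcal A_n$ has net charge $|n-\mn|$ through $Q_{R_1}$, hence an energy at scale $L$ of at least $c|n-\mn|^2R^{2-\d}-CR^\d$ (a capacitor-type lower bound; see Appendix~\ref{appendixa}); inserting the inductive bound~\eqref{locallaw} at scale $L$ and using Markov's and the Cauchy–Schwarz inequalities, the $n$-th term is $\le e^{-c\beta|n-\mn|^2R^{2-\d}+C\beta\chi(\beta)R^\d}$, which for $C_1$ large (depending only on $\d,m,\Lambda$) beats $e^{\frac12\beta(C+C_0)n}$ and makes these terms — and likewise $\sum_n e^{\frac12\beta(C+C_0)n}\gamma_n$ via~\eqref{boundgamma} — negligible. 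Altogether the left side of~\eqref{locallawint} is $\le e^{(\frac14\mathcal C+C)\beta\chi(\beta)R^\d}$, and the induction closes by fixing $\mathcal C\ge\tfrac43C$. The heart of the matter is thus this dispatching of the number of points: the typical range is absorbed by the a priori ratio estimate of Corollary~\ref{pttproba} together with the $\tfrac14$ gain of~\eqref{resLK}, while the atypical range is killed by feeding the energy cost of a large point discrepancy back into the \emph{previous} scale's local law.
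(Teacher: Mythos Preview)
Your overall bootstrap strategy is exactly the paper's: initialize at the macroscopic scale via Corollary~\ref{pttproba} (using~\eqref{asspu} to absorb the boundary term), then at each step invoke Proposition~\ref{proboot} to pass from scale $L$ to scale $R\in[\tfrac12L,\tfrac34L]$, and close the induction by exploiting the factor $\tfrac14$ in~\eqref{resLK} together with the ratio bound $\K^{\beta/2}(Q_R)/\K^\beta(Q_R)\le e^{C\beta\chi(\beta)R^\d}$. The paper splits at the sharper threshold $|n-\mn|=K R^{\d-\frac12}$ with $K=C\sqrt{\mathcal C\chi(\beta)}$, but your cruder cutoff $C_1R^\d$ would work as well.

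There is, however, a genuine execution gap in your treatment of the large-deviation range. The tool you name --- Markov on $\Q(\mathcal A_n)$ combined with Cauchy--Schwarz on $\Esp[e^{\frac\beta2\G^{Q}}\indic_{\mathcal A_n}]$ --- does not close: the Cauchy--Schwarz step would require control of $\Esp[e^{\beta\G^{Q}}]$ (coefficient $\beta$, not $\beta/2$), which you do not have at the new scale. Relatedly, because you apply~\eqref{resLK} to \emph{all} $n$ first and split afterwards, you are left with $\sum_n e^{\frac12\beta(C+C_0)n}\gamma_n$ over unbounded $n$; \eqref{boundgamma} only controls $\sum_n\gamma_n$, and no individual decay of $\gamma_n$ in $n$ is available. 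The paper avoids both issues by the opposite organization: it splits \emph{first}, applies~\eqref{resLK} only on the good range (where $n\le CR^\d$, so the $\gamma_n$ contribution is trivially $\le e^{C\beta R^\d}\sum_n\gamma_n$), and on the bad range uses a \emph{direct exponent reduction} instead of Markov/Cauchy--Schwarz: the annulus estimate~\eqref{claim80},
\[
\G^{Q_{R+3}}(\cdot,U)-\G^{Q_R}(\cdot,U)\ \ge\ cR^{1-\d}|n-\mn|^2-C\,\mathcal N_{Q_{R+3}},
\]
yields pointwise $\G^{Q_R}+C_0n\le(\G^{Q_{R+3}}+C_0\mathcal N_{Q_{R+3}})-cR^{1-\d}|n-\mn|^2+C_0n$, and then $\G^{Q_{R+3}}+C_0\mathcal N_{Q_{R+3}}\le\G^{\car_L}+C_0\mathcal N_{\car_L}$ feeds straight into the scale-$L$ local law with the correct coefficient $\beta/2$. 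This is precisely the ``feeding back'' you describe in words, but the mechanism is pointwise comparison of exponents, not a probability-times-moment splitting. With that correction your argument is the paper's.
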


\begin{proof}[Proof of Proposition \ref{proloclaw}] We first note that it is enough to prove the result in  hyperrectangles $Q_R\in \mathcal{Q}_R$, with sides parallel to those of $U$ and even more generally in $Q_{R-2\tilde \ell}$ if $\tilde \ell<\frac14R$, with $R\ge \rb$ as in~\eqref{minoR}. Indeed, thanks to the lower bound on $\mu$,  general cubes  of size satisfying~\eqref{minoR} can be covered by a finite number of such hyperrectangles. 
 The proof then proceeds by a bootstrap on the scales:
we wish to show that if 
 \be\label{hyprec}
 \log \Esp_{\Q(U)} \( \exp \(\frac{\beta}{2} \( \G^{\car_L(x) }(\cdot, U)+ C_0 \#(\{X_\bN \} \cap \car_L(x))\) \)\) \le \mathcal C \beta\chi(\beta) L^\d ,\ee
 for any $\car_L(x)$  sufficiently far from $\partial \Sigma$, then  if $ \frac34 L \ge R \ge \hal L$, and as long as $R$ is large enough, we have
 \be\label{hyprecfin}
 \log \Esp_{\Q(U)} \( \exp \(\frac{ \beta}{2} \( \G^{Q_{R-2\tilde \ell}}(\cdot, U)+C_0 \#(\{X_\bN \} \cap Q_R) \) \)\) \le \mathcal C \beta\chi(\beta) R^\d.  \ee 
   By iteration, this will clearly imply the result: indeed 
  in view of Corollary \ref{pttproba} and \eqref{asspu}  and up to changing $\mathcal C$ if necessary,  we have that~\eqref{hyprec} holds for  $L\ge \hal N^{\frac1\d}$.  
Without loss of generality, we may now  assume for the rest of the proof that  $L \le \hal N^{\frac1\d}$.

 To make sure that the constants are independent of~$\beta $ and~$R$,  we have used the notation~$\mathcal C$, and we wish to prove~\eqref{hyprecfin} with the same constant~$\mathcal C$ as in~\eqref{hyprec}. 
 In the sequel, unless specified, {\it all constants~$C>0$ will be independent of~$\mathcal C $}, i.e they may depend only on~$\d,m$ and~$\Lambda$.
 
   \smallskip

 Let us now consider~$Q_R\in \mathcal Q_R$, denote~$\mn=\mu(Q_R\cap U)$  and as previously, denote by~$\mathcal A_n$ the event that~$X_\bN$, a configuration of~$\bN$ points in~$U$,  has~$n$ points in~$Q_R\cap U$.
We wish to control 
\begin{align*}
&
\Esp_{\Q(U)} \( \exp \(\frac{\beta}{2} ( \G^{Q_{R-2\tilde \ell} }(\cdot, U)+C_0 n) \)\)
\\ & \qquad
=\sum_{n=0}^\bN  \exp\left( \frac{\beta}{2} C_0 n\right)
 \Esp_{\Q(U)} \( \exp \(\frac{\beta}{2} ( \G^{Q_{R-2\tilde \ell} }(\cdot, U) )\indic_{\mathcal A_n} \)\).
\end{align*}
The terms in the sum for which $n$ is close to $\mn$, more precisely $|n-\mn| \le K R^{\d-\hal}$ are easily treated using~\eqref{resLK}. The terms for which $|n-\mn|> KR^{\d-\hal}$  will be handled separately and controlled by energy-excess considerations. 

To apply Proposition \ref{proboot} we need $Q_{R+\tilde \ell}$ to be included in a cube $\car_L$ in which the local laws hold and at distance $\ge \tilde \ell $ as in~\eqref{deftl} from $\partial \Sigma$.
At the first iteration 
$L$ is of order $N^{\frac1\d}$ and $ R \ge \hal L$ so we need 
$$
\dist(Q_R,\partial \Sigma) \geq 
\ed{ C''  \max  \( \(\frac{N^{\frac1\d} }{ \max(1, \beta^{-\hal} \chi(\beta)^{\frac12})}\)^{-\frac23}   N^{\frac1\d} ,   N^{\frac{1}{ \d+2}}\) }.
 $$
At further iterations, to have $Q_{R+\tilde \ell} $ be included in $\car_L$,  we need a further distance of  
\ed{$ C''  \max  \( \(\frac{R }{ \max(1, \beta^{-\hal} \chi(\beta)^{\frac12})}\)^{-\frac23}   R,   R^{1-\frac{2}{\d+2}}\) .$  Since $R $ is multiplied by a factor in $[\hal , \frac34]$ at each step,  summing  the series over the iterations gives a total distance 
$$C''' \max  \( \(\frac{N^{\frac1\d} }{ \max(1, \beta^{-\hal} \chi(\beta)^{\frac12})}\)^{-\frac23}   N^{\frac1\d} ,   N^{\frac{1}{ \d+2}}\) $$ hence  a condition of the form~\eqref{conddist} suffices. }

 \smallskip

\noindent
{\bf Step 1: the bad event}. We claim that  in the bad event  $|\mn -n| \ge KR^{\d-\hal} $, we have 
\be\label{claim80}
 \G^{Q_{R+3}}(\XN, U) - \G^{Q_R}(\XN, U) \ge C R^{1-\d} |\mn - n|^2 - C \mathcal{N}_{Q_{R+3}}\ee
where  $ \mathcal{N}_{Q_{R+3}}$  denotes the number of points in $Q_{R+3}$ and $C>0$ depends only on $\Lambda$ and $\d$.
Assuming this, and changing $C_0$ to the larger constant in~\eqref{claim80} if necessary,  we then write 
\begin{multline}\label{510}\Esp_{\Q(U) }\( \exp \(\frac\beta2 (\G^{Q_R }(\cdot, U)+  C_0 n) \)   \indic_{\mathcal A_n} \)
\\
\le \Esp_{\Q(U) }\( \exp \( \frac \beta2 (\G^{Q_{R+3} }(\cdot, U)  +   C_0 \mathcal{N}_{Q_{R+3} }) \) \indic_{\mathcal A_n}   \) \exp\(- \beta C R^{1-\d}|\mn -n|^2 +\beta C_0 n\). 
\end{multline} Since $ L \le 2R$ and $|\mn-n| \ge K R^{\d-\hal}$, we now see that if we choose $K = C \sqrt{\mathcal{C} \chi(\beta)}$ where $C>0$ is large enough and depends only on $C, C_0$ and $\d$,
the exponent in the second term in the right-hand side is at most $-  \mathcal C \beta \chi(\beta) L^\d$.

Using ~\eqref{restri1},~\eqref{add3} and~\eqref{14} we may check that 
$$\G^{Q_{R+3}}(\cdot, U)+C_0 \mathcal N_{Q_{R+3}}\le \G^{\car_L} (\cdot,U)+C_0 \mathcal N_{\car_L}$$ hence in view of 
~\eqref{510} and the assumption that~\eqref{hyprec} satisfied in a cube $\car_L$ containing $Q_{R+3}$, we may  bound 
\begin{multline}\label{stp1}\sum_{n=KR^\d}^\bN \log \Esp_{\Q(U) }\( \exp \(\frac \beta2 (\G^{Q_R }(\cdot, U)+ C_0 n) \) \indic_{\mathcal A_n}    \)
\\ \le \exp\(-\mathcal C \beta \chi(\beta) L^\d\)  \sum_{n=0}^\bN \Esp_{\Q(U) }\( \exp \( \frac\beta2 (\G^{\car_L}(\cdot, U)  + \beta  C_0 \mathcal N_{\car_L } )\) \indic_{\mathcal A_n}   \)\le 1.
\end{multline}

To prove the claim, in view of~\eqref{disc1} we may  write 
\be\label{bf45}C \int_{Q_{R+2}\backslash Q_{R+1}} |\nab u_{\rrtt}|^2 \ge CR^{1-\d} \( |n-\mn| - C(1+ \|\mu\|_{L^\infty})R^{\d-1}\)^2\ge c R^{1-\d} |\mn -n|^2\ee if $K$ is chosen large enough (depending on $\d$ and $\Lambda$),
where $c>0$ is a constant depending only on $\d,m$ and $\Lambda$.
In view of~\eqref{add3} we have 
\be\label{fqr3}
\G^{Q_{R+3}}(X_\bN, U) - \G^{Q_R}(X_\bN, U) \ge \G^{Q_{R+3}\backslash Q_R}(X_\bN, U).\ee
By~\eqref{14}, we may write that 
$$ C\G^{Q_{R+3}\backslash Q_R}(X_\bN, U) \ge \int_{Q_{R+3} \backslash Q_R} |\nab u_{\rrtt}|^2 - C \mathcal N_{Q_{R+3}}$$ where $u_{\rrtt}$ is computed with respect to $Q_{R+3}\backslash Q_R$. But by definition $ \int_{Q_{R+3} \backslash Q_R} |\nab u_{\rrtt}|^2$ is larger than 
$ \int_{Q_{R+2} \backslash Q_{R-1}} |\nab u_{\rrtt}|^2$ with this time $\rrtt$ computed with respect to $U$, which is bounded below by~\eqref{bf45}. 
Inserting into~\eqref{fqr3} we thus conclude~\eqref{claim80}.

 \smallskip

\noindent 
{\bf Step 2: the good event.}
We next  consider the terms for which $|\mn -n |\le KR^{\d-\hal}$.
For those, we may apply 
Proposition \ref{proboot} (at least if $R >C$ with $C$ made large enough). We need to assume~\eqref{condRL}. In view of~\eqref{resLK} we  may thus write  
\begin{align*}
\lefteqn{
\sum_{|n-\mn|\le KR^{\d-\hal}}  
 \Esp_{\Q(U)} \( \exp \(\frac{\beta}{2} \( \G^{Q_{R-2\tilde \l} }(\cdot, U) +C_0n \) \)\indic_{\mathcal A_n} \)
 } \qquad \  & 
 \\ & 
 \le  \sum_{n=\mn-KR^{\d-\hal}}^{\mn+ KR^{\d-\hal}} \exp\left( \beta
 \( \frac{ \mathcal C }{ 4} \chi(\beta) R^\d   +|n-\mn|
 + C_0 n\)    \)   \frac{\K^{\frac\beta2}(Q_R)}{\K^\beta(Q_R)}  
+ \gamma_n\exp\left( \frac{\beta}{2} C_0 n \) .\end{align*} 
Recalling the choice of $K$ as $C\sqrt{ \mathcal C \chi(\beta)}$ and using that $\mn= \mu(Q_R) \le \Lambda R^\d$,  we have that if 
$|n-\mn|\le KR^{\d-\hal}$, then if $R \ge \mathcal C \chi(\beta)$, we have $KR^{\d-\hal} \le C R^\d$ and  $n\le CR^\d$, with $C$ depending only on $\d, m, \Lambda$. 

\smallskip

Using~\eqref{boundgamma}, and the fact that, by~\eqref{majok} and~\eqref{minok},
\begin{equation}
\log\left( \frac{ \K^{\beta/2} (Q_R)}{ \K^\beta(Q_R)} \right) 
\leq 
C\beta \chi(\beta) R^\d  + C R^{\d-1} \min(1, \beta^{\frac{1}{\d-2}}),
\end{equation}
we deduce that, for every $R \ge \mathcal C \chi(\beta)$,
\begin{align*}
& \sum_{n=\mn-K R^{\d-1/2}}^{\mn+ KR^{\d-1/2}}   \Esp_{\Q(U)} \( \exp \(\frac{\beta}{2} \( \G^{Q_{R-2\tilde \l} }(\cdot, U) +C_0n \)\)\indic_{\mathcal A_n} \)
 \quad
\\ & \quad 
\leq 2 R^\d \exp\(\beta \( \frac{ 3 \mathcal C }{ 8} \chi(\beta) R^\d 
  + C_0 CR^\d \) \)   \exp\( C\beta \chi(\beta)R^\d + C \min (\beta^{\frac{1}{\d-2}},1) 
R^{\d-1}\)  
\\ & \quad \qquad 
+\exp\(\beta C_0 C R^\d -  \mathcal C \beta \chi(\beta) R^\d\).
\end{align*} 
Making $\mathcal C$ larger if necessary (compared to the constants $C_0$, $C$ appearing here) we deduce 
\begin{align}
\label{bonterm}
& \sum_{n=\mn-K R^{\d-1}}^{\mn+KR^{\d-1}} \Esp_{\Q(U)} \( \exp \(\frac{\beta}{2} \( \G^{Q_{R-2\tilde \l} }(\cdot, U) +C_0n\) \)\indic_{\mathcal A_n} \) 
\\  & \qquad \notag
\le  \exp\( \beta \frac{\mathcal C}{2} \chi(\beta) R^\d   +   C \min (\beta^{\frac{1}{\d-2}}, 1) R^{\d-1}   + C \log R    \).
\end{align} The term in $  \min (\beta^{\frac{1}{\d-2}},1)R^{\d-1}$ can be absorbed into $\beta \chi(\beta)R^\d$ if 
we assume in addition that $R\ge C \beta^{\frac{1}{\d-2}-1}$  (for dimension $\d\ge 3$), this condition itself is implied by $ R>C \beta^{-\hal}$ if $\d=3,4$.
The logarithmic term can then also be absorbed  using ~\eqref{minoR}.
\smallskip

\noindent
 {\bf Step 3: Conclusion}.
Combining~\eqref{stp1} and~\eqref{bonterm}, we conclude that 
~\eqref{hyprecfin} holds and this finishes the proof.    
\end{proof}

\begin{coro}\label{coro61}
Assume the hypotheses of Proposition \ref{proloclaw} for $\car_R(x)$ with $R\ge \rb$ as in~\eqref{minoR} and let $B$ be a ball such that $2B \subseteq \car_R(x)$. 
There exists $C>0$ depending only on $\d,m$ and $\Lambda$ such that
\be \label{result2}
 \log \Esp_{\Q(U)} \Bigg(       \exp \Bigg( \frac{\beta}{C } R^{2(1-\d)} \rb^{\d-1}  \Bigg(\int_{\car_R} \sum_{i=1}^\bN \delta_{x_i}- d\mu\Bigg)^2  \  \Bigg) \Bigg)  \le    C  \beta \chi(\beta )  \rb^{\d}  ,
 \ee
and letting $$D:= \int_B \( \sum_{i=1}^N \delta_{x_i}- d\mu\right) $$
we have
\be\label{result3}\log \Esp_{\Q(U)} \( \exp\(  \frac{\beta}{C} \frac{D^2 }{R^{\d-2}} \min \(1, \frac{|D|}{R^\d}\) \) \) \le C \beta \chi(\beta ) R^\d.  
\ee
\end{coro}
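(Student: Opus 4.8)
Both estimates are deduced from the energy local law \eqref{locallawint} of Proposition~\ref{proloclaw}, combined with the discrepancy estimates of Appendix~\ref{appendixa} (of which \eqref{disc1} is a prototype): these bound a number discrepancy in a region by the renormalized Dirichlet energy of the associated truncated potential over that region, up to a surface term, hence — via \eqref{14}, \eqref{restri1} and \eqref{add3} — by the corresponding localized energies $\G^{\cdot}(\cdot,U)$ plus $C_0$ times the number of points. I would treat \eqref{result3} first, as it is the more direct one. Applying the discrepancy estimate to the ball $B$, and using $2B\subseteq\car_R(x)$ so that the relevant energy over $2B$ is controlled by $\G^{\car_R(x)\cap U}(\cdot,U)+C_0\#(\{X_\bN\}\cap\car_R(x))$, together with the observation that the quantity $\frac{D^2}{R^{\d-2}}\min(1,|D|/R^\d)$ in \eqref{result3} is weaker than the natural $\frac{D^2}{r^{\d-2}}\min(1,|D|/r^\d)$ ($r$ the radius of $B$, $r\le R$), one obtains — after enlarging $C$ to absorb the discrepancy constant — that the exponent in \eqref{result3} is at most $\frac\beta2\big(\G^{\car_R(x)\cap U}(\cdot,U)+C_0\#(\{X_\bN\}\cap\car_R(x))\big)+\beta R^\d$. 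Taking $\Esp_{\Q(U)}$ and invoking \eqref{locallawint} then gives \eqref{result3}.

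For \eqref{result2}, write $D_R:=\#(\{X_\bN\}\cap\car_R)-\mu(\car_R)$. The point is that $D_R$ should be controlled not by the energy of all of $\car_R$ but by the energy of a thin annular region around $\partial\car_R$ of width of order $\rb$: a wide-annulus version of \eqref{disc1} gives
\[
D_R^2\ \le\ C R^{2\d-2}+\frac{C R^{\d-1}}{\rb}\int_{\car_R\setminus\car_{R-2\rb}}|\nabla u_{\rrtt}|^2 ,
\]
where the improvement by the factor $\rb$ in the denominator is exactly what slicing the flux argument over a shell of width $\rb$ (rather than width one, as in \eqref{disc1}) buys. I would then cover $\car_R\setminus\car_{R-2\rb}$ by cubes $Q_1,\dots,Q_J$ of size $\rb$ with bounded overlap, so that $J\le c_\d R^{\d-1}\rb^{1-\d}$, and bound the Dirichlet integral above, using Lemma~\ref{projlem} and \eqref{14}, \eqref{restri1}, \eqref{add3}, by $C\sum_{j=1}^J Z_j$ with $Z_j:=\G^{Q_j\cap U}(\cdot,U)+C_0\#(\{X_\bN\}\cap Q_j)\ge 0$. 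Consequently the exponent in \eqref{result2} is at most $\beta\rb^{\d-1}+C\beta R^{1-\d}\rb^{\d-2}\sum_j Z_j$, and the first term is harmless since $\beta\rb^{\d-1}\le\beta\chi(\beta)\rb^\d$.

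It remains to estimate $\Esp_{\Q(U)}\big(\exp(C\beta R^{1-\d}\rb^{\d-2}\sum_j Z_j)\big)$, which I would do by the generalized Hölder inequality with all $J$ exponents equal to $J$, bounding it by $\prod_{j}\big(\Esp_{\Q(U)}\exp(C\beta R^{1-\d}\rb^{\d-2}J\,Z_j)\big)^{1/J}$. The arithmetic that makes the bootstrap input applicable is $C\beta R^{1-\d}\rb^{\d-2}J\le C c_\d\,\beta\rb^{-1}\le\frac\beta2$, which holds because $\rb$ exceeds an absolute constant by \eqref{minoR}; each factor is then at most $\Esp_{\Q(U)}\exp(\frac\beta2 Z_j)\le e^{C\beta\chi(\beta)\rb^\d}$ by \eqref{locallawint} applied on the cube $Q_j$ (legitimate since each $Q_j$ lies within $O(\rb)$ of $\car_R$, hence at distance $\ge d_0$ from $\partial\Sigma\cap U$ by \eqref{conddist}). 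Multiplying the $J$ factors yields $e^{C\beta\chi(\beta)\rb^\d}$, and combining with the deterministic factor $e^{\beta\rb^{\d-1}}$ proves \eqref{result2} after renaming constants. I expect the main obstacle to be the first display above: establishing that the wide-annulus discrepancy estimate genuinely gains the factor $\rb$ (which is what keeps $J\asymp R^{\d-1}\rb^{1-\d}$ small enough for the Hölder loss $C\beta R^{1-\d}\rb^{\d-2}J$ to be absorbed into $\beta/2$); once that estimate is available in this form, the rest is routine bookkeeping.
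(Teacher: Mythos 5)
Your treatment of \eqref{result3} matches the paper's. For \eqref{result2}, your skeleton (cover a shell near $\partial\car_R$ by cubes of size $\rb$, apply the generalized H\"older inequality, then invoke the energy local law at scale $\rb$) is also the paper's, but the wide-annulus discrepancy bound you flag as the likely obstacle is indeed where the argument breaks — and, reassuringly, it is not needed at all. First, the estimate cannot hold in the form you wrote: widening the transition layer of the cutoff in Lemma~\ref{coronp} to width $\rb$ (or mean-value slicing the flux argument over a width-$\rb$ shell) does gain a factor $1/\rb$ on the Dirichlet term, but it costs a factor $\rb$ on the measure term $\int(\indic_{\car_R}-\chi)\,d\mu$, so the correct inequality is $D_R^2\lesssim \rb^2 R^{2(\d-1)}+\rb^{-1}R^{\d-1}\int_{\text{shell}}|\nabla u_{\rrtt}|^2$. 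Feeding this into \eqref{result2} makes the deterministic piece $\asymp\beta\rb^{\d+1}$, which must be absorbed into $\beta\chi(\beta)\rb^\d$; this requires $\rb\lesssim\chi(\beta)$, which fails exactly in the regimes of interest ($\d=2$ with $\beta$ small, and $\d\ge5$ with $\beta$ small, where $\rb\gg\chi(\beta)$).

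Second, and this is what unblocks you: the paper uses the thin-shell discrepancy \eqref{disc10}--\eqref{disc1} as is, giving $D_R^2\le CR^{2(\d-1)}+CR^{\d-1}\int_{\car_{R+1}\setminus\car_{R-1}}|\nabla u_{\rrtt}|^2$. Multiplying by $\tfrac{\beta}{C}R^{2(1-\d)}\rb^{\d-1}$, the deterministic term is $\lesssim\beta\rb^{\d-1}\le\beta\chi(\beta)\rb^\d$, and the Dirichlet coefficient becomes $\tfrac{\beta}{C}(R/\rb)^{1-\d}$; after H\"older over the $J\asymp(R/\rb)^{\d-1}$ covering cubes the effective coefficient is $\asymp\beta/C$, with no $\rb^{-1}$ gain at all. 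Your worry that this H\"older-adjusted coefficient must fall below $\beta/2$ conflates two constants: the $1/2$ in \eqref{locallawint} (hence in \eqref{fini}) is fixed, but the constant $C$ in the statement of \eqref{result2} is \emph{at your disposal}. Taking $C$ large (relative to the discrepancy constant, the covering constant, and the constant in \eqref{fini}) drives the coefficient below the threshold, after which each H\"older factor is bounded via \eqref{fini} applied to a cube of size $\rb$, giving $e^{C\beta\chi(\beta)\rb^\d}$. Drop the wide-annulus step, let the free constant do its job, and the rest of your argument is the paper's.
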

\begin{proof}
We may suppose $x=0$. 
First, we observe that by choice of $C_0$ and~\eqref{14} we  have for any  $R\ge \rb$
\be \label{fini} \log \Esp_{\Q(U)} \( \exp \( \frac{1}{2C} \beta \int_{\car_R}| \nab u_{\rrtt}|^2 \) \) 
\le C  \beta \chi(\beta)R^\d  \ee
where $\rrtt$ is computed with respect to $\partial \car_R$.
We next may use either first~\eqref{disc10}--\eqref{disc1} or second~\eqref{disc30}--\eqref{disc3} to deduce from this a control of the discrepancy.

In the first way we cover  $\car_{R+2}\backslash \car_{R-2}$ by at most $O( (R/\rb)^{\d-1} )$ cubes  $Q_k$ of size $\rb$.  Applying~\eqref{fini} for the cubes $Q_k$ and using 
the generalized H\"older inequality 
\be\label{gholder}
\Esp(f_1 \dots f_k) \le \prod_{i=1}^k \Esp(f_i^k)^{\frac{1}{k}},\ee which can be proved by induction, 
 we find 
\be \label{tryu}
\log \Esp_{\Q(U)} \( \exp\( C^{-1} \beta (\frac{R}{\rb})^{1-\d}\int_{\car_{R+\rb}\backslash \car_{R-\rb}} |\nab u_{\rrtt}|^2 \) \) \le C\beta \chi(\beta)\rb^\d,
\ee for some constants $C$ depending only on $\d ,m$ and $\Lambda$.
In view of~\eqref{disc10}--\eqref{disc1}, we  then bound 
$$\left|\int_{\car_R} \sum_{i=1}^N \delta_{x_i}- d\mu\right|^2  \le C \|\mu\|^2_{L^\infty} R^{2(\d-1)} + C R^{\d-1}  \int_{\car_{R+1}\backslash \car_{R-1}} |\nab u_{\rrtt}|^2  .$$
Inserting into~\eqref{tryu}, we find ~\eqref{result2}.

In the second way, we simply bound $\int_{B_{2R}}|\nab u_{\rrtt}|^2$ using~\eqref{fini}.
Inserting into~\eqref{disc30}--\eqref{disc3} directly yields~\eqref{result3}.
\end{proof}

\subsection{Conclusion: proof of Theorem \ref{th3}}

We apply Proposition \ref{proloclaw}  in $U=\R^\d$, since~\eqref{asspu} is then automatically satisfied, it yields that for any $ \car_R(x)$ satisfying  ~\eqref{conddist0},  the estimate~\eqref{locallawint} holds.
Then
~\eqref{loclawpoints0} and~\eqref{loclawpoints00} follow
 from  Corollary \ref{coro61}. The bound~\eqref{loclawphi} follows from the combination of~\eqref{locallawint} and
~\eqref{fluctuationsbu} applied in $\R^\d$.
Finally,~\eqref{loclawdistmin0} is a consequence of~\eqref{15} and ~\eqref{locallawint} applied with $R=\rb$.

\begin{remark}We note that similarly, all the results of Theorem \ref{th3} hold for the Neumann Gibbs measure $\Q(U)$ of~\eqref{defQ} for any $U$ and they can also be proven to hold for the  Dirichlet Gibbs measure $\P_N(U)$ of~\eqref{defP} away from the boundary.
 \end{remark}

\subsection{Proof of Corollary \ref{coro1}} 
Let us recall the setup for point processes, following~\cite{lebles}. We denote by $\mathcal X(A)$  the set of local finite point configurations on $A \subset \R^\d$ or equivalently the set of non-negative, purely atomic Radon measures on A giving an
integer mass to singletons.
 We use $\mathcal C$ for denoting a point
configuration and we will write $\mathcal C$ for $\sum_{p\in \mathcal C} \delta_p$ and $|\mathcal C|(A)$ for the number of points of the configuration in $A$.
We endow  $\mathcal X(\R^\d)$ 
with the topology induced by the topology of weak convergence of Radon measure (also known as vague convergence or convergence against compactly
supported continuous functions), and we define the following distance on $\mathcal X$:
\begin{equation}
\label{517} 
d_{\mathcal X}(\mathcal C, \mathcal C')=\sum_{k=1}^\infty \frac1{2^k}\frac{\sup \left\{\int_{\car_k} f d(\mathcal C-\mathcal C') , \|\nab f\|_{L^\infty(\R^\d)} \le 1\right\}}{ |\mathcal C|(\car_k) + |\mathcal C'|(\car_k)}.
\end{equation}
The subsets $\mathcal X(A)$
inherit the induced topology and distance. As seen in~\cite[Lemma 2.1]{lebles}, the space $\mathcal X(A)$ is then a Polish space.

\smallskip

Now let $\beta$ be fixed and let $x$ be a point as in the statement of the corollary. Let $P_N$ denote the the push-forward of $\PNbeta$ under the map from $(\R^\d)^N$ to $\mathcal X(\R^\d)$ given be 
 $$(x_1, \dots, x_N) \mapsto  \sum_{i=1}^N \delta_{x_i-x}.$$
We wish to show that $P_N$ is tight. Indeed, since $\mathcal X(A)$ is Polish, Prokhorov's theorem will then imply the existence of a convergent subsequence for the topology on $\mathcal X(A)$. 
Let now $\mathcal N_k$ denote the map $\mathcal C \mapsto |\mathcal C|(\car_k)$, i.e. $\mathcal N_k(\mathcal C)$ gives the number of points of $\C$ in $\car_k$.
By~\eqref{loclawpoints00}, we have that  for any $k$,  if $M$ is large enough,
$$\PNbeta\(  \mathcal N_k( \{x_1-x, \dots, x_N-x\}) \ge Mk^\d\) \le \exp\(- C_\beta  M^2 k^{\d+2}\)$$
or in other words, by definition of $P_N$, 
$$P_N\(\mathcal N_k(\mathcal C)  \ge MR^\d\) \le \exp\(- C_\beta  M^2 k^{\d+2}\).$$
It follows that letting $K_M= \cap_{k=1}^\infty \{ \mathcal C, \mathcal N_k(\mathcal C)   \le M R^\d \} $,
$$P_N( K_M ) \ge 1-\frac1M,$$
hence to conclude that $P_N$ is tight, it suffices to justify  that $K_M$ is compact in $\mathcal X(\R^\d)$. Let $(\mathcal C_n)_n$ be a sequence  of point configurations in $K_M$. By definition $|\mathcal C_n|(\car_k)$ is bounded uniformly by some $p_k$ independent of $n$  for each $k$, hence by compactness of $\car_k^{p_k}$, we may find a subsequence such that $\mathcal C_n $ converges in $\mathcal X(\car_1)$, and by diagonal extraction we may find a subsequence of $n$ such that $\mathcal C_n$ converges in $\mathcal X (\car_k)$ for each $k$. By definition of the distance~\eqref{517}, this implies that (after extraction) $\mathcal C_n$ converges in $\mathcal  X(\R^\d)$. This proves that $K_M$ is compact and finishes the proof of convergence of $P_N$ up to extraction.

The fact that the points are simple under the limiting process is a consequence of~\eqref{loclawdistmin0}. The finiteness of the moments of all order then follows in view of the bound of all moments of the number of points, provided by~\eqref{loclawpoints00}.

\section{Leveraging on the local laws: free energy estimates}\label{sec6}

\subsection{An almost additivity result}
     We next prove a general subadditivity result that makes use of the local laws.
    Comparing it with the a priori superadditivity result of~\eqref{superad2} gives additivity up to an error.
     
     \begin{prop}   Assume that $0<m\le \mu \le \Lambda$ in $\Sigma$.
      Assume $\hat U$ is a subset of $ \Sigma$ at distance $d\ge d_0$ from $\pa \Sigma$ with $d_0$ as in~\eqref{conddist}, and is a disjoint union of $p$ hyperrectangles $Q_i$ belonging to $\mathcal{Q}_{R}$, with $R \ge \rb$  satisfying 
\be 
\label{rrb} 
R \ge \rb+ \( \frac{1}{\beta\chi(\beta)} \log \frac{R^{\d-1}}{\rb^{\d-1}}\)^{\frac1\d}
\ee
and in addition, if $\d \ge 4$,  \be\label{rrb2}
R\ge \max(\beta^{\frac{1}{\d-2}-1},1) N^{\frac1\d} d^{-1}.\ee
Then there exists $C$, depending only on $\d, m$ and $\Lambda$, such that 
\begin{align}
\label{subad3}
& \left| \log \K (\R^\d) 
- \left( \log \K (\R^\d\backslash \hat U)+ \sum_{i=1}^p \log \K (Q_i) \right) \right|
\\ & \qquad \notag
\leq C p  \(       \beta R^{\d-1}     \rb   \chi(\beta) +\beta^{1-\frac1\d} \chi(\beta)^{1-\frac1\d} \(\log  \frac{R}{\rb}\)^{\frac1\d}   R^{\d-1}  \).
\end{align}
     If $U$ is a subset of $\Sigma$ equal to a disjoint union of $p$  hyperrectangles $Q_i$ belonging to $\mathcal{Q}_{R}$, with $R \ge \rb$ satisfying \eqref{rrb}, $N_i= \mu(Q_i)$, then we have, with $C$ as above, 
     \be
     \label{subad4} 
     \left| \log  \K(U) -
     \sum_{i=1}^p \log \K (Q_i) \right| \leq C p \Bigg(    \beta R^{\d-1}      \chi(\beta) \rb  +
            \beta^{1-\frac1\d} \chi(\beta)^{1-\frac1\d} \(\log  \frac{R}{\rb}\)^{\frac1\d} R^{\d-1}  \Bigg).
     \ee
      \end{prop}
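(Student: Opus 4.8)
The plan is to prove \eqref{subad3} and \eqref{subad4} as two sides of a sandwich, exactly mirroring the discussion in the introduction around \eqref{start}--\eqref{sum}. The superadditivity of the partition function \eqref{superad2} gives one direction for free: writing $N = \sum_i N_i$ (plus the remaining mass $N_{\mathrm{rem}}=\mu(\R^\d\setminus\hat U)$ in the case of \eqref{subad3}), the multinomial factor
\[
\frac{N!\, N^{-N}}{\prod_i N_i!\, N_i^{-N_i}\cdot N_{\mathrm{rem}}!\,N_{\mathrm{rem}}^{-N_{\mathrm{rem}}}}
\]
is, by Stirling's formula and the constraint $N_i = \mu(Q_i)\asymp R^\d$, equal to $\exp(O(p\log R))$, which is absorbed into the claimed error term via \eqref{rrb}. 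So \eqref{superad2} already gives
\[
\log\K(\R^\d)\ \ge\ \log\K(\R^\d\setminus\hat U)+\sum_{i=1}^p\log\K(Q_i) - Cp\log R,
\]
and similarly for \eqref{subad4}. The content of the proposition is therefore the \emph{reverse} inequality (an almost-superadditivity of $-\log\K$, equivalently an almost-subadditivity of the free energy), and this is where the local laws and the screening machinery enter.

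For the reverse inequality I would argue as in \eqref{start}--\eqref{sum}: first bound $\G(X_N,\R^\d)$ from below by the sum of the Dirichlet-type (mixed) energies $\H_{\R^\d}(X_N|_{Q_i},Q_i)$ over the $Q_i$'s together with $\F$ on the complement, using \eqref{restri1} and subadditivity of $\G$ run ``in reverse'' — more precisely using that the Neumann-based localized energy dominates the mixed/Dirichlet one. Expanding $d\mu^{\otimes N}$ according to how the points distribute among the $Q_i$ and the complement, one gets a sum over occupation numbers $(n_1,\dots,n_p,n_{\mathrm{rem}})$ of multinomial-weighted products of integrals of $\exp(-\beta\H_U(X_{n_i},Q_i))$. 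For the terms where each $n_i$ is within $O(R^{\d-1/2})$ of $N_i$ — the ``good'' terms — I would apply Proposition~\ref{proscreen} (via Corollary~\ref{coro43}) in each $Q_i$: screening replaces $X_{n_i}$ by a configuration with the correct mass $N_i$, converting $\int\exp(-\beta\H_U(\cdot,Q_i))$ into $C\K(Q_i)\exp(\beta\epsilon_e+\epsilon_v)$, where the energy error $\epsilon_e$ is controlled using that the \emph{local laws} of Proposition~\ref{proloclaw} hold at the scale $R\ge\rb$ (which bounds $S(X_{n_i})$ by $O(R^{\d-1}\tilde\ell\,\chi(\beta))$ in probability, after integrating against $\Q$), and where the volume errors $\epsilon_v$ telescope as in Remark~\ref{remerr}. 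Optimizing $\ell,\tilde\ell$ as in Proposition~\ref{proboot} with $\tilde\ell$ of order $\rb$ yields precisely the per-box error $\beta R^{\d-1}\rb\chi(\beta) + \beta^{1-1/\d}\chi(\beta)^{1-1/\d}(\log(R/\rb))^{1/\d}R^{\d-1}$. For the ``bad'' terms where some $n_i$ deviates from $N_i$ by more than $KR^{\d-1/2}$, I would reuse the energy-excess estimate \eqref{claim80} / Step~1 of the proof of Proposition~\ref{proloclaw}: a discrepancy $|n_i-N_i|$ forces extra energy $\gtrsim R^{1-\d}|n_i-N_i|^2$, which, combined with the already-established local laws on the slightly larger scale, makes the total contribution of all bad terms at most $\exp(-c\beta\chi(\beta)R^\d)$, hence negligible compared to $\K(\R^\d)$.

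Putting the two directions together gives \eqref{subad3}; the proof of \eqref{subad4} is identical except that there is no complement region $\R^\d\setminus\hat U$ and one uses that $\K(U)$ factorizes over the $Q_i$ up to the same multinomial Stirling factor. The dimension-$\d\ge 4$ hypothesis \eqref{rrb2} is needed so that the boundary contribution $|\partial U|\,\xi_\beta$ appearing in the a priori lower bound \eqref{minok} (and in \eqref{momex}) can be absorbed into $\beta\chi(\beta)R^\d$ per box — this is exactly the condition \eqref{asspu} in Proposition~\ref{proloclaw} transported to the present setting. The main obstacle, as usual in this scheme, is the bookkeeping of the \emph{volume errors}: one must check that the $\epsilon_v$ terms coming from the $p$ boxes plus the complement, each individually only $o(\beta R^\d)$ when $\beta$ is small, actually cancel in pairs (Remark~\ref{remerr}) rather than merely being bounded, since a naive bound would lose the sharp rate; keeping the Stirling remainders and the screening mass-defects $\alpha-\alpha'$ aligned across all $p$ pieces is the delicate point, and it is what forces the hypothesis \eqref{rrb} on $R$. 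Everything else is a matter of assembling the already-proven Propositions~\ref{minoK}, \ref{proscreen}, \ref{proloclaw} and Corollary~\ref{coro43}.
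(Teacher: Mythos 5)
The high-level structure you identify is correct — one direction is free from superadditivity \eqref{superad2} and Stirling, and the substance is the reverse inequality obtained by screening plus local laws — and this does match the paper. But there are two concrete gaps in the route you propose. First, you misattribute the source of the $(\log(R/\rb))^{1/\d}$ factor. It does \emph{not} come from optimizing $\tilde\ell$ at order $\rb$ in Proposition~\ref{proboot}; rather, the paper introduces a strictly larger mesoscale $r = M\rb + \big(\tfrac{1}{\beta\chi(\beta)}\log\tfrac{R^{\d-1}}{\rb^{\d-1}}\big)^{1/\d}$ (see \eqref{defir}) and restricts to an event $\mathcal B$ requiring both $|n-\mn|\le \varepsilon$ and the local energy $\sup_x\int_{\hat Q_1\cap \car_r(x)}|\nabla u_{\rrtt}|^2 \le M\chi(\beta)r^\d$. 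A union bound over the $(R/r)^{\d-1}$ cubes of size $r$ covering the boundary layer forces $r$ to exceed $\rb$ by exactly this logarithmic amount so that $\Q(\R^\d)[\mathcal B^c]\le 1/2$. Then Corollary~\ref{coro43} is applied with $L=r$, and it is the term $\beta M R^{\d-1}r\chi(\beta)$ that yields both pieces of the stated per-box error after substituting $r$. If you took $\tilde\ell\sim\rb$ the union bound would fail and the screening hypothesis $S'(X_n)\le z$ could not be guaranteed with probability bounded away from zero.

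Second, your plan to expand $d\mu^{\otimes N}$ simultaneously over the occupation vector $(n_1,\dots,n_p,n_{\mathrm{rem}})$ and screen all $p$ boxes at once would make this union bound run over $p(R/r)^{\d-1}$ local energy conditions, which inflates the needed scale $r$ by roughly a $\big(\tfrac{\log p}{\beta\chi(\beta)}\big)^{1/\d}$ addend; with $p$ of order $N/R^\d$ this is substantially worse than the claimed $(\log(R/\rb))^{1/\d}$. The paper avoids this by peeling off the boxes $Q_1,\dots,Q_p$ one at a time, each step costing only the per-box error, with the caveat that at step $j$ one must verify that $\R^\d\setminus\bigcup_{i\le j}Q_i$ still satisfies the hypothesis \eqref{asspu} — which is precisely what the extra assumption \eqref{rrb2} for $\d\ge 4$ is for. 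So your remark about \eqref{rrb2} is directionally right (it comes from controlling $|\partial U|$ against $\beta\bN$), but its role is to legitimize the \emph{iteration}, not a simultaneous multinomial split. Finally, showing the bad occupation-number terms are $\exp(-c\beta\chi(\beta)R^\d)$ small is harder than needed and not what the paper does (it suffices that $\mathcal B^c$ has probability $\le 1/2$); this is merely a heavier route, not an error, but it does not repair the union-bound issue above.
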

\begin{proof}
We will only prove upper bounds for $\log \K(\R^\d)$ and $\log \K(U)$, since the matching lower bounds are direct consequences of~\eqref{superad2}, Stirling's formula and the control~\eqref{lrob} below.
  
\smallskip

We recall that the local laws hold down to scale $\rb$ in $U=\R^\d$. In particular, for any cube~$\car$ in~$\hat U$ of size~$r \ge \rb$,  we have 
     \be \label{lolocd} \log \Esp_{\Q(\R^\d)} \( \exp \( \frac{1}{2C} \beta \int_{\car}| \nab u_{\rrtt}|^2 \) \) 
\le C r^\d \beta \chi(\beta).\ee 
      Let $Q_1$ be the first rectangle in the list, and let us denote by $n$ the number of points a configuration has in $Q_1$ and by $\mn = \mu(Q_1)$. Let us also define 
      $$\hat Q_1:=\{x\in Q_1, \dist (x, \partial Q_1) \le r \}$$ and 
     $$\mathcal B:= \left\{X_N\in (\R^\d)^N \,:\,   |n- \mn|\le \ep , \quad  \sup_x \int_{\hat Q_1\cap \car_{r}(x)} |\nab u_{\rrtt}|^2 \le M \chi(\beta) r^\d
        \right\}$$ where we let
     $$\ep:=M\( R^{\d-1} \sqrt{\chi(\beta) \rb} \)$$ 
     and $M>0$ is to be selected below.  The first condition~$|n- \mn|\le \ep$ in the definition of~$\mathcal{B}$ has large probability in view of~\eqref{result2}.
    For the second condition, by a covering argument we have $ \frac{R^{\d-1}}{r^{\d-1}}$ conditions to satisfy and each of them has probability
     of the complement bounded by $\exp\(- M \beta \chi(\beta)r^\d\)$ if $M$ is large enough, in view of
              ~\eqref{lolocd}.
               Using a union bound  we thus  have 
               $$\Q(\R^\d) [\mathcal B^c]\le   \frac{R^{\d-1}}{r^{\d-1}}  \exp\(- M \beta \chi(\beta)r^\d\)$$
               and this is $\le \hal$ if 
               $$  \frac{R^{\d-1}}{r^{\d-1}}  \exp\(- M \beta \chi(\beta)r^\d\)\le \hal$$
               so we choose 
               \be \label{defir}
               r= M \rb+ \( \frac{1}{\beta\chi(\beta)} \log \frac{R^{\d-1}}{\rb^{\d-1}}\)^{\frac1\d}\ee
               which satisfies the condition if $M$ is large enough.
               It follows that 
     $$N^{-N} \int_{\mathcal B^c} \exp\left( -\beta \G( \cdot)  \right)\, d\mu^{\otimes N} 
     = \Q(\R^\d)[ \mathcal B^c] \K(\R^\d) \le 
      \hal\K (\R^\d). $$
          We thus have
\begin{multline*}
\frac{N^N }{2} \K(\R^\d)
\leq 
\int_{\mathcal B}\exp\(-\beta \G(\cdot)\)d\mu^{\otimes N}
\\ 
\leq 
\sum_{n=\mn-\ep  } ^{\mn+\ep} \frac{N!}{n!(N-n)!}     
\int_{\mathcal B} \exp\(-\beta \H_{\R^\d}(\cdot, Q_1) \)d\mu^{\otimes n}   \int_{\mathcal B} \exp\(-\beta \H_{\R^\d}(\cdot, \R^\d\backslash Q_1) \)d\mu^{\otimes (N-n)},
\end{multline*}
where for the second line we subdivided the event over the possible values of $n$ and  applied~\eqref{restri1}.

We now apply the results of Corollary \ref{coro43} with $L=r$ to $Q_1$ and $\R^\d \backslash Q_1$, combined with Remark \ref{remerr}. For that we check that~\eqref{condsurL} is satisfied since $ r \ge \rb$, and obtain
\begin{align*}
\K(\R^\d)  
& \leq  
2 \K (Q_1)\K(\R^\d \backslash Q_1) \sum_{n=\mn-\ep}^{\mn+\ep} \frac{N! N^{-N}}{n!(N-n)!} n^n  (N-n)^{N-n}  
\\ & \qquad 
\times\exp\( C \beta \(R^{\d-1} r  \chi(\beta)M+\ep  \)   +  \frac{M^2 \chi(\beta)^2 R^{\d-1}}{  r} \).
\end{align*}
Next, using Stirling's formula we have 
      $$ \frac{N! N^{-N} n^{n} (N-n)^{N-n}  }{n!(N-n)!   } \le C  \sqrt{ \frac{ N}{ 2\pi n  (N-n)}}\le C$$
and we deduce
\begin{align*} 
& \log  \K(\R^\d) 
\\ & \quad 
\le \log   \K(Q_1)+\log \K(\R^\d\backslash Q_1)+C+ \log \ep
+  \beta \(M R^{\d-1} r  \chi(\beta) +\ep\)
+  \frac{M^2 \chi(\beta)^2 R^{\d-1}}{  r} .
\end{align*}
Since \be\label{eqrf}
r \ge \rb \ge \max(1, \chi(\beta)^{\hal}\beta^{-\hal} ) \ge 1\ee we have $\frac{\chi(\beta)}{r}\le \beta r$ so   we may absorb the last term. Also, 
 since $ r \ge \rb \ge 1$ and $\chi(\beta)\ge 1$,  by definition of $\ep $ we may absorb  $ \ep$ into $M R^{\d-1} r \chi(\beta)   $.
       Since $R \ge \rb \ge \sqrt{\chi(\beta)}$, we  have $ R^{\d-1} \sqrt{\chi(\beta) \rb}\le C R^\d$, so
inserting the definition of $r$, 
we find 
\begin{multline*}
\log  \K(\R^\d)  
\\  
\le  \log   \K(Q_1)+\log \K(U\backslash Q_1)  + C \log  R       +  C\beta R^{\d-1}     \rb \chi(\beta) + C
            \beta^{1-\frac1\d} R^{\d-1} \(\log  \frac{R}{\rb}\)^{\frac1\d}  \chi(\beta)^{1-\frac1\d} .
\end{multline*}
Finally, 
 since $R \ge \rb \ge C\chi(\beta)^{\hal} \beta^{-\hal} $ we have that, for every $R \ge \rb$,
\be \label{lrob}
\log R \le  C \beta \chi(\beta)\rb R^{\d-1},
\ee 
which allows us to absorb the $\log R$ term into the others. 

\smallskip

We may now iterate this by bounding $\log   \K(\R^\d \backslash \cup_{i=1}^j Q_i)   $ in the same way thanks to the local laws up to the boundary 
 of Proposition  \ref{proloclaw}.  For this we check that  for every $j\le p$, $\R^\d \backslash \cup_{i=1}^j Q_i$ is a set satisfying the assumptions of the proposition, in particular \eqref{asspu}: indeed, $$\mu(\R^\d\backslash \cup_{i=1}^j Q_i) \ge \frac{ m d  N^{1-\frac1\d}}{C} $$ where we recall $d\ge d_0$,
 while $$|\pa ( \R^\d \backslash \cup_{i=1}^j Q_i)| \le C p R^{\d-1}\le C \frac{N}{R^\d} R^{\d-1}=C \frac{N}{R} $$ hence the condition \eqref{asspu} if \eqref{rrb2} holds. This yields~\eqref{subad3}. 
 
 The proof of~\eqref{subad4} is analogous, using that the local laws hold up to the boundary for $\Q(U)$ and that for any union of hyperrectangles in $\mathcal Q_R$ with $R \ge \rb$ we have $\min(1,\beta^{\frac{1}{\d-2}}) |\pa U | \le C \beta \mu (U)$, for some $C>0$ depending only on $\d, m, \Lambda$, hence \eqref{asspu} is also satisfied.
\end{proof}
     
     \subsection{Free energy for uniform densities on hyperrectangles: proof of Theorem \ref{th1}}
     We are now ready to compute~$\log \K(Q_R)$  when the density is constant in a rectangle~$Q_R$, taking advantage of the  superaddivity of~$\log \K$  and the almost additivity provided by~\eqref{subad4}.
     We reintroduce the $\mu$ dependence in the notation $\K(U, \mu)$.
     

     \begin{prop}\label{pro62}
There exists a function $\mf$  on $(0, +\infty]$ and a constant $C>0$, depending only on~$\d$,
such that the following hold: 
\begin{itemize}
\item for every $\beta>0$, 
\be -C \le \mf(\beta)\le  C \chi(\beta). \ee

\item $\mf$ is locally Lipschitz in $(0, +\infty)$ with 
\be\label{boundfp}
|\mf'(\beta)|\le \frac{C\chi(\beta)}{\beta}.\ee 
\item
 if $Q_R \in \mathcal{Q}_R$ and $R \ge \rb$ satisfies
 $$R \ge \rb + \( \frac{1}{\beta \chi(\beta) } \log \frac{R^{\d-1}}{\rb^{\d-1}}\)^{\frac1\d},$$ then
 \be\label{formprop}\left|\frac{\log \K(Q_R, 1)}{\beta |Q_R|}+\mf(\beta)\right|\le C  \(  \frac{\chi(\beta) }{R} \( \rb + \beta^{-\frac1\d} \chi(\beta)^{-\frac1\d}\log^{\frac1\d} \frac{R}{\rb} \) \).
\ee
\end{itemize}
\end{prop}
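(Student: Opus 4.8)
The plan is to define $\mf(\beta)$ as minus the thermodynamic limit of the free energy per unit volume with constant density one, and to deduce the three assertions from the quantitative almost-additivity estimate~\eqref{subad4}, the a priori superadditivity~\eqref{superad2}, and the a priori bounds of Lemma~\ref{minoG} and Proposition~\ref{minoK}. Throughout I would write $E(R):=\frac{C\chi(\beta)}{R}\bigl(\rb+\beta^{-1/\d}\chi(\beta)^{-1/\d}\log^{1/\d}(R/\rb)\bigr)$ for the right-hand side of~\eqref{formprop}; the key point is that dividing the per-rectangle error appearing in~\eqref{subad4} by $\beta R^\d$ produces exactly $E(R)$.

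\emph{Cubes and the definition of $\mf$.} First I would restrict to cubes $\car_N$ of integer sidelength $N$, for which $\mu\equiv1$ is admissible since $|\car_N|=N^\d\in\mathbb N$. For any integer $k\ge1$ the cube $\car_{kN}$ is an exact disjoint union of $k^\d$ translates of $\car_N$, so applying~\eqref{subad4} with $p=k^\d$ and dividing by $\beta(kN)^\d$ gives $\bigl|\frac{\log\K(\car_{kN},1)}{\beta(kN)^\d}-\frac{\log\K(\car_N,1)}{\beta N^\d}\bigr|\le E(N)$, uniformly in $k$, for $N$ large enough (in the range where~\eqref{subad4} applies). Comparing $\car_N$ and $\car_{N'}$ through the common refinement $\car_{NN'}$ then shows that $N\mapsto\frac{\log\K(\car_N,1)}{\beta N^\d}$ is a Cauchy sequence; I call its limit $-\mf(\beta)$, and letting $N'\to\infty$ along multiples of $N$ in the first inequality yields~\eqref{formprop} for cubes. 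The two-sided bound $-C\le\mf(\beta)\le C\chi(\beta)$ then follows by passing to the limit in $-C\beta\chi(\beta)N^\d\le\log\K(\car_N,1)\le C\beta N^\d$, which combines~\eqref{majok} with~\eqref{minok}, the surface contribution $|\partial\car_N|\min(1,\beta^{1/(\d-2)})$ in the latter being absorbed into $C\beta\chi(\beta)N^\d$ once $N\ge\rb$ by the definitions~\eqref{minoR} of $\rb$ and~\eqref{defchib} of $\chi$.

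\emph{General hyperrectangles, regularity, and $\beta=\infty$.} For $Q_R\in\mathcal Q_R$ with sidelengths $\ell_i\in[R,2R]$ and $V:=\prod_i\ell_i\in\mathbb N$, I would run the same scheme on the integer dilates $\prod_i[0,k\ell_i]$, which are exact unions of $k^\d$ translates of $Q_R$, to get $\bigl|\frac{\log\K(Q_R,1)}{\beta V}-\lim_k\frac{\log\K(\prod_i[0,k\ell_i],1)}{\beta k^\d V}\bigr|\le E(R)$, and then identify this limit with $-\mf(\beta)$ by tiling the large box $\prod_i[0,k\ell_i]$ by integer cubes $\car_{R_0}$ except for a boundary layer of volume $O((kR)^{\d-1}R_0)$, invoking~\eqref{superad2} and~\eqref{subad4} at scale $R_0$ together with~\eqref{formprop} for cubes, and letting $k\to\infty$ then $R_0\to\infty$. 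For the Lipschitz bound I would fix an integer $N$ and differentiate $h_N(\beta):=-\frac{\log\K(\car_N,1)}{\beta N^\d}$: since $\partial_\beta\log\K=-\Esp_{\Q(\car_N)}[\G]$, one has $h_N'(\beta)=\frac{\log\K}{\beta^2N^\d}+\frac{\Esp_{\Q(\car_N)}[\G]}{\beta N^\d}$, the first term is $O(\chi(\beta)/\beta)$ by the two-sided bound on $\log\K$ just used, and the second is too because $\G\ge-CN^\d$ by~\eqref{minog} while $\tfrac\beta2\Esp_{\Q(\car_N)}[\G]\le\log\Esp_{\Q(\car_N)}[e^{\frac\beta2\G}]=\log\frac{\K^{\beta/2}(\car_N)}{\K^{\beta}(\car_N)}\le C\beta\chi(\beta)N^\d$ by Jensen and Corollary~\ref{pttproba}; since $\chi$ is continuous on $(0,\infty)$ and $h_N\to\mf$ pointwise with this uniform local Lipschitz bound, $\mf$ is locally Lipschitz with $|\mf'(\beta)|\le C\chi(\beta)/\beta$. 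The case $\beta=+\infty$ is entirely parallel, with $\log\K(\cdot,1)$ replaced by $-\min_X\G(X,\cdot)$ and~\eqref{superad2} replaced by the subadditivity~\eqref{subad1} of $\G$.

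\emph{Main difficulty.} I expect the delicate step to be the identification of $\lim_k\frac{\log\K(\prod_i[0,k\ell_i],1)}{\beta k^\d V}$ with $-\mf(\beta)$, that is, the shape-independence of the thermodynamic limit. One must perform the cubic tiling of a large non-cubic box while respecting the arithmetic constraints of~\eqref{superad2} and of $\mathcal Q_R$ (integer masses and admissible sidelengths, with Lemma~\ref{rect} providing the partition), and check that the boundary layer and the Stirling prefactor in~\eqref{superad2}, of size $O((kR/R_0)^\d\log R_0)$, contribute — after dividing by $\beta k^\d V$ and taking $k\to\infty$ then $R_0\to\infty$ — only errors already subsumed in $E(R)$ (here one uses~\eqref{lrob}). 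Everything else is a routine Fekete-subadditivity argument together with differentiation under the integral sign.
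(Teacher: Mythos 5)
Your proof is correct and follows essentially the same route as the paper: define $\mf$ as the thermodynamic limit of $\frac{\log\K(\car_N,1)}{\beta N^\d}$, derive the convergence rate~\eqref{formprop} from the almost-additivity~\eqref{subad4} together with the superadditivity~\eqref{superad2}, obtain the a priori two-sided bound on $\mf$ from~\eqref{majok} and~\eqref{minok}, and establish the Lipschitz estimate by controlling the ratio $\K^{\beta+\delta}/\K^{\beta}$ via the exponential moment bound of Corollary~\ref{pttproba}. The minor variations — a full $k^{\d}$ tiling with a Cauchy/common-refinement argument instead of the paper's dyadic telescoping, and direct differentiation of $\log\K$ in $\beta$ instead of the paper's H\"older argument — are cosmetic and do not change the substance of the proof.
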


\begin{proof}We first start by treating the case of a cube $\car_R$ with  $R^\d$ integer.
In view of~\eqref{superad2}, we have 
$$\frac{1}{\beta} \log \K(\car_{2R}, 1)\ge  O\(\frac{\log N}{\beta}\) + \frac{2^\d}{\beta} \log \K(\car_R, 1).$$
Thus, denoting $\phi(R)= \frac{\log\K(\car_R,1)}{\beta R^\d}$, this means that 
$$\phi(2R)  \ge \phi(R)+ O\(\frac {\log R}{\beta R^\d}\) $$
and summing these relations we have
$$\phi(\infty) \ge \phi(R)+ O\( \sum_{k=1}^\infty\frac{\log R }{\beta 2^k R}\), $$
that is, 
\be \label{ubb}\phi(R) \le \phi(\infty)+ O\(\frac{\log R}{\beta R^\d}\).\ee
On the other hand, 
in view of~\eqref{subad4}, 
we have
\be\label{subad23}
\K(\car_{2R},1)  \le    2^\d   \log \K (\car_R, 1) + C\beta R^\d \(
 \frac{\chi(\beta) }{R} \( \rb + \beta^{-\frac1\d} \chi(\beta)^{-\frac1\d}\log^{\frac1\d} \frac{R}{\rb}\) \), \ee
that is,
 \begin{equation*}
  \phi(2R) \le  \phi(R) + C \(  \frac{\chi(\beta) }{R} \( \rb + \beta^{-\frac1\d} \chi(\beta)^{-\frac1\d}\log^{\frac1\d} \frac{R}{\rb} \)\)  .
\end{equation*} Summing these relations, we conclude just as above that 
  \be\label{lbb}\phi(\infty) \le \phi(R) + O\(  
 \frac{\chi(\beta) }{R} \( \rb + \beta^{-\frac1\d} \chi(\beta)^{-\frac1\d}\log^{\frac1\d} \frac{R}{\rb}\)   \).\ee 
Denoting by 
$-\mf(\beta)$ the value $\phi(\infty)$ and recalling~\eqref{lrob}, we have the desired bounds for $Q_R=\car_R$  by 
 combining~\eqref{ubb} and \eqref{lbb}.
 We may then generalize to $Q_R\in \mathcal{Q}_R$ by another application of the sub/superaddivity  of~\eqref{superad2} and~\eqref{subad4} and the a priori bounds~\eqref{majok} and~\eqref{minok}.

 \smallskip
 
 In view of~\eqref{majok} and~\eqref{minok} applied with $\mu=1$ and $U=\car_R$, we also have 
 $- C \chi(\beta) \le  \phi(R) \le C $ with $C$ independent of $\beta$.
 This implies that $  -C \le \mf(\beta)\le C\chi(\beta)$.

\smallskip
 
To prove that $\mf$ is locally Lipschitz, let us temporarily highlight the $\beta$-dependence and compute 
\begin{align*}
\log \frac{\K^{\beta+\delta}(\car_R)}{\K^\beta(\car_R)}
&
= \log\Esp_{\Q(\car_R)} \( \exp\left( -\delta \G(\cdot,\car_R) \right) \)
\\ & 
\leq
\frac{2 |\delta|}{\beta} \log \Esp_{  \Q(\car_R)} \( \exp\left( \hal \beta \G(\cdot, \car_R) \right) \) 
\\ &
\le C |\delta |\chi(\beta) R^\d,
\end{align*}
using H\"older's inequality and~\eqref{locallawint}. 
Dividing by $\beta R^\d$ and sending $R\to\infty$ yields~\eqref{boundfp}.  
\end{proof}

The proof of Theorem \ref{th1} is now complete.

\smallskip

We may scale the formula~\eqref{formprop} to obtain the limit for any uniform density: 
we have  if $Q \in \mathcal{Q}_R$  and $Q'=m^{\frac{1}{\d}}Q$.
$$\G(\XN, m,  Q)= \begin{cases}
m^{1-\frac2\d}\G(m^{\frac1\d}\XN, 1, Q')  & \text{if} \ \d\ge 3, \\
\G(m^{\frac1\d}\XN, 1, Q') - \frac{m|Q|}{4} \log m & \text{if} \ \d=2.\end{cases}$$
Thus, highlighting the $\beta$ dependence, we have that 
$$\K^\beta(Q, m)= m^{-m|Q|} \K^{\beta m^{1-\frac{2}{\d}} }( Q', 1) \exp\(\frac{\beta}{4}|Q|m\log m \indic_{\d=2}\).
$$
It follows that 
$$ \frac{ \log \K^\beta(Q, m)}{\beta |Q|}= m^{2-\frac2\d} \frac{\log  \K^{\beta m^{1-\frac{2}{\d}  } }( Q', 1)}{\beta m^{1-\frac2\d} |Q'|} + \frac{1}{\beta }\(  -m\log m + \frac{\beta}{4}m\log m \indic_{\d=2}\).$$
Using the result~\eqref{formprop}, we deduce that 
\begin{align}
\label{limcasm}
\frac{ \log \K^\beta(Q_R, m)}{\beta |Q_R|}
&
= -m^{2-\frac2\d} \mf(\beta m^{1-\frac{2}{\d}}) -\frac{m}{\beta} \log m + \frac{1}{4}m(\log m) \indic_{\d=2} 
\\ & \qquad \notag 
+ O\(  
 \frac{\chi(\beta) }{R} \( \rb + \beta^{-\frac1\d} \chi(\beta)^{-\frac1\d}\log^{\frac1\d} \frac{R}{\rb}\)  \),
\end{align} 
where the implicit constant in the~$O(\cdot)$ depends only on $\d$ and $m$.
  
 \subsection{Case of a varying $\mu$}
 In \cite{s2} we will obtain precise expansions for the expansion of $\log \K$ when $\mu$ varies, however  
 in preparation for Theorem \ref{theoldp},   we give  a first  rougher estimate  that we deduce from~\eqref{limcasm} combined with~\eqref{subad3}. For this, we will need to assume some regularity of $\mu$.
 
 \begin{lem} Assume  $\mu(Q_R)$ is an integer. 
 Let $\overline\mu$ be another measure with~$\overline\mu(Q_R) = \mu(Q_R)$ and assume that both $\mu$ and $\overline\mu$ have densities bounded below by $m$ and above by $\Lambda $. Then there exists $C>0$ depending only on $\d, m$ and $\Lambda$ such that  
 \begin{align}
 \label{logkmm} 
 \left| \log \frac{\K(Q_R, \mu)}{\K(Q_R, \bar \mu)}\right|
&
\leq 
C \beta R^{\d+2} \|\mu-\bar \mu\|_{L^\infty(Q_R)}^2 
+C  \|\mu-\bar \mu\|_{L^\infty(Q_R)}
\(  \beta\sqrt{\chi(\beta)} R^{\d+1}  +   R^\d\)
\\ & \qquad \notag  + C \|\mu \|_{C^{0, \kappa}} \sqrt{\chi(\beta) } R^\d + \frac{C}{\beta}
.  
\end{align}
 \end{lem}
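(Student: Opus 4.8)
The strategy is to interpolate between $\mu$ and $\bar\mu$ by a linear path and to differentiate the logarithm of the partition function along that path, exactly as in the proof of the local Lipschitz property of $\mf$ at the end of Proposition~\ref{pro62}. Set $\mu_s := (1-s)\bar\mu + s\mu$ for $s\in[0,1]$; since $\mu$ and $\bar\mu$ both have the same mass $\mu(Q_R)$ on $Q_R$ and both lie between $m$ and $\Lambda$, each $\mu_s$ is an admissible background with $\mu_s(Q_R)=\mu(Q_R)$ an integer and $m\le \mu_s\le\Lambda$. The plan is to write
\be\label{planinterp}
\log \frac{\K(Q_R,\mu)}{\K(Q_R,\bar\mu)} = \int_0^1 \frac{d}{ds} \log \K(Q_R,\mu_s)\, ds,
\ee
then bound the integrand uniformly in $s$ and integrate.

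\textbf{Computing the derivative.} Differentiating $\log \K(Q_R,\mu_s) = \log\big( \mathrm{n}^{-\mathrm{n}}\int \exp(-\beta\G(X_\mathrm{n},\mu_s,Q_R))\, d\mu_s^{\otimes \mathrm{n}}\big)$ produces two kinds of terms: one coming from the $s$-dependence of the energy $\G$, and one coming from the $s$-dependence of the reference measure $d\mu_s^{\otimes \mathrm{n}}$ against which one integrates. The second is the easier: $\frac{d}{ds}\log(d\mu_s^{\otimes\mathrm{n}}) = \sum_i \frac{\mu-\bar\mu}{\mu_s}(x_i)$, which is bounded pointwise by $C\mathrm{n}\|\mu-\bar\mu\|_{L^\infty}/m \le C R^\d \|\mu-\bar\mu\|_{L^\infty}$ since $\mathrm{n}=\mu(Q_R)\le \Lambda R^\d$; this contributes the term $C\|\mu-\bar\mu\|_{L^\infty} R^\d$ to \eqref{logkmm}. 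For the energy term, I would use the electric reformulation: writing $h_s$ for the potential generated by the points and $\mu_s$, one has $\frac{d}{ds}\G(X_\mathrm{n},\mu_s,Q_R)$ expressible via $\int_{Q_R}\nab u_s\cdot\nab\big(\g*(\bar\mu-\mu)\big)$ plus lower-order truncation terms (the truncation radii $\rrc_i$ do not depend on $\mu_s$, only on the point positions, so differentiating through them is clean). By Cauchy--Schwarz this is controlled by $\|\nab u_s\|_{L^2(Q_R)}$ times $\|\nab \big(\g*(\bar\mu-\mu)\big)\|_{L^2(Q_R)}$; the latter is estimated by elliptic/potential-theoretic bounds as $\le C R^{\d/2+1}\|\mu-\bar\mu\|_{L^\infty}$ (using that $-\Delta$ of that potential is $\cd(\mu-\bar\mu)$ and Poincaré on $Q_R$), plus a contribution $C R^{\d/2}\|\mu\|_{C^{0,\kappa}}$ coming from commuting the truncation with the non-constant part of $\mu$ (this is where the Hölder regularity of $\mu$ enters).

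\textbf{Taking expectations.} After differentiating, the derivative of $\log\K$ equals $\Esp_{\Q(Q_R,\mu_s)}$ of the above quantities. The key input is then that under $\Q(Q_R,\mu_s)$ the exponential moments of the energy, hence of $\int_{Q_R}|\nab u_s|^2$, are controlled: by the local law \eqref{locallawint} (applicable since $\mu_s$ satisfies the hypotheses of Proposition~\ref{proloclaw} with the uniform bounds $m,\Lambda$) together with \eqref{fini}, one gets $\log\Esp_{\Q(Q_R,\mu_s)}\big(\exp(c\beta\int_{Q_R}|\nab u_s|^2)\big)\le C\beta\chi(\beta)R^\d$, so that $\Esp\|\nab u_s\|_{L^2(Q_R)} \le C(\chi(\beta)R^\d)^{1/2} + C(R^\d/\beta)^{1/2}$ by Jensen/Chebyshev. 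Plugging these into the Cauchy--Schwarz bound gives, for each $s$,
\be\label{plandiffbd}
\left|\frac{d}{ds}\log\K(Q_R,\mu_s)\right| \le C\|\mu-\bar\mu\|_{L^\infty}\Big(\beta\sqrt{\chi(\beta)}R^{\d+1} + R^\d\Big) + C\|\mu\|_{C^{0,\kappa}}\sqrt{\chi(\beta)}R^\d + C\beta R^{\d+2}\|\mu-\bar\mu\|_{L^\infty}^2 + \frac{C}{\beta},
\ee
where the $\beta R^{\d+2}\|\mu-\bar\mu\|_{L^\infty}^2$ term arises by splitting the product $\|\nab u_s\|\cdot\|\nab(\g*(\bar\mu-\mu))\|$ with a Young inequality to absorb the (random) energy factor, keeping a deterministic quadratic remainder, and the $C/\beta$ absorbs the subleading $(R^\d/\beta)^{1/2}$ piece together with Stirling-type corrections from the combinatorial prefactor. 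Integrating \eqref{plandiffbd} over $s\in[0,1]$ via \eqref{planinterp} yields \eqref{logkmm}.

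\textbf{Main obstacle.} The delicate point is the term involving $\|\mu\|_{C^{0,\kappa}}$: controlling the interaction between the truncation functions $\f_{\rrc_i}$ and a spatially varying background requires care, because $\int \f_{\rrc_i}(x-x_i)\,d\mu_s(x)$ does not simplify to a $\mu_s$-independent quantity when $\mu_s$ is non-constant, and its $s$-derivative must be bounded using Hölder continuity of $\mu$ together with the smallness of the truncation radii $\rrc_i$ on average (which is itself a consequence of the minimal-distance control \eqref{loclawdistmin0}). Getting the exponent of $R$ right there, and confirming that the error is genuinely of lower order than the main term $\beta\sqrt{\chi(\beta)}R^{\d+1}\|\mu-\bar\mu\|_{L^\infty}$, is the part that needs the most attention; everything else is a routine combination of the electric reformulation, Cauchy--Schwarz, and the already-established exponential-moment local laws.
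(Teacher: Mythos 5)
Your interpolation strategy is a genuinely different route from what the paper does: rather than setting up a path $\mu_s$ and differentiating $\log\K(Q_R,\mu_s)$ in $s$, the paper compares $\K(Q_R,\mu)$ and $\K(Q_R,\bar\mu)$ directly in one shot, writing
\[
\frac{\K(Q_R,\mu)}{\K(Q_R,\bar\mu)} = \Esp_{\Q(Q_R,\bar\mu)}\Big[\exp\Big(\beta\big(\G(X_\bN,\bar\mu)-\G(X_\bN,\mu)\big)+\sum_{i}(\log\mu-\log\bar\mu)(x_i)\Big)\Big]
\]
and then estimating the exponent inside, controlling $\G(\bar\mu)-\G(\mu)$ by $\|\nab w\|_{L^2}^2 + 2\|\nab w\|_{L^2}\|\nab u_{\rrh}\|_{L^2} + \|\mu-\bar\mu\|_{L^\infty}\sum_i\int|\f_{\rrh_i}|$ with $-\Delta w=\mu-\bar\mu$ and a Neumann boundary condition. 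Your approach has a real advantage: since the derivative of $\log\K$ is a Gibbs expectation of the derivative of the log-integrand, one never encounters a pure $\|\nab w\|_{L^2}^2$ term; only the cross term $\nab u_s\cdot\nab w$ appears, so after Cauchy--Schwarz and Jensen applied to the local-law exponential moment the bound you would actually obtain is $\lesssim \beta\sqrt{\chi(\beta)}R^{\d+1}\|\mu-\bar\mu\|_{L^\infty}+R^\d\|\mu-\bar\mu\|_{L^\infty}$ --- a \emph{sharper} estimate, which of course still implies \eqref{logkmm} since the extra terms on the right of \eqref{logkmm} are nonnegative.

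However, your accounting of the extra error terms is mistaken, and this is worth understanding even though it doesn't invalidate your proof. The $C\|\mu\|_{C^{0,\kappa}}\sqrt{\chi(\beta)}R^\d + C/\beta$ term does \emph{not} come from ``commuting the truncation with the non-constant part of $\mu$'': the $s$-derivative of the truncation integral is $\int\f_{\rrc_i}(x-x_i)\,d(\mu-\bar\mu)(x)$, which is bounded by $\|\mu-\bar\mu\|_{L^\infty}\int|\f_{\rrc_i}|\le C\|\mu-\bar\mu\|_{L^\infty}\rrc_i^2$ with no appeal to Hölder regularity or to \eqref{loclawdistmin0}; summed over $i$ this gives $CR^\d\|\mu-\bar\mu\|_{L^\infty}$, nothing more. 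In the paper's proof, the $C^{0,\kappa}$ term and the $1/\beta$ term arise entirely from the change-of-measure piece $\sum_i(\log\mu-\log\bar\mu)(x_i)$, after splitting it into its mean against $\bar\mu$ and a fluctuation $\int(\log\mu-\log\bar\mu)\,d\omega_\bN$; the fluctuation is then bounded via the interpolation $\|\omega_\bN\|_{(C^{0,\kappa})^*}\le \|\omega_\bN\|_{(C^0)^*}^{1-\kappa}\|\omega_\bN\|_{(C^{0,1})^*}^{\kappa}$, the local law \eqref{loclawphi} for linear statistics, and the elementary inequality $x\le\ep\beta x^2+\frac{1}{4\beta\ep}$ optimized in $\ep$ --- this optimization is precisely where $C/\beta$ comes from. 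In your approach, the crude pointwise bound on the Radon--Nikodym density $\big|\sum_i\frac{\mu-\bar\mu}{\mu_s}(x_i)\big|\le CR^\d\|\mu-\bar\mu\|_{L^\infty}/m$ already suffices, so you never need to produce the $C^{0,\kappa}$ or $1/\beta$ terms at all. Similarly, your claimed $(R^\d/\beta)^{1/2}$ piece from Jensen and the ``Stirling-type corrections from the combinatorial prefactor'' are spurious: Jensen applied to $\log\Esp\exp\big(\frac{\beta}{C}\int|\nab u_{\rrtt}|^2\big)\le C\beta\chi(\beta)R^\d$ gives $\Esp\int|\nab u_{\rrtt}|^2\le C\chi(\beta)R^\d$ cleanly, and there is no combinatorial prefactor when differentiating $\log\K(Q_R,\mu_s)$ in $s$ since the number of points is fixed along the path.
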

 \begin{proof} Let us denote $\bN=\mu(Q_R)$. 
Let $\Q(Q_R)$ denote the Gibbs measure for the density~$\bar \mu$. We have
 $$  \frac{\K(Q_R, \mu)}{\K(Q_R, \bar \mu)}=\Esp_{\Q(Q_R)} \( \exp \(\beta( \G(X_\bN, \bar \mu)- \G(X_\bN, \mu))+\sum_{i=1}^{\bN} (\log \mu-\log \bar \mu)(x_i)\)\).$$
 Then from~\eqref{minneum} we have  
\begin{equation*}
| \G(X_\bN, \bar \mu, Q_R)- \G(X_\bN, \mu, Q_R) |
\le
\int_{Q_R} |\nab w|^2 + 2 \int_{Q_R} |\nab w||\nab u_{\rrh}|+ \|\mu-\bar \mu\|_{L^\infty} \sum_{i=1}^{\bN} \int |\f_{\rrh_i}|,
\end{equation*}
 where $u$ is the solution to~\eqref{defv} with $\bar \mu$, and $w$ is the mean-zero solution to 
 \be \label{eqw}\left\{\begin{array}{ll}
 -\Delta w= \mu - \bar \mu & \text{in}  \ Q_R\\
 \frac{\partial w}{\partial \nu}=0 & \text{on} \ \partial Q_R.\end{array}\right.
 \ee
 Using~\eqref{14} and since $\rrtt=\rrh$ in this instance, we have 
 $$\int_{Q_R} |\nab u_{\rrh}|^2 \le C \( \G(X_\bN, Q_R, \bar \mu) + C R^\d\)$$
 while testing ~\eqref{eqw} against $w$ and using Poincar\'e's inequality allows to show that 
 $$\int_{Q_R} |\nab w|^2 \le C R^{\d+2}  \|\mu-\bar \mu\|_{L^\infty(Q_R)}^2,$$
 and the third term can be bounded by $ R^\d \|\mu-\bar\mu\|_{L^\infty(Q_R)}$ using that
  \be \label{intf} \int_{\R^\d} |\f_\alpha|\le C \alpha^{2} \ee  with $ C$ depending only on $\d$.
For the $\log \mu$ terms we write 
$$\sum_{i=1}^\bN (\log \mu-\log \bar \mu)(x_i)= \int_{Q_R}(\log \mu-\log \bar \mu)d\bar \mu+ \int_{Q_R} (\log \mu-\log \bar \mu) d\( \sum_{i=1}^\bN \delta_{x_i}- \bar\mu\).$$
Let us denote $\omega_\bN$ for $\sum_{i=1}^\bN \delta_{x_i}- \bar\mu.$
By interpolation between H\"older spaces we have $$\|\omega_\bN\|_{(C^{0, \kappa})^*} \le \|\omega_\bN\|_{(C^{0})^*}^{1-\kappa} \|\omega_\bN\|_{(C^{0, 1})^*}^\kappa \le  C\bN^{1-\kappa}  \|\omega_\bN\|_{(C^{0, 1})^*}^\kappa \le C R^{\d(1-\kappa)}  \|\omega_\bN\|_{(C^{0, 1})^*}^\kappa ,$$ hence 
using the local law~\eqref{loclawphi} we have 
\be\label{prehold}\left|\log \Esp_{\Q(Q_R)}\( \exp \frac{\beta}{C} \|\omega_\bN\|_{(C^{0, \kappa})^*}^2 \)\right|\le C \beta \chi(\beta) R^{2\d}.\ee
Using now that $x\le \ep \beta x^2 + \frac{1}{4\beta \ep}$ we  deduce that  for every $\ep \le \frac1C$ with $C$ from the above inequality, we have 
\begin{multline*}\left|\log \Esp_{\Q(Q_R)}\( \exp\( \int_{Q_R} \varphi \omega_\bN \)\)\right|
\le \log  \Esp_{\Q(Q_R)}\( \exp \( \ep \beta \( \int_{Q_R} \varphi \omega_\bN \)^2 + \frac{1}{4\beta \ep} \) \)\\
\le \frac{1}{4\beta \ep} + C \ep \beta \chi(\beta)R^{2\d} \|\varphi\|_{C^{0, \kappa}}^2  \end{multline*}
where we have used H\"older's inequality and~\eqref{prehold}.
Optimizing over $\ep \le 1$ and applying to $\varphi = \log \mu - \log \bar \mu$, we deduce that 
$$\left|\log \Esp_{\Q(Q_R)}\( \exp\( \int_{Q_R} (\log \mu - \log \bar \mu)  \omega_\bN \)\)\right|\le C \|\mu \|_{C^{0, \kappa}} \sqrt{\chi(\beta) } R^\d + \frac{C}{\beta}.$$
It follows that 
$$
\left|\log \Esp_{\Q(Q_R)}\( \exp\( \sum_{i=1}^N (\log \mu-\log \bar \mu)(x_i)\)\)\right|
\le CR^\d \|\mu-\bar \mu\|_{L^\infty} + 
 C \|\mu \|_{C^{0, \kappa}} \sqrt{\chi(\beta) } R^\d + \frac{C}{\beta}.$$
 Combining these estimates with the local law~\eqref{locallawint} we deduce that for every $\lambda $, we have 
\begin{align*}
\left|\log \frac{\K(Q_R, \mu)}{\K(Q_R, \bar \mu)}\right|
& \leq  
C\lambda \beta \chi(\beta) R^\d +( \frac{C}{\lambda}+1) \beta R^{\d+2}   \|\mu-\bar \mu\|_{L^\infty(Q_R)}^2+  C\beta R^\d \|\mu-\bar \mu\|_{L^\infty} 
\\ & \qquad 
+CR^\d \|\mu-\bar \mu\|_{L^\infty} + C \|\mu \|_{C^{0, \kappa}} \sqrt{\chi(\beta) } R^\d + \frac{C}{\beta} .
\end{align*}
Optimizing over $\lambda $ yields the result. 
\end{proof}

 \begin{prop}\label{pro65} 
 Assume $\|\mu\|_{ C^{0,\kappa}} \le C N^{-\frac \kappa\d}$ for some $\kappa>0$,  and $R \gg \rb$ as $N \to \infty$.
 Then, as $N\to \infty$, 
   \begin{equation}\label{619}
\log  \K(Q_R, \mu) = - \beta \int_{Q_R}\mu^{2-\frac2\d} \mf(\beta \mu^{1-\frac2\d})+ \( \frac{\beta}{4} \indic_{\d=2} -1\) \int_{Q_R} \mu \log \mu  +o\((1+\beta)R^\d\),
\end{equation} 
where the term~$o(\cdot)$ on the right side is independent of $\beta$.
 \end{prop}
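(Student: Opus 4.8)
The plan is to interpolate between the scale $R$ and a scale at which the density $\mu$ is essentially constant, using the almost-additivity of $\log\K$ from \eqref{subad4}, the stability estimate \eqref{logkmm}, and the explicit constant-density asymptotics \eqref{limcasm} (the rescaled form of Theorem \ref{th1}). First I would fix a mesoscopic lengthscale $r=r(N)$ with $\rb\ll r\ll R$, to be chosen at the end, and apply Lemma \ref{rect}, using its last assertion, to write $Q_R=\bigcup_{i=1}^{p}Q_i$ as a union of $p\asymp(R/r)^{\d}$ hyperrectangles $Q_i\in\mathcal Q_r$ with disjoint interiors; in particular every $\mu(Q_i)$ is an integer. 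Applying \eqref{subad4} with $R$ replaced by $r$ gives
\[
\Big|\log\K(Q_R,\mu)-\sum_{i=1}^{p}\log\K(Q_i,\mu)\Big|\le Cp\Big(\beta r^{\d-1}\chi(\beta)\rb+\beta^{1-\frac1\d}\chi(\beta)^{1-\frac1\d}\big(\log\tfrac{r}{\rb}\big)^{\frac1\d}r^{\d-1}\Big),
\]
which is $R^{\d}$ times a quantity tending to $0$ as soon as $r/\rb\to\infty$, hence $o((1+\beta)R^{\d})$.

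Second, on each cell I would freeze the density: put $m_i:=\mu(Q_i)/|Q_i|\in[m,\Lambda]$, so $m_i$ is a constant with $m_i|Q_i|=\mu(Q_i)\in\mathbb Z$ and, by the H\"older bound, $\|\mu-m_i\|_{L^\infty(Q_i)}\le\|\mu\|_{C^{0,\kappa}}(\mathrm{diam}\,Q_i)^{\kappa}\le C N^{-\kappa/\d}r^{\kappa}=:\delta$. Applying \eqref{logkmm} on each $Q_i$ with $\bar\mu\equiv m_i$, using $\|\mu\|_{C^{0,\kappa}}\le CN^{-\kappa/\d}$ and $p\asymp R^{\d}/r^{\d}$, the total error after summation is at most $R^{\d}$ times
\[
\beta r^{2}\delta^{2}+\delta\big(\beta\sqrt{\chi(\beta)}\,r+1\big)+\sqrt{\chi(\beta)}\,N^{-\kappa/\d}+\frac{1}{\beta r^{\d}} .
\]
Here the last term is $o(1)$ because \eqref{minoR} forces $\rb\gtrsim\beta^{-1/2}$, whence $\beta r^{\d}\gg\beta\rb^{\d}\gtrsim\beta^{1-\d/2}\ge1$ diverges; the remaining terms are $o(1+\beta)$ once $r$ is a small enough power of $N$, the binding requirement being $\beta r^{2+2\kappa}N^{-2\kappa/\d}=o(1+\beta)$, coming from the $r^{\d+2}\|\mu-\bar\mu\|_{L^\infty}^{2}$ contribution.

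Third, having reduced to constant densities, I would invoke \eqref{limcasm} on each $Q_i$ (legitimate since $m_i|Q_i|\in\mathbb Z$ and $r\gg\rb$):
\[
\log\K(Q_i,m_i)=-\beta|Q_i|\,m_i^{2-\frac2\d}\mf\!\big(\beta m_i^{1-\frac2\d}\big)+\Big(\tfrac{\beta}{4}\indic_{\d=2}-1\Big)|Q_i|\,m_i\log m_i+\beta|Q_i|\,O\!\Big(\tfrac{\chi(\beta)}{r}\big(\rb+\beta^{-\frac1\d}\chi(\beta)^{-\frac1\d}\log^{\frac1\d}\tfrac{r}{\rb}\big)\Big),
\]
whose errors sum, exactly as in the first display, to $o((1+\beta)R^{\d})$. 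Finally I would treat $\sum_i|Q_i|g_\beta(m_i)$ and $\sum_i|Q_i|\,m_i\log m_i$ as Riemann sums, with $g_\beta(t):=t^{2-2/\d}\mf(\beta t^{1-2/\d})$: since $\mf$ is locally Lipschitz with $|\mf'|\le C\chi(\beta)/\beta$ by \eqref{boundfp}, a short computation gives $\|g_\beta'\|_{L^\infty([m,\Lambda])}\le C\chi(\beta)$ (and $t\mapsto t\log t$ is Lipschitz on $[m,\Lambda]$), so by the H\"older continuity of $\mu$ and $\|\mu-m_i\|_{L^\infty(Q_i)}\le\delta$ one obtains $\big|\sum_i|Q_i|g_\beta(m_i)-\int_{Q_R}g_\beta(\mu)\big|\le C\chi(\beta)\,\delta\,|Q_R|$ and the analogous bound without $\chi(\beta)$ for $t\log t$; multiplying the former by $\beta$, these are again $o((1+\beta)R^{\d})$. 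Collecting the three steps yields \eqref{619}.

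The only real difficulty is the bookkeeping in the last two steps: one must exhibit a single scale $r=r(N)$ with $\rb\ll r\ll R$ for which \emph{every} error above is $o((1+\beta)R^{\d})$ and \emph{uniformly in $\beta$}. The localization error of Step~1 decreases in $r$ while the density-freezing and Riemann-sum errors of Steps~2--3 increase in $r$ (the worst powers being $r^{1+\kappa}$ and $r^{2+2\kappa}$), so the choice is a balancing act; that it can be made rests on the standing hypotheses $R\gg\rb$ and $\|\mu\|_{C^{0,\kappa}}\le CN^{-\kappa/\d}$ together with the lower bound $\rb\gtrsim\beta^{-1/2}$ (which is what converts the $1/\beta$ loss in \eqref{logkmm} into a summable quantity). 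In the only place this proposition is used, Theorem \ref{theoldp}, one has $N^{1/\d}\gg R\gg\rb$, which leaves ample room for such a choice.
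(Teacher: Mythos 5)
Your proposal follows the paper's own proof in every essential respect: partition $Q_R$ into cells of $\mathcal{Q}_r$, compare $\log\K(Q_R,\mu)$ to $\sum_i\log\K(Q_i,\mu)$ via \eqref{subad4}, then replace $\mu$ by its cellwise average $m_i=\mu(Q_i)/|Q_i|$ via \eqref{logkmm}, then apply \eqref{limcasm} cell by cell and pass to a Riemann sum using the Lipschitz bound \eqref{boundfp}, finally balancing the localization and freezing errors by an appropriate choice of the mesoscale $r$. The only differences are cosmetic: you spell out the per-cell $C/\beta$ loss as $CR^\d/(\beta r^\d)$ and explain why $\rb\gtrsim\beta^{-1/2}$ makes this $o(R^\d)$ (the paper compresses this into the single remark that $1/\beta\ll R^\d$ when $R\gg\rb$), and you make the final Riemann-sum step explicit, including the derivative bound $\|g_\beta'\|_{L^\infty([m,\Lambda])}\le C\chi(\beta)$ that the paper leaves implicit behind ``using the Lipschitz bound on $\mf$.''
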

 \begin{proof}
For any $r \in [\rb, R]$,  we may partition $Q_R$ into cubes $Q_i$ belonging to $\mathcal Q_r$.
 In view of~\eqref{subad4} we obtain 
 $$\log \K(Q_R,\mu)= \sum_{i=1}^p\log \K(Q_i, \mu) + O\(\beta \chi(\beta) \frac{R^\d}{r} \(\rb +  \beta^{-\frac1\d}\chi(\beta)^{-\frac1\d} \log^{\frac1\d} \frac{r}{\rb} \) \).$$  Using~\eqref{logkmm}, letting $\bar \mu_i$ denote the average of $\mu$ in $Q_i$, we then obtain
 \begin{multline*}
 \log \K(Q_R, \mu) = \sum_{i=1}^p \log \K(Q_i, \bar \mu_i) +  O
 \(\beta \chi(\beta) \frac{R^\d}{r} \(\rb +  \beta^{-\frac1\d}\chi(\beta)^{-\frac1\d} \log^{\frac1\d} \frac{r}{\rb} \)   \)
 \\ + 
O\(  \beta R^{\d} \( r^{2+2\kappa} N^{-\frac{2\kappa}{\d}} 
+C   N^{-\frac{\kappa}{\d}} 
\(  \sqrt{\chi(\beta)} r^{\kappa+1}  +   \frac{r^{\kappa} N^{-\frac{\kappa}{\d}} }{\beta}\) \)+ \frac{1}{\beta}\)
\end{multline*}
For $ R \gg \rb$ we have $\frac{1}{\beta} \ll R^{\d}$ hence we check that we may choose $ \rb\ll r \ll R$ such that the right-hand side errors are $o((1+\beta)R^\d)$.
 Inserting  also~\eqref{limcasm}  and using the Lipschitz bound on $\mf$~\eqref{boundfp}, we obtain ~\eqref{619}.
\end{proof}

\section{The Large Deviations Principle: the proof of Theorem \ref{theoldp}}\label{sec9}
First we note that the assumption 
$\dist(x, \p \, \supp \, \muv) \ge C \theta^{-1/2}$ and the fact that $\mu_V\in C^{0,\kappa}$ ensure in view of \cite{ascomp} that $\mut$ is also in $C^{0, \kappa}$ in $\car_R(x)$. Translated to the blown up scale, this gives us a bound by 
$CN^{-\kappa/\d}$ for the $C^{0, \kappa}$ norm of $\mu=\mut'$  so that we may apply Proposition \ref{pro65}.
Since we assumed that $R\ll N^{\frac1\d}$ this also implies that, as $N\to \infty$,
 \be\label{mmuv0}
\|\mu-\mu_V(x_0) \|_{L^\infty(\car_{2R}(N^{1/\d}x_0))}\le o(1).\ee

We consider $P$ a probability measure on infinite point configurations, stationary, with intensity $\mu_V(x_0)$, and $B(P, \ep)$ a ball for some distance that metrizes the weak topology.
By exponential tightness (see \cite[Sec. 4]{lebles}) it suffices to prove a weak LDP, i.e. relative to balls $B(P, \ep)$.

We thus focus on  proving upper and lower bounds on $\log \mathfrak{P}_{N,\beta}^{x_0,R}(B(P,\ep)) $.
For simplicity, let us denote $\car_R$ for $\car_R(N^{1/\d}x_0)$.

\smallskip

\noindent
{\bf Step 1: reducing to good number of points and good energy.}
Since $R$ is large enough, we may include $\car_R$ in a hyperrectangle $Q_R\in \mathcal{Q}_R$ such that 
$|Q_R|-|\car_R|=O(R^{\d-1})=o(R^\d)$.

Let us denote by $n$ the number of points a configuration has in $Q_R$ and by $\mathrm{n}=\mu(Q_R)$ which is an integer. 
Since we assume $R \gg \rb \ge C\max\( \beta^{-\hal} \chi(\beta)^{\hal},1\)$, for $\sigma$ small enough we have $R^{2-3\sigma}\ge \chi(\beta)$ hence in view of the local law~\eqref{loclawpoints00} and~\eqref{fini}    we may write  that  for some $\sigma>0$
\be 
\PNbeta \( |n-\mathrm{n}|\ge R^{\d-\sigma}\) \le \exp\left( -C\beta  R^{\d+1} \right)
\ee
and 
\be \PNbeta\( \sup_x  \int_{\car_{R^{1+\sigma/\d }} } |\nab u_{\rrtt}|^2 \ge C \chi(\beta) R^{\d+\sigma} \) \le \exp\left( -\chi(\beta) \beta R^{\d+\sigma} \right)
\ee
for some $C$  large enough independent of $R$ and $\beta$.
Hence we may restrict the study to the event 
$$\mathcal B=\left\{ |n-\mathrm{n}|\le R^{\d-\sigma},\  \sup_x  \int_{\car_{R^{1+\sigma/\d}}} |\nab u_{\rrtt}|^2 \le  \chi(\beta)R^{\d+\sigma}\right\},$$ since the complement has a probability which is negligible in the speed we are interested in.

\smallskip

\noindent
{\bf Step 2: upper bound.} We recall that $i_N^{x_0,R}$ is defined in~\eqref{defii}. Using~\eqref{add3} and~\eqref{restri1} (recall $\F=\H_{\R^\d}$), we have
\begin{align*}
\lefteqn{ 
 \mathfrak{P}_{N,\beta}^{x_0,R}(B(P, \ep)\cap i_N^{x_0,R}(\mathcal B) )
 } \quad & 
\\ &
= \frac{1}{N^N \K(\R^\d)} \int_{i_N^{x_0,R}  (\XN)\in B(P, \ep) \cap \mathcal B} \exp\left( -\beta \F(\XN,\R^\d) \right)\, d\mu^{\otimes N}(\XN)
\\ & 
\leq  
\frac{1}{N^N \K(\R^\d)}  \int_{i_N^{x_0, R}  (\XN)\in B(P, \ep)  \cap \mathcal B} \exp\left(-\beta \F(X_N|_{Q_R}, Q_R)
-\beta \F(\XN|_{Q_R^c}, Q_R^c ) \right)\,  d\mu^{\otimes N}(\XN).
\end{align*}
Splitting up the events as in the proof of Proposition \ref{proloclaw} with $n$ being the number of points of the configuration which belong to $Q_R$, and using that   $i_N^{x_0,R}(\XN)$ depends only on the configuration in $\car_R$ hence in $Q_R$, we may then write 
\begin{align}
\label{fac1}
\lefteqn{
\mathfrak{P}_{N,\beta}^{x_0, R}(B(P,\ep) \cap i_N^{x_0,R}(\mathcal B))
} \quad & 
\\ & \notag
\le \frac{1}{N^N \K(\R^\d)} \sum_{n=\mathrm{n}-R^{\d-\sigma} }^{\mathrm{n}+ R^{\d-\sigma}} \frac{N!}{n! (N-n)!} 
 \int_{\mathcal B_{n} \cap  (Q_R^c)^{N-n} } \exp\left( -\beta \F (\cdot, {Q_R^c} ) \right)\, d\mu^{\otimes (N-n)}   
\\ & \qquad \notag
\times 
\int_{i_N^{x_0, R}  (X_n)\in B(P, \ep) } \exp\left( -\beta \F(X_n, Q_R) \right)\,  d\mu^{\otimes n}(X_n),
\end{align}
where $\mathcal B_{n}$ is $\mathcal B$ intersected with the event that $X_N$ has $n$ points in  $Q_R$.

On the one hand,  noting that $\H_{\R^\d}=\F$,~\eqref{eqsulk} applied with $L$ such that $ R \gg L \gg \rb$ and combined with Remark \ref{remerr} yields 
\begin{align*}
\lefteqn{
 \int_{\mathcal B_{n} \cap  (Q_R^c)^{N-n} }\exp\(-\beta \F (\cdot, {Q_R^c} ) \) d\mu^{\otimes (N-n)}   
} \qquad & 
\\ & 
\leq  \frac{(N-n)! (N-\mn)^{N-\mn}}{(N-\mn)!} \K(Q_R^c) \exp\( C (\beta \chi(\beta) +1) o(R^{\d})  \),
\end{align*}
with $C$ independent of $\beta$.

On the other hand, Proposition 2.4 in \cite{loiloc} (stated there for dimension 2 but extends with no change to general dimension) itself relying on 
\cite[Theorem 3.1]{GZ93}
states that\footnote{in fact the factor $\frac{1}{n!} $ was missing in  \cite{lebles,loiloc}} if $m = \lim_{R\to \infty}\frac{n}{R^\d}$ then 
$$\lim_{\ep\to 0} \lim_{R\to \infty} \frac{1}{R^\d } \log \frac{1}{n!} \mathcal{L}^{\otimes n}  \{ i_N^{x_0, R}(X_n) \in B(P, \ep)\}
= - (\mathsf{ent}[P|\Pi^m]-m+m\log m).$$
Therefore, we have 
$$\lim_{\ep\to 0} \lim_{R\to \infty} \frac{1}{R^\d } \log \frac{1}{n!} \mu^{\otimes n}  \{ i_N^{x_0, R}(X_n) \in B(P, \ep)\}
= - (\mathsf{ent}[P|\Pi^{m}]- m ) $$
with $m= \mu_V(x_0)$,  in view of~\eqref{mmuv0}, the fact that $\mn= \mu(Q_R)$ and $|n-\mn|=o(R^\d)$.
In what follows we continue to denote $m$ for either $\mu_V(x_0)$ or a generic point density (not to be confused with the lower bound for $\mu$ that we had been  using so far in  the paper).

Moreover, the lower semi-continuity of the energy and the characterization of $\F$ by~\eqref{mindiri}, the fact that $|Q_R|\backslash |\car_R|=o(R^\d)$  ensure,  see for instance \cite[Prop. 5.2]{ps} or proof of Proposition 4.6 in \cite{loiloc},  that  if $ i_N^{x_0, R}(X_n) \in B(P, \ep)$ then 
$$\liminf_{N\to \infty}\frac{1}{R^\d} \F(X_n, Q_R) \ge \hal\widetilde{ \mathbb{W}}^{m} (P) -o_\ep(1).$$ 
Combining these facts and inserting them into~\eqref{fac1} leads to 
\begin{multline}\label{lad2}
 \mathfrak{P}_{N,\beta}^{x_0,R}(B(P,\ep)\cap i_N^{x_0, R}(\mathcal B))\\
 \le \exp\(  - R^\d  \(\frac{\beta}{2} \widetilde{\mathbb{W}}^{m}  (P) +   \mathsf{ent}[P|\Pi^m]-m+(1+\beta) o_{\ep,N}(1)+ C \beta \chi(\beta) R^{-\sigma}     \)\)    \\ 
 \times \frac{1}{N^N \K(\R^\d)}   \sum_{n=\mathrm{n}-R^{\d-\sigma} }^{\mathrm{n}+ R^{\d-\sigma}} \frac{N!}{ (N-\mn)!} 
     (N-\mn)^{N-\mn} \K(Q_R^c)
.\end{multline}
On the other hand using~\eqref{superad2}, we have 
$$\K(\R^\d) \ge \frac{N! N^{-N}}{\mathrm{n}!(N-\mathrm{n})!  \mn^{-\mn}(N-\mn)^{-(N-\mn)}} \K(Q_R)\K (Q_R^c),$$
 and inserting this into~\eqref{lad2},
we find 
\begin{align*}
\lefteqn{
\mathfrak{P}_{N,\beta}^{x_0,R}(B(P,\ep) \cap i_N^{x_0,R}(\mathcal B)) 
} \quad & 
\\ & 
\leq
R^{\d-\sigma}  \exp\(  - R^\d  \(\frac{\beta}{2} \widetilde{\mathbb{W}}^{m}  (P) +   \mathsf{ent}[P|\Pi^m]-m+(1+\beta)o_{\ep,N}(1)\)\)     \K(Q_R)^{-1} \mn! \mn^{-\mn}
\\ & 
\leq
\exp\(   - R^\d  \(\frac{\beta}{2} \widetilde{\mathbb{W}}^{m}  (P) +   \mathsf{ent}[P|\Pi^m]-m+(1+\beta) o_{\ep, N} (1) \)\)\K(Q_R)^{-1} \exp\left( -\mn+o(\mn) \right),
\end{align*}
where we used Stirling's formula and $ R \gg \rb$.
To estimate $\K(Q_R)$ we use~\eqref{619} and the Lipschitz bound on $\mf$ to write, using again~\eqref{mmuv0}
 \begin{multline}\label{620}
 \log \K(Q_R) = - \beta |Q_R|m^{2-\frac2\d} \mf(\beta m^{1-\frac2\d})  +\(\frac{\beta}{4} \indic_{\d=2}-1\)|Q_R| m \log m +o((1+\beta)R^\d).\end{multline}
 Since $\mn= m |Q_R|+o(R^\d)$,  we obtain \begin{multline}\label{bornesupldp}
\log \mathfrak{P}_{N,\beta}^{x_0, R}(B(P,\ep))
\\ \le   - R^\d  \(\frac{\beta}{2} \widetilde{\mathbb{W}}^{m}  (P) +   \mathsf{ent}[P|\Pi^m] -  m^{2-\frac2\d} \beta  \mf(\beta m^{1-\frac{2}{\d}} )+ \(\frac\beta{4}  \indic_{\d=2} -1\) m\log m \)\\
  +(1+\beta) o_{\ep, N} (R^\d) ,
\end{multline}with $m=\mu_V(x_0)$,
which concludes the upper bound.

\smallskip

\noindent
{\bf Step 3: lower bound.}
Retranscribed in our notation,  \cite[Lemma 5.1]{loiloc} shows that given any $P $ such that 
$ \widetilde{\mathbb{W}}^{m}  (P) +   \mathsf{ent}[P|\Pi^m]$ is finite, we can construct a family $A$ of  configurations $X_{\mn}$ of $\mn$ points in $Q_R$ such that $i_N^{x_0,R}(X_{\mn}) \in B(P, \ep)$ and 
\be \G(X_{\mn},Q_R) \le R^\d \frac{\widetilde{ \mathbb{W}}^{m}(P) }{2}+o(R^\d)\ee
uniformly on $A$, and 
\be \log \(\frac{1}{\mn!}\mathcal{L}^{\otimes \mn} (A)\) \ge - R^\d\( \mathsf{ent}[P|\Pi^{m} ] -  m+ m\log m\) +o(R^\d)\ee
Applying this with $m =\mu_V(x_0)$, we may thus write with the help of~\eqref{subad1}
\begin{align*}
\lefteqn{
\mathfrak{P}_{N,\beta}^{x_0,R}(B(P,\ep))
} \quad & 
\\ & 
= \frac{1}{N^N \K(\R^\d)} \int_{i_N^{x_0,R}(\XN) \in B(P, \ep) } \exp\left( -\beta \G(\XN,\R^\d) \right)\, d\mu^{\otimes N}(X_N)
\\ & 
\ge \frac{1}{N^N \K(\R^\d) } \frac{N!}{\mn! (N-\mn)! }\int_{(Q_R^c)^{N-\mn}}  \exp\left( -\beta \G(\cdot,Q_R^c) \right) d\mu^{\otimes (N-\mn)}(X_N)
\\ & \qquad \times \int_{A } \exp\left( -\beta \G(X_\mn, Q_R) \right)\, d\mu^{\otimes {\mn}}(X_\mn)
 \\ & 
 = \frac{1}{N^N \K(\R^\d) } \frac{N!}{\mn! (N-\mn)!(N-\mn)^{-(N-\mn)} }\K(Q_R^c) \int_{A } \exp\left( -\beta \G(X_\mn, Q_R) \right)\, d\mu^{\otimes {\mn}}(X_\mn).
\end{align*}
But in view of  ~\eqref{subad3} we have 
$$\log \K(\R^\d)= \log \K (Q_R)+ \log \K(Q_R^c)+o((1+\beta \chi(\beta) ) R^{\d})$$
so we find, using also Stirling's formula, that 
\begin{multline*}
\log \mathfrak{P}_{N,\beta}^{x_0,R}(B(P,\ep))
\\
\ge  -\mn - \log  \K (Q_R) -\frac{ \beta}{2} R^\d \widetilde{\mathbb{W}}^{m}(P) +\beta o(R^\d)
-  R^\d\( \mathsf{ent}[P|\Pi^{m} ] -  m\) +o(R^\d).\end{multline*}
Inserting~\eqref{620} to estimate $\K (Q_R)$, we obtain
\begin{align}
\label{borneinfldp}
\lefteqn{
\log \mathfrak{P}_{N,\beta}^{x_0,R}(B(P,\ep))
} \quad & 
\\ & \notag
\geq 
- R^\d\(\frac{   \beta}{2} \widetilde{\mathbb{W}}^{m}(P)  + \mathsf{ent}[P|\Pi^{m} ]  -  \beta m^{2-\frac2\d} \mf(\beta m^{1-\frac{2}{\d}}) + \(\frac{\beta}{4}\indic_{\d=2}- 1\) m\log m\) 
\\ & \qquad  \notag
+(1+\beta) o_{\ep, N}(R^\d).
\end{align}
Applying this to $P$ a minimizer of $\beta  \widetilde{\mathbb{W}}^{m}(P)  + \mathsf{ent}[P|\Pi^{m} ] $ we deduce that 
$$ \inf_P \( \frac{\beta }{2}\widetilde{\mathbb{W}}^{m}(P)  + \mathsf{ent}[P|\Pi^{m} ]  \) \ge   \beta m^{2-\frac2\d} \mf(\beta m^{1-\frac{2}{\d}}) +\( 1-\frac{\beta}{4}\indic_{\d=2} \)m\log m,$$ with $m=\mu_V(x_0)$.
We may write this for any $m$,  thus deducing that 
\be\label{inff}
 \inf \mathcal{F}_\beta^1 \ge \beta  \mf(\beta).\ee

\smallskip

\noindent
{\bf Step 4: conclusion}.
By exponential tightness (see  \cite{lebles}), we then upgrade the conclusions of the previous steps to a strong LDP result: for any Borel set $E$, it holds that, as~$N\to \infty$, 
\begin{align}
\label{concldp1}
\lefteqn{
\log \mathfrak{P}_{N,\beta}^{x_0,R}(E)
} \ \ & 
\\ & \notag
\leq 
-\inf_{\bar E}  R^\d\(   \frac{\beta }{2} \widetilde{\mathbb{W}}^{m}(P)  + \mathsf{ent}[P|\Pi^{m} ]  -  \beta m^{2-\frac2\d} \mf(\beta m^{1-\frac{2}{\d}}) +\( \frac{\beta}{4}\indic_{\d=2}-1\) m\log m\) 
\\ & \notag 
\quad  +(1+\beta) o(R^\d)
\end{align}
and
\begin{align}
\label{concldp2}
\lefteqn{
\log \mathfrak{P}_{N,\beta}^{x_0,R}(E)
} \ \ & 
\\ & \notag
\geq 
-\inf_{\overset{\circ}{E}}  R^\d\(   \frac{\beta}{2} \widetilde{\mathbb{W}}^{m}(P)  + \mathsf{ent}[P|\Pi^{m} ]  -  \beta m^{2-\frac2\d} \mf(\beta m^{1-\frac{2}{\d}}) 
+ \( \frac{\beta}{4}\indic_{\d=2}-1\) m\log m\)
\\ & \notag 
\quad  +(1+\beta) o(R^\d)
\end{align}
Applying this relation to $E$ equal the whole space, we find 
$$ -  \inf \(  \frac{ \beta}{2} \widetilde{\mathbb{W}}^{m}(P)  + \mathsf{ent}[P|\Pi^{m} ]  -  \beta m^{2-\frac2\d} \mf(\beta m^{1-\frac{2}{\d}}) + \frac{\beta}{4} m\log m\indic_{\d=2} -m\log m\)\ge 0.$$
When  $m=1$ we find $\inf \mathcal{F}_\beta^1 \le \beta  \mf(\beta)$, which with~\eqref{inff} allows to prove the claim (which already follows from the result of \cite{lebles}) that $\min \mathcal{F}_\beta^1=\beta \mf(\beta)$.
With the scaling properties of $\widetilde{\mathbb{W}}^m$ and $\mathsf{ent}[ \cdot|\Pi^m]$ with respect to $m$ (see \cite{lebles}), we deduce that 
\be\inf \mathcal{F}_\beta^m= \beta m^{2-\frac2\d} \mf(\beta m^{1-\frac{2}{\d}}) +\(1-\frac{\beta}{4} \indic_{\d=2} \) m\log m.\ee
Inserting into~\eqref{concldp1} and~\eqref{concldp2}, the stated LDP result follows if $\beta$ is fixed. The generalization to $\beta \to 0$ or $\beta\to \infty$ is straightforward from  ~\eqref{bornesupldp} and~\eqref{borneinfldp}.
 This concludes the proof of Theorem \ref{theoldp}.

\section{The case of energy minimizers}
\label{secmini}

To consider energy minimizers, we  define an  analogous quantity to the partition function
\be\label{kmini}
\K^\infty(U,\mu)=\min_{X_\bN} \G(X_\bN, U, \mu),\ee with $\bN=\mu(U)$.
In view of ~\eqref{subad1} we have that if $U$ is partitioned into regions $Q_i$, 
with  $\mu(Q_i)=N_i$   integer, then
\be\label{subad6} 
\K^\infty(U,\mu) \le \sum_{i=1}^p \K^\infty(Q_i,\mu).\ee

We have easy a priori bounds: if $\bN=\mu(U)$ 
\be\label{apriorib}
 - C \bN \le  \K^\infty(U,\mu)\le C\bN,\ee with $C>0$ depending only on  $\d,m$ and $\Lambda$.
 Indeed, the lower bound follows from~\eqref{14}, while for the upper bound, we may deduce from ~\eqref{minok} applied with $\beta =1$ that there exists  at least one $X_\bN\in U^{\bN}$ such that $\G(X_\bN, U) \le C \bN$ for some $C$ large enough.

\begin{theo}
\label{th4}
\begin{enumerate}
\item  (Neumann problems in cubes)
Let $\car_R$ be a cube of size $R$ with $R^\d=\bN$ an integer.
We have 
\be\label{mn1}
\left|\frac{\K^\infty(\car_R,1)}{R^\d}-\mf(\infty)\right|\le \frac{C}{R}\ee
where $\mf(\infty)=\hal\min\widetilde{ \mathbb{W}^1}= \lim_{\beta \to \infty} \mf(\beta)$ and $C>0$ depends only on $\d$.
Moreover, if $X_\bN$ is a minimizer for $\K^\infty(\car_R,1)$, for any cube $\car_\ell(x)\subset \car_R$, we have 
\be \label{mn2}
\left|\int_{\car_\ell(x)}( \sum_{i=1}^{\bN} \delta_{x_i} - 1)\right|\le C \l^{\d-1}\ee
and  the energy is uniformly distributed in the sense that 
\be \label{mn3}\G^{\car_\ell(x)}(X_\bN, \car_R,1)= \ell^\d \mf(\infty)+ O(\ell^{\d-1}).\ee

\item
 (Minimizers of the Coulomb gas energy). 
Assume that the equilibrium measure $\mu_V$  satisfies $m\le \mu_V\le \Lambda $ on its support and $\mu_V \in C^{0,\kappa}$ on its support, for some $\kappa>0$. If $\XN$ minimizes $\HN$,  
if  $\car_R(x)$ is a cube of size $R$ centered at $x$ satisfying $$\dist(\car_R(x) , \p \,\supp\, \mu_V) \ge     C N^{ \frac{-2}{\d(\d+2)} }    $$
we have
\begin{equation}\label{loclawpointsmin}
\left| \int_{\car_R(x)}  \sum_{i=1}^N \delta_{x_i}-  N \int_{\car_R(x)} d\mu_V \right|\le C\(N^{\frac{1}{\d}} R\)^{\d-1},\end{equation}
and 
\be\label{mn4} \G^{\car_R(x)}(\XN',  \mu_V')=  \mf(\infty) \int_{\car_R(x)}(\mu_V')^{2-\frac2\d} -  \frac{1}{4} \indic_{\d=2} \int_{\car_R(x)} \mu_V' \log \mu_V'+o(R^{\d}),\ee
where  $C$ and $o$ depend only on $\d,m$ and $\Lambda$.
\end{enumerate}
\end{theo}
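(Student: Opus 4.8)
\emph{Strategy.} The whole statement is the $\beta=\infty$ shadow of Sections~\ref{sec4}--\ref{sec6}: everywhere a partition function $\K$ or $\log\K$ appeared, one now works with the minimal energy $\K^\infty$ of~\eqref{kmini}; the minimal lengthscale collapses to $\rb=C$ (since $\beta^{-1/2}=0$ and $\chi(\infty)=1$), and all surface errors become the pure term $O(R^{\d-1})$. The screening statement of Proposition~\ref{outscreen} is already deterministic and is the only screening input needed; the probabilistic Proposition~\ref{proscreen} plays no role.

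\emph{Part (1).} First, the subadditivity~\eqref{subad6} and the a~priori bound~\eqref{apriorib} show that $R\mapsto \K^\infty(\car_R,1)/R^\d$ is non-increasing along dyadic scales and bounded below, hence converges; call the limit $\mf(\infty)$. To upgrade this to the rate $C/R$ in~\eqref{mn1} one runs the deterministic analogue of the bootstrap of Section~\ref{sec5}: for a minimizer $X_\bN$ of $\K^\infty(\car_R,1)$ one first proves, by descending on scales, the equidistribution-of-energy bound $\G^{\car_\ell(x)}(X_\bN,\car_R,1)\le C\ell^\d$ for every subcube $\car_\ell(x)$ with $\ell\ge C$ --- the exact analogue of~\eqref{locallawint}, with minimality of $X_\bN$ playing the role of the lower bound on the partition function in the Gibbs setting (if the energy in a subregion were too large, screening that subregion via Proposition~\ref{outscreen} would yield a configuration of strictly smaller total energy). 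Once this rough equidistribution holds, the boundary quantities $S,S'$ entering the screening are $O(\ell^{\d-1}\tilde\ell)$ resp.\ $O(\ell^\d)$, so Proposition~\ref{outscreen} applies with $\ell=\tilde\ell=C$, giving the almost-additivity $|\K^\infty(Q,1)-\sum_i\K^\infty(Q_i,1)|\le C(\#\{Q_i\})R^{\d-1}$ over partitions of $Q\in\mathcal Q_R$ into cubes of size $R\ge\rb$; telescoping over dyadic scales exactly as in the proof of Proposition~\ref{pro62} gives~\eqref{mn1}. For the sharp local statements one combines, for each subcube $\car_\ell(x)$: the upper bound $\G^{\car_\ell(x)}(X_\bN,\car_R,1)\le\ell^\d\mf(\infty)+C\ell^{\d-1}$ obtained by \emph{local surgery} --- replacing $X_\bN$ inside $\car_\ell(x)$ together with an $O(1)$-wide collar by a screened near-minimizer, whose screened field is decoupled from the exterior, so minimality forces the claimed bound --- with the lower bound $\G^{\car_\ell(x)}(X_\bN,\car_R,1)\ge\F(X_\bN|_{\car_\ell(x)},\car_\ell(x))\ge\min_Y\F(Y,\car_\ell(x))=\ell^\d\mf(\infty)-C\ell^{\d-1}$, the last equality being the ``Dirichlet'' counterpart of~\eqref{mn1} (use that $\F$ is superadditive and compare $\min\F$ with $\K^\infty$ by screening). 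This gives~\eqref{mn3}, and feeding~\eqref{mn3} into the discrepancy estimates~\eqref{disc10}--\eqref{disc1} gives~\eqref{mn2}. Finally, to identify $\mf(\infty)=\tfrac12\min\widetilde{\mathbb W}^1=\lim_{\beta\to\infty}\mf(\beta)$ one passes to the limit $\beta\to\infty$ in Theorem~\ref{th1}: by Laplace's method $-\tfrac1\beta\log\K^\beta(\car_R,1)=\K^\infty(\car_R,1)+o_\beta(1)R^\d$, so $\mf(\infty)=\lim_\beta\mf(\beta)$, which by~\eqref{fb} and monotone convergence (the term $\tfrac1\beta\mathsf{ent}[P|\Pi^1]\to0$) equals $\tfrac12\inf_P\widetilde{\mathbb W}^1(P)=\tfrac12\min\widetilde{\mathbb W}^1$.

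\emph{Part (2).} We use the splitting formula~\eqref{splitting} taken around $N\mu_V$ instead of $N\mub$ (equivalently the original splitting of~\cite{ssgl,rs}): $\HN(\XN)=N^2\mathcal E(\mu_V)+N\sum_i\zeta(x_i)+N^{1-2/\d}\G(\XN',\mu_V')-(\tfrac N2\log N)\indic_{\d=2}$, where $\zeta:=\g*\mu_V+V-c\ge0$ vanishes on $\supp\mu_V$ and is bounded below away from it. There exist competitor configurations entirely inside $\supp\mu_V$ with next-order energy $\le\K^\infty(\,\cdot\,,\mu_V')+o(N)$ --- this is Part~(1) applied to the slowly varying background $\mu_V'$, where the hypothesis $\mu_V\in C^{0,\kappa}$ enters (giving $\|\mu_V'\|_{C^{0,\kappa}}\le CN^{-\kappa/\d}$, so the $\beta=\infty$ analogue of Proposition~\ref{pro65} applies, via the scaling relation~\eqref{limcasm} and the Lipschitz bound~\eqref{boundfp} on $\mf$). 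Hence minimality of $\XN$ forces $\sum_i\zeta(x_i)=o(N)$, so all but $o(N)$ points lie within a thin buffer of $\supp\mu_V$ and $\G(\XN',\mu_V')$ is within $o(N)$ of minimal, i.e.\ $\XN'$ is a near-minimizer of the blown-up next-order energy. Applying the Part~(1) equidistribution locally to $\XN'$ gives~\eqref{mn4}, and~\eqref{loclawpointsmin} follows from~\eqref{mn4} via~\eqref{disc1}; the distance condition $\dist(\car_R(x),\partial\,\supp\mu_V)\gtrsim N^{-2/(\d(\d+2))}$ is the macroscopic translate of the buffer $\tilde\ell\sim(N^{1/\d}R)^{\d/(\d+2)}$ from~\eqref{deftl} (with $\beta=\infty$, $\chi=1$) needed to keep $\car_R(x)$ away from $\partial\,\supp\mu_V$ and from the stray points.

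\emph{Main obstacle.} The genuinely delicate point is the \emph{sharp} surface rate $O(\ell^{\d-1})$ in~\eqref{mn1} and~\eqref{mn3}: the lower bound is soft (Dirichlet bracketing), but the matching upper bound requires the local-surgery step, which needs (i) the rough equidistribution $\G^{\car_\ell}\le C\ell^\d$ from the bootstrap of Section~\ref{sec5} so that the screening parameters can be taken $O(1)$, and (ii) the decoupling of the screened field so the modification inside $\car_\ell(x)$ does not perturb the exterior. In Part~(2), the extra difficulty is handling the $o(N)$ points a minimizer of $\HN$ may place just outside $\supp\mu_V$, which is what dictates the stated distance condition.
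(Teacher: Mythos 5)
Your proposal matches the paper's proof in all essentials: the deterministic bootstrap of Section~\ref{sec5} using minimality in place of the Jensen lower bound on $\log\K$, the screening of Proposition~\ref{outscreen} with $O(1)$ parameters once rough equidistribution holds, Dirichlet--Neumann bracketing for the local energy estimate, and the discrepancy inequalities for~\eqref{mn2} and~\eqref{loclawpointsmin}. The one small routing difference is in establishing~\eqref{mn1}: the paper does not re-telescope almost-additivity at $\beta=\infty$, but instead uses the Gibbs variational principle to identify $\K^\infty(Q_R,1)=\lim_{\beta\to\infty}\bigl(-\tfrac1\beta\log\K^\beta(Q_R,1)\bigr)$ and then passes $\beta\to\infty$ in the already-proven~\eqref{formprop}, whereas you re-run the deterministic almost-additivity and telescoping directly --- both routes are correct, and yours is self-contained at the cost of repeating the Proposition~\ref{pro62} argument.
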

\begin{remark} The explicit rate in~\eqref{mn3} is the improvement compared to \cite{rns,PRN}, in the same way~\eqref{mn4} can be improved, see \cite{s2}. As in \cite{rns}, we can also prove with the same method the same results on minimizers and the minimum of the renormalized energy ${\mathbb{W}}^1$ of \cite{rs,ps}. For instance the limit as $R \to \infty$ that defines ${\mathbb{W}}^1$ can be shown to be $\mf(\infty)$ with rate $1/R$: the upper bound is by periodization of a minimizer for $\K^\infty$  while the lower bound is obtained as in ~\eqref{minofk} to be combined with~\eqref{mn1}.\end{remark}
\begin{proof}
{\bf Step 1: bootstrap.}
Let $\mu $ satisfy $0<m\le \mu \le \Lambda $ in $\Sigma$, let $X_\bN^0$ be a minimizer of  $\G(\cdot, U)$ among configurations with $\bN$ points.
We claim that if $\car_R(x)$ satisfies the same assumptions as in Proposition \ref{proloclaw}, in particular~\eqref{conddist} with $\beta=\infty$    and if  $R$ is large enough, then
\be \label{aprouve} \G^{\car_R(x)} (X_\bN^0, U)+ C_0 \#( \{X_\bN^0\} \cap \car_R(x)) \le C R^\d \ee  for some $C$ depending only on $\d$ and $\mu$.
This is proven by a bootstrap: assume this is true for some $L$ i.e. 
assume 
\be\label{hypboot}
 \G^{\car_L(x)} (X_\bN^0, U)+ C_0 \#( \{X_\bN^0\} \cap \car_L(x)) \le \mathcal{C} L^\d ,\ee 
 we need to show it is true for $R \ge L/2$.
Let us proceed as in the proof of Proposition \ref{proloclaw}, reducing to $Q_R\in \mathcal{Q}_R$ and denoting by 
$n=  \#( \{X_\bN^0\} \cap \car_R(x))$ and $\mn= \mu(Q_R\cap U)$.
First by~\eqref{hypboot} and the choice of $C_0$, we have from
\eqref{14} and~\eqref{disc10}--\eqref{disc1} that 
\be \label{mn}  |n-\mn| \le C R^{\d-1} + C\sqrt{\mathcal{C}} R^{\d-\hal} .\ee
We then apply  Proposition \ref{outscreen} with $S(X_\bN) \le \mathcal C L^\d$, $\tilde \ell= M  L^\d R^{- \frac{\d(\d+1)}{\d+2}} $, $\ell =   R^{ \frac{\d}{\d+2}}  $ and $Z_{\mn-\N}$ minimizing $\G(\cdot, \tilde \mu, \N)$ (recall that that minimum is bounded by the order of the volume, see~\eqref{apriorib}).
We may check that as soon as  $M$ is large enough and $R$ is larger than some constant depending only on $\d$ and $M$, $\ell \le \tilde \ell \le R$ and~\eqref{constr2} is satisfied. 
 The proposition yields  in view of~\eqref{mn} and~\eqref{hypboot}
\be \label{111}\K^\infty(Q_R^c) \le \H_U(X_\bN^0|_{Q_R^c},Q_R^c)  +  C\( \frac{  \mathcal C}{M} R^\d 
+  R^{\d-1} + \sqrt{\mathcal{C}} R^{\d-\hal}  \)
.\ee 
Choosing $M$ large enough, combining ~\eqref{restri1},~\eqref{subad6},~\eqref{apriorib} and~\eqref{111}, it follows that \begin{multline*}
\G^{Q_R} (X_\bN^0, U)+ \H_U(X_\bN^0|_{Q_R^c}, Q_R^c)\le \G(X_\bN^0, U)= \K^\infty(U)  \le \K^\infty(Q_R)+ \K^\infty(Q_R^c) \\
\le \K^\infty(Q_R)+  \H_U(X_\bN^0|_{Q_R^c}, Q_R^c)   +\hal \mathcal C R^\d.
\end{multline*} 
Hence if $R $ is large enough (depending on $\mathcal{C}$), we have 
$$ \G^{Q_R} (X_\bN^0, U)\le  \K^\infty(Q_R)+ \hal \mathcal{C} R^\d.$$
In view of~\eqref{mn} we have as well  $n\le \hal \mathcal{C} R^\d$.
With~\eqref{apriorib} this concludes the proof of~\eqref{aprouve}.
\smallskip

\noindent
{\bf Step 2: local laws.}
Now that we know~\eqref{aprouve} down to scale $C$, we can use it to control $|\mn-n|$ by $ CR^{\d-1}$ with~\eqref{disc10}--\eqref{disc1} and then return to~\eqref{111} and upgrade it to have an error $R^{\d-1}$, i.e. we find 
$$  \G^{Q_R} (X_\bN^0, U) \le \K^\infty(Q_R)+ CR^{\d-1}$$
and $|n-\mn|\le C R^{\d-1}$.
By Proposition \ref{outscreen} we also have
\be \label{minofk}
\K^\infty(Q_R)\le \G^{Q_R} (X_\bN^0, U)+ CR^{\d-1},\ee  so 
\be \label{815}
\G^{Q_R} (X_\bN^0, U,\mu)= \K^\infty(Q_R,\mu)+O(R^{\d-1}),\ee
with the $O$ depending only on $\d,m$ and $\Lambda$.
\smallskip 

\noindent
{\bf Step 3:  Energy expansion.} 
We may use the well-known characterization  
$$- \frac{ \log \K^\beta(Q_R)}{\beta}= \min_{P \in \mathcal{P}(Q_R^\bN)}  \int\G(X_\bN, Q_R) dP(X_\bN) + \frac{1}{\beta} \int P(X_\bN) \log P(X_\bN)dX_\bN$$
to write that  for each fixed $\bN$
$$ \lim_{\beta \to \infty} - \frac{\log \K^\beta(Q_R)}{\beta}= \min_{X_\bN}\G(X_\bN, Q_R)= \K^\infty(Q_R).$$
We may thus  compute $\K^\infty(Q_R,1)$ via ~\eqref{formprop} and find 
$$\K^\infty(Q_R,1)= |Q_R| \lim_{\beta\to \infty} \mf(\beta)+O(R^{\d-1})$$ where the limit exists in view of the form~\eqref{fb}, and is equal to $\hal \min \widetilde {\mathbb{W}^1}$. In the case of general $\mu$ we find from~\eqref{619} that  if $\|\mu \|_{C^{0,\kappa}}\le C N^{-\frac{\kappa}{\d}}$ then
 \begin{equation}\label{kinftyvar}
  \K^\infty(Q_R, \mu) =   \mf(\infty) \int_{Q_R}\mu^{2-\frac2\d} -  \frac{1}{4} \indic_{\d=2} \int_{Q_R} \mu \log \mu  +o(R^\d)\end{equation} as $N \to \infty$.

\noindent
{\bf Step 4: conclusion}.
The relation~\eqref{mn1} has been proved.~\eqref{mn3} follows from~\eqref{815} applied with $U=Q_R$  and $\mu=1$, and~\eqref{mn2} follows from~\eqref{disc10} and~\eqref{disc1} combined with~\eqref{14}. 

We now turn to the proof of~(2). ~\eqref{mn4} is  a consequence of~\eqref{kinftyvar} and~\eqref{mn1} applied with $U=\R^\d$, $\mu= \mu_V'$, and then a blow-down,  and~\eqref{loclawpointsmin} follows from~\eqref{disc10}--\eqref{disc1} combined with~\eqref{14}.
\end{proof}

\appendix

\section{Estimates on Green functions}
\label{appendixgreen}
In this appendix we prove the following estimate on the Neumann Green functions of a domain. (It may be known but we were not able to locate it in the literature.)
\begin{prop}\label{proappgreen}
Let $U$ be a Lipschitz domain (bounded or unbounded).
Let $G_U$ be the Neumann Green function relative to $U$ with background $\mu$ ($\int_U \mu=1$) i.e. solving 
$$\left\{\begin{array}{ll}
-\Delta G_U(x,y) = \cd \( \delta_y - \mu \)& \text{in}  \ U\\
\frac{\pa G_U}{\pa \nu} = 0 & \text{on} \ \pa  U \end{array}\right.$$
 Then if 
$\int \g(x-y) d\mu(y) <\infty$, up to addition of a constant to $G_U$ we have  $\int_U G_U(x,y) dx=0$ and 
\be \label{borneGU}
\sup_{x\in U}\left|G_U(x,y)-\g(x-y) + \int_U \g(x-z) d\mu(z) \right|\le C \min \( \max \(\g(\dist(y, \pa U)),1\) , \g(x-y)\)
\ee
where $C$ depends only on $\d$ and the Lipschitz type of $U$.
\end{prop}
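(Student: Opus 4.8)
The plan is to write $G_U(\cdot,y)$ as the whole‑space Coulomb potential of the neutral distribution $\delta_y-\mu$ plus a harmonic correction, and to estimate the correction by reducing the boundary behaviour to the half‑space model and invoking boundary elliptic regularity on the Lipschitz domain. First I would set $\phi:=\g*\mu$, with $\mu$ extended by $0$ outside $U$; by the hypothesis $\int\g(x-y)\,d\mu(y)<\infty$ together with $0<m\le\mu\le\Lambda$ this is finite on all of $\R^\d$. Put $v:=\g(\cdot-y)-\phi$, so that $-\Delta v=\cd(\delta_y-\mu)$ in $\R^\d$, and $\psi_y:=G_U(\cdot,y)-v$. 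Then $\psi_y$ is harmonic in $U$, satisfies $\partial_\nu\psi_y=-\partial_\nu v$ on $\partial U$ in the weak sense, and, since $\delta_y-\mu$ is neutral, $v$ decays at infinity; fixing the free additive constant of $G_U$ as in the statement makes $\psi_y$ decay at infinity as well. Since $\psi_y(x)=G_U(x,y)-\g(x-y)+\int_U\g(x-z)\,d\mu(z)$, it remains to prove $|\psi_y|\le C\min\!\big(\max(\g(\dist(y,\partial U)),1),\ \g(\cdot-y)\big)$ on $U$.

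\textbf{Boundary data and the half‑space model.} Because $\mu\in L^1\cap L^\infty$, the Coulomb field $\nabla\phi$ is bounded on $\R^\d$, while $|\nabla\g(x-y)|\le C|x-y|^{1-\d}$ (with $|x-y|^{-1}$ when $\d=2$), so on $\partial U$ one has $|\partial_\nu v|\le C(1+|x-y|^{1-\d})$. Writing $\rho:=\dist(y,\partial U)$ and letting $\bar y\in\partial U$ realize this distance, the data is of size $\lesssim\rho^{1-\d}$ on $\partial U\cap B_{C\rho}(\bar y)$ and of size $\lesssim(2^k\rho)^{1-\d}$ on each dyadic piece $\partial U\cap\big(B_{2^{k+1}\rho}(\bar y)\setminus B_{2^k\rho}(\bar y)\big)$. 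The origin of the $\min$ is transparent when $U$ is a half‑space: there $G_U$ is given by the method of images, $G_U(x,y)=\g(x-y)+\g(x-y^*)+(\text{explicit background terms})$ with $y^*$ the reflection of $y$, so $\psi_y=\g(\cdot-y^*)+(\text{a bounded, decaying remainder})$; since $|x-y^*|\ge\max(|x-y|,\rho)$ for every $x$ in the half‑space, $\g(x-y^*)\le\min(\g(x-y),\g(\rho))$ immediately. In $\d=2$, where $\g$ changes sign and decays only logarithmically, one keeps the neutralizing background terms, and this is exactly where the finiteness of $\int\g(x-z)\,d\mu(z)$ is used.

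\textbf{General Lipschitz $U$.} For $x$ with $\dist(x,\partial U)\ge\tfrac12|x-y|$, a ball about $x$ of radius comparable to $|x-y|$ lies in $U$, and interior mean‑value and gradient estimates for the harmonic function $\psi_y$, together with the global energy bound $\int_U|\nabla\psi_y|^2\le C$ (obtained by testing the equation against $\psi_y$, or $\psi_y$ minus a large‑ball average, and using the boundary‑data bound above), give $|\psi_y(x)|\le C$, improved to $|\psi_y(x)|\le C\g(x-y)$ when $|x-y|\gtrsim 1$ via the decay of $\psi_y$ at infinity and interior Harnack. For $x$ near $\partial U$ I would localize near $\bar y$: a bi‑Lipschitz flattening of $\partial U$ in $B_c(\bar y)$ turns $-\Delta$ into a uniformly elliptic divergence‑form operator with bounded measurable coefficients, and even reflection across the flattened boundary turns the Neumann problem into an interior problem to which De Giorgi--Nash--Moser applies. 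Splitting the boundary data of $\psi_y$ into its near part, supported in $\partial U\cap B_{C\rho}(\bar y)$, and its dyadic‑far parts: the near part, after rescaling to unit scale, contributes at most $C\g(\rho)$ by boundary regularity, improved to $C\min(\g(\rho),\g(x-y))$ away from $\bar y$ by the decay just described, while the part at scale $2^k\rho$ contributes $O(2^{-k(\d-1)})$ after rescaling, and the sum over $k$ is $O(1)\le C\max(\g(\rho),1)$. Combining the two regimes yields the estimate, with constants depending only on $\d$ and the Lipschitz character of $U$; the additive constant of $G_U$ is then fixed so that $\psi_y$ decays at infinity, which is the normalization in the statement.

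\textbf{Main obstacle.} The delicate point will be the last step: carrying out the dyadic decomposition of the Neumann data while keeping explicit, quantitative track of the dependence on $\rho$ and on $x$, and invoking the correct boundary regularity theory for the Neumann problem on Lipschitz domains so that all constants depend only on $\d$ and the Lipschitz type. Dimension $\d=2$ requires care at every far‑field step, since $\g$ is sign‑changing and only logarithmically decaying, so that the neutrality of $\delta_y-\mu$ — equivalently the hypothesis $\int\g(x-z)\,d\mu(z)<\infty$ — must be used throughout rather than a naive absolute‑value bound.
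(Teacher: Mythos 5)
Your plan takes a genuinely different route from the paper's. The paper proves the bound by a parabolic (Duhamel) method: it writes $G_U(x,y)=\int_0^\infty w(t,x)\,dt$ with $w$ solving the Neumann heat equation with initial data $\cd(\delta_y-\mu)$, writes the whole-space potential as $\int_0^\infty\tilde w$ with the free Gaussian, and estimates the difference $f=w-\tilde w$ by splitting the time integral at $t_*=\min(1,\dist^2(y,\pa U))$, using an adjoint/duality argument for short times and the Gaussian upper bound for the Neumann heat kernel on inner-uniform domains (Gyrya--Saloff-Coste) for long times. All Lipschitz boundary regularity is absorbed into that single cited heat-kernel estimate. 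Your plan instead works directly with the harmonic correction $\psi_y$, flattens $\pa U$, and sums a dyadic decomposition of the Neumann data.

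However, as written your argument has genuine gaps. The claimed global energy bound $\int_U|\nabla\psi_y|^2\le C$ is false: already in the half-space with image charge $y^*$ one has $\int_U|\nabla\g(\cdot-y^*)|^2\sim \dist(y,\pa U)^{2-\d}$ for $\d\ge3$ (and $\sim|\log\dist(y,\pa U)|$ for $\d=2$), so the energy diverges as $y$ approaches $\pa U$, and the chain ``test against $\psi_y$, trace, Cauchy--Schwarz'' in fact gives $\|\nabla\psi_y\|_{L^2(U)}^2\lesssim\|\pa_\nu v\|_{L^2(\pa U)}^2\sim\dist(y,\pa U)^{1-\d}$, which is even worse. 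This undermines the interior estimates in the regime $\dist(x,\pa U)\gtrsim|x-y|$. Similarly, invoking ``interior Harnack'' to upgrade to the $C\g(x-y)$ decay is not available: $\psi_y$ is sign-changing (it must be, since $\int_U\psi_y\,d\mu=0$ after normalization), so Harnack does not apply, and the decay rate $|x-y|^{2-\d}$ is a quantitative dipole-type estimate that must come from the representation of $\psi_y$ via the neutrality of $\delta_y-\mu$, not from a soft argument. Finally, the dyadic decomposition is only sketched: the contribution of the annulus at scale $2^k\rho$ should, after the correct Neumann rescaling, carry a factor $(2^k\rho)^{2-\d}$ rather than the quoted $2^{-k(\d-1)}$, and the resulting geometric sum gives the right order only for $\d\ge3$; in $\d=2$ the sum of logarithms does not converge geometrically and the sign-changing, slowly decaying kernel forces you to keep the neutralizing background at every scale, as you yourself flag. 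So the elliptic route is conceivable, but the three steps above need a real argument (e.g., a layer-potential representation of $\psi_y$ with quantitative kernel bounds on Lipschitz domains), whereas the paper's heat-kernel method delivers both the $\g(\dist(y,\pa U))$ factor and the far-field decay from one known estimate, with none of these complications.
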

\begin{proof}
First the upper bound by $ C\g(x-y)$ is standard (one can also deduce it from integrating in time~\eqref{estpt} below), so there remains to prove the other one.
Let $\Phi_t$ denote the heat kernel in dimension $\d$
$$\Phi_t(x)=  \frac{1}{(4\pi t)^{\frac{\d}{2}} } \exp\( - \frac{|x-y|^2}{4t}\).$$ 
First we claim that  
\be \label{duha} 
G_U(x,y)=\int_0^\infty w(t,x) dt\ee
where 
$w$ solves 
$$\left\{\begin{array}{ll}
\pa_t w -\Delta w  = 0 & \text{in}  \ U\\
w(0,x)= \cd(\delta_y- \mu) & \text{in} \ U\\
\frac{\pa w}{\pa \nu} = 0 & \text{on} \ \pa  U. \end{array}\right.$$
To prove this, it suffices to write that
$$\Delta_x \int_0^\infty w = \int_0^\infty\pa_t w\, dt=-w(0,x)= -\cd( \delta_y- \mu).$$
Thus the Laplacian of both quantities in~\eqref{duha} is the same and so is their normal derivative on the boundary.  The two functions must then coincide up to a constant, which we choose to be $0$.
Let us then set 
\be \label{stu}
u(x,y)= \cd\( \g(x-y)- \int_U \g(x-z) d\mu(z) \).\ee
 Similarly as the previous claim, we may write 
$$ u(x,y)= \g * \( \delta_y - \mu\) = \int_0^\infty  \tilde w(t,x) dt$$ where 
\be\label{deftw}
\tilde w(t,x):= \cd\int_U \Phi_t(x-z)  \( \delta_y - \mu\)(z).\ee
 We thus turn to bounding 
 \be\label{tbd}
 G_U(x,y)- u(x,y)= \int_0^\infty (w(t,x)-\tilde w(t,x))\, dt.\ee
 For that we note that $w-\tilde w:= f $ solves
 \be\label{eqf}\left\{\begin{array}{ll}
\pa_t f -\Delta f  = 0 & \text{in}  \ U\\
f(0,x)= 0& \text{in} \ U\\
\frac{\pa f}{\pa \nu} = - \frac{\pa \tilde w}{\pa \nu} & \text{on} \ \pa  U \end{array}\right.\ee
  We are going to break the integral in~\eqref{tbd} into two pieces, from $0$ to $t_*:= \min (1, \dist^2(y, \pa U))$ and from $t_*$ to  $+\infty$. 
  
\smallskip

\noindent     
{\bf Step 1: the bound on $[0,t_*)$.} Let $T>0$.
We consider the solution of the adjoint equation to~\eqref{eqf}, that is,
\be\label{eqh}
\left\{\begin{array}{ll}
\pa_t h -\Delta h  = 0 & \text{in}  \ U\\
h(0,x)= f(T,x) \eta(x) & \text{in} \ U\\
\frac{\pa h}{\pa \nu} =0 & \text{on} \ \pa  U \end{array}\right.
\ee
 where $\eta$ is a  smooth cutoff function to be specified later satisfying $\int \eta=1$. We may write that 
 \be \label{eqfh}
 h(t,x)= p_t* (f(T,x) \eta)\ee
 where $p_t$ is the Neumann  heat kernel relative to $U$. As can be found in \cite{saloff}, in Lipschitz  domains, we have estimates of the form 
 \be \label{estpt} p_t(x) \le C   t^{-\frac{\d}{2}} \exp\(-C \frac{|x-y|^2}{t}\) \ee
 so that 
 \be\label{56}
 |h(T-t,x)| \le \|  f(T,\cdot) \|_{L^\infty(\supp\, \eta)} \sup_{z\in \supp \, \eta}  \exp\(- C \frac{|x-z|^2}{T-t} \) (T-t)^{-\frac{\d}{2}}.\ee
We then compute  using~\eqref{eqf} and~\eqref{eqh}
\begin{multline*}\pa_t \int_U f(t) h(T-t) = \int_U \Delta f (t) h(T-t) -\int_U f(t) \Delta h(T-t) 
= \int_{\pa U} \frac{\pa f}{\pa \nu}(t) h(T-t) \\=- \int_{\pa U} \frac{\pa \tilde w}{\pa \nu}(t) h(T-t) .  \end{multline*}
Integrating between $t=0$ and $t=T$ and then using~\eqref{56}, it follows that 
\begin{multline*}
\left|\int_U f^2(T,x)\eta(x)\right| =\left| \int_0^T  \int_{\pa U} \frac{\pa \tilde w}{\pa \nu}(t) h(T-t) \, dt\right| \\
\le  \int_0^T  \int_{\pa U}\left| \frac{\pa \tilde w}{\pa \nu}(t)\right|  \|  f(T,\cdot) \|_{L^\infty(\supp\, \eta)} \sup_{z\in \supp \, \eta}  \exp\(- C \frac{|x-z|^2}{T-t} \) (T-t)^{-\frac{\d}{2}}\,dx\, dt.
\end{multline*}
The quantity $\frac{\pa \tilde w}{\pa \nu}(t)$ can be computed explicitly from~\eqref{deftw} which yields
 $$\left\|\frac{\pa \tilde w}{\pa \nu}\right\|_{L^\infty(\pa U)}\le C t^{-\frac{\d}{2} }\frac{\dist(y, \pa U) }{t}\exp\(- \frac{\dist^2(y, \pa U) }{4t}\)  
\le C t^{-\frac{\d}{2} -\hal}\exp\(- \frac{\dist^2(y, \pa U) }{8t}\).$$
 We thus obtain 
\begin{multline*}
\left|\int_U f^2(T,x)\eta(x)\right| 
\le C \|  f(T,\cdot) \|_{L^\infty(\supp\, \eta)} \\
\times   \int_0^T     t^{-\frac{\d}{2} -\hal}\exp\(- \frac{\dist^2(y, \pa U) }{8t}\)   (T-t)^{-\frac{\d}{2}} \int_{\pa U} \sup_{z\in \supp \, \eta}  \exp\(- C \frac{|x-z|^2}{T-t} \) \,dx\, dt.
\end{multline*}
Using the change of variables $x'= x (T-t)^{-\hal}$  and then $s= \frac{\dist^2(y, \pa U)}{t}$ we obtain
\begin{multline*}
\left|\int_U f^2(T,x)\eta(x)\right| \\
\le C \|  f(T,\cdot) \|_{L^\infty(\supp\, \eta)}   \int_0^T     t^{-\frac{\d}{2} -\hal}\exp\(- \frac{\dist^2(y, \pa U) }{8t}\)  dt \ \sup_{z\in \supp\, \eta} 
\int e^{-C \left|x'-\frac{z}{\sqrt{T-t}} \right|^2} dx'
\\ \le  C  \|  f(T,\cdot) \|_{L^\infty(\supp\, \eta)}  \dist(y, \pa U)^{1-\d} \int_{ \frac{\dist^2(y, \pa U)}{T} }^\infty e^{-\frac{s}{8}} s^{\frac{\d-3}{2}}\, ds .
\end{multline*}   
We may write that for some constant $C_2 \ge C_1 >0$ 
\begin{align}
\label{145}
& 
C_1 e^{-\frac{\dist^2(y, \pa U) }{8  T}}  \(\frac{\dist^2(y, \pa U)}{T}\)^{\frac{\d-3}{2}}
\\ & \qquad \notag
\leq
\int_{ \frac{\dist^2(y, \pa U)}{T} }^{ 2\frac{\dist^2(y, \pa U)}{T} }e^{-\frac{s}{8}} s^{\frac{\d-3}{2}}\, ds \le C_2e^{-\frac{\dist^2(y, \pa U) }{8  T}}  \(\frac{\dist^2(y, \pa U)}{T}\)^{\frac{\d-3}{2}}
\end{align}
and, by an integration by parts, 
\begin{multline*}
\int_{ 2\frac{\dist^2(y, \pa U)}{T} }^\infty e^{-\frac{s}{8}} s^{\frac{\d-3}{2}}\, ds =8 e^{-  \frac{\dist^2(y, \pa U)}{4T} } \(2 \frac{\dist^2(y, \pa U)}{T}\)^{\frac{\d-3}{2}} +4(\d-3)\int_{2\frac{\dist^2(y, \pa U)}{T} }^\infty e^{-\frac{s}{8}} s^{\frac{\d-5}{2}}\, ds.
\end{multline*}

If we consider only $T \le \dist^2 (y , \pa U)$, then the last term in the right-hand side can be absorbed into the quantity of~\eqref{145}
and we conclude that 
$$
  \int_{ \frac{\dist^2(y, \pa U)}{T} }^{ \infty }e^{-\frac{s}{8}} s^{\frac{\d-3}{2}}\, ds \le C e^{-\frac{\dist^2(y, \pa U) }{8  T}}  \(\frac{\dist^2(y, \pa U)}{T}\)^{\frac{\d-3}{2}}.$$ Inserting into the above, this implies that  for $T \le t_*$, 
$$
\left|\int_U f^2(T,x)\eta(x)\right| 
\\ \le  C  \|  f(T,\cdot) \|_{L^\infty(\supp\, \eta)}  \dist(y, \pa U)^{-2} e^{-\frac{\dist^2(y, \pa U) }{8  T}}  T^{\frac{3-\d}{2}}
.
$$
Choosing $\eta $ to converge to $\delta_{x_0}$ we deduce that 
$$ |f(T,x_0) |\le C   \dist(y, \pa U)^{-2} e^{-\frac{\dist^2(y, \pa U) }{8  T}}  T^{\frac{3-\d}{2}}.$$
Since this is true for every $t \le t_*$ and every $x_0 \in U$ it follows that 
$$\int_0^{t_*} \|f(t,\cdot)\|_{L^\infty(U)} \, dt \le C\dist(y, \pa U)^{-2} \int_0^{\min(1, \dist^2(y , \pa U)) } 
e^{-\frac{\dist^2(y, \pa U) }{8  t}}  t^{\frac{3-\d}{2}}\, dt.$$
With the change of variables $s= t/\dist^2(y, \pa U)$, we are led to 
$$\int_0^{t_*} \|f(t,\cdot)\|_{L^\infty} \, dt \le  C \dist(y, \pa U)^{3-\d}  .$$
This is $\le \g(\dist(y, \pa U))$ if $\dist(y, \pa U) \le 1$. If $\dist (y, \pa U) \ge 1$ we do not perform the change of variables but instead bound the integral by 
$\int_0^1 \exp\(- \frac{1}{8t}\) t^{\frac{3-\d}{2}}\,dt\le C$ and find 
$\dist(y, \pa  U)^{-2}\le C$.
 We conclude that 
 $$\int_0^{t_*} \|f\|_{L^\infty(U)} \le C \( \max( \g(\dist(y, \pa U)  , 1) \).$$
 {\bf Step 2: Bound on $[t_*,+\infty)$.}
 We use that $\tilde w= \cd\Phi_t * (\delta_y-\mu)$ and $w=\cd p_t * (\delta_y-\mu) $ with $p_t$ the Neumann heat kernel as above that satisfies~\eqref{estpt}. 
 It follows that 
 $$\left|\int_{t_*}^1 \|\tilde w-w\|_{L^\infty(U)} \, dt\right|\le C \int_{t_*}^1 t^{-\frac{\d}{2}}\,dt \le C  \begin{cases} t_* ^{1-\frac\d2} & \text{if } \ \d\ge 3\\ -\log  t_* & \text{if} \ \d=2.\end{cases}$$
On the other hand we may write, with $u$ as in~\eqref{stu}, 
\begin{multline*}
\int_1^\infty \tilde w\, dt = \cd \int_1^\infty \int_U \Phi_t(x-z)\( \delta_y- \mu\) (z) =- \int_1^\infty  \int_{\R^\d} \Phi_t(x-z) \Delta u(z,y)\, dt\\
=- \int_1^\infty \int_{\R^\d} \Delta \Phi_t(x-z) u(z,y) \, dt= -\int_1^\infty \int_{\R^\d} \pa_t \Phi_t(x-z) u(z,y) dt \, dz\\
=\int_{\R^\d}  \Phi_1(x-z) u(z,y) \, dz 
= \frac{1}{(4\pi)^{\frac\d2}} \int_{\R^\d} \exp\(- \frac{|x-z|^2}{4}\)  u(z,y) dz\le C .\end{multline*}
In the same way, we find 
\begin{multline*}
\int_1^\infty  w\, dt =  - \int_1^\infty  \int_{U} p_t(x-z) \Delta G_U (z,y)\, dt= \int_{U}  p_1(x-z) G_U(z,y) dz\\  \le 
C  \int_{U} \exp\(- \frac{|x-z|^2}{4}\)  G_U(z,y) dz\le C,\end{multline*}
by using the bound $G_U(z,y) \le C \g(z-y)$.

Combining all these results and using the definition of $t_*$, 
  it follows that 
  $$\sup_{x\in U} |G_U(x,y)- u(x,y)|\le C ( \max (\g(\dist(y,\pa U)) ,1)) $$ from which we deduce the result.
\end{proof}

\section{Auxiliary results on the energies}\label{appendixa}

We gather in this appendix some results that are similar to \cite{ps,ls2}. The notation is as in Section \ref{sec2}.

\subsection{Monotonicity results}

We need the following result adapted from \cite{ps,ls2} which expresses a monotonicity with respect to the truncation parameter.
\begin{lem}\label{monoto}
Let $u$ solve 
\be\label{eqsu}
-\Delta u= \cd\(\sum_{i=1}^N \delta_{x_i} - \mu \) \quad \text{in} \ U,\ee  and let $u_{\vec{\alpha}}, u_{\vec{\eta}}$ be as in~\eqref{formu2}. Assume 
$\alpha_i \le \eta_i$ for each $i$.  Letting $I_N$ denote $\{i, \alpha_i\neq \eta_i\}$, assume that 
for each $i \in I_N$   we have  $B(x_i ,\eta_i) \subset U$ (or 
$\frac{\partial u}{\partial \nu}=0$ on $\pa U\cap B(x_i, \eta_i) $ and $U$ is convex).
Then 
\begin{multline}
\label{premono}
\int_U|\nab u_{\vec{\eta}}|^2 - \cd \sum_{i=1}^N \g(\eta_i) -2\cd \sum_{i=1}^N\int_U \f_{\eta_i}(x-x_i)d\mu\\- \( \int_U |\nab u_{\vec{\alpha}}|^2 - \cd \sum_{i=1}^N \g(\alpha_i) -2\cd\sum_{i=1}^N\int_U \f_{\alpha_i}(x-x_i)d\mu\) \le 0,\end{multline}
with equality if the $B(x_i,\eta_i)$'s are disjoint from all the other $B(x_j, \eta_j)$'s for each $i \in I_N$. Moreover,  if $\eta_i \ge \rrc_i$ for each $i$, and 
$\eta_i=\rrc_i=\frac14 $ if $\dist (x_i, \pa \Omega \setminus \pa U) \le \hal $,  we have
\begin{multline}\label{supplem}
\sum_{\substack{i,x_i , x_j \in \Omega, i\neq j, \\
\dist (x_i , \pa \Omega \setminus \pa U) \ge 1, 
\dist(x_j, \pa \Omega \setminus \pa U ) \ge 1}} \( \g(x_i-x_j) - \g(\eta_i)\)_+ \\ \le \G^\Omega(X_N, U) -  \frac{1}{2\cd}\( \int_\Omega |\nab u_{\vec{\eta}}|^2 -\cd \sum_{i , x_i \in \Omega} \g(\eta_i)  -2\cd\sum_{i, x_i \in \Omega }\int_U \f_{\eta_i}(x-x_i)d\mu\) .\end{multline} 
\end{lem}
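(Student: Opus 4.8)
\textbf{Proof plan for Lemma \ref{monoto}.}

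The plan is to prove the two inequalities by a computation centered on the identity for the Dirichlet energy of a difference of truncated potentials, exploiting the key fact (from \eqref{deltaf}) that $\f_{\eta_i}(\cdot-x_i)$ and $\f_{\alpha_i}(\cdot-x_i)$ solve local PDEs with right-hand side supported in $B(x_i,\eta_i)$, $B(x_i,\alpha_i)$ respectively. First I would write $u_{\vec\eta}-u_{\vec\alpha}=\sum_i (\f_{\alpha_i}-\f_{\eta_i})(\cdot-x_i)$, and expand
\[
\int_U |\nab u_{\vec\eta}|^2 - \int_U |\nab u_{\vec\alpha}|^2 = \int_U |\nab(u_{\vec\eta}-u_{\vec\alpha})|^2 + 2\int_U \nab u_{\vec\alpha}\cdot\nab(u_{\vec\eta}-u_{\vec\alpha}).
\]
The cross term is handled by integrating by parts: since $u_{\vec\alpha}$ satisfies $-\Delta u_{\vec\alpha}=\cd(\sum_i\delta_{x_i}^{(\alpha_i)}-\mu)$ and the boundary terms vanish by the hypothesis $B(x_i,\eta_i)\subset U$ (or the Neumann/convexity alternative), one gets $2\cd\sum_i\int (\f_{\alpha_i}-\f_{\eta_i})(x-x_i)\,d(\sum_j\delta_{x_j}^{(\alpha_j)}-\mu)$. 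The term $|\nab(u_{\vec\eta}-u_{\vec\alpha})|^2$ is evaluated similarly, using $-\Delta(\f_{\alpha_i}-\f_{\eta_i})=\cd(\delta_0^{(\eta_i)}-\delta_0^{(\alpha_i)})$ at each $x_i$. Collecting terms and using the elementary pointwise evaluations $\f_{\eta}(x)=\g(x)-\g(\eta)$ on $B(0,\eta)$ together with $\int \f_{\eta_i}(x-x_i)\,d\delta_{x_j}^{(\alpha_j)}$-type integrals (which are either $0$ when the balls are disjoint or computable in terms of $\g(\eta_i)$, $\g(|x_i-x_j|)$ otherwise), one arrives at
\[
\text{LHS of \eqref{premono}} = -\sum_{i}\sum_{j\ne i} \big(\g(x_i-x_j)-\g(\eta_i)\big)_+ - (\text{self-terms that vanish}),
\]
which is $\le 0$, with equality exactly when no ball $B(x_i,\eta_i)$, $i\in I_N$, meets another ball. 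This is precisely the bookkeeping already carried out in \cite{ps,ls2}, so I would cite those and reproduce only the key lines.

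For the second inequality \eqref{supplem}, the idea is to apply the monotonicity just established with $\vec\alpha=\vec{\rrc}$ (the parameters used in the definition \eqref{Glocal} of $\G^\Omega(X_N,U)$) and $\vec\eta$ the larger parameters in the statement. By the equality case analysis, the defect between $\G^\Omega(X_N,U)$ and the $\vec\eta$-truncated expression is exactly the sum of the pair terms $(\g(x_i-x_j)-\g(\eta_i))_+$ over pairs of points in $\Omega$ whose balls at radius $\rrc_i$ (resp.\ $\eta_i$) overlap. The point is that $\rrc_i$ was defined (see \eqref{rrc}) precisely as a quarter of the nearest-neighbor/boundary distance, so for $i\neq j$ with both points well inside $\Omega$ the balls $B(x_i,\rrc_i)$ are disjoint, and the only contributions to the defect come from the enlargement to radius $\eta_i$; since $\g$ is decreasing, $(\g(x_i-x_j)-\g(\eta_i))_+ \le (\g(x_i-x_j)-\g(\rrc_i))_+$ controls things in the right direction, giving the stated lower bound. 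The restriction to points with $\dist(x_i,\pa\Omega\setminus\pa U)\ge 1$ and the convention $\eta_i=\rrc_i=\tfrac14$ near that part of the boundary is exactly what makes the boundary-layer terms drop out and keeps the monotonicity applicable (no ball leaves $U$).

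The main obstacle I expect is the careful treatment of the boundary terms in the integration by parts: one must verify that on $\pa U$ the normal derivatives vanish (by the Neumann condition built into the definitions of $u$, or by the convexity/inclusion hypothesis on $B(x_i,\eta_i)$), and simultaneously that near $\pa\Omega\setminus\pa U$ — where $u_{\rrc}=0$ is imposed in some of these energies — the truncation radii are frozen at $\tfrac14$ so that no spurious contributions appear and the balls stay inside $U$. Matching the various definitions of $\rrc_i$, $\rrtt_i$, $\rrh_i$ (which coincide when $\Omega=U$) against the hypotheses is the bookkeeping-heavy part; the analytic content is entirely standard and already in \cite{ps,ls2}, so I would present the argument as an adaptation, spelling out only the identities specific to the local/mixed energies $\G^\Omega(\cdot,U)$ used here.
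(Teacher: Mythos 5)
Your plan for the first inequality \eqref{premono} is exactly the paper's approach: decompose the energy difference into a cross term plus a squared term, integrate the cross term by parts (justifying the boundary terms either by $B(x_i,\eta_i)\subset U$ or by the Neumann/convexity alternative), use the PDEs satisfied by $u_{\vec\alpha}$ and the $\f_{\alpha_i,\eta_i}(\cdot-x_i)$, and observe that the off-diagonal contributions are nonpositive while the diagonal ones cancel exactly with the $\g(\eta_i)-\g(\alpha_i)$ and background terms. No issues there.

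For the second inequality \eqref{supplem} there is a genuine gap. First, a minor point: the aside ``since $\g$ is decreasing, $(\g(x_i-x_j)-\g(\eta_i))_+ \le (\g(x_i-x_j)-\g(\rrc_i))_+$'' has the inequality reversed — $\eta_i \ge \rrc_i$ and $\g$ decreasing give $\g(\eta_i)\le\g(\rrc_i)$, hence $\ge$, not $\le$ — but that side remark is not load-bearing. The substantive problem is that applying \eqref{premono} \emph{directly} with $\vec\alpha=\vec\rrc$ and $\vec\eta$ as in the statement does not produce \eqref{supplem}. The off-diagonal terms in the monotonicity expansion are of the form $\int \f_{\alpha_i,\eta_i}(x-x_i)\,d(\delta_{x_j}^{(\alpha_j)}+\delta_{x_j}^{(\eta_j)})$; these integrate $\f_{\alpha_i,\eta_i}$ against \emph{smeared} charges on spheres of radius $\alpha_j$, $\eta_j$, not against point masses. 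With $\alpha_j=\rrc_j$ frozen, one can only extract quantities like $(\g_{\rrc_i}(|x_i-x_j|+\rrc_j)-\g(\eta_i))_+$, which are strictly \emph{smaller} than the pointwise $(\g(|x_i-x_j|)-\g(\eta_i))_+$ required in \eqref{supplem} (since $|x_i-x_j|+\rrc_j>|x_i-x_j|$ and $\g$ is decreasing, and $\g_{\rrc}$ truncates $\g$). So you get a strictly weaker bound on the wrong side.

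The paper circumvents this with a two-step argument plus a limit: (i) apply the \emph{equality} case of \eqref{premono} with $\eta_i=\rrc_i$ and arbitrary $\alpha_i\le\rrc_i$ (taking $\alpha_i=\rrc_i$ near $\pa\Omega\setminus\pa U$ so balls stay in $U$), using the built-in disjointness to conclude that $\G^\Omega(X_N,U)$ has an exact electric representation with truncation $\vec\alpha$ for any such $\vec\alpha$; (ii) apply the \emph{inequality} version with the same $\vec\alpha$ and the larger $\vec\eta\ge\vec\rrc$, keeping track of both upper and lower bounds on the off-diagonal term in terms of $\g_{\alpha_i}$, $\g_{\eta_i}$ evaluated at $|x_i-x_j|\pm\alpha_j$; (iii) send $\alpha_i\to 0$ (for points away from $\pa\Omega\setminus\pa U$) so the smeared evaluations collapse to pointwise $\g(|x_i-x_j|)$ on the left and the truncated expression collapses to $\G^\Omega(X_N,U)$ (up to the nonnegative $\sum\GG$) on the right. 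The free small parameter $\vec\alpha$ and the subsequent limit are what sharpen the smeared bound into the stated pointwise one; freezing $\alpha=\rrc$ from the start forfeits exactly that sharpening.
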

\begin{proof}
For any $\alpha \le \eta$, let us  denote $\f_{\alpha, \eta}$ for  $\f_{\alpha}-\f_{\eta}$ and  note that $\f_{\alpha,\eta}$ vanishes outside $B(0, \eta)$ and 
$$\g(\eta)-\g(\alpha) \le \f_{\alpha ,\eta}\le 0$$ 
while, in view of~\eqref{deltaf}
\begin{equation}\label{eqfae}
- \Delta \f_{\alpha, \eta}= \cd (\delta_0^{(\eta)}- \delta_0^{(\alpha)}).\end{equation}
Using the fact that from~\eqref{formu2} we have 
$$u_{\vec{\eta}}(x) - u_{\vec{\alpha}}(x) = \sum_{i\in I_N} \fae(x-x_i),$$
we may compute
\begin{align*}
T:= & \ 
\int_{U}| \nab u_{\vec{\eta}}|^2 - \int_{U}  |\nab u_{\vec{\alpha}}|^2
= 2 \int_U  (\nab u_{\vec{\eta}}- \nab u_{\vec{\alpha}} ) \cdot \nab u_{\vec{\alpha}}+ \int_U  |\nab u_{\vec{\eta}}- \nab u_{\vec{\alpha}}|^2 \\
= & \  2\sum_{i\in I_N } \int_U \nab \fae(x-x_i)  \cdot \nab u_{\vec{\alpha}}+ \sum_{i,j \in I_N} \int_U  \nab \fae(x-x_i) \cdot \nab \faej (x-x_j).
\end{align*}
If $B(x_i, \eta_i)\subset U$ the function $\fae(x-x_i) $ vanishes on $\partial U$, and we can integrate by parts without getting any boundary contribution. If not but we instead assume $\frac{\partial u}{\partial \nu}=0$ and $U$ convex, then in view of ~\eqref{formu2} and the definition of $\f_{\alpha,\eta}$,   the boundary term contributions are
$$\sum_{i\in I_N} \int_{\partial U} \fae(x-x_i) \sum_{j\in I_N}\( -\frac{\partial \f_{\alpha_j}}{\partial \nu} (x-x_j)- \frac{\partial \f_{\eta_j}}{\partial \nu}(x-x_j)\).$$ Since $\f_\alpha$ is always radial nonincreasing and since we consider $U$ which is convex, the  outer normal derivatives involved are always nonpositive, and since $\f_{\alpha, \eta}\le 0$, these boundary contributions are $\le 0$. 

With the help of ~\eqref{eqsu} and~\eqref{eqfae} we thus obtain in all cases \begin{align}
\label{ii1i2}
T 
&
\le
2 \cd\sum_{i\in I_N}\int_U  \fae(x-x_i) \Big( \sum_{j=1}^N \delta_{x_j}^{(\alpha_j)} -  d\mu\Big) 
 +\cd \sum_{i,j\in I_N } \int_U  \fae(x-x_i) \( \delta_{x_j}^{(\eta_j)}-\delta_{x_j}^{(\alpha_j)} \)
\\ & \notag
= 
\cd\sum_{i\in I_N} \int_U  \fae(x-x_i) \( \sum_{j=1}^N \delta_{x_j}^{(\alpha_j)} + \delta_{x_j}^{(\eta_j)}\)  -2 \cd\sum_{i\in I_N} \int  \fae(x-x_i)  d\mu
\\ & \notag
=
 \sum_{j=1}^N\sum_{i\in I_N, i\neq j} \cd \int_{\R^{\d}} \fae (x-x_i) d(  \delta_{x_j}^{(\alpha_j)}+ \delta_{x_j}^{(\eta_j)}) 
+ \cd\sum_{i\in I_N}  \int_{\R^{\d}} \fae(x-x_i) d( \delta_{x_i}^{(\alpha_i)} + \delta_{x_i}^{(\eta_i)} )
\\ & \notag \qquad 
- 2\cd \sum_{i\in I_N} \int_U\fae(x-x_i) d \mu.
\end{align}
Since $\f_{\alpha_i, \eta_i} \le 0$, the first term in the right-hand side is nonpositive, and is zero if the $B(x_i, \eta_i)$'s with $i\in I_N$ are disjoint from the other balls. For the diagonal terms, we note that 
  $$  \int_U  \fae(x-x_i)   \(  \delta_{x_i}^{(\alpha_i)} + \delta_{x_i}^{(\eta_i)}\)     =  - (\g(\alpha_i)-\g(\eta_i))$$ by definition of $\f_{\alpha,\eta}$ and the fact that $\delta_0^{(\alpha)}$ is a measure of mass $1$ on $\partial B(0,\alpha)$.
Since $\fae=\f_{\alpha_i}-\f_{\eta_i}$ this finishes the proof of~\eqref{premono}.
 
We may next apply this  in $\Omega$  to $u$ solution of~\eqref{defv} with $\eta_i=\rrc_i$ and $\alpha_i \le \rrc_i$  with  $\alpha_i = \rrc_i$ when $ \dist (x_i, \pa \Omega \setminus \pa U) \le 1$.
With this choice, by definition of $\rrc_i$ in \eqref{rrc} we are sure that $B(x_i, \eta_i)$ does not intersect any $B(x_j ,\eta_j)$ if $i \in I_N$ and $j \neq i$.
We are thus in the equality case and in view of the definition~\eqref{Glocal}  we find that 
\be \label{b3} \G^\Omega (X_N, U) = \frac{1}{2\cd}\(\int_{\Omega} |\nab u_\alpha|^2 -\cd \sum_{i, x_i \in \Omega} \g(\alpha_i) \)  - 
\sum_{i, x_i \in \Omega} \int_U \f_{\alpha_i} (x-x_i) d\mu(x) +\sum_{i, x_i\in \Omega} \GG(x_i) .\ee

We now define $\g_\eta =\min(\g, \g(\eta))$ and note that $\f_{\alpha, \eta}= \g_\eta-\g_\alpha$.
To prove~\eqref{supplem} we apply again  the previous result in $\Omega$ to the same $u$ with $\alpha_i$ as above, and this time $\eta_i \ge \rrc_i$ with equality if $\dist(x_i, \p \Omega \setminus \pa U) \le \hal$. We  return to the nonpositive first term in the right-hand side of~\eqref{ii1i2} and 
 bound it above and below  by 
\begin{align*}& \cd \sum_{i\neq j}\( \g_{\eta_i} (|x_i-x_j|+\alpha_j)- \g(|x_i-x_j|-\alpha_j)\)_- \le 
\sum_{i\neq j} \cd\int_{\R^{\d}} \f_{\alpha_i, \eta_i} (x-x_i) d\delta_{x_j}^{(\alpha_j)}\\ &
\leq 
\sum_{i\neq j}\cd \int_{\R^{\d}}\( \g_{\eta_i}(x-x_i)- \g_{\alpha_i}(x-x_i) \) d\delta_{x_j}^{(\alpha_j)} 
\leq \sum_{i\neq j}\cd \int_{\R^{\d}}
\( \g(\eta_i) - \g_{\alpha_i}(|x_i-x_j|+\alpha_j) \)_-,\end{align*}
where we used the fact that $\g_\alpha$ is radial decreasing.     
Combining the previous relations, we find 
\begin{align*}  
\lefteqn{
 \cd \sum_{x_i , x_j \in \Omega, i\neq j}\( \g_{\alpha_i}(|x_i-x_j|+\alpha_j) -\g(\eta_i)\)_+
 } \qquad & 
 \\ & 
\leq 
\(\int_{\Omega} |\nab u_{\vec{\alpha}}|^2  -\cd\sum_{i,x_i\in \Omega}  \g(\alpha_i)-2\cd\sum_{i, x_i \in \Omega}\int_U \f_{\alpha_i}(x-x_i)d\mu\) 
\\ & \qquad
-\( \int_{\Omega}|\nab u_{\vec{\eta}}|^2 - \cd\sum_{i, x_i \in \Omega} \g(\eta_i)   -2\cd\sum_{i, x_i \in \Omega} \int_U \f_{\eta_i}(x-x_i)d\mu\).
\end{align*}
Letting all $\alpha_i \to 0$  if $\dist (x_i, \pa \Omega\setminus \pa U) \ge 1$ we find $\G^\Omega(X_N, U)$ in the right-hand side in view of~\eqref{b3} (up to $\sum \GG(x_i)$) and $\cd \sum (\g(|x_i-x_j|)-\g(\eta_i))_+ $ in the left-hand side.  This finishes the proof.
\end{proof}

\subsection{Local energy controls}
We now show how the quantities based on $\G$ control the energy and the number of points locally. We will state all the results for $\G^\Omega$ and $\rrtt$, of course it implies them also for $\G$ and $\rrh$.

The following result shows that despite the cancellation between the two possibly very large terms $ \int_{\R^\d}|\nab u_{\vec{\eta}}|^2$ and $\cd \sum_{i=1}^N  \g(\eta_i)   $, when choosing $\eta_i=\rr_i$ we may control each of these two terms by the energy.
It is adapted from \cite[Lemma 2.7]{ls2}.

\begin{lem}
 \label{lem:contrdist1}There exist $C>0$ depending only on $\d$ and $\|\mu\|_{L^\infty}$  such that 
for any configuration $\XN$ in $U$  and $u$ corresponding via~\eqref{defv}, and for any $\Omega \subset U$,
\be\label{15}\sum_{i, x_i \in \Omega} \g(\rrtt_i)\le 2  \G^{\Omega}(X_N, U)  +   C \# (\{\XN\} \cap \Omega)   
\ee
and
\be\label{14}
 \int_{\Omega} |\nab u_{\rrtt_i}|^2 \le 4\cd   \G^{\Omega} (\XN, U) + C\#\(\{X_N \} \cap \Omega \)  \ee
with $\rrtt$ as in \eqref{rrtt} and computed with respect to $ \Omega$.
  \end{lem}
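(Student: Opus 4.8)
The proof of Lemma~\ref{lem:contrdist1} is a local truncation argument: the point is that the quantity $\G^\Omega(\XN,U)$, which involves the delicate cancellation between $\int_\Omega |\nab u_{\rrtt}|^2$ and $\cd\sum_i \g(\rrtt_i)$, nevertheless dominates (up to $C$ times the number of points) each of the two individually once we truncate at the nearest-neighbor scale $\rrtt_i$. The plan is to follow the strategy of \cite[Lemma~2.7]{ls2}: first reduce to the key inequality~\eqref{15} controlling $\sum_i \g(\rrtt_i)$, and then deduce~\eqref{14} from~\eqref{15} combined with the definition~\eqref{Glocal} and an elementary bound on the integrals $\int_U \f_{\rrtt_i}(x-x_i)\,d\mu(x)$.

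\textbf{Step 1: the lower bound $\G^\Omega \geq -C\#(\{\XN\}\cap\Omega)$, refined.} First I would recall (or reprove) that $\G^\Omega(\XN,U)\geq -C\,\#(\{\XN\}\cap\Omega)$. More precisely, starting from~\eqref{Glocal} one wants to show that the portion of the energy "near each point" is bounded below. The mechanism is: for each $i$ with $x_i\in\Omega$, the contribution of the ball $B(x_i,\rrc_i)$ involves $\frac1{2\cd}\int_{B(x_i,\rrc_i)}|\nab u_{\rrc}|^2 - \hal\g(\rrc_i) - \int_U \f_{\rrc_i}(x-x_i)\,d\mu$; using that $\f_{\rrc_i}$ is supported in $B(0,\rrc_i)$ with $\|\f_{\rrc_i}\|$ controlled (cf.~\eqref{intf}, $\int|\f_\alpha|\le C\alpha^2\le C$ since $\rrc_i\le\frac14$), and that $-\Delta u_{\rrc} = \cd(\sum_i\delta_{x_i}^{(\rrc_i)}-\mu)$ so locally $u_{\rrc}$ is a smeared Coulomb field, one shows $\frac1{2\cd}\int_{B(x_i,\rrc_i)}|\nab u_{\rrc}|^2\ge \hal\g(\rrc_i)-C$ by comparing with the radial smeared solution on $B(x_i,\rrc_i)$ (this is where the balls being disjoint by definition of $\rrc_i$/$\rrtt_i$ matters — each point "owns" its ball). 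Summing over $i$ gives $\G^\Omega(\XN,U)\ge -C\#(\{\XN\}\cap\Omega)$. The $\GG(x_i)$ terms are nonnegative and only help. This is essentially~\eqref{minog} localized; the proof in \cite{ls2} carries over with the obvious notational changes (here we use $\rrtt_i$ in place of the global $\rr_i$, but since $\rrtt_i\le\rrh_i$ with equality-of-behavior near each point this is harmless — more precisely one uses that $B(x_i,\rrtt_i)$ are disjoint and contained in $\Omega$ or handled by the $\GG$ term near $\pa\Omega\setminus\pa U$).

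\textbf{Step 2: from Step 1 to~\eqref{15}.} Rewrite~\eqref{Glocal} with truncation $\rrtt_i$ instead of $\rrc_i$ — by Lemma~\ref{monoto} (the monotonicity~\eqref{premono}, applied with $\alpha_i=\rrc_i\le\eta_i=\rrtt_i$; note $\rrtt_i\ge\rrc_i$ by comparing~\eqref{rrc} and~\eqref{rrtt}), passing to $\rrtt_i$ can only decrease the bracketed expression, so
\[
\G^\Omega(\XN,U)\ \ge\ \frac1{2\cd}\Bigl(\int_\Omega|\nab u_{\rrtt}|^2 - \cd\!\!\sum_{i,x_i\in\Omega}\!\!\g(\rrtt_i)\Bigr) - \sum_{i,x_i\in\Omega}\int_U\f_{\rrtt_i}(x-x_i)\,d\mu + \sum_{i,x_i\in\Omega}\GG(x_i).
\]
Since $\int_\Omega|\nab u_{\rrtt}|^2\ge 0$, $\GG\ge0$, and $\int_U\f_{\rrtt_i}(x-x_i)\,d\mu\le C\|\mu\|_{L^\infty}\rrtt_i^{\,2}\le C$ by~\eqref{intf} (as $\rrtt_i\le\frac14$), this yields $\G^\Omega(\XN,U)\ge -\hal\sum_{i,x_i\in\Omega}\g(\rrtt_i) - C\#(\{\XN\}\cap\Omega)$ — wait, that has the wrong sign for~\eqref{15}. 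The correct route: apply the \emph{same} identity but now keep $\int_\Omega|\nab u_{\rrtt}|^2\ge\frac1{2\cd}\sum_i\g(\rrtt_i)\cdot(\text{something})$... Actually the clean argument is Step~1 applied at scale $\rrtt$ gives a \emph{two-sided} control: the energy-per-ball estimate from Step~1 shows $\frac1{2\cd}\int_{B(x_i,\rrtt_i)}|\nab u_{\rrtt}|^2 \ge \hal\g(\rrtt_i) - C$, hence summing and using $\int_\Omega|\nab u_{\rrtt}|^2 \le 2\cd\G^\Omega(\XN,U)+\cd\sum_i\g(\rrtt_i)+C\#$ (from the identity above, rearranged, using $\f,\GG$ bounds) together with $\int_\Omega|\nab u_{\rrtt}|^2\ge \sum_i\int_{B(x_i,\rrtt_i)}|\nab u_{\rrtt}|^2\ge \cd\sum_i\g(\rrtt_i)-C\#$ gives, after cancelling $\cd\sum_i\g(\rrtt_i)$ from both sides, $\cd\sum_i\g(\rrtt_i) - C\# \le 2\cd\G^\Omega + \cd\sum_i\g(\rrtt_i) + C\#$ — still not it. The \emph{real} mechanism (as in \cite{ls2}) is that one splits $\int|\nab u_{\rrtt}|^2$ as the near-point contribution \emph{plus} a far-field remainder $\ge 0$, and the near-point contribution is $\ge \hal\g(\rrtt_i)(1+o(1))$ but the constant is strictly controlled: one gets $\frac1{2\cd}\int_{B(x_i,\rrtt_i)}|\nab u_{\rrtt}|^2 = \hal\g(\rrtt_i) + O(1)$ \emph{and} the cross/background terms contribute $O(1)$ per point, so $\G^\Omega(\XN,U) = \frac1{2\cd}(\text{far field}) + \sum_i O(1) + \sum_i\GG(x_i)$ while $\sum_i\g(\rrtt_i)$ is \emph{absorbed exactly}. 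I will follow \cite[Lemma~2.7]{ls2} verbatim here: the statement there is precisely $\sum_i\g(\rr_i)\le 2\G + C\#$, proved by this local splitting, and the adaptation to $\G^\Omega$, $\rrtt$ is purely notational since $\rrtt_i$ retains the defining property ($B(x_i,\rrtt_i)$ disjoint, inside $\Omega$ up to the boundary cells where $\rrtt_i=\frac14$ and $\GG$ compensates).

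\textbf{Step 3: from~\eqref{15} to~\eqref{14}.} Finally, from the rearranged identity
\[
\int_\Omega|\nab u_{\rrtt}|^2 = 2\cd\,\G^\Omega(\XN,U) + \cd\!\!\sum_{i,x_i\in\Omega}\!\!\g(\rrtt_i) + 2\cd\!\!\sum_{i,x_i\in\Omega}\!\!\int_U\f_{\rrtt_i}(x-x_i)\,d\mu - 2\cd\!\!\sum_{i,x_i\in\Omega}\!\!\GG(x_i),
\]
which holds exactly with truncation $\rrtt_i$ because the balls are disjoint (equality case of~\eqref{premono}, or directly from~\eqref{Glocal}–\eqref{b3}-type reasoning), I bound $\GG\ge0$ (drop it, it only helps in this direction since it has a minus sign — careful: $-2\cd\sum\GG\le0$, good), $\int_U\f_{\rrtt_i}(x-x_i)\,d\mu\le C$ by~\eqref{intf}, and $\cd\sum_i\g(\rrtt_i)\le 2\cd\G^\Omega+C\#$ by~\eqref{15}. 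This gives $\int_\Omega|\nab u_{\rrtt}|^2 \le 4\cd\G^\Omega(\XN,U) + C\#(\{\XN\}\cap\Omega)$, which is~\eqref{14}.

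\textbf{Main obstacle.} The crux is Step~1/Step~2, i.e. the local lower bound showing that the near-point contribution $\frac1{2\cd}\int_{B(x_i,\rrtt_i)}|\nab u_{\rrtt}|^2$ matches $\hal\g(\rrtt_i)$ up to an $O(1)$ error uniformly, \emph{including} the interaction with the background $\mu$ and with the smeared charges of neighboring points (which sit at distance $\ge 4\rrtt_i$ so their field is Lipschitz on $B(x_i,\rrtt_i)$ and contributes $O(\rrtt_i^{2-\d}\cdot\rrtt_i^\d)=O(1)$ to the cross term in dimension $\d$, $O(1)$ also for $\d=2$ after the same computation). Getting the constants to land exactly — so that $\g(\rrtt_i)$ is genuinely absorbed rather than leaving a diverging remainder — is the delicate point, and it is exactly what \cite[Lemma~2.7]{ls2} (and \cite[Cor.~3.4]{ps}) establish; I would invoke their computation, noting only that our $\rrtt_i$ satisfies the two properties used there (disjointness of $\{B(x_i,\rrtt_i)\}$, and either $B(x_i,\rrtt_i)\subset\Omega$ or proximity to $\pa\Omega\setminus\pa U$ handled by $\GG(x_i)\ge0$), so the argument transfers without change.
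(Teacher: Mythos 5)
Your Step~1 is built on a flawed premise. You claim $\tfrac1{2\cd}\int_{B(x_i,\rrc_i)}|\nab u_{\rrc}|^2\ge \hal\g(\rrc_i)-C$ "by comparing with the radial smeared solution," but this is false: the truncation $u_{\rrc}=u-\sum_i\f_{\rrc_i}(\cdot-x_i)$ replaces $\delta_{x_i}$ by the uniform surface measure $\delta_{x_i}^{(\rrc_i)}$, whose potential is \emph{constant} inside $B(x_i,\rrc_i)$. The singular field has been removed entirely from that ball, so $\int_{B(x_i,\rrc_i)}|\nab u_{\rrc}|^2$ receives only background and cross contributions and is typically $O(1)$ or smaller, not $\tfrac{\cd}{1}\g(\rrc_i)$. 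This is exactly why your Step~2 keeps producing the "wrong sign": the term $\g(\rrc_i)$ is not hidden inside $\int|\nab u_{\rrc}|^2$ at scale $\rrc_i$, so you cannot extract it the way you are trying to, and the several rearrangements you attempt are indeed doomed. You then acknowledge this and fall back on "I will follow \cite[Lemma~2.7]{ls2} verbatim" — but that is a pointer, not a proof, and in this lemma the adaptation (the Neumann boundary, the $\GG$ correction, the modified truncation $\rrtt$ which coincides with $1/4$ near $\pa\Omega\setminus\pa U$) is precisely the content one must verify.

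The paper's actual mechanism is different: it compares the energy at two truncation scales $\alpha_i=\rrc_i$ and $\eta_i=\tfrac14\min(1,\dist(x_i,\pa U\cap\Omega))$ by Lemma~\ref{monoto}, and the crux is that the difference $T$ of the two truncated Dirichlet-type functionals is controlled on one side by $\cd\sum_i(\g(\eta_i)-\g(5\rrtt_i))$ (through a careful bound on the cross-interaction with the nearest-neighbor's smeared charge, which sits on $\pa B(x_j,\rrc_j)\subset B(x_i,5\rrtt_i)$) and on the other side by $-2\cd(\G^\Omega-\sum\GG)-\cd\sum\g(\eta_i)$, from the definition of $\G^\Omega$. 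Combining these two bounds and using $\sum(\GG(x_i)-\g(\eta_i))\ge -C\#$ gives~\eqref{15}. There is no "per-ball lower bound" in the style you attempted. A smaller issue: in Step~3 you assert the identity relating $\int_\Omega|\nab u_{\rrtt}|^2$ to $\G^\Omega$ holds \emph{exactly} by "the equality case of~\eqref{premono}," but the balls $B(x_i,\rrtt_i)$ need not be disjoint (for points with $\dist(x_i,\pa\Omega\setminus\pa U)\in[\hal,1]$, $\rrtt_i=1/4$ regardless of nearest-neighbor distance), so the equality case does not apply. The paper instead uses the one-sided inequality~\eqref{supplem}, which suffices; your deduction of~\eqref{14} from~\eqref{15} would be correct if you replaced your "identity" with that inequality.
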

  \begin{remark} With the same proof, we can prove analogous results for  $\H_U$ and $\F$.\end{remark} 
  \begin{proof}Let us proceed as in Lemma \ref{monoto} with $\eta_i=\frac14\min (1, \dist(x_i, \pa U \cap \Omega))$
  and $\alpha_i=\rrc_i$.
We note that the assumptions of the lemma are verified  in $\Omega$  since the size of the balls intersecting $\partial \Omega$ is  not changed and $\alpha_i\le \eta_i$ for each $i$. We obtain  as in~\eqref{ii1i2} that 
\begin{multline*}
T:=\int_\Omega |\nab u_{\vec{\eta}}|^2 - \cd \sum_{i, x_i \in \Omega} \g(\eta_i) -2\cd \int_\Omega \f_{\eta_i}(x-x_i)d\mu\\- \( \int_\Omega |\nab u_{\vec{\alpha}}|^2 - \cd \sum_{i=1}^N \g(\alpha_i) -2\cd \int_\Omega \f_{\alpha_i}(x-x_i)d\mu\) 
\\ \le \cd\sum_{i, j\neq  i} \int_\Omega  \f_{\alpha_i, \eta_i}(x-x_i) \( \sum_j \delta_{x_j}^{(\alpha_j)}+\delta_{x_j}^{(\eta_j)}\).\end{multline*}
Assume  first that $x_i$ is such that  $\dist(x_i, \pa U\cap \Omega) \ge  1$ and  $\rrtt_i <1/20$.  Then $\dist(x_i,\pa \Omega \setminus \pa U) \ge 1$ and $\rrtt_i=\rrc_i=\rr_i= \frac14 \min_{j\neq i} |x_i-x_j|$, in view of the definitions of $\rrtt_i$ and $\rrc_i$. Using that $\f_{\alpha_i, \eta_i}\le 0$ we may bound 
\begin{equation*} \int  \fae(x-x_i)  \sum_{j\neq i } ( \delta_{x_j}^{(\alpha_j)} + \delta_{x_j}^{(\eta_j)}) 
\le  \int  \fae(x-x_i) \sum_{j, x_j\  \text{nearest neighbor to}\  x_i}  \delta_{x_j}^{(\alpha_j)}.
\end{equation*}
We then note that 
 $\f_{\alpha_i,\eta_i} (x-x_i)= \g_{\eta_i}(x-x_i) - \g_{\alpha_i}(x-x_i) \le \g(\eta_i) - \g_{\rr_i }(x-x_i) $ (with the notation as in the previous proof) using the definition of $\alpha_i$.
For $x_j$ nearest neighbor to $x_i$ we have $|x_i-x_j|= 4\rr_i< 1/5$,  hence also $\dist (x_j, \pa \Omega \setminus \pa U) \ge \hal$ by the triangle inequality, which implies by definition of $\rrc_j$ that $\rrc_j\le \frac14\min_{k\neq j} |x_k-x_j|\le  \rr_i <1/20$.
The support of $\delta_{x_j}^{(\alpha_j)}= \delta_{x_j}^{(\rrc_j)}$ is thus contained in $B(x_i, 5\rr_i)$, where $\g_{\rr_i}(x-x_i) \ge   \g(5\rr_i)$ by monotonicity of $\g$.
We  thus find that the right-hand side is bounded above in this case by
\begin{equation*}    \g(\eta_i)- \g( 5 \rr_i) = \g(\eta_i) - \g(5\rrtt_i).\end{equation*}  
On the other hand, if $\rrtt_i \ge 1/20$ then $5\rrtt_i \ge \eta_i$ and the same bound is true as well since the left-hand side is nonpositive.
If $\dist(x_i, \pa  U\cap \Omega) \le 1$ and $\rrtt_i= \rr_i \le 1/20 $ then the same reasoning as above applies. If on the contrary $\dist(x_i, \pa U\cap \Omega) \le 1$ and $\rrtt_i<\rr_i$, then $\rrtt_i = \frac14 \dist (x_i, \pa U\cap \Omega)=\eta_i$ and 
$\g(\eta_i)-\g(5\rrtt_i) \ge 0$, so the result holds as well.

Summing over $i$, we have thus obtained that 
$$T\le \cd   \sum_{i, x_i \in \Omega} ( \g(\eta_i)-\g(5 \rrtt_i)).$$
 
On the other hand, by definition of $T$ and choice of $\alpha_i$ and $\eta_i$,  we may also write 
\begin{multline*}
T\ge -\int_{\Omega} |\nab u_{\rrc_i}|^2 + \cd \sum_{i, x_i \in \Omega} \g(\rrc_i)
+2\cd  \sum_{i, x_i \in \Omega} \int_{\Omega} \f_{\rrc_i} (x-x_i) d\mu
\\-2\cd  \sum_{i, x_i \in \Omega} \int_{\Omega} \f_{\eta_i} (x-x_i) d\mu   -\cd  \sum_{i,x_i \in \Omega} \g(\eta_i) 
\\\ge
-2 \cd \( \G^\Omega(\XN, U)  - \sum_{i,x_i \in \Omega} \GG(x_i)  \)  - \cd\sum_{i, x_i \in \Omega} \g(\eta_i) .\end{multline*} 
Combining the two relations we deduce 
$$
\G^\Omega(\XN, U)  - \sum_{i, x_i \in \Omega} \GG(x_i) \ge - \sum_{i, x_i\in \Omega} \g(\eta_i)+ \hal \sum_{i, x_i \in \Omega} \g( 5\rrtt_i)
$$ By definition of $\GG$  \eqref{defGG} and choice of $\eta_i$ we have 
 $\sum_{i, x_i \in \Omega} \GG(x_i)- \g(\eta_i)\ge - C\#( \{ X_N\} \cap \Omega)$ and 
we deduce $$\sum_{i , x_i \in \Omega} \g(\rrtt_i)\le 2\G^\Omega(\XN, U) + C \#( \{ X_N\} \cap \Omega)$$ which proves \eqref{15}. 
In addition, applying \eqref{supplem} (with simply zero left-hand side) with $\eta_i = \rrtt_i$ we have 
$$\G^{\Omega} (X_N, U) \ge \frac{1}{2\cd}\( \int_{\Omega} |\nab u_{\rrtt_i}|^2 - \cd \sum_{i, x_i \in \Omega} \g(\rrtt_i)
- 2\cd \sum_{i, x_i \in \Omega} \int_{\Omega} \f_{\rrtt_i} (x-x_i) d\mu\)$$
hence~\eqref{14} follows after rearranging terms and using \eqref{intf}.
\end{proof}

We turn to showing how the energy  controls the  fluctuations.  The next lemma is adapted from previous results, such as \cite{rs}.
The first result~\eqref{disc1} allows to treat the case of an excess of points and control it using only the energy outside the set, while
 ~\eqref{disc10} allows to treat the case of a deficit of points and control it using only the energy inside the  set. The last two results provide improvements when considering balls  and using the energy in a larger set.
 
\begin{lem}[Control of charge discrepancies]\label{coronp}
 Let $\XN$ be a configuration in $U$, let $u$ be associated via~\eqref{defv}, and let $\Omega $ be a set of finite perimeter included in $U$.
 We have 
 \begin{equation}
\label{disc10}
\left| \min\( \int_{\Omega} \sum_{i=1}^N \delta_{x_i}-  \int_{\Omega} d\mu, 0\)\right| \le C\|\mu\|_{L^\infty}  |\partial \Omega|+ C|\partial \Omega|^{\hal} 
 \|\nab u_{\rrtt}\|_{L^2 (\{x\in \Omega, \dist (x, \partial \Omega) \le 1\})},\ee with $\rrtt$ computed with respect to any set containing $\Omega$, 
and if in addition $\Omega $ is at distance $\ge 1$ from $\partial U$, 
 \be \label{disc1}
 \max \(  \int_{\Omega} \sum_{i=1}^N \delta_{x_i}-  \int_{\Omega} d\mu, 0\) \le C\|\mu\|_{L^\infty}  |\partial \Omega|+ C|\partial \Omega|^{\hal} 
 \|\nab u_{\rrtt}\|_{L^2 (\{x\notin \Omega, \dist (x, \partial \Omega) \le 1\})},\ee 
 where  $C$ depends only on $\d$.

 Let $B_R\subset U$ be a ball of sidelength $R>2$ and let 
 \begin{equation*}
 D= \int_{B_R} \sum_{i=1}^N \delta_{x_i}-  \int_{B_R} d\mu.
 \end{equation*}
 If $D\le 0$ then 
 \be\label{disc30}
\frac{D^2}{R^{\d-2}}\left|\min \(1, \frac{D}{\|\mu\|_{L^\infty(B_R)}R^\d}\) \right|\le C \int_{B_{R}}|\nab u_{\rrtt}|^2, \ee
and if $D\ge 0$  and $B_{2R}\subset U$ 
\be\label{disc3}
\frac{D^2}{R^{\d-2}}\min \(1, \frac{D}{\|\mu\|_{L^\infty(B_{2R}) } R^\d}\) \le C \int_{B_{2R}}|\nab u_{\rrtt}|^2 \ee
where $C$ depends only on $\d$.

\end{lem}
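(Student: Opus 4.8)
The basic idea is the standard one: the discrepancy $D = \int_\Omega (\sum_i \delta_{x_i} - \mu)$ is, up to the constant $\cd$, the integral of $-\Delta u$ over $\Omega$, so by the divergence theorem it equals a flux integral of $\nabla u$ across $\partial \Omega$. The complication is that $u$ is singular at the points, so one must work with the truncated field $\nabla u_{\rrtt}$ and carefully account for the points whose smearing radius $\rrtt_i$ straddles $\partial \Omega$. The proof of \eqref{disc10} and \eqref{disc1} proceeds as follows. Replace $u$ by $u_{\rrtt}$; then $-\Delta u_{\rrtt} = \cd(\sum_i \delta_{x_i}^{(\rrtt_i)} - \mu)$ is a bounded measure (smeared Diracs plus the bounded background). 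For a point $x_i$ with $B(x_i,\rrtt_i) \subset \Omega$, its smeared mass counts fully toward $D$; for $x_i$ at distance $>1$ from $\partial\Omega$ with $x_i \notin \Omega$ it contributes nothing; the only delicate contributions come from points within distance $1$ of $\partial \Omega$, and there the number of such points can be controlled crudely by $C|\partial\Omega|$ up to an energy term (using \eqref{15} or \eqref{14} on a collar of $\partial\Omega$). More efficiently, one integrates $-\Delta u_{\rrtt}$ against a cutoff $\chi$ that is $1$ on $\Omega_{-1} := \{x \in \Omega : \dist(x,\partial\Omega) \ge 1\}$, supported in a unit neighborhood of $\Omega$, with $|\nabla \chi| \le C$; then
$$
\int \chi \, d\Big(\sum_i \delta_{x_i}^{(\rrtt_i)} - \mu\Big) = \frac{1}{\cd}\int \nabla u_{\rrtt} \cdot \nabla \chi,
$$
whose right-hand side is bounded by $C|\partial\Omega|^{1/2} \|\nabla u_{\rrtt}\|_{L^2(\mathrm{collar})}$ by Cauchy–Schwarz, the collar having volume $\le C|\partial\Omega|$. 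The difference between $\int \chi\, d(\sum\delta^{(\rrtt_i)}-\mu)$ and the true $D$ is then an $O(\|\mu\|_{L^\infty}|\partial\Omega|)$ error plus a correction from the points near $\partial\Omega$ whose smeared masses are partially weighted by $\chi$. For the \emph{deficit} bound \eqref{disc10}, one uses a cutoff supported \emph{inside} $\Omega$ (so only the inner collar's energy enters) and notes that smeared masses can only be lost, not gained, giving the one-sided inequality; for the \emph{excess} bound \eqref{disc1}, one uses a cutoff supported in $\Omega$ together with a unit outer collar, so that all points contributing to an excess are captured, and the outer-collar energy appears.

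For the ball estimates \eqref{disc30} and \eqref{disc3}, the improvement over the generic $|\partial\Omega|^{1/2}$ scaling comes from optimizing the radius of the annulus across which one takes the flux. Suppose $D \le 0$ (deficit of points in $B_R$). For each $r \in (R/2, R)$ one has, testing $-\Delta u_{\rrtt}$ against the indicator of $B_r$ smeared over scale $1$, that
$$
|D| \le C\Big(\int_{B_r}(\mu - \textstyle\sum_i\delta_{x_i}^{(\rrtt_i)})\Big) + \text{(points near } \partial B_r) \le C\, r^{\d/2-1} \Big(\int_{B_r \setminus B_{r-1}} |\nabla u_{\rrtt}|^2\Big)^{1/2} + C\|\mu\|_{L^\infty} r^{\d-1},
$$
using $|\partial B_r| \le C r^{\d-1}$. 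When $|D|$ dominates the background term $\|\mu\|_{L^\infty}R^\d$ is not the right comparison — rather, the deficit is genuinely visible only if $|D| \gtrsim R^{\d-1}$, in which case the energy term must absorb it: $|D| \le C r^{\d/2-1}\|\nabla u_{\rrtt}\|_{L^2(B_r\setminus B_{r-1})}$. Averaging this over $r \in (R/2,R)$ (a standard mean-value / Chebyshev selection of a good radius, which is why one needs $R > 2$ so the annulus has unit width inside the ball) and squaring gives $D^2 \le C R^{\d-2}\int_{B_R}|\nabla u_{\rrtt}|^2$ in the regime $|D| \ge \|\mu\|_{L^\infty}R^\d$, i.e. $\min(1, |D|/(\|\mu\|_{L^\infty}R^\d)) = 1$. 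In the complementary regime $|D| < \|\mu\|_{L^\infty}R^\d$, one iterates: on a sub-ball of radius $\sim R (|D|/(\|\mu\|_{L^\infty}R^\d))^{1/\d}$ the deficit is comparable to the background, so one is back in the first regime at that smaller scale, which produces the extra factor $|D|/(\|\mu\|_{L^\infty}R^\d)$ in \eqref{disc30}. The excess case \eqref{disc3} is identical but uses an annulus \emph{outside} $B_R$, hence the hypothesis $B_{2R} \subset U$ and the energy on $B_{2R}$.

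The main obstacle is bookkeeping the contributions of points whose truncation balls $B(x_i,\rrtt_i)$ straddle the relevant boundary $\partial\Omega$ or $\partial B_r$: one must show these contribute only at the order of the already-present error terms. This is handled by the definition of $\rrtt$ in \eqref{rrtt} — near $\partial\Omega \setminus \partial U$ the radius is pinned to $\frac14$, so such straddling points are at bounded density and their partial smeared masses sum to $O(|\partial\Omega|)$ — together with the monotonicity/lower-bound machinery of Lemma \ref{lem:contrdist1}, which lets one trade any leftover singular contribution for $\int_{\mathrm{collar}}|\nabla u_{\rrtt}|^2$ plus $C \cdot \#\{\text{points in collar}\}$, the latter being $O(|\partial\Omega|)$ after another application of \eqref{14}. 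The argument is otherwise routine and parallels \cite{rs}; I would present \eqref{disc10}–\eqref{disc1} in detail and indicate the radius-optimization for \eqref{disc30}–\eqref{disc3} more briefly.
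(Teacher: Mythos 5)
For \eqref{disc10}--\eqref{disc1}, your approach is essentially the same as the paper's: integrate the truncated equation against a cutoff and apply Cauchy--Schwarz. The paper handles the boundary bookkeeping you worry about more cleanly: since $\rrtt_i \le \tfrac14$, one has the pointwise sandwiching
\begin{equation*}
\int \xi \sum_{i} \delta_{x_i}^{(\rrtt_i)} \;\le\; \int_{\Omega} \sum_{i} \delta_{x_i} \;\le\; \int \chi \sum_{i} \delta_{x_i}^{(\rrtt_i)},
\end{equation*}
where $\xi$ (resp.\ $\chi$) is a cutoff equal to $1$ on $\{x\in\Omega : \dist(x,\partial\Omega) \ge 1\}$ (resp.\ on a $\tfrac12$-neighborhood of $\Omega$) and vanishing outside $\Omega$ (resp.\ at distance $\ge 1$ from $\Omega$). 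This bounds the true point count by the smeared count from above and below at once, so no separate accounting of straddling balls is needed. Your version works but is less economical.

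For \eqref{disc30}--\eqref{disc3}, your method is genuinely different from the paper's, and the iteration step has a gap. You claim that on a sub-ball of radius $r^* \sim R\lambda^{1/\d}$ with $\lambda = |D|/(\|\mu\|_{L^\infty}R^\d) < 1$, ``the deficit is comparable to the background, so one is back in the first regime.'' This is not justified: for $D \le 0$, the deficit on $B_{r^*}$ satisfies only
\begin{equation*}
D_{r^*} \;\le\; D + \int_{B_R\setminus B_{r^*}} d\mu \;\le\; D + \|\mu\|_{L^\infty}\,\omega_\d\,(R^\d - (r^*)^\d) \;=\; (1-2\lambda)\,\|\mu\|_{L^\infty}\,\omega_\d\, R^\d + O(1),
\end{equation*}
which for $\lambda < \tfrac12$ is positive --- the deficit may vanish or even become an excess on $B_{r^*}$, so the first-regime bound need not apply there. (Relatedly, the exponent $r^{\d/2-1}$ in your annular flux inequality corresponds to smearing the indicator over a scale $\sim r$, not a unit scale as you state; with a unit collar the exponent is $r^{(\d-1)/2}$.) The paper avoids this by \emph{not} passing to sub-balls: it fixes the radius window $t\in[T,R-\eta]$ (resp.\ $[R+\eta,T]$ for $D\ge 0$) of width $\sim R\lambda$ over which the flux $-\int_{\partial B_t}\partial_\nu u_{\rrtt}$ is guaranteed to stay $\gtrsim |D|$, precisely because the background can only erode the discrepancy by $O(\|\mu\|_{L^\infty}(R^\d - t^\d))$ as $t$ moves away from $R$. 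Integrating the squared flux over that window via Cauchy--Schwarz,
\begin{equation*}
\int_{B_{2R}}|\nab u_{\rrtt}|^2 \;\ge\; \int_{\text{window}} \frac{1}{|\partial B_t|}\Big(\int_{\partial B_t}\frac{\partial u_{\rrtt}}{\partial\nu}\Big)^2 dt \;\gtrsim\; D^2\big(\g(R+\eta)-\g(T)\big),
\end{equation*}
yields \eqref{disc3}, and the $\min$ in the statement is exactly the encoding of the window width. You should replace the iteration with this single-scale annular argument.
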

\begin{proof}
Let $\chi$ be a  smooth nonnegative function equal to $1$ at distance $\le \hal$ from  $\Omega$ and vanishing at distance $\ge 1$ from $\Omega $ outside that set.  Let $\xi$ be a smooth nonnegative function equal to $1$ for points  in $\Omega $ at distance $\ge 1$ from $\partial \Omega$, and vanishing outside $\Omega$. Their gradient can be bounded by $C$ and $\|\nab \chi\|_{L^2}$ and $\|\nab\xi\|_{L^2}$ can be bounded by $ C |\partial \Omega|^{\hal}$.
Since $\rrtt_i \le \frac14$ for each $i$, we have
\be\label{clear1}\int \xi \sum_{i=1}^N \delta_{x_i}^{(\rrtt_i)}  \le \int_{\Omega} \sum_{i=1}^N \delta_{x_i}\le \int \chi \sum_{i=1}^N \delta_{x_i}^{(\rrtt_i)}.\ee
Using~\eqref{defv}, integrating by parts, using the fact that $\partial_{\nu} u_{\rrtt}=0$ on $\partial U$ and the Cauchy-Schwarz inequality,   we find
\begin{equation*}\left|\int_\Omega \chi\, d \( \sum_{i=1}^N \delta_{x_i}^{(\rrtt_i)}-  \mu\)\right|
\le \frac{1}{\cd}\|\nab \chi \|_{L^2}   \|\nab u_{\rrtt} \|_{L^2(\supp \nab \chi )}\le C   |\partial \Omega|^{\hal}  \|\nab u_{\rrtt} \|_{L^2(\supp \nab \chi )}\end{equation*}
and the same for $\xi$. 
Meanwhile
$$\left|\int_{\R^\d} (\indic_{\Omega}- \chi) d \mu\right|\le C|\partial \Omega|\|\mu\|_{L^\infty}$$ and the same for $\xi$.
Let us now first assume that  $\int_{\Omega} \sum_{i=1}^N \delta_{x_i}- \int_{\Omega} d\mu  \ge 0$.
Then in view of~\eqref{clear1}  and the above, we have 
\begin{multline*}0 \le \int_{\Omega} \sum_{i=1}^N \delta_{x_i}- \int_{\Omega} d\mu\le \int_\Omega \chi \( \sum_{i=1}^N \delta_{x_i}^{(\rrtt_i)}  - d\mu\) + O(|\partial \Omega| \|\mu\|_{L^\infty})\\
\le  C |\partial \Omega|^{\hal}  \|\nab u_{\rrtt} \|_{L^2(\supp \nab \chi )}+ C|\partial \Omega| \|\mu\|_{L^\infty}.\end{multline*}
In all cases, the result~\eqref{disc1} follows.  The proof of~\eqref{disc10} is similar.
 
 Let us now turn to~\eqref{disc30} and~\eqref{disc3}, following \cite[Lemma 4.6]{rs}.
We first consider the case that $D>0$ and  note that if 
\begin{equation}\label{defLe}
R+\eta\le t \le T:= \min \left(  2R, \Big((R+\eta)^\d + \frac{D}{C\|\mu\|_{L^\infty(B_{2R}) }}\Big)^{\frac{1}{\d}}\right)\end{equation}
with $C$ well-chosen,  we have 
\begin{eqnarray*}
 -\int_{\pa  B_t  } \frac{\partial u_{\rrtt} }{\pa \nu} &=&-\int_{ B_t}\Delta u_{\rrtt}  = \cd \int_{B_t}\Big(\sum_{i=1}^N\delta_{x_i}^{(\rrtt_i)} - d\mu\Big)\\
 &\ge &\cd\left(D  -   \int_{B_t\backslash B_R }d \mu\right)\ge\cd D- C\|\mu\|_{L^\infty} \left( t^{\d}-R^{\d}\right)\ge \frac{\cd}{2} D   ,
\end{eqnarray*}  if we choose the same $C$ in~\eqref{defLe} depending only on $\d$. 
By the Cauchy-Schwarz inequality, the previous estimate, and explicit integration, there holds 
\begin{eqnarray*}
 \int_{B_{2R}}|\nab u_{\rrtt}|^2&\ge&\int_{R+\eta}^{T}\frac{1}{|\pa B_t|}\left(\int_{\pa  B_t} \frac{\pa u_{\rrtt}}{\pa \nu}\right)^2dt\\
 &\ge &C D^2 \int_{R+\eta}^{T}t^{-(\d-1)}  \, dt=  CD^2 \left(\g(R+\eta)- \g(T)\right) ,  \end{eqnarray*}
 with $C$ depending only on $\d$.
 Inserting the definition of $T$ and rearranging terms, one easily checks that we obtain~\eqref{disc3}.
There remains to treat the case where $D\le 0$.  This time, we let 
$$T \le  t \le R-\eta, \qquad    T:= \Big( (R-\eta)^\d - \frac{D}{C\|\mu\|_{L^\infty(B_R)}}\Big)^{\frac1\d}$$
and if $C$ is  well-chosen  we have \begin{eqnarray*}
 -\int_{\pa  B_t}\frac{\pa u_{\rrtt} } {\pa \nu} &=&-\int_{ B_t}\Delta u_{\rrtt} = \cd \int_{B_t}\Big(\sum_{i=1}^N \delta_{x_i}^{(\rrtt_i)} - d \mu\Big)\\
 &\le &\cd\left(D  +   \int_{B_R\backslash B_r }d \mu\right)\le \frac{\cd}{2} D   ,
\end{eqnarray*}and the rest of the proof is analogous, integrating from $T$ to $R-\eta$. 
\end{proof}

The next lemma is similar to \cite[Prop. 2.5]{ls2}.

\begin{lem}\label{prop:fluctenergy}
Let  $\varphi$ be a  Lipschitz function in $U$ with bounded support.  Let $\Omega$ be an open set   with finite perimeter containing a $1$-neighborhood of the support of $\varphi$ in $U$.  For any configuration $\XN$ in $U$, letting $u$ be defined as in~\eqref{defv} (resp.~$v$ as in~\eqref{defu}), we have
\begin{equation} 
\label{fluctuationsbu}
\left|\int_{\R^\d} \varphi \, \left(\sum_{i=1}^N \delta_{x_i} -  d\mu \right) \right|
\leq
C \|\nab\varphi\|_{L^\infty(\Omega)} \(  ( |\p \Omega|^{\hal} + |\Omega|^{\hal}) \|\nab u_{\rrtt} \|_{L^2(\Omega )}+   |\Omega|  \|\mu\|_{L^\infty(\Omega)}\right),
\end{equation}  (and resp.~the same with $v_{\rrtt}$ in place of~$u_{\rrtt}$),
where $C$ depends only on $\d$ and $\rrtt$ is computed with respect to any set containing $\Omega$.
\end{lem}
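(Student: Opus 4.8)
The statement \eqref{fluctuationsbu} bounds a linear statistic $\int \varphi\,(\sum_i\delta_{x_i}-d\mu)$ by the local Dirichlet energy of the truncated potential plus lower-order terms. The strategy is the standard one: replace each Dirac mass $\delta_{x_i}$ by its smeared version $\delta_{x_i}^{(\rrtt_i)}$ (the error from this substitution is controlled via the Lipschitz character of $\varphi$ and the smallness of the truncation radii $\rrtt_i\le \tfrac14$), then integrate by parts against the equation $-\Delta u = \cd(\sum_i\delta_{x_i}^{(\rrtt_i)}-\mu)$ satisfied by $u_{\rrtt}$ in $\Omega$ (recall $-\Delta u_{\rrtt}=\cd(\sum_i\delta_{x_i}^{(\rrtt_i)}-\mu)$), and estimate the resulting integral $\int \nabla\varphi\cdot\nabla u_{\rrtt}$ by Cauchy--Schwarz.

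\textbf{First step.} First I would write
\[
\int_{\R^\d}\varphi\,\Bigl(\sum_{i=1}^N\delta_{x_i}-d\mu\Bigr)
=\int_{\R^\d}\varphi\,\Bigl(\sum_{i=1}^N\delta_{x_i}^{(\rrtt_i)}-d\mu\Bigr)
+\sum_{i=1}^N\Bigl(\varphi(x_i)-\int\varphi\,d\delta_{x_i}^{(\rrtt_i)}\Bigr).
\]
The last sum is bounded by $\|\nabla\varphi\|_{L^\infty(\Omega)}\sum_i\rrtt_i$; since each $x_i$ contributing lies in $\Omega$ and the balls $B(x_i,\rrtt_i)$ are essentially disjoint (that is what the definition \eqref{rrtt} of $\rrtt_i$ as a quarter of the nearest-neighbour distance buys us), the number of points in $\Omega$ is at most $C|\Omega|$ up to the background mass, and each $\rrtt_i\le\tfrac14$, so this term is absorbed into $C\|\nabla\varphi\|_{L^\infty(\Omega)}|\Omega|\,(1+\|\mu\|_{L^\infty(\Omega)})$ — actually a cleaner route is to note $\sum_i \rrtt_i \le C \sum_i \rrtt_i^\d \cdot \rrtt_i^{1-\d} $ is awkward, so instead one simply bounds the smearing error by noting $\sum_i\rrtt_i\le \#(\{X_N\}\cap\Omega)$ and then controls $\#(\{X_N\}\cap\Omega)$ through $\int_\Omega d\mu$ plus a discrepancy term handled by Lemma~\ref{coronp}; this is where $|\partial\Omega|^{\hal}\|\nabla u_{\rrtt}\|_{L^2(\Omega)}$ and $|\Omega|\|\mu\|_{L^\infty(\Omega)}$ enter. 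For the main term, since $\varphi$ is supported well inside $\Omega$ (a $1$-neighbourhood of $\supp\varphi$ lies in $\Omega$), all the smeared masses $\delta_{x_i}^{(\rrtt_i)}$ that see $\varphi$ are supported in $\Omega$, and I extend $\varphi$ by zero; then
\[
\int_{\R^\d}\varphi\,\Bigl(\sum_{i=1}^N\delta_{x_i}^{(\rrtt_i)}-d\mu\Bigr)
=\frac1{\cd}\int_\Omega\varphi\,(-\Delta u_{\rrtt})
=\frac1{\cd}\int_\Omega\nabla\varphi\cdot\nabla u_{\rrtt},
\]
the boundary terms vanishing because $\varphi$ and its gradient vanish near $\partial\Omega\setminus\partial U$ and $\partial_\nu u_{\rrtt}=0$ on $\partial U$. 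Cauchy--Schwarz gives $\le\tfrac1{\cd}\|\nabla\varphi\|_{L^\infty(\Omega)}|\supp\nabla\varphi|^{\hal}\|\nabla u_{\rrtt}\|_{L^2(\Omega)}\le C\|\nabla\varphi\|_{L^\infty(\Omega)}(|\partial\Omega|^{\hal}+|\Omega|^{\hal})\|\nabla u_{\rrtt}\|_{L^2(\Omega)}$, and collecting terms yields \eqref{fluctuationsbu}. The Dirichlet case with $v_{\rrtt}$ is verbatim identical, using \eqref{defu} and the homogeneous Dirichlet condition (which kills the boundary term even more directly).

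\textbf{Main obstacle.} The only genuinely delicate point is the bookkeeping for the smearing error $\sum_i\varphi(x_i)-\int\varphi\,d\delta_{x_i}^{(\rrtt_i)}$: one must bound the number of points $x_i$ inside the support region, and then make sure this count is controlled by quantities already appearing on the right-hand side of \eqref{fluctuationsbu} (namely $|\Omega|\|\mu\|_{L^\infty(\Omega)}$ for the background and $|\partial\Omega|^{\hal}\|\nabla u_{\rrtt}\|_{L^2(\Omega)}$ for the excess discrepancy via Lemma~\ref{coronp}). This is routine given the tools already assembled — Lemma~\ref{lem:contrdist1} for energy control and Lemma~\ref{coronp} for discrepancies — so there is no real analytic difficulty; the work is purely in assembling the inequalities so that the final constant $C$ depends only on $\d$. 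I would also take care that $\rrtt$ is, as stated, computed with respect to an ambient set containing $\Omega$, so that all the truncation-monotonicity facts from Appendix~\ref{appendixa} are applicable without change.
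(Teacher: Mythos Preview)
Your proposal is correct and follows essentially the same approach as the paper: split into the smeared contribution (handled by integrating the truncated equation against $\varphi$ and applying Cauchy--Schwarz, which yields the $|\Omega|^{\hal}\|\nabla u_{\rrtt}\|_{L^2(\Omega)}$ term) plus the smearing error (bounded by $\|\nabla\varphi\|_{L^\infty}$ times the number of relevant points, which is then controlled by $\int_\Omega d\mu$ plus a discrepancy of size $|\partial\Omega|^{\hal}\|\nabla u_{\rrtt}\|_{L^2(\Omega)}$). The only cosmetic difference is that the paper bounds the point count by testing the equation against an explicit cutoff $\chi$ with $\int|\nabla\chi|^2\le C|\partial\Omega|$, whereas you invoke Lemma~\ref{coronp}; these are the same computation.
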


\begin{proof}

We may find $\chi$  a smooth cutoff function equal to $1$ in a $1/2$-neighborhood of the support of $\varphi$  and equal to $0$ outside $\Omega $, such that $\int |\nab \chi|^2 \le C |\partial \Omega|  $. Integrating~\eqref{defv} against $\chi$ we thus get 
$$\left|\int \chi  \Big( \sum_{i=1}^N \delta_{x_i}^{(\rrtt_i)} - d\mu\Big) \right|\le  \frac{1}{\cd}\|\nab \chi\|_{L^2}\|\nab u_{\rrtt}\|_{L^2(\Omega)}\le C |\partial \Omega|^{\hal} \|\nab u_{\rrtt}\|_{L^2(\Omega)},$$ where $C$ depends only on $\d$.
It follows that letting $\#I$ denote the number of balls $B(x_i, \frac14)$ intersecting  $\Omega$,  we also have
\begin{equation}
\label{contrnbpoints}
\# I \le  \int_{\Omega} d \mu + C|\partial  \Omega|^{\hal}      \|\nab u_{\rrtt} \|_{L^2(\Omega)}.
\end{equation}
Secondly, integrating~\eqref{defv} against $\varphi$, we  have 
\begin{equation}\label{rel3}
\left|\int_U  \Big( \sum_{i=1}^N \delta_{x_i}^{(\rrtt_i)} - d \mu\Big) \varphi\right|= \frac{1}{\cd}\left| \int_{U}\nabla u_{\rrtt} \cdot \nab \varphi\right|\le C |\Omega|^{\hal} \|\nab \varphi\|_{L^\infty} \|\nab u_{\rrtt}\|_{L^2(\Omega)}.
\end{equation}
On the other hand, since by definition $\rrtt_i\le \frac 14$ for each $i$, we have
\begin{equation}
\left|\int_U \Big( \sum_{i=1}^N( \delta_{x_i} - \delta_{x_i}^{(\rrtt_i)}) \Big) \varphi\right|\le \# I   \|\nab \varphi\|_{L^\infty} .\end{equation}
Combining this with~\eqref{rel3} and  ~\eqref{contrnbpoints}, we get the result. 
\end{proof}

\section{Proof of the screening result}\label{appa}
The goal of this appendix is to prove the screening result of  Propositions \ref{outscreen}.
This follows from  adapting  and  optimizing  the procedure from \cite{ss1,rs,ps}, in particular \cite{ps} simplified to the Coulomb case.

Let us  first informally describe thing for the outer screening. 
We will work with ``electric fields" $E$ which are meant to be gradients of the potentials $u$  of~\eqref{defv} or $w$ of~\eqref{defw}, or more generally to satisfy relations of the form 
\be\label{dive}
-\div E= \cd\(\sum_{i=1}^n \delta_{x_i}-\mu\).\ee
A truncated version of $E$ can be defined just as in~\eqref{formu}: for any $E$ satisfying a relation of the form~\eqref{dive} we let 
\be \label{eer}E_{\rrc}= E- \sum_{i=1}^n \nab \f_{\rrc_i}(x-x_i)\ee
where $\rrc_i$ is as in~\eqref{rrc}.

Assume we are given a configuration $X$ (with unspecified number of points) in a hyperrectangle, together with its electric field $E$, and assume roughly that we control well its energy near the boundary of a hyperrectangle $Q_T$ of sidelengths close to $T$. The goal of the screening is to modify the configuration $X$ and the electric field $E$ only outside of $Q_{T-1}$ and to extend them to a ``screened" configuration $X^0$ and a ``screened" electric field $E^0$ in $Q_{T+\l}\in \mathcal{Q}_{T+\l}$  in such a way that 
$$\left\{ \begin{array}{ll}
- \div E^0= \cd(\sum_{p\in X^0} \delta_p - \mu) & \text{in} \ Q_{T+\l}\cap U\\
E^0 \cdot \nu =0 & \text{on} \ \partial (Q_{T+\l}\cap U)\end{array}\right.$$
This implies in particular that the screened system is neutral, i.e the number of points of $X^0$ must be equal to $\mu(Q_{T+\l}\cap U)$. We note that in the Neumann case where $\Omega$ can intersect $\partial U$, the desired boundary condition is already satisfied for the original field on $\partial U$, so there is no need to modify it near $\partial U$.

 The screened electric field $E^0$ may not be a gradient, however thanks to Lemma \ref{projlem} its energy provides an upper bound for computing $\G(X^0, Q_{T+\l}\cap U)$. The goal of the construction is to show that we can build $E^0$ and $X^0$ without adding too much energy to that of the original configuration, which will allow to bound $\G(X^0, Q_{T+\l} \cap U ) $ in terms of  $\H_U(X, \Omega)$.
In order to accomplish this, we will split the region to be filled into cells where we  solve appropriate elliptic problems and estimate the energies by elliptic regularity estimates. In order to ``absorb" and screen the effect of the possibly rough data on $\partial Q_T$, we need a certain distance $\l$, which has to be large enough in terms of the energy of $E$, this leads to the ``screenability condition" bound on $\l$, as previously mentioned.

\subsection{Finding a good boundary}

We focus on the outer screening proof,  the proof of the inner case is analogous (for details  of what to do near the corners, one may refer to \cite{rns}).

 Assume then that $\Omega=Q_R\cap U$. Since $U$ is assumed to be a disjoint union of parallel hyperrectangles, $\Omega$ is itself a hyperrectangle. 
 
We are given a configuration $X_n$ in $Q_R\cap U $ with $\tilde \l \ge \l \ge C$,  and $u$ is as in~\eqref{eqsp}

We set $E=\nab u$ with the notation $E_{\rrc}$ defined in~\eqref{eer}. We also let 
\be \label{defM} M:= \int_{(Q_{T+4}\backslash Q_{T-4})\cap U }|\nab u_{\rrtt}|^2 .\ee
By a pigeonhole principle, 
there exists a $T \in  [ R-2 \tilde \ell, R-\tilde \ell]$ such that 
 \begin{equation}\label{defMt}
M:=  \int_{ (Q_{T+4}\backslash Q_{T-4})  \cap U}  |\nab u_{\rrtt}|^2 \le \frac{S(X_n)}{\tilde \ell} \ee
\be
   M_\l:= \max_{x} \int_{ (Q_{T+4}\backslash Q_{T-4}) \cap\car_\l(x) \cap U}  |\nab u_{\rrtt}|^2 \le S'(X_n),\end{equation} resp. with $Q_{T+4}\backslash Q_{T-4}$.
   
    We recall that  on $\partial U$ we have a zero Neumann boundary condition for $u$ so the desired final condition is already satisfied there. 
 
 By a mean value argument we can find $\Gamma$   a piecewise affine  boundary (with slopes in a given set, alternating only at distances bounded above and below) of a set containing $Q_T\cap U $ and contained in $Q_{T+1}\cap U$ 
   such that 
\begin{equation}\label{bonbord}
\int_{\Gamma \cap U } |E_{\rrc}|^2 \le C M,
\quad \sup_x \int_{\Gamma\cap Q(x,\l)\cap U} |E_{\rrc}|^2 \le C M_\l.
   \end{equation}
   
We note that as soon as $\tilde \ell$ is large enough,   we only consider regions at distance $ \ge 1$ from $\pa \Omega$, so there is no difference between $\rrc$ and $\rrtt$ there. 

   
   We take it to be the boundary relative to $U$ 
   of a set containing $Q_T \cap U$ and contained in $Q_{T+1}\cap U$, and we then complete it by   a subset  $\Gamma'$ of $\partial U$ in such a way that  $\Gamma\cup \Gamma'$ then encloses a closed  domain of $\bar U \cap \bar Q_T$. We also recall that by assumption $U$ is a union of hyperrectangles and that $\partial Q_R$ is parallel to the sides of $U$.      
   In all cases we denote by 
$\mathcal O$ (like ``old")  the part of $Q_{T+1}\cap U $ delimited by $\Gamma\cup \Gamma'$ and by $\New$ (like ``new") the set $\Omega \backslash \Old$.  We keep $X_n$ and $E$ unchanged in $\Old$ and discard the points of $X_n$ in $\New$ to replace them by new ones.
We note that the good boundary $\Gamma$ may intersect some $B(x_i,\rrc_i)$ balls centered at points of $X_n$. These balls will need to be ``completed", i.e. the contributions of $\delta_{x_i}^{(\rrc_i)} \indic_{Q_T\backslash\Old}$ to be retained.

\subsection{Preliminary lemmas}
We start with a series of preliminary results which will be the building blocks  for the construction of $\Escr$. 

\begin{lem}[Correcting fluxes on rectangles]\label{lem57}
Let~$H$ be a hyperrectangle of~$\R^\d$ with sidelengths in~$[\ell, C\ell]$ with~$C$ depending only on~$\d$.
 Let~$g\in L^2(\pa  H) $.
Then there exists a constant~$C$ depending only $\d$ such that the mean zero  solution of
\begin{equation}\label{eqnu}
\left\{\begin{array}{ll}
-\Delta h  = \int_{\partial H} g    & \text{in} \  H\\
\partial_\nu h =g & \text{on} \ \pa H \end{array}\right.
\end{equation}
satisfies the estimate
\begin{equation}\label{estlcs2}
\int_{ H}  |\nab h|^2 \le  C \ell  \int_{\pa   H} |g|^2.
\end{equation}
\end{lem}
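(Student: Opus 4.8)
The plan is to prove this by the textbook energy argument for the Neumann problem, after observing that the right-hand side constant plays no role. First note that solvability of \eqref{eqnu} forces $-\Delta h$ to equal the constant $\frac{1}{|H|}\int_{\partial H}g$ (so the constant written in the statement is to be understood modulo this normalization); in any case, since $h$ is chosen with zero mean over $H$, this constant is annihilated in the integration by parts. Testing \eqref{eqnu} against $h$ itself gives $\int_H|\nabla h|^2 = -\int_H h\,\Delta h + \int_{\partial H} g\,h = \int_{\partial H} g\,h$, the bulk term vanishing because $\int_H h = 0$. By Cauchy--Schwarz on $\partial H$ this is bounded by $\|g\|_{L^2(\partial H)}\|h\|_{L^2(\partial H)}$.

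The next step is to control $\|h\|_{L^2(\partial H)}$ by the Dirichlet energy of $h$, with the correct power of $\ell$. On a hyperrectangle $H$ with sidelengths comparable to $\ell$ the scaled trace inequality gives $\|h\|_{L^2(\partial H)}^2 \le C\bigl(\ell^{-1}\|h\|_{L^2(H)}^2 + \ell\,\|\nabla h\|_{L^2(H)}^2\bigr)$, and since $h$ has zero mean, the Poincar\'e--Wirtinger inequality on $H$ (which is convex, so the constant is comparable to the diameter) gives $\|h\|_{L^2(H)} \le C\ell\,\|\nabla h\|_{L^2(H)}$. Combining the two yields $\|h\|_{L^2(\partial H)} \le C\ell^{1/2}\|\nabla h\|_{L^2(H)}$. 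Inserting this into the previous inequality gives $\int_H|\nabla h|^2 \le C\ell^{1/2}\|g\|_{L^2(\partial H)}\|\nabla h\|_{L^2(H)}$, whence $\|\nabla h\|_{L^2(H)} \le C\ell^{1/2}\|g\|_{L^2(\partial H)}$, which is exactly \eqref{estlcs2}. Existence and uniqueness of the mean-zero solution is the standard Lax--Milgram statement on $H^1(H)/\R$ and may be taken for granted.

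The only point that deserves a word is the uniformity of the trace and Poincar\'e constants. Both sides of \eqref{estlcs2} are invariant under the rescaling $x\mapsto \ell^{-1}x$ (checking homogeneities: $h$ scales like a length, $g$ like its normal derivative, and the surface measure like $\ell^{\d-1}$), so it is enough to prove the estimate on the family of hyperrectangles with all sidelengths in $[1,C]$; on this shape-bounded family the trace and Poincar\'e--Wirtinger constants are uniformly bounded in terms of $\d$ alone, either by the explicit convex-domain constants or by a routine compactness argument. I do not expect any genuine obstacle here; the lemma is elementary and serves only as a building block for the cell-by-cell construction of the screened field.
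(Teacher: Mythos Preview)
Your proof is correct; the paper itself does not prove this lemma but simply cites \cite[Lemma 5.8]{rs}, and your energy argument (test against $h$, Cauchy--Schwarz on the boundary, then scaled trace plus Poincar\'e--Wirtinger) is the standard route and almost certainly what appears in that reference. Your remark about the missing $|H|^{-1}$ normalization of the constant on the right-hand side is also correct.
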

\begin{proof}
This is   \cite[Lemma 5.8]{rs}. 
\end{proof}
 
The next lemma serves to complete the smeared charges  which were ``cut" into two pieces by the choice of the good boundary. The proof can be deduced from an inspection of that of  \cite[Lemma 6.6]{ps}.

\begin{lem}[Completing charges near the boundary]\label{chargesnearbdry}Let $\mathcal{R}$ be a hyperrectangle in $\R^\d$ of center $0$ and sidelengths in $[a,Ca]$ with $C$ depending only on $\d$. Let $F$ be a face of $\mathcal R$. Let $X_n$ be a configuration of  points  contained in an $1/4$-neighborhood of $F$. Let $c$ be a constant such that
 \begin{equation}\label{CR}
c |F|=\cd\int_{{\mathcal R}}\sum_{i \in\C_{\mathcal R}} \delta_{x_i}^{(\rrc_i)}\,.
 \end{equation}
The mean-zero solution to \[
 \left\{\begin{array}{ll}
  -\Delta h=\cd \sum_{p\in \C}\delta_p^{(\rrc_p)}& \text{ in }{\mathcal R}\, ,\\[2mm]
  \pa_\nu h=0& \text{ on }\pa {\mathcal R}\setminus  F\, ,\\[2mm]
\pa_\nu h=c& \text{ on } F
\end{array}
\right.
\] 
satisfies 
\begin{equation}\label{estboundary}
 \int_{{\mathcal R}}|\nab h|^2\le C \( n^2 a^{2-\d} + \sum_{i\neq j } \g(x_i-x_j) +\sum_{i=1}^n \g(\rrc_i)\)
\end{equation} where $C$ depends only on $\d, a, b$.
\end{lem}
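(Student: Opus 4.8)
The argument is that of \cite[Lemma 6.6]{ps}, transcribed to the present notation. The starting point is the minimality principle behind Lemma~\ref{projlem}: if $E'$ is any vector field on $\mathcal R$ with $-\div E' = \cd\sum_{p\in\C}\delta_p^{(\rrc_p)}$ in $\mathcal R$ and $E'\cdot\nu = c\,\indic_F$ on $\partial\mathcal R$, then $\int_{\mathcal R}|\nabla h|^2 \le \int_{\mathcal R}|E'|^2$; indeed $\nabla h$ solves the same problem, and when one expands $0\le\int_{\mathcal R}|E'-\nabla h|^2$ the cross term $\int_{\mathcal R}\nabla h\cdot(E'-\nabla h)$ vanishes after integration by parts (both $\nabla h$ and $E'$ carry the same interior source and the same normal trace). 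It thus suffices to produce a single competitor $E'$ whose energy is bounded by the right-hand side of~\eqref{estboundary}.

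\textbf{The competitor.} We take $E' = \sum_{p\in\C}E_p + \nabla\psi$. Here $E_p := \nabla\bigl(\sum_\sigma \g_{\rrc_p}(\,\cdot\,-\sigma(x_p))\bigr)$, where $\g_{\rrc_p}(\,\cdot\,-x_p) = \g*\delta_{x_p}^{(\rrc_p)}$ is the potential of the $p$-th smeared charge (a spherical shell of radius $\rrc_p$) and $\sigma$ runs over the at most $2^\d$ reflections across the faces of $\mathcal R$ lying within distance $2$ of $x_p$; equivalently $E_p$ is the field of the Neumann image system of $\delta_{x_p}^{(\rrc_p)}$ at the corner of $\mathcal R$ nearest $x_p$. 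Since every $x_p$ lies within $1/4$ of $F$, the reflected copies sit outside $\overline{\mathcal R}$, so $-\div\sum_p E_p = \cd\sum_p\delta_{x_p}^{(\rrc_p)}$ in $\mathcal R$, while $\sum_p E_p\cdot\nu$ vanishes on $F$ and on the other faces reached by the reflections and is bounded pointwise by $C(\d,a)\,n$ on the remaining faces. We then let $\psi$ be the mean-zero harmonic function on $\mathcal R$ with $\partial_\nu\psi = c\,\indic_F - \sum_p E_p\cdot\nu$ on $\partial\mathcal R$, which exists because $\int_{\partial\mathcal R}\bigl(c\,\indic_F - \sum_pE_p\cdot\nu\bigr) = 0$ by~\eqref{CR}.

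\textbf{The two energy contributions.} For $\sum_pE_p$ the key point is that $\rrc_i\le\tfrac14\min_{j\neq i}|x_i-x_j|$ forces the shells $\delta_{x_i}^{(\rrc_i)}$ and $\delta_{x_j}^{(\rrc_j)}$ to be disjoint for $i\neq j$, and, since all real charges lie on the same side of each reflecting hyperplane, $|x_i-\sigma(x_j)|\ge|x_i-x_j|$ for every reflection $\sigma$. Hence, by Newton's theorem, $\int_{\R^\d}|\sum_pE_p|^2 = \cd\iint\g(x-y)\,d\varrho(x)\,d\varrho(y)$ with $\varrho := \sum_p\sum_\sigma\delta_{\sigma(x_p)}^{(\rrc_p)}$ (for $\d=2$ one works instead on a ball $B_L$ with $L\sim\max(a,1)$, the boundary contribution being $O(n^2a^{2-\d})$); expanding this double sum and using $\g_{\rrc_i}\le\g(\rrc_i)$ for the diagonal and same-point off-diagonal terms and $\g(x_i-\sigma(x_j))\le\g(x_i-x_j)$ for the rest gives $\int_{\mathcal R}|\sum_pE_p|^2 \le C(\d)\bigl(\sum_{i\neq j}\g(x_i-x_j)+\sum_i\g(\rrc_i)\bigr)+C(\d,a)\,n^2a^{2-\d}$. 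For $\psi$, Lemma~\ref{lem57} on $\mathcal R$ gives $\int_{\mathcal R}|\nabla\psi|^2\le C(\d)\,a\int_{\partial\mathcal R}|c\,\indic_F-\sum_pE_p\cdot\nu|^2\le C(\d,a)\bigl(c^2|F|+n^2a^{\d-1}\bigr)\le C(\d,a)\,n^2a^{2-\d}$, where we used $|c|\,|F|=\cd\int_{\mathcal R}\sum_p\delta_{x_p}^{(\rrc_p)}\le\cd\,n$ and $|F|\sim a^{\d-1}$. Adding the two, together with $\int_{\mathcal R}|E'|^2\le 2\int_{\mathcal R}|\sum_pE_p|^2+2\int_{\mathcal R}|\nabla\psi|^2$, yields~\eqref{estboundary}.

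\textbf{Main difficulty.} The only genuinely delicate part is the construction and bookkeeping of the image system near the edges and corners of $\mathcal R$: one must verify that truncating the (infinite) Neumann image lattice of $\mathcal R$ leaves a residual normal trace of $\sum_pE_p$ on $\partial\mathcal R$ that is $O(n)$ pointwise—so that Lemma~\ref{lem57} indeed produces a surface-type bound—and that every interaction between a charge and an image of another charge is still dominated by $\g(x_i-x_j)$ or by $\g(\rrc_i)$. These estimates are precisely the ones carried out in \cite[Lemma 6.6]{ps}, which we follow.
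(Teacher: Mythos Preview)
Your argument is correct and is essentially the image-charge construction from \cite[Lemma~6.6]{ps}, but it is \emph{not} the route the paper takes. The paper instead splits $h=u+v$, where $v$ solves $-\Delta v = c|F|/|\mathcal R|$ with $\partial_\nu v = c\indic_F$ (an explicit one-dimensional problem with energy $\le Cc^2a^{\d}\le Cn^2a^{2-\d}$), and $u$ solves the pure Neumann problem $-\Delta u = \cd\sum_i\delta_{x_i}^{(\rrc_i)}-c|F|/|\mathcal R|$ with $\partial_\nu u=0$ on all of $\partial\mathcal R$. For $u$ the paper writes $u(x)=\cd\sum_i\int G_{\mathcal R}(x,y)\,d\delta_{x_i}^{(\rrc_i)}(y)$ via the Neumann Green function and invokes the bound $G_{\mathcal R}(x,y)\le C\g(x-y)$ from Proposition~\ref{proappgreen}; integrating by parts then gives $\int|\nabla u|^2\le C\sum_{i,j}\iint\g(x-y)\,d\delta_{x_i}^{(\rrc_i)}\,d\delta_{x_j}^{(\rrc_j)}$, and Newton's theorem finishes as in your argument.

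The trade-off is clear: the paper's proof is shorter and avoids the corner/edge bookkeeping of the image system entirely, but it relies on the Green-function estimate of Appendix~\ref{appendixgreen}, which is a separate piece of work. Your approach is self-contained (given Lemma~\ref{lem57}) and more constructive, at the cost of the image-charge combinatorics you flag at the end. Both yield the same bound with the same dependence on the data.
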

\begin{proof}
We split $h=u+v$ where 
$$ \left\{\begin{array}{ll}
  -\Delta u=\cd \sum_{i}\delta_{x_i}^{(\rrc_i)} - c& \text{ in }{\mathcal R}\ \\[2mm]
  \pa_\nu u=0& \text{ on }\pa {\mathcal R},\end{array}
\right.$$
and
 $$ \left\{\begin{array}{ll}
  -\Delta v=c \frac{|F|}{|\mathcal R|}& \text{ in }{\mathcal R}\ \\[2mm]
  \pa_\nu v=0& \text{ on }\pa {\mathcal R}\setminus  F\ \\[2mm]
\pa_\nu v=c& \text{ on } F.
\end{array}
\right.$$
The $v$ part is explicitly computable and has energy bounded by $Cc^2 a^\d\le C \# \C^2 a^{2-\d} $.
For the $u$ part, we observe that 
$$u=\cd\sum_{i}\int G_{\mathcal R} (x,y) \delta_{x_i}^{(\rrc_i)}(y)$$ where $G_{\mathcal R}(x,y)$ is the Neumann Green function of the 
hyperrectangle with background $1$, as in Proposition \ref{proappgreen}.  Using the estimate~\eqref{borneGU}, we have 

$$G_{\mathcal R} (x,y) \le C \g(x-y)$$ hence we deduce the result.
\end{proof}

\subsection{Main proof}


\smallskip 
\noindent

We let $I_\pa $ be  the indices corresponding to the points of $X_n$ whose smeared charges touch $\Gamma$, i.e.
\begin{equation}\label{defl0}
 I_{\pa}=\left\{ i \in[1,n] :\ B(x_i,\rrc_i)\cap \Gamma \neq\varnothing\right\} 
\end{equation}
and define $$\N= \# I_{\pa} + \#\( \{i, x_i \in \Old\}\backslash I_{\pa}\).$$
The goal of the construction is to place an additional  
$\mn- \N$ points in $(Q_R\cap U)\backslash \Old$, where $\mn=\mu(Q_R\cap U)$.

Next, we partition $(Q_R\cap U)\backslash \Old$ into hyperrectangles $H_k$ (or intersections of hyperrectangles with $(Q_R\cap U) \backslash \Old$) with sidelengths $\in [\ell/C,C\ell]$ for some positive constant $C>0$ (we note that we may always make sure in the construction of $\Gamma$ that the shapes formed by $H_k \backslash \Old$ are nondegenerate)
 in such a way that 
 letting $m_k$ be the constant such that 
 \begin{equation}\label{defmi}
 \cd m_k |H_k| =  \int_{\Gamma \cap \pa  H_k}   E_{\rrc} \cdot{ \nu}-n_k
 ,\end{equation}  with  $\nu$ denoting the outer unit normal to $\Old$ and $$n_k:= \cd\int_{H_k} \sum_{i \in I_\partial  } \delta_{x_i}^{(\rrc_i)},$$
 we have $\int_{H_K} (\mu+m_k ) \in \mathbb{N}$.
 This is possible if $|m_k|<\hal m$ (recall $\mu \ge m$) and  can be done by constructing successive strips as in Lemma \ref{rect}, as soon as $\l>C $ for some $C>0$ depending only on $\d$ and $m$.

  We will give below a condition for $|m_k|<\hal m  $.  Now define $$\tilde \mu= \mu+ \sum_k \indic_{H_k} m_k.$$
  Since $$\N=-\frac{1}{\cd}\int_{\Gamma} E_{\rrc} \cdot \nu+ \frac1\cd \sum_k n_k +\int_{\Old } d\mu$$ and $\mn=\mu(\Omega)$, 
  in view of 
 ~\eqref{defmi}  we may check that 
  \be \label{intmut} \int_{\New}\tilde \mu= \mn-\N.\ee
      \noindent
 {\bf Step 1: Defining $ \Escr$}.\\
 We define $\Escr_{\rrc}$ as a sum $ E_1 + E_2 +E_3$, some of these terms being zero except for $H_k$ that has some boundary in common with $\Gamma$, then denoted $F_k$. 
 
The first vector field contains the contribution of the completion of the smeared charges belonging to $I_{\pa}$.
We let
 $$E_{1}:=\sum_k \indic_{H_k} \nab h_{1,k}$$ where $h_{1,k}$ is the solution of
\begin{equation}\label{defh1}
\left\{\begin{array}{ll}
 -\Delta h_{1,k}=\cd \sum_{i\in I_\pa}\delta_{x_i}^{(\rrc_i)}& \text{ in } H_k
,\\[3mm]
 \pa_\nu h_{1,k}=0& \text{ on }\pa H_k \setminus \Gamma \ ,\\[3mm]
 \pa_\nu h_{1,k}=\frac{- n_k}{| F_k|  }    &\text{ on } F_k,
\end{array}
\right.
\end{equation}
We note that the definition of $ n_k$ makes this equation solvable.

The second vector field is defined to be $E_{2}= \sum_{k} \indic_{H_k} \nab h_{2,k}$ with  
\begin{equation*}
\left\{\begin{array}{ll}
   -\Delta  h_{2,k}= \cd m_k &  \text{ in } H_k\ ,\\[3mm]
   \pa_\nu h_{2,k}= g_k
  &\text{ on }\pa  H_k ,
\end{array}
\right.
\end{equation*} where we let $g_k=0$ if $ H_k$ has no face in common with $\Gamma$ and otherwise 
\begin{equation}\label{gi} g_k= - E_{\rrc}\cdot {\nu}    + \frac{n_k}{|F_k|} 
 \end{equation}
with $E_{\rrc} \cdot \vec{\nu}$ taken with respect to the outer normal to $\Old$.  We note that this is solvable in view of~\eqref{defmi} and the definitions of  $n_i$.

The third vector field consists in the potential generated by a sampled configuration $Z_{\mn - \N}$ in $Q_R\cap U \backslash \Old$:  we let $E_3= \nab h_3$ where $h_3$ solves 
 \begin{equation}\label{defh4}
 \left\{\begin{array}{ll}
   -\Delta  h_3=\cd \left(\sum_{j=1}^{\mn-\N} \delta_{z_j}- \tilde \mu\right)& \text{ in } \New 
   \\[3mm]
   \pa_\nu h_3=0& \text{ on }\pa \New .
\end{array}
\right. 
\end{equation}
We note that this equation is solvable since~\eqref{intmut} holds.
We then define in $\New$, 
$
 \Escr_{\rrc}=( E_1+E_2+E_3)\indic_{\New}+ E_{\rrc}\indic_{\Old}$
and  $Y_\mn= \{X_n, B(x_i, \rrc_i) \cap \Old \neq \varnothing\} \cup \{Z_{\mn-\N}\}$ 
Finally, we let  
$$\Escr= \Escr_{\rrc}+\sum_{i=1}^{\mn} \nab \f_{\bar\rr_i}(x-y_i)$$  where the $\bar\rr_i$ are the minimal distances as in~\eqref{rrc} of $Y_\mn$. Note that  for the points near $\Gamma$, these may not correspond to the previous minimal distances for the configuration $X_n$ or $Z_{\mn-\N}$, which is why we use a different notation.

We note that  the normal components are always constructed to be continuous across interfaces,  so that no divergence is created there, and so    $\Escr$ thus defined satisfies 
\be\label{divescr} \left\{\begin{array}{ll} - \div \Escr= \cd(\sum_{i\in Y_\mn} \delta_{y_i}-\mu) \quad & \text{in} \ \Omega\\
\Escr\cdot \nu= 0 & \text{on} \ \partial \Omega.\end{array}\right.
\ee

\smallskip 

\noindent
{\bf Step 2: Controlling $m_k$.}
First we control the $ n_k$. 
The results of Lemma \ref{coronp}   allow to show that 
\begin{equation}\label{nalpha} n_k^2  \le C \int_{H_k}|E_{\rrc}|^2
  \le  C M_\l,\qquad   \sum_k n_k^2  \le C M. \end{equation}
We note that it follows in the same way that $\# I_\pa \le CM \le C \frac{S(X_n)}{\tilde \ell}$ with~\eqref{defMt}.

To control $m_k$ we write that in view of~\eqref{defmi}  and~\eqref{bonbord}
\begin{equation}
|m_k| \le C \ell^{-\d}    \int_{\Gamma \cap \pa  H_k}  |E_{\rrc}| +|n_k|\l^{-\d} . \end{equation}
Using the Cauchy-Schwarz inequality  we bound 
$$   \int_{\Gamma\cap \pa H_k} |E_{\rrc}|\le \ell^{\frac{\d-1}{2}}M_\l^{\hal}.$$
We conclude that 
\be \label{bmi}|m_k| \le C \ell^{-\frac{\d}{2}-\frac{1}{2}}M_\l^{\hal}+C \ell^{-\d} M_\ell^\hal\le C \ell^{-\frac{\d}{2}-\frac{1}{2}}M_\l^{\hal} .\ee
The condition $|m_k|<\hal m$ thus is implied by 
$$
C M_\l^{\hal} \l^{\frac{-  \d-1}{2}} <  \hal m .$$
This is the screenability condition~\eqref{screenab}.
As an alternate, we can also bound 
$$\left|\int_{\New} \mu - \tilde \mu\right| \le C \sum_k | m_k | \ell^\d \le  C\ell^{\frac{\d}{2}-\hal} M^{\hal} + C  M \le C \ell^{\d-1}+ C \frac{S(X_n) }{ \tilde \ell} $$ in view of~\eqref{bonbord} and~\eqref{defMt}, thus completing the proof of~\eqref{bornimp}.
In the same way, using Cauchy-Schwarz, we may also write  that 
$$ m_k^2 \le C \ell^{-2\d} \int_{\Gamma\cap \pa H_k} |E_{\rrc}|^2 \ell^{\d-1}+   Cn_k^2 \ell^{-2\d}$$
and thus 
$$\int_{\New} (\mu - \tilde \mu)^2 \le C \sum_k m_k^2 \ell^\d \le  C \ell^{-1} \int_{\Gamma} |E_{\rrc}|^2 + M \ell^{-\d} \le  C \frac{S(X_n)}{\tilde \ell \ell}$$
in view of~\eqref{defMt} and~\eqref{bonbord}, thus proving~\eqref{mmut2}.

 \smallskip

\noindent
{\bf Step 3: Estimating the energy of $\Escr$}.
To estimate the energy of $\Escr$ we need to evaluate $\int_{\Omega} |\Escr_{\bar \rr}|^2$.
First, for $E_1$ we use  Lemma \ref{chargesnearbdry} and combine it with~\eqref{supplem} applied with $\eta_i=\frac14$ to bound $\sum_{p\neq q} \g(p-q)$ by the energy in a slightly larger set, thus we are led to  
\begin{equation*}
\int_{\New}  |(E_1)_{\rrc}|^2 \le C\(\sum_k  n_k^2  + CM\) \le C M,
\end{equation*}
where we have used~\eqref{defM}, ~\eqref{nalpha}, and the geometric properties of 
$H_k$.

\smallskip

For $E_{2}$ we use 
 Lemma \ref{lem57} to get 
\begin{equation*}
 \int_{H_k} |E_{2} |^2 \le 
 C\l
\(\int_{\partial H_k\cap \Gamma} |E_{\rrc}|^2+Cn_k^2 \). 
\end{equation*}
Summing over $k$ and using~\eqref{bonbord}, we obtain 
\begin{equation*}
 \sum_k \int_{H_k} |E_{2} |^2 \le 
  C\l M    .  \end{equation*}
  For $E_3$ we  use that,  by definition of $\G$, 
  \be \label{bornh3} \int_{\Omega \backslash \Old } |\nab h_{3,\rrh}|^2 \le 2\cd \( \G(Z_{\mn-\N}, \tilde \mu, \New) -\sum_{j=1}^{\mn-\N} \GG(z_j) \) + \cd \sum_{j=1}^{\mn-\N} \g(\rrh_j)    + C (\mn-\N) \ee since   $\int_{\R^\d}|\f_{\eta}|\le C$ for each $\eta$ (see~\eqref{intf}).  
   Since $E= E_{\rrc}=\nab u_{\rrc}$ in $\Old$,
 we deduce that 
 \begin{align*}  
\int_{\Omega}|\Escr_{\rrc}|^2  
& \leq \int_{\mathcal{O} } |\nab u_{\rrc}|^2 + C\l  M   
 + C \( 2\cd \( \G(Z_{\mn-\N}, \tilde \mu, \New) -\sum_{j=1}^{\mn-\N} \GG(z_j) \) + \cd \sum_{j=1}^{\mn-\N} \g(\rrc_j)\)
 \\ & \qquad 
 + C (\mn-\N) .
 \end{align*}
To estimate $\G(Y_{\mn}, \mu, \Omega)$ we use Lemma \ref{projlem}, the definition of $\G$ and ~\eqref{supplem}, which tells us that to go from $\rrc$ (with possibly intersecting balls)  to $\bar \rr$, we just  need to add  the ``new interactions" $\sum_{(i,j) \in J} \g(x_i-z_j) $.
This yields 
\begin{align*}
\lefteqn{
\G(Y_\mn,\Omega)
} \ & 
\\ &
\leq
 \frac{1}{2\cd} \int_{\mathcal{O} } |\nab u_{\rrc}|^2    - \hal \sum_{i=1}^\mn \g(\rrh_i) -\sum_{i=1}^\mn \int_\Omega \f_{ \rrh_i}(y-y_i) d\mu(y)
+C
 \sum_{(i,j) \in J} \g(x_i-z_j)  \\ & \quad
+\sum_{j=1}^{\mn-\N} \GG(z_j)
   + C\l  M   
 + C \bigg( \G(Z_{\mn-\N}, \tilde \mu, \New) -\sum_{j=1}^{\mn-\N} \GG(z_j) \bigg) 
 +C\sum_{j=1}^{\mn-\N} \g(\rrh_j) + C(\mn-\N).
\end{align*}
Since on the other hand 
  $$\H_U(X_n, \Omega)=  \frac{1}{2\cd}\( \int_{\Omega}|\nab u_{\rrc}|^2- \cd\sum_{i=1}^{n} \g(\rrc_i)\) - \sum_{i=1}^n \int_{\Omega} \f_{\rrc_i}(x-x_i) d\mu(x)$$  
it follows that 
\begin{align}
\label{concs}
\lefteqn{
\G(Y_\mn,\mu,\Omega)-\H_U(X_n, \Omega)
} \quad & 
\\ & \notag
\leq - \frac{1}{2\cd}\int_{\Omega\backslash \Old} |\nab u_{\rrc}|^2  +\hal \sum_{\{ i\in \{1,\ldots,n\}\,:\, x_i \notin \Old\}} \g(\rrc_i) +C  \sum_{j=1}^{\mn-\N}\g(\rrh_j) +C \l M
\\ & \notag \quad  
+  C\G(Z_{\mn-\N}, \tilde \mu, \New)+
 C\sum_{(i,j) \in J} \g(x_i-z_j) + C (n-\N)+ C(\mn-\N).
 \end{align}
On the other hand, since $\mathcal O $ contains $Q_{T-4}\cap \Omega $, we have
\begin{align}\label{alig3}
\lefteqn{
\frac{1}{2\cd}\Bigg( \cd \sum_{\{ i\in \{1,\ldots,n\}\,:\, x_i \notin \Old\}}  \g(\rrc_i) -\int_{\Omega\backslash \mathcal O} |\nab u_{\rrc}|^2\Bigg)  
}   \qquad  &
\\  \notag & 
\leq \frac{1}{2\cd} \int_{(Q_{T+4}\backslash Q_{T-4})\cap U } |\nab u_{\rrtt}|^2 + \frac{1}{2\cd}\Bigg( \cd \sum_{\{ i\in \{1,\ldots,n\}\,:\, x_i \notin \Old\}}  \g(\rrc_i) -\int_{\Omega\backslash Q_{T-4}} |\nab u_{\rrc}|^2\Bigg)
\\  \notag & 
\leq   \frac{M}{2\cd}+ C (n-\N), 
\end{align} 
where we bounded the second term in the right-hand side by  using Lemma \ref{monoto} to change $\rrc$ into $\frac14$ and then bounded $\sum \g(\frac14)$ for $x_i \notin \Old$  by the number of points not in $\mathcal O$. 
We may also write using  \eqref{15} and using that $\rrh=\rrtt$ in this case, 
\begin{equation}\label{alig4}
\sum_{j=1}^{\mn-\N} \g(\rrh_j) \le  C 
\( \G(Z_{\mn-\N}, \tilde \mu, \New) + ( \mn-\N)\).
\end{equation}
Inserting \eqref{alig3} and \eqref{alig4} into \eqref{concs} and using \eqref{defMt}, we find 
\begin{multline*}
\G(Y_\mn,\mu,\Omega)-\H_U(X_n, \Omega)\\ \le  C\l \frac{S(X_n)}{\tilde \ell} + C \G(Z_{\mn-\N}, \tilde \mu, \New) +
 C\sum_{(i,j) \in J} \g(x_i-z_j)+C( |n-\mn|+ |\mn-\N|).\end{multline*}
Using~\eqref{bornimp} and $\mu (\New) \le C \tilde \ell R^{\d-1}$  allows to  bound the last term on the right side,  and then we get~\eqref{nrjy}.

\end{document}